\documentclass[12pt]{article}
\pdfoutput=1
\usepackage[height=8.85in,width=6.45in]{geometry}
\usepackage{pifont}
\usepackage{amsthm}
\usepackage[mathscr]{euscript}

\usepackage{times}
\usepackage{pdflscape}
\usepackage{tikz}
\usetikzlibrary{arrows.meta, positioning, calc}
\usepackage{xfrac}
\usepackage{booktabs}
\usepackage[utf8]{inputenc}
\usepackage{amsmath}
\usepackage{amssymb}
\usepackage{mathtools}
\numberwithin{equation}{section}
\usepackage{slashed}
\usepackage{braket}
\usepackage[svgnames]{xcolor}
\usepackage{url}
\usepackage[colorlinks,citecolor=DarkGreen,linkcolor=FireBrick,linktocpage]{hyperref}
\usepackage{cite}
\usepackage{graphicx}
\usepackage{float}
\usepackage{tikz-cd}
\usepackage{stmaryrd}
\usepackage{courier}
\usepackage{dashbox}
\usepackage{caption}
\usepackage{subcaption}
\usepackage{enumitem}
\usepackage{footmisc}
\usepackage{tcolorbox}
\usepackage{fix-cm}

\usepackage{anyfontsize}

\usepackage{dsfont}
\usepackage{array}
\usepackage{multirow}
\usepackage{rotating}

 \def\ri{\mathrm{i}}

\newcolumntype{H}{>{\setbox0=\hbox\bgroup}c<{\egroup}@{}}

\renewcommand{\title}[1]{\vbox{\center\LARGE{#1}}\vspace{5mm}}
\renewcommand{\author}[1]{\vbox{\center#1}\vspace{5mm}}
\newcommand{\address}[1]{\vbox{\center\em#1}}

\makeatletter
\newsavebox{\@brx}
\newcommand{\llangle}[1][]{\savebox{\@brx}{\(\m@th{#1\langle}\)}%
  \mathopen{\copy\@brx\kern-0.5\wd\@brx\usebox{\@brx}}}
\newcommand{\rrangle}[1][]{\savebox{\@brx}{\(\m@th{#1\rangle}\)}%
  \mathclose{\copy\@brx\kern-0.5\wd\@brx\usebox{\@brx}}}
\makeatother

\def\Pr{\mathrm{Proj}_{\mathbb{Z}}}
\def\DD{D}

\newcommand{\Rep}{\mathrm{Rep}}

\newtheorem{theorem}{Theorem}[section]
\newtheorem{definition}[theorem]{Definition}
\newtheorem{corollary}[theorem]{Corollary}
\newtheorem{proposition}[theorem]{Proposition}
\newtheorem{conjecture}[theorem]{Conjecture}
\newtheorem{example}[theorem]{Example}
\newtheorem{lemma}[theorem]{Lemma}
\newtheorem{proposal}[theorem]{Proposal}

\tcbuselibrary{theorems, skins, breakable}

\newcounter{mainresultcounter}

\newtcolorbox{mainresult}[1][]{
  enhanced,
  colback=blue!5!white,
  colframe=blue!60!black,
  fonttitle=\bfseries,
  title=Main Result~\refstepcounter{mainresultcounter}\themainresultcounter,
  attach boxed title to top left={yshift=-2mm, xshift=4mm},
  boxed title style={
    colback=blue!60!black,
    colframe=blue!60!black,
    rounded corners
  },
  rounded corners,
  drop shadow,
  #1
}

\begin{document}

\begin{titlepage}

\title{Classification of Rational $c=1$ Vertex Operator Algebras and Vertex Operator Superalgebras}

\author{Terry Gannon${}^1$ and Brandon C.\ Rayhaun${}^{2}$}

        \address{${}^{1}$Department of Mathematics, University of Alberta, Edmonton, Alberta, Canada\\
        ${}^{2}$School of Natural Sciences, Institute for Advanced Study, Princeton, NJ, USA}

\begin{abstract}

\noindent\textbf{For mathematicians.} In this first in a series of two papers, we give a mathematically rigorous classification of (sufficiently nice) $c=1$ vertex operator algebras (VOAs) and vertex operator superalgebras (VOSAs). We confirm the lore that any such VO(S)A is either a lattice VO(S)A $V_L$ associated to a rank-1 integral lattice $L$, or can be obtained as an orbifold thereof, i.e.\ a $G$-invariant subalgebra $V_L^G$ for some finite group $G$ of automorphisms. All such $G$ are known, allowing for an explicit enumeration of nice $c=1$ VO(S)As. A key ingredient in our approach is to establish a general criterion for nice VOAs, requiring knowledge only of the vacuum character, for testing when the simple modules with integer conformal dimension span a symmetric fusion subcategory which is braided tensor equivalent to $\Rep(G)$. In our companion paper, we calculate the ribbon auto-equivalences of the representation categories of the nice $c=1$ VOAs, and leverage this to obtain the classification of nice bosonic and fermionic $c=1$ full conformal field theories. \\

\noindent\textbf{For physicists.} We rigorously classify the chiral algebras that can arise in the holomorphic sector of a bosonic or fermionic rational $c=1$ conformal field theory (CFT) whose non-identity primaries all have positive conformal dimension. Thinking of chiral algebras as gapless boundary conditions of 3D topological quantum field theories (TQFTs), our result says that any such chiral algebra is either a holomorphic boundary of $U(1)_k$ Chern-Simons theory, or can be obtained by passing to the $G$-invariant states thereof for some finite group $G$ of symmetries. We explicitly enumerate these chiral algebras and also discuss their non-invertible symmetries. In a companion paper, we build on these results using techniques from the study of 3D TQFTs to classify full bosonic and fermionic rational $c=1$ CFTs.

\end{abstract}

\end{titlepage}

\eject

\setcounter{tocdepth}{2}
\tableofcontents

\section{Introduction}

The classification of conformal field theories (CFTs) in two spacetime dimensions is of broad interest, having implications for a variety of subjects in physics and mathematics. Vertex operator algebras (VOAs), known to physicists as chiral algebras \cite{Zamolodchikov:1985wn}, furnish a rigorous axiomatic approach to the algebra of holomorphic local operators in a bosonic 2D CFT. In addition to their intrinsic value, VOAs serve as useful organizing devices for the physics of full CFTs, which generally contain both left-movers and right-movers. 

This paper is concerned with a certain class of VOAs which we refer to as ``nice'': those which are strongly rational and pseudo-unitary. (See the beginning of Appendix \ref{app:bestiary}, and in particular Definition \ref{def:niceVOA}, for our precise hypotheses.) Strong rationality refers to a collection of conditions which, in particular, are sufficient to guarantee that the VOA has finitely many simple modules into which every module decomposes as a direct sum. Pseudo-unitarity means that the non-identity primary operators (equivalently, non-vacuum simple modules) all have positive conformal dimension. 
For physicists, we remark that the conditions we impose are general enough to include the chiral algebra of any unitary rational conformal field theory.

Without further adjectives, the classification of nice VOAs is widely believed to be impossible. One way to make the problem tractable is to impose an upper bound on the central charge $c$ and the number $p$ of primary operators (i.e.\ simple modules). The state of the art \cite{Schellekens:1992db,Hohn:2017dsm,Moller:2019tlx,Moller:2021wva,Hohn:2020xfe,lam2023lattice,Mukhi:2022bte,Rayhaun:2023pgc,Hohn:2023auw,Moller:2024plb,Moller:2024xtt} is the classification of nice VOAs through (roughly) $c\lesssim 24$ and $p\lesssim 4$. Many results in this program have also been obtained at the level of genus-1 data using modular form technology, starting with \cite{Mathur:1988na} and continuing to the present day (see \cite{Das:2026jyp} and references therein).

\begin{figure}
\begin{center}
\tikzset{every picture/.style={line width=0.75pt}} 
\begin{tikzpicture}[x=0.75pt,y=0.75pt,yscale=-1,xscale=1]
\draw    (41,191) -- (307,191) ;
\draw [shift={(309,191)}, rotate = 180] [color={rgb, 255:red, 0; green, 0; blue, 0 }  ][line width=0.75]    (10.93,-3.29) .. controls (6.95,-1.4) and (3.31,-0.3) .. (0,0) .. controls (3.31,0.3) and (6.95,1.4) .. (10.93,3.29)   ;
\draw    (138,191) -- (138,8.17) ;
\draw [shift={(138,6.17)}, rotate = 90] [color={rgb, 255:red, 0; green, 0; blue, 0 }  ][line width=0.75]    (10.93,-3.29) .. controls (6.95,-1.4) and (3.31,-0.3) .. (0,0) .. controls (3.31,0.3) and (6.95,1.4) .. (10.93,3.29)   ;
\draw  [fill={rgb, 255:red, 0; green, 0; blue, 0 }  ,fill opacity=1 ] (38.5,191) .. controls (38.5,190.31) and (39.06,189.75) .. (39.75,189.75) .. controls (40.44,189.75) and (41,190.31) .. (41,191) .. controls (41,191.69) and (40.44,192.25) .. (39.75,192.25) .. controls (39.06,192.25) and (38.5,191.69) .. (38.5,191) -- cycle ;
\draw  [fill={rgb, 255:red, 0; green, 0; blue, 0 }  ,fill opacity=1 ] (136.75,191) .. controls (136.75,190.31) and (137.31,189.75) .. (138,189.75) .. controls (138.69,189.75) and (139.25,190.31) .. (139.25,191) .. controls (139.25,191.69) and (138.69,192.25) .. (138,192.25) .. controls (137.31,192.25) and (136.75,191.69) .. (136.75,191) -- cycle ;
\draw  [fill={rgb, 255:red, 0; green, 0; blue, 0 }  ,fill opacity=1 ] (218.5,30.5) .. controls (218.5,29.81) and (219.06,29.25) .. (219.75,29.25) .. controls (220.44,29.25) and (221,29.81) .. (221,30.5) .. controls (221,31.19) and (220.44,31.75) .. (219.75,31.75) .. controls (219.06,31.75) and (218.5,31.19) .. (218.5,30.5) -- cycle ;
\draw  [fill={rgb, 255:red, 0; green, 0; blue, 0 }  ,fill opacity=1 ] (219,60) .. controls (219,59.31) and (219.56,58.75) .. (220.25,58.75) .. controls (220.94,58.75) and (221.5,59.31) .. (221.5,60) .. controls (221.5,60.69) and (220.94,61.25) .. (220.25,61.25) .. controls (219.56,61.25) and (219,60.69) .. (219,60) -- cycle ;
\draw  [fill={rgb, 255:red, 0; green, 0; blue, 0 }  ,fill opacity=1 ] (219,90.5) .. controls (219,89.81) and (219.56,89.25) .. (220.25,89.25) .. controls (220.94,89.25) and (221.5,89.81) .. (221.5,90.5) .. controls (221.5,91.19) and (220.94,91.75) .. (220.25,91.75) .. controls (219.56,91.75) and (219,91.19) .. (219,90.5) -- cycle ;
\draw  [fill={rgb, 255:red, 0; green, 0; blue, 0 }  ,fill opacity=1 ] (136.67,150.5) .. controls (136.67,149.81) and (137.23,149.25) .. (137.92,149.25) .. controls (138.61,149.25) and (139.17,149.81) .. (139.17,150.5) .. controls (139.17,151.19) and (138.61,151.75) .. (137.92,151.75) .. controls (137.23,151.75) and (136.67,151.19) .. (136.67,150.5) -- cycle ;
\draw  [fill={rgb, 255:red, 0; green, 0; blue, 0 }  ,fill opacity=1 ] (79.33,190.83) .. controls (79.33,190.14) and (79.89,189.58) .. (80.58,189.58) .. controls (81.27,189.58) and (81.83,190.14) .. (81.83,190.83) .. controls (81.83,191.52) and (81.27,192.08) .. (80.58,192.08) .. controls (79.89,192.08) and (79.33,191.52) .. (79.33,190.83) -- cycle ;
\draw  [fill={rgb, 255:red, 0; green, 0; blue, 0 }  ,fill opacity=1 ] (136.75,120.5) .. controls (136.75,119.81) and (137.31,119.25) .. (138,119.25) .. controls (138.69,119.25) and (139.25,119.81) .. (139.25,120.5) .. controls (139.25,121.19) and (138.69,121.75) .. (138,121.75) .. controls (137.31,121.75) and (136.75,121.19) .. (136.75,120.5) -- cycle ;

\draw (285.5,195.4) node [anchor=north west][inner sep=0.75pt]    {$R_{\mathrm{circle}}$};
\draw (21,193) node [anchor=north west][inner sep=0.75pt]    {$\sqrt{2}$};
\draw (119.5,193) node [anchor=north west][inner sep=0.75pt]    {$\sqrt{8}$};
\draw (143.5,11.9) node [anchor=north west][inner sep=0.75pt]    {$R_{\mathrm{orbifold}}$};
\draw (10.5,215) node [anchor=north west][inner sep=0.75pt]    {$SU( 2)_{1}$};
\draw (124.5,215) node [anchor=north west][inner sep=0.75pt]    {$\mathrm{KT}$};
\draw (229,20.9) node [anchor=north west][inner sep=0.75pt]    {$SU( 2)_{1} /A_{4}$};
\draw (229.33,50.57) node [anchor=north west][inner sep=0.75pt]    {$SU( 2)_{1} /S_{4}$};
\draw (229.33,80.9) node [anchor=north west][inner sep=0.75pt]    {$SU( 2)_{1} /A_{5}$};
\draw (144.33,139.57) node [anchor=north west][inner sep=0.75pt]    {$\mathrm{Ising}^{2}$};
\draw (122,142.9) node [anchor=north west][inner sep=0.75pt]    {$2$};
\draw (75.33,197) node [anchor=north west][inner sep=0.75pt]    {$2$};
\draw (60,169.9) node [anchor=north west][inner sep=0.75pt]    {$\mathrm{Dirac}$};
\draw (108.33,107.9) node [anchor=north west][inner sep=0.75pt]    {$\sqrt{6}$};
\draw (143.67,112.23) node [anchor=north west][inner sep=0.75pt]    {$\mathbb{Z}_{4} \ \mathrm{parafermion}$};
\end{tikzpicture}
\end{center}
\caption{The (conjecturally complete) conformal moduli space of unitary 2D CFTs with compact spectrum and central charge $c=1$.}\label{fig:conformalmanifold}
\end{figure}

More ambitiously, one may attempt to classify nice VOAs in a range of central charges without any restriction on $p$. Here, the assumption of pseudo-unitarity can be viewed as a natural way to control an otherwise unfathomably large zoo: indeed, without pseudo-unitarity, every central charge is polluted by the possibility of tensoring with strongly rational VOAs of central charge $c=0$, of which there are an enormous infinity.\footnote{For example, $\mathrm{LY}^{5n}\otimes V_L$ has $c=0$, where $\mathrm{LY}=L(-\sfrac{22}{5})$ is the simple quotient of the Virasoro VOA at $c=-\sfrac{22}{5}$, and $V_L$ is the VOA associated to any even integral lattice $L$ of rank $22n$, with $n$ an arbitrary positive integer. Thus, the classification of strongly rational VOAs at a given central charge, without imposing pseudo-unitarity, contains as a subproblem the classification of lattices of unbounded rank!}

A beautiful result in this vein is the classification of nice 2D CFTs with central charge $c<1$, which are referred to as the minimal models \cite{Belavin:1984vu,kac1978highest,feigin1984verma,friedan1984conformal,cappelli1987modular,cappelli1987ade}. A parallel mathematical classification of nice VOAs with $c<1$ was obtained in \cite{dong2008classification,Dong:2014iza}.

Not long after these landmark results, Ginsparg \cite{Ginsparg:1987eb} explored the conformal moduli space\footnote{The term ``conformal manifold'' is typically used by physicists to describe the moduli space of conformal field theories at a fixed central charge, however we prefer the term ``conformal moduli space'' because, as inspection of Figure \ref{fig:conformalmanifold} reveals, it is often not a manifold at all!} of CFTs with central charge $c=1$ using the orbifold procedure (in the sense of physics) \cite{Dixon:1985jw,Dixon:1986jc}. This culminated in the conjecture that every unitary $c=1$ CFT with compact spectrum is either
\begin{enumerate}[label=\arabic*)]
\item a free compact boson (parametrized by a radius $R$, up to T-duality),
\item a sigma model into an interval (obtained by gauging the $\mathbb{Z}_2$ charge conjugation symmetry of a free compact boson theory for some $R$), or,
\item one of three isolated theories obtained by gauging the exceptional finite subgroups $A_4$, $S_4$, and $A_5$, respectively, of the diagonal $SO(3)$ subgroup of the $SO(4)\cong SU(2)_L\times SU(2)_R/\mathbb{Z}_2$ symmetry of the $SU(2)_1$ Wess-Zumino-Witten (WZW) model.
\end{enumerate}
This collection of theories is often depicted as in Figure \ref{fig:conformalmanifold}.

The conformal moduli space of unitary $c=1$ CFTs with compact spectrum contains both rational and irrational points. By examining the chiral algebras of the rational CFTs appearing in this list (which are expected to be nice when thought of as VOAs), one arrives at the expectation that any nice $c=1$ VOA is isomorphic to one of the following:
\begin{enumerate}[label=\arabic*)]
\item The VOA $V_{2m}$ $(m\geq 1)$ associated to the even integral lattice $\sqrt{2m}\mathbb{Z}$. Physically, $V_{2m}$ is the chiral algebra of the compact boson CFT of radius $R=\sqrt{2p/q}$ for any coprime integers $p,q$ with $pq=m$. It can also be thought of as a gapless chiral boundary condition for $U(1)_{2m}$ Chern-Simons theory. Note that $V_2\cong \widehat{\mathfrak{su}}(2)_1$, which is the current algebra living on the boundary of $SU(2)_1$ $(\cong U(1)_2)$ Chern-Simons theory.
\item The VOA $V_{2m}^+$ $(m\geq 2)$ obtained from $V_{2m}$ by passing to the subalgebra invariant under the $\mathbb{Z}_2^{\mathrm{C}}$ charge conjugation automorphism of $V_{2m}$, i.e.\ the standard lift of the $-1$ automorphism of the lattice $\sqrt{2m}\mathbb{Z}$.\footnote{We restrict to $m\geq 2$ because of the exceptional isomorphism $V_2^+\cong V_8$, which is related to the fact that the circle and orbifold branches of the $c=1$ conformal manifold meet at the Kosterlitz-Thouless point, see Figure \ref{fig:conformalmanifold}.} Physically, $V_{2m}^+$ is the chiral algebra of the theory obtained by gauging the $\mathbb{Z}_2$ charge conjugation symmetry of the compact boson of radius $R=\sqrt{2p/q}$ with $pq=m$. It can also be thought of as a gapless chiral boundary condition for $O(2)_{2m}$ Chern-Simons theory.
\item The three exceptional VOAs $V_T:=V_2^{A_4}$, $V_O:=V_2^{S_4}$, and $V_I:=V_2^{A_5}$, obtained from $V_2\cong \widehat{\mathfrak{su}}(2)_1$ by passing to the fixed points of the exceptional finite subgroups $A_4$, $S_4$, and $A_5$, respectively, of $\mathrm{Aut}(V_2)\cong \textsl{PSL}(2,\mathbb{C})$. 
\end{enumerate}

\subsection{Main result: classification of nice \texorpdfstring{$c=1$}{c=1} VOAs}

The main result of this paper is a proof of this folklore expectation.

\begin{mainresult}
\textbf{Theorem \ref{thm:c=1classification}.}
If $V$ is a nice $c=1$ VOA, then $V$ is isomorphic to one of the following,
\begin{align}
    V_{2m} \ (m\geq 1), \ \ \ V_{2m}^+ \ (m\geq 2), \ \ V_T, \ \ V_O, \ \ V_I.
\end{align}
\end{mainresult}
We remark that, conditional on the results of the recent preprint \cite{xu2026c2cofiniteness}, which claims to establish the strong rationality of the icosahedral VOA $V_I$, the converse of Theorem \ref{thm:c=1classification} can also be shown to be true. That is, every VOA appearing in the statement of Theorem \ref{thm:c=1classification} is nice and has central charge $c=1$ (see Proposition \ref{prop:conversec=1classification}).

We also remark for experts that Theorem \ref{thm:c=1classification} can be strengthened. In particular, we show in Corollary \ref{cor:strengthened} that the classification remains true if one relaxes the hypothesis of rationality (i.e.\ semisimplicity of the representation category) to rigidity (i.e.\ existence of duals in $\Rep(V)$), provided one assumes an appropriately modified notion of pseudo-unitarity. Rigidity is much weaker than rationality: there are $C_2$-cofinite VOAs that are not rational (like the triplet algebras), but all $C_2$-cofinite VOAs are expected to satisfy rigidity.

\bigskip 

While a classification of chiral algebras may appear to be of limited interest to a physicist interested in full 2D conformal field theories with both left- and right-movers, it is often quite feasible to turn a result about the former into a result about the latter. Indeed, in our companion paper \cite{grCFTs}, we will rigorously enumerate the consistent ways of constructing full CFTs whose holomorphic sector is one of the chiral algebras appearing in Theorem \ref{thm:c=1classification}. In particular, we will show that every bosonic rational $c=1$ CFT whose chiral algebra is nice appears somewhere in the conformal moduli space of Figure \ref{fig:conformalmanifold}. That is, Ginsparg's picture \cite{Ginsparg:1987eb}, put forward nearly 40 years ago, is complete, at least as far as bosonic rational CFTs are concerned.

Another pleasant byproduct of this theorem is that we expect that the techniques we use to prove it will generalize. Indeed, in work in preparation, we will show that essentially the same strategy yields the classification of nice $c=\sfrac32$ vertex operator superalgebras with $N=1$ supersymmetry. (See \cite{Dixon:1988ac,Wendland:2004pp} for expectations coming from physics.)

Although Theorem \ref{thm:c=1classification} has been anticipated for quite some time, it is somewhat surprising that it has any right to be true. Ginsparg obtained his conjectural picture of the conformal moduli space of unitary, compact $c=1$ CFTs by applying the orbifold procedure to \emph{invertible} symmetries of the theories that were known to him at the time. While the orbifold procedure is a powerful tool for producing new CFTs from known ones, orbifolding only invertible symmetries is known to be insufficient in general if one wishes to traverse the conformal moduli space of CFTs at a given central charge.

For example, at $c=\sfrac{21}{22}$, there are three minimal models, which we refer to as $A$, $D$, and $E$. As far as invertible symmetries go, $A$ and $D$ each possess only a $\mathbb{Z}_2$ symmetry. Orbifolding the $\mathbb{Z}_2$ symmetry of $A$ recovers $D$, and vice versa, so $E$ cannot be reached from $A$ or $D$ by orbifolding a sequence of invertible symmetries. In particular, if one had attempted to employ Ginsparg's strategy at $c=\sfrac{21}{22}$ starting with the diagonal model $A$, one would have discovered $D$ but missed the exceptional minimal model $E$.\footnote{Though, we note that $E$ \emph{can} be reached by gauging a Frobenius algebra object in the non-invertible category of Verlinde lines of $A$ \cite{Fuchs:2002cm,Bhardwaj:2017xup}.}

The upshot is that $c=1$ is in some ways more ``group-theoretical'' than other central charges. In fact, as we explain below, our proof makes crucial use of this feature of $c=1$ conformal field theory. Further discussion of the group-theoretical nature of the $c=1$ conformal moduli space will appear in \cite{flatroads,ncb}.

\subsection{Proof sketch}

Let us sketch our proof of Theorem \ref{thm:c=1classification}, first in terms more familiar to mathematicians, and then using physical language.

\paragraph{Proof sketch for mathematicians.} In our proof, a central role is played by the notion of a ``Tannakian VOA'', which we introduce in Section \ref{subsec:tannakian}. By definition, we say that a strongly rational VOA $V$ is Tannakian if its simple modules of integer conformal dimension span a symmetric fusion subcategory of $\Rep(V)$ which is braided tensor equivalent to $\Rep(G)$ for some finite group $G$. 

The most useful property of Tannakian VOAs for our purposes is derived in Proposition \ref{prop:Tannakianext}: if a VOA $V$ is nice and Tannakian, then it admits a canonical maximal conformal extension $V\subset V_{\Rep(G)}$ which is also nice. Furthermore, $\mathrm{Aut}(V_{\Rep(G)})$ admits a distinguished subgroup which is isomorphic to $G$, with the property that the fixed-points recover $V$, i.e.\ $V_{\Rep(G)}^G\cong V$.\footnote{We abusively use $G$ to also refer to the subgroup of $\mathrm{Aut}(V_{\Rep(G)})$.}

The crux of our proof of Theorem \ref{thm:c=1classification} involves demonstrating that any nice $c=1$ VOA $V$ is Tannakian, and that the corresponding maximal conformal extension is a lattice VOA $V_{2t_\star}$ for some square-free positive integer $t_\star$. It then follows from the discussion of the previous paragraph that any nice $c=1$ VOA can be obtained as $V_{2t_\star}^G$ for some square-free $t_\star$ and some finite subgroup $G\subset \mathrm{Aut}(V_{2t_\star})$. The automorphism groups of lattice VOAs are under good control (see e.g.\ Section \ref{subsec:lattice} for a review), so the nice $c=1$ VOAs can be explicitly enumerated using this characterization.

It turns out that one can determine whether or not a nice VOA is Tannakian once one knows its vacuum character and how it transforms under the modular transformation $\tau\mapsto -1/\tau$, as we demonstrate in Theorem \ref{thm:TannakianVacChar}. Indeed, let 
\begin{align}\label{eqn:vaccharV}
    \mathrm{ch}_V(\tau) := \mathrm{Tr}_Vq^{L_0-c/24}, \ \ \ \ q=e^{2\pi i \tau},
\end{align}
be the graded-dimension of $V$, and define the function 
\begin{align}\label{eqn:introPVdef}
    P_V(\tau) := \frac{1}{S_{00}}\Pr[\mathrm{ch}_V(-1/\tau)].
\end{align}
Here, $S_{00}$ is the vacuum-vacuum entry of the modular S-matrix of $V$, which can be extracted as the leading coefficient in the $q$-expansion of $\mathrm{ch}_V(-1/\tau)$, and 
\begin{align}
    \mathrm{Proj}_{\mathbb{Z}}\left[\sum_{x\in\mathbb{Q}}C(x)q^x\right]:=\sum_{n\in\mathbb{Z}}C(n-\sfrac c{24})q^{n-\sfrac{c}{24}}
\end{align}
projects onto the contributions of states with integer conformal dimensions. The function $P_V(\tau)$ contains non-trivial information about the modules of $V$ with integer conformal dimension. Indeed, one readily sees that 
\begin{align}
    P_V(\tau) = \sum_{a\in \Rep(V)_{h\in\mathbb{Z}}}d_a\mathrm{ch}_{V_a}(\tau),
\end{align}
where the sum over $\Rep(V)_{h\in\mathbb{Z}}$ is a sum over simple modules $V_a$ with integer conformal dimension, and $d_a=S_{0a}/S_{00}$ is the quantum dimension of $V_a$. In fact, Tannakianness can be completely recast as an easily checkable property of this function.\\

\noindent \textbf{Theorem \ref{thm:TannakianVacChar}.} \emph{A nice VOA is Tannakian with $\Rep(V)_{h\in\mathbb{Z}}\cong \Rep(G)$ if and only if }
\begin{align}\label{eqn:tannakiancondition}
    \frac{1}{S_{00}}\Pr[P_V(-1/\tau)]=|G|P_V(\tau),
\end{align}
\emph{where }$|G|$ \emph{is the order of the group} $G$.\\

The non-trivial content of Theorem \ref{thm:TannakianVacChar} is that, while the vacuum character generally distills only partial information about $\Rep(V)$ as a modular fusion category, it evidently sees enough to determine whether or not $V$ is Tannakian and, if so, what the order of the group $G$ is. We expect that this result will have a broad range of applicability beyond $c=1$ VOAs. 

Now, the vacuum character of a nice $c=1$ VOA is extremely constrained by modular invariance \cite{Zhu:1996gaq,Coste:1999yc,Ng:2010win,dong2015congruence}, and in particular by the Serre-Stark theorem \cite{serre2006modular}, as was first noticed by Kiritsis \cite{Kiritsis:1988et,Kiritsis:1988es}. Indeed, at $c=1$, modularity implies that $\mathrm{ch}_V(\tau)$ can be expanded in theta functions, as we describe in Lemma \ref{lem:c=1VacChar}. It turns out that we do not actually need to carry out an explicit enumeration of healthy-looking vacuum characters to finish our argument: the general expression from Lemma \ref{lem:c=1VacChar}, coupled with an application of Galois symmetry, is enough to invoke Theorem \ref{thm:TannakianVacChar} and conclude that $V$ is Tannakian. What is more, the general shape of the vacuum character is constrained enough that we are able to conclude that its maximal conformal extension $V_{\Rep(G)}$ has a vacuum character which agrees with that of $V_{2t_\star}$ for some square-free integer $t_\star$. Invoking a celebrated characterization of lattice VOAs by Dong and Mason \cite{Dong:2002fs}, this actually implies that $V_{\Rep(G)}$ is isomorphic to $V_{2t_\star}$, which finishes the argument.

\begin{figure}
    \begin{center}
\begin{footnotesize}
\tikzset{every picture/.style={line width=0.75pt}}    
\begin{tikzpicture}[x=0.75pt,y=0.75pt,yscale=-1,xscale=1,scale=.8]
\draw  [draw opacity=0][fill={rgb, 255:red, 184; green, 233; blue, 134 }  ,fill opacity=1 ][line width=1.5]  (66.2,119.9) -- (66.2,10) -- (162.2,57.1) -- (162.2,167) -- cycle ;
\draw [line width=1.5]  [dash pattern={on 1.69pt off 2.76pt}]  (6.5,10) -- (66.2,10) ;
\draw [line width=1.5]  [dash pattern={on 1.69pt off 2.76pt}]  (106.5,167) -- (162.2,167) ;
\draw [line width=1.5]  [dash pattern={on 1.69pt off 2.76pt}]  (6.5,119.9) -- (66.2,119.9) ;
\draw [line width=1.5]  [dash pattern={on 1.69pt off 2.76pt}]  (106.5,57.1) -- (162.2,57.1) ;
\draw    (197,155.5) -- (254.75,155.5) ;
\draw [shift={(256.75,155.5)}, rotate = 180] [color={rgb, 255:red, 0; green, 0; blue, 0 }  ][line width=0.75]    (10.93,-3.29) .. controls (6.95,-1.4) and (3.31,-0.3) .. (0,0) .. controls (3.31,0.3) and (6.95,1.4) .. (10.93,3.29)   ;
\draw  [draw opacity=0][fill={rgb, 255:red, 245; green, 166; blue, 35 }  ,fill opacity=.7 ][line width=1.5]  (372.2,118.9) -- (372.2,9) -- (468.2,56.1) -- (468.2,166) -- cycle ;
\draw [line width=1.5]  [dash pattern={on 1.69pt off 2.76pt}]  (312.5,9) -- (372.2,9) ;
\draw [line width=1.5]  [dash pattern={on 1.69pt off 2.76pt}]  (412.5,166) -- (468.2,166) ;
\draw [line width=1.5]  [dash pattern={on 1.69pt off 2.76pt}]  (312.5,118.9) -- (372.2,118.9) ;
\draw [line width=1.5]  [dash pattern={on 1.69pt off 2.76pt}]  (412.5,56.1) -- (468.2,56.1) ;
\draw    (490,155) -- (547.75,155) ;
\draw [shift={(549.75,155)}, rotate = 180] [color={rgb, 255:red, 0; green, 0; blue, 0 }  ][line width=0.75]    (10.93,-3.29) .. controls (6.95,-1.4) and (3.31,-0.3) .. (0,0) .. controls (3.31,0.3) and (6.95,1.4) .. (10.93,3.29)   ;
\draw  [draw opacity=0][fill={rgb, 255:red, 184; green, 233; blue, 134 }  ,fill opacity=.7 ][line width=1.5]  (658.2,118.9) -- (658.2,9) -- (754.2,56.1) -- (754.2,166) -- cycle ;
\draw [line width=1.5]  [dash pattern={on 1.69pt off 2.76pt}]  (598.5,9) -- (658.2,9) ;
\draw [line width=1.5]  [dash pattern={on 1.69pt off 2.76pt}]  (698.5,166) -- (754.2,166) ;
\draw [line width=1.5]  [dash pattern={on 1.69pt off 2.76pt}]  (598.5,118.9) -- (658.2,118.9) ;
\draw [line width=1.5]  [dash pattern={on 1.69pt off 2.76pt}]  (698.5,56.1) -- (754.2,56.1) ;

\draw (2,73.1) node [anchor=north west][inner sep=0.75pt]    {$\mathrm{TQFT}_{V}$};
\draw (105.2,79.9) node [anchor=north west][inner sep=0.75pt]  [color={rgb, 255:red, 65; green, 117; blue, 5 }  ,opacity=1 ]  {$V$};
\draw (172,125) node [anchor=north west][inner sep=0.75pt]    {$\text{gauge} \ \mathrm{Rep}( G)^{( 1)}$};
\draw (220,53.1) node [anchor=north west][inner sep=0.75pt]    {$ \begin{array}{l}
\mathrm{TQFT}_{V} /\mathrm{Rep}( G)^{( 1)}\\
\ \ \ \ \ \ \cong U( 1)_{2t_\star} \ \text{CS}
\end{array}$};
\draw (383.2,65.9) node [anchor=north west][inner sep=0.75pt]  [color={rgb, 255:red, 208; green, 2; blue, 27 }  ,opacity=1 ]  {$ \begin{array}{l}
V_{\mathrm{Rep}( G)}\\
\ \ \ \cong V_{2t_\star}
\end{array}$};
\draw (486,125) node [anchor=north west][inner sep=0.75pt]    {$\text{gauge} \ G^{( 0)}$};
\draw (512,55.1) node [anchor=north west][inner sep=0.75pt]    {$ \begin{array}{l}
\left( U( 1)_{2t_\star} \ \text{CS}\right) /G^{( 0)}\\
\ \ \ \ \ \ \ \cong \mathrm{TQFT}_{V}
\end{array}$};
\draw (678.2,70.9) node [anchor=north west][inner sep=0.75pt]  [color={rgb, 255:red, 65; green, 117; blue, 5 }  ,opacity=1 ]  {$ \begin{array}{l}
V_{2t_\star}^{G}\\
\ \cong V
\end{array}$};
\end{tikzpicture}
\end{footnotesize}
    \end{center}
\caption{A physical interpretation of our proof of the classification of nice $c=1$ chiral algebras. We argue that any such chiral algebra $V$ occurs on the boundary of a TQFT whose spinless anyons generate a $\mathrm{Rep}(G)$ one-form symmetry. When $\mathrm{Rep}(G)$ is gauged, the bulk goes over to $U(1)_{2t_\star}$ Chern-Simons theory for some square-free $t_\star$, and the boundary goes over to the free chiral boson theory $V_{2t_\star}$. This bulk/boundary theory inherits a dual/quantum $G$ zero-form symmetry which, on the one hand, ``undoes'' the $\mathrm{Rep}(G)$ gauging and, on the other, has the effect of projecting the local operators on the boundary to the $G$-invariant subalgebra $V_{2t_\star}^G$. Thus, every nice $c=1$ chiral algebra $V$ is of the form $V_{2t_\star}^G$ for some square-free $t_\star$ and some finite subgroup $G\subset \mathrm{Aut}(V_{2t_\star})$.}\label{fig:GRep(G)gauging}
\end{figure}

\paragraph{Proof sketch for physicists.} We adopt the point of view that a rational chiral algebra is a gapless chiral boundary condition of a 3D semisimple TQFT \cite{Witten:1988hf,Elitzur:1989nr}. We use $\mathrm{TQFT}_V$ to denote this bulk theory which hosts $V$ as its edge modes. See the left of Figure \ref{fig:GRep(G)gauging}.

The first step of our argument is to show that, for any nice $c=1$ chiral algebra $V$, the  spinless anyons of $\mathrm{TQFT}_V$ generate a $\Rep(G)$ non-invertible 1-form symmetry. We call nice VOAs with this property \emph{Tannakian}. 

A surprising fact which we establish in Theorem \ref{thm:TannakianVacChar} is that whether or not a nice VOA $V$ is Tannakian can be read off purely from its vacuum character, Equation \eqref{eqn:vaccharV}, 
which can be thought of as the partition function of $\mathrm{TQFT}_V$ on $D^2\times S^1$, with $V$ imposed as a boundary condition on $\partial(D^2\times S^1)\cong T^2$. (The complex structure of $T^2$ is taken to be $\tau$.) In particular, one needs only check that Equation  \eqref{eqn:tannakiancondition} is satisfied to conclude that $V$ is Tannakian, which is straightforward if one understands how $\mathrm{ch}_V(\tau)$ transforms under $\tau\mapsto-1/\tau$.

The vacuum character in turn is extremely constrained by modular covariance. Indeed, letting $\eta(\tau) = q^{1/24}\prod_{n=1}^\infty (1-q^n)$ be the Dedekind eta function, one can show that $\eta(\tau)\mathrm{ch}_V(\tau)$ is a weight-$\sfrac12$ holomorphic modular form for $\Gamma_1(C_V)$, where $C_V$ is the so-called ``conductor'' of $V$ (see Lemma \ref{lem:FVmodularity}). Therefore, $\eta(\tau)\mathrm{ch}_V(\tau)$ is precisely the kind of object that is classified by the Serre-Stark theorem \cite{serre2006modular}, which asserts that it can be expanded in theta functions (see Lemma \ref{lem:c=1VacChar} for the precise statement). It turns out that even this rough shape, after an application of Galois symmetry, gives us enough information about $V$ to invoke Theorem \ref{thm:TannakianVacChar} and conclude that it is Tannakian.

The second step of our argument involves gauging the $\Rep(G)$ one-form symmetry of $\mathrm{TQFT}_V$ in the presence of its $V$ chiral boundary condition, as in the passage from the left to the middle of Figure \ref{fig:GRep(G)gauging}. In addition to modifying the bulk TQFT, the effect of gauging on the boundary is to extend $V$ to a larger rational chiral algebra which we call $V_{\Rep(G)}$, whose vacuum character is precisely the function $P_V(\tau)$ defined in \eqref{eqn:introPVdef}. 

Again, $\mathrm{ch}_V(\tau)$ is constrained heavily enough by the Serre-Stark theorem to conclude that the coefficient of $q^{1-1/24}$ in $P_V(\tau)$ is non-vanishing, and therefore that $V_{\Rep(G)}$ has a non-zero number of spin-1 Noether currents. In particular, the rank $r$ of the Lie algebra of the global symmetry group of $V_{\Rep(G)}$ is at least $1$. In (pseudo)-unitary chiral algebras, the rank $r$ is always bounded from above by the central charge $c$, with equality if and only if the theory is a collection of $c$-many free chiral bosons with vertex operators belonging to a rank-$c$ charge lattice.\footnote{More generally, the rank $r$ is always bounded from above by the effective central charge $c-24h_{\mathrm{min}}$, where $h_{\mathrm{min}}$ is the minimum taken over the conformal dimensions of the primaries \cite{Dong:2002fs}.} Thus, at $c=1$, having a non-zero number of spin-1 currents is enough to conclude that the bulk TQFT after gauging is $U(1)_{2t_\star}$ Chern-Simons theory for some $t_\star$, and $V_{\Rep(G)}$ is its (essentially unique) free boson boundary condition, which we call $V_{2t_\star}$. (In fact, it turns out that $t_\star$ is a square-free integer.)

The last step of our argument is to invoke the fact that orbifolding a finite symmetry can always be ``undone'' by orbifolding a ``quantum symmetry'' \cite{Vafa:1989ih}. In the present setting, we have on general grounds that the bulk/boundary theory after gauging the $\Rep(G)$ one-form symmetry itself has a non-anomalous $G$ zero-form symmetry. Each element of $G$ is implemented by a topological surface which is able to end on a topological line junction on the boundary, and acts on the chiral algebra via automorphisms (see e.g.\ \cite{Gannon:2026ttf} for a recent discussion).  Gauging this $G$ zero-form symmetry, on the one hand, has the effect of throwing out the operators on the boundary that are charged under the $G$ symmetry. On the other hand, it should simply recover the chiral algebra $V$ we started with, which by assumption was an arbitrary nice $c=1$ chiral algebra. This is depicted in the passage from the middle to the right of Figure \ref{fig:GRep(G)gauging}.

Putting all this together, we learn that any nice $c=1$ VOA $V$ can be obtained as the $G$-invariant subalgebra of a free chiral boson theory $V_{2t_\star}$, for some square-free $t_\star$ and some finite subgroup $G$ of automorphisms of $V_{2t_\star}$. It is a straightforward exercise to enumerate the chiral algebras of this form, and the result is precisely the theories reported in the statement of Theorem \ref{thm:c=1classification}. 

\subsection{Relation to previous work}

We pause to emphasize that the classification of nice $c=1$ chiral algebras and conformal field theories is a problem to which countless authors have contributed (see e.g.\ \cite{Ginsparg:1987eb,Dijkgraaf:1987vp,Kiritsis:1988et,Kiritsis:1988es,Dijkgraaf:1989hb,Carpi:1999pq,Cappelli:2002wq,Carpi:2004bg,xu2005strong,dong2011characterization,dong2013characterization,10.1007/978-3-662-43831-2_3,lin2017characterizations,carpi2019classification,xu2026c2cofiniteness} for a partial list). Let us briefly compare our results to some of the existing literature. 

Kiritsis \cite{Kiritsis:1988et,Kiritsis:1988es} claimed a proof of the classification of full rational $c=1$ CFTs. However, the arguments carried out in op.\ cit.\ were sufficiently compressed that we have opted to structure our proofs so that they are logically independent. Even if one accepts that the contents of these papers are ultimately true, Kiritsis only attempted to classify consistent genus-1 partition functions, which are  not sufficient to characterize a CFT. Indeed, there exist ``isospectral'' theories which have the same torus partition functions but nevertheless are not isomorphic, such as the chiral $E_{8,1}\otimes E_{8,1}$ and $\textsl{Spin}(32)_1/\mathbb{Z}_2$ WZW models. In other words, one cannot hear the shape of a 2D CFT \cite{kac1966can}. The starting point of our approach is similar to that of Kiritsis in that we invoke the Serre-Stark theorem to constrain the genus-1 data of nice $c=1$ CFTs. However, we do not ever need to explicitly enumerate ``healthy-looking'' $q$-series, and therefore bypass some of the more case-by-case analyses of \cite{Kiritsis:1988et,Kiritsis:1988es}.

Xu \cite[Theorem 4.6]{xu2005strong} obtained an analog of Theorem \ref{thm:c=1classification} for conformal nets on the assumption of a spectrum condition, which requires roughly that a degenerate module of the $c=1$ Virasoro algebra other than the vacuum should appear in the conformal net. Conformal nets are expected to be equivalent to unitary VOAs \cite{Carpi:2015fga,Carpi:2023onx,henriques2025every} though, at the time of writing, a complete dictionary appears to be quite far away. Nevertheless, we can take for granted the expected equivalence and compare Xu's result to ours. After translating it to a statement about VOAs, it is in some ways stronger than our Theorem \ref{thm:c=1classification} and in other ways weaker. On the one hand, Xu does not require rationality. On the other hand, he imposes unitarity (compared to our requirement of pseudo-unitary) and, more importantly, the spectrum condition. Although one can in fact check that the nice $c=1$ VOAs appearing in Theorem \ref{thm:c=1classification} do satisfy the spectrum condition, it is unclear a priori why it has to hold, particularly in the absence of rationality.

Finally, in a series of papers \cite{dong2011characterization,dong2013characterization,10.1007/978-3-662-43831-2_3,lin2017characterizations}, most $c=1$ VOAs were characterized in terms of their low-lying spectrum (i.e.\ in terms of the dimensions of their weight spaces with small conformal dimension). However, as far as the authors are aware, the icosahedral VOA $V_I=V_2^{A_5}$ was never characterized in this manner. Even granting that this gap may soon be filled, completing the $c=1$ classification using these papers seems to require accepting the partition function characterization of \cite{Kiritsis:1988et,Kiritsis:1988es}, whose complete proof has not yet appeared in the literature. Finally, the arguments in all these cases are long, computationally involved, and distributed across many papers in the literature. For all these reasons, we felt that a conceptually and technically simpler approach to the classification of nice $c=1$ VOAs is desirable. Although the full paper clocks in at nearly 75 pages, most of this length comes from appendices, with the core argument requiring less than 20.

\subsection{A fermionic corollary:  classification of nice \texorpdfstring{$c=1$}{c=1} VOSAs}

Before concluding the introduction, we describe one further result. We obtain the classification of nice $c=1$ vertex operator \emph{super}algebras (VOSAs) as a straightforward corollary of Theorem \ref{thm:c=1classification}. See Definition \ref{def:niceVOSA} for our notion of a ``nice'' VOSA. A physicist should not understand the word ``super'' as referring to supersymmetry: rather, VOSAs are a mathematical axiomatization of the algebra of holomorphic local operators in the NS sector of a \emph{fermionic} 2D CFT which is not necessarily supersymmetric.

Let us denote the decomposition of a VOSA into even and odd subspaces with respect to the canonical involution $(-1)^F$ (i.e.\ fermion parity) as 
\begin{align}
    V=V_{\bar 0}\oplus V_{\bar 1}.
\end{align}
We use the notation $V_m$ to refer to the VOSA corresponding to the lattice $\sqrt{m}\mathbb{Z}$ and $V_m^+$ to refer to the fixed-points with respect to the $\mathbb{Z}_2^{\mathrm{C}}$ charge-conjugation automorphism. 
Physically, $V_m$ refers to the holomorphic boundary of $U(1)_m$ Chern-Simons theory, which is a spin TQFT when $m$ is odd \cite{Belov:2005ze}, and hence possesses a fermionic chiral algebra boundary. (For example, $V_1\cong \mathrm{Fer}(2)$ is the theory of 2 free chiral fermions.) Similarly, $V_m^+$ refers to the holomorphic boundary of $O(2)_m\cong U(1)_m/\mathbb{Z}_2^{\mathrm{C}}$ Chern-Simons theory.

Our second main result says that these are the only nice VOSAs at $c=1$. 

\begin{mainresult}
    \textbf{Theorem \ref{thm:c=1VOSAs}.} A vertex operator superalgebra $V$ is nice with $c=1$ and $V_{\bar 1}\neq 0$ if and only if it is isomorphic to one of the following, 
    \begin{align}
        V_m \ (m\geq 1, \ m\text{ odd}), \ \ \ \ \  V_m^+ \ (m\geq 1, \ m\text{ odd}).
    \end{align}
\end{mainresult}
Our proof relies on the observation that VOAs and VOSAs at a given central charge are related in a precise way by bosonization and fermionization maps. This was also the starting point for the recent classification of chiral fermionic CFTs with central charge $c\leq 24$ in \cite{BoyleSmith:2023xkd,Hohn:2023auw,Rayhaun:2023pgc}. 

The basic idea, which we state precisely in Proposition \ref{prop:chiralfermionization}, is that a VOSA $V=V_{\bar 0}\oplus V_{\bar 1}$ with $V_{\bar 1}\neq 0$ can always be obtained from a VOA $V_{\bar 0}$ equipped with a choice of simple module $V_{\bar 1}$ with $h(V_{\bar 1})\in \frac12+\mathbb{Z}_{\geq 0}$ and with fusion rule $V_{\bar 1}\boxtimes V_{\bar 1}\cong V_{\bar 0}$. We refer to a $V_{\bar 0}$-module $V_{\bar 1}$ with these two properties as a \emph{fermionic module}. 

Physically, one imagines starting with $V_{\bar 0}$, thought of as a gapless chiral boundary of a 3D TQFT. From the relationship between simple modules of a chiral algebra and anyons in the bulk TQFT, a choice of fermionic module $V_{\bar 1}$ determines a fermionic anyon $f$ in the bulk generating a $\mathbb{Z}_2^{(1)}$ one-form symmetry. The fact that $f$ is a fermion implies that this $\mathbb{Z}_2^{(1)}$ has a 't Hooft anomaly. 

Although the anomaly obstructs us from gauging $\mathbb{Z}_2^{(1)}$ to obtain a bosonic TQFT, one can instead apply the fermionization procedure to $f$. One first stacks $\mathrm{TQFT}_{V_{\bar 0}}$ with the unique invertible 3D spin TQFT with vanishing central charge, which we call $\mathrm{sVec}$. Since $\mathrm{TQFT}_{V_{\bar 0}}$ is placed on a spacetime with boundary, one needs to also choose a boundary condition for $\mathrm{sVec}$: there are two choices which are related by stacking with the Arf-invariant \cite{Karch:2019lnn}, i.e.\ the continuum limit of the Kitaev-Majorana chain \cite{Kitaev:2000nmw}. The precise choice will not matter in what follows.\footnote{The choice of boundary for $\mathrm{sVec}$ does not alter the VOSA $V$ we get from the fermionization procedure. Toggling from one choice of boundary to the other simply alters how $(-1)^F$ acts on the Ramond sector modules (i.e.\ $(-1)^F$-twisted modules) of $V$.} The theory $\mathrm{sVec}$ has a transparent fermion $\psi$ and, to fermionize, we simply gauge the $\mathbb{Z}_2$ one-form symmetry generated by $f\otimes \psi$.

The theory obtained after gauging is a spin TQFT in the bulk. The fermionic chiral algebra $V=V_{\bar 0}\oplus V_{\bar 1}$ lives on the boundary. This bulk/boundary system inherits a zero-form fermion parity symmetry $(-1)^F$ as the dual/quantum symmetry of the one-form gauging, which is implemented by a topological surface operator terminating on a line junction on the boundary. Gauging the fermion parity symmetry is sometimes called bosonization, and returns us back to the theory we started with. 

The fermionization and bosonization procedures are illustrated in Figure \ref{fig:chiralfermionization}. The upshot is that the classification of nice VOSAs at $c=1$ amounts to scanning over the nice $c=1$ VOAs classified by Theorem \ref{thm:c=1classification} and characterizing the inequivalent ways of fermionizing them. The result of this straightforward analysis is Theorem \ref{thm:c=1VOSAs}.

\begin{figure}
    \begin{center}

\begin{footnotesize}
\tikzset{every picture/.style={line width=0.75pt}}  
\begin{tikzpicture}[x=0.75pt,y=0.75pt,yscale=-1,xscale=1,scale=.8]
\draw  [draw opacity=0][fill={rgb, 255:red, 184; green, 233; blue, 134 }  ,fill opacity=.7 ][line width=1.5]  (71.2,116.9) -- (71.2,7) -- (167.2,54.1) -- (167.2,164) -- cycle ;
\draw [line width=1.5]  [dash pattern={on 1.69pt off 2.76pt}]  (11.5,7) -- (71.2,7) ;
\draw [line width=1.5]  [dash pattern={on 1.69pt off 2.76pt}]  (111.5,164) -- (167.2,164) ;
\draw [line width=1.5]  [dash pattern={on 1.69pt off 2.76pt}]  (11.5,116.9) -- (71.2,116.9) ;
\draw [line width=1.5]  [dash pattern={on 1.69pt off 2.76pt}]  (111.5,54.1) -- (167.2,54.1) ;
\draw    (202,152.5) -- (259.75,152.5) ;
\draw [shift={(261.75,152.5)}, rotate = 180] [color={rgb, 255:red, 0; green, 0; blue, 0 }  ][line width=0.75]    (10.93,-3.29) .. controls (6.95,-1.4) and (3.31,-0.3) .. (0,0) .. controls (3.31,0.3) and (6.95,1.4) .. (10.93,3.29)   ;
\draw  [draw opacity=0][fill={rgb, 255:red, 245; green, 166; blue, 35 }  ,fill opacity=.7 ][line width=1.5]  (377.2,115.9) -- (377.2,6) -- (473.2,53.1) -- (473.2,163) -- cycle ;
\draw [line width=1.5]  [dash pattern={on 1.69pt off 2.76pt}]  (317.5,6) -- (377.2,6) ;
\draw [line width=1.5]  [dash pattern={on 1.69pt off 2.76pt}]  (417.5,163) -- (473.2,163) ;
\draw [line width=1.5]  [dash pattern={on 1.69pt off 2.76pt}]  (317.5,115.9) -- (377.2,115.9) ;
\draw    (495,152) -- (552.75,152) ;
\draw [shift={(554.75,152)}, rotate = 180] [color={rgb, 255:red, 0; green, 0; blue, 0 }  ][line width=0.75]    (10.93,-3.29) .. controls (6.95,-1.4) and (3.31,-0.3) .. (0,0) .. controls (3.31,0.3) and (6.95,1.4) .. (10.93,3.29)   ;
\draw  [draw opacity=0][fill={rgb, 255:red, 184; green, 233; blue, 134 }  ,fill opacity=.7 ][line width=1.5]  (663.2,115.9) -- (663.2,6) -- (759.2,53.1) -- (759.2,163) -- cycle ;
\draw [line width=1.5]  [dash pattern={on 1.69pt off 2.76pt}]  (603.5,6) -- (663.2,6) ;
\draw [line width=1.5]  [dash pattern={on 1.69pt off 2.76pt}]  (703.5,163) -- (759.2,163) ;
\draw [line width=1.5]  [dash pattern={on 1.69pt off 2.76pt}]  (603.5,115.9) -- (663.2,115.9) ;
\draw [line width=1.5]  [dash pattern={on 1.69pt off 2.76pt}]  (703.5,53.1) -- (759.2,53.1) ;
\draw [color={rgb, 255:red, 74; green, 144; blue, 226 }  ,draw opacity=1 ][line width=1.5]    (17,51) -- (93.5,51) ;
\draw  [fill={rgb, 255:red, 0; green, 0; blue, 0 }  ,fill opacity=1 ] (90,51) .. controls (90,49.07) and (91.57,47.5) .. (93.5,47.5) .. controls (95.43,47.5) and (97,49.07) .. (97,51) .. controls (97,52.93) and (95.43,54.5) .. (93.5,54.5) .. controls (91.57,54.5) and (90,52.93) .. (90,51) -- cycle ;
\draw  [draw opacity=0][fill={rgb, 255:red, 80; green, 227; blue, 194 }  ,fill opacity=.7 ] (319.95,26.08) -- (375.81,26) -- (475.97,68) -- (420.1,68.08) -- cycle ;
\draw [color={rgb, 255:red, 74; green, 144; blue, 226 }  ,draw opacity=1 ][line width=1.5]    (377,26) -- (473.1,68) ;
\draw [line width=1.5]  [dash pattern={on 1.69pt off 2.76pt}]  (417.5,53.1) -- (473.2,53.1) ;
\draw (4,70.1) node [anchor=north west][inner sep=0.75pt]    {$\mathrm{TQFT}_{V_{\bar{0}}}$};
\draw (107.2,106.9) node [anchor=north west][inner sep=0.75pt]  [color={rgb, 255:red, 65; green, 117; blue, 5 }  ,opacity=1 ]  {$V_{\bar{0}}$};
\draw (191,124.4) node [anchor=north west][inner sep=0.75pt]    {$\text{fermionize}$};
\draw (195,46.1) node [anchor=north west][inner sep=0.75pt]    {$ \begin{array}{l}
\ \ \ \ \ \ \ \ \ \ \ \ \ \ \ \mathrm{TQFT}_{V}\\
\cong (\mathrm{TQFT}_{V_{\bar{0}}} \otimes \mathrm{sVec}) /\mathbb{Z}_{2}^{f\otimes \psi }
\end{array}$};
\draw (378.2,96.9) node [anchor=north west][inner sep=0.75pt]  [color={rgb, 255:red, 208; green, 2; blue, 27 }  ,opacity=1 ]  {$V\cong V_{\bar{0}} \oplus V_{\bar{1}}$};
\draw (527,52.1) node [anchor=north west][inner sep=0.75pt]    {$ \begin{array}{l}
\ \ \ \ \ \ \ \ \mathrm{TQFT}_{V_{\bar{0}}}\\
\cong \mathrm{TQFT}_{V} /\mathbb{Z}_{2}^{( -1)^{F}}
\end{array}$};
\draw (669.2,97.9) node [anchor=north west][inner sep=0.75pt]  [color={rgb, 255:red, 65; green, 117; blue, 5 }  ,opacity=1 ]  {$V_{\bar{0}} \cong V^{( -1)^{F}}$};
\draw (489,126) node [anchor=north west][inner sep=0.75pt]   [align=left] {bosonize};
\draw (33,30.4) node [anchor=north west][inner sep=0.75pt]  [color={rgb, 255:red, 74; green, 144; blue, 226 }  ,opacity=1 ]  {$f$};
\draw (354.8,36.4) node [anchor=north west][inner sep=0.75pt]  [color={rgb, 255:red, 74; green, 92; blue, 226 }  ,opacity=1 ]  {$( -1)^{F}$};
\draw (84.2,25.9) node [anchor=north west][inner sep=0.75pt]  [color={rgb, 255:red, 74; green, 144; blue, 226 }  ,opacity=1 ]  {$V_{\bar{1}}$};
\end{tikzpicture}
\end{footnotesize}
    \end{center}
\caption{The bulk TQFT supporting $V_{\bar 0}$ on its boundary is assumed to have a fermionic anyon $f$. The boundary local operators on which it terminates belong to the Hilbert space $V_{\bar 1}$. Fermionizing entails stacking with $\mathrm{sVec}$, which has a transparent fermion $\psi$, and gauging the $\mathbb{Z}_2$ one-form symmetry generated by $f\otimes \psi$. The bulk after fermionizing is a spin TQFT, and the boundary is a fermionic chiral algebra $V\cong V_{\bar 0}\oplus V_{\bar 1}$. The fermionized theory has a fermion parity zero-form symmetry denoted $(-1)^F$. Fermionization can be undone by bosonization, which entails gauging $(-1)^F$. The effect on the boundary is to project out the operators which are odd under $(-1)^F$.}\label{fig:chiralfermionization}
\end{figure}

\subsection{Outline}

The remainder of this paper is organized as follows.

In Section \ref{sec:generalities}, we explain what we mean by a ``nice'' VOA. We also introduce Tannakian VOAs, and establish some basic theorems  about them (most notably, Theorem \ref{thm:TannakianVacChar}) which will be used in our proof of the classification of nice $c=1$ VOAs.

Section \ref{sec:classification} is dedicated to this classification.  In Section \ref{subsec:bosonicassump}, we carefully explain the assumptions behind and the statement of the main result (Theorem \ref{thm:c=1classification}), and also make some nearby observations. In Section \ref{subsec:mainproof}, we present the proof.

Section \ref{sec:VOSAclassification}  describes how to obtain the classification of nice $c=1$ vertex operator \emph{super}algebras, Theorem \ref{thm:c=1VOSAs}, as a corollary of the classification of nice $c=1$ vertex operator algebras.

Section \ref{sec:future} offers several directions for future research. It is followed by a number of technical appendices. Appendix \ref{app:bestiary} records the basic data of the nice $c=1$ VOAs: their modular data, their simple modules, their characters, etc. Appendix \ref{app:exceptional} presents a derivation of the modular data of the exceptional $c=1$ VOAs $V_T$, $V_O$, and $V_I$. In Appendix \ref{app:automorphisms}, we derive some technical results related to the automorphism groups of $c=1$ VOAs which are used in various places, both in this work and in our companion paper \cite{grCFTs}. Appendix \ref{app:QuOpTopLin} characterizes (at a physics level of rigor) the non-invertible symmetries of the $c=1$ VOAs, using ideas from our previous work \cite{Gannon:2026ttf}. (The results of Appendix \ref{app:QuOpTopLin} are not used anywhere in the main text, but are included to complement our discussion of the automorphism groups of $c=1$ VOAs in Appendix \ref{app:automorphisms}.)  \\

\noindent\textbf{Note added:} We sincerely thank Sebastiano Carpi, Tiziano Gaudio, and Luca Giorgetti for coordinating the submission of their related paper \cite{carpietal} to the arXiv on the same day. The authors obtain similar results to ours using complementary methods.

\section{Generalities}\label{sec:generalities}

We begin by establishing some generalities. In Section \ref{subsec:nice}, we explain what we mean by a ``nice'' vertex operator algebra, and also review some of their properties. In Section \ref{subsec:tannakian}, we introduce the notion of a \emph{Tannakian} vertex operator algebra. Tannakian technology will enter prominently in our proof of the classification of nice $c=1$ VOAs, and we expect that the methods we develop here will have applications beyond $c=1$ as well. 

\subsection{Nice vertex operator algebras}\label{subsec:nice}

First, we define the notion of strong rationality.

\begin{definition}
    A VOA is said to be strongly rational if it is simple, rational, $C_2$-cofinite, self-contragredient, and of CFT-type. 
\end{definition}
A physicist loses nothing by imagining a rational chiral algebra. For mathematicians, we recall that simple means that $V$ is simple when viewed as a module over itself. Rational means that the representation theory is semisimple: more precisely, every admissible $V$-module is a direct sum of simple admissible $V$-modules (see e.g.\ \cite{dong1997regularity} for the definition of admissible). The $C_2$ condition, first introduced in \cite{Zhu:1996gaq}, requires that $\dim V/C_2(V)<\infty$, where 
\begin{align}
    C_2(V)=\mathrm{Span}_{\mathbb{C}}\{u_{-2}v\mid u,v\in V\},
\end{align}
in conventions where the operator $Y(u,z)$ associated to a state $u\in V$ is moded as $Y(u,z)=\sum_{n\in\mathbb{Z}}u_nz^{-n-1}$. Self-contragredient means that $V$ is isomorphic to its contragredient module, defined e.g.\ in \cite{frenkel1993axiomatic}. Finally, CFT-type means that the $L_0$ grading on $V$ takes the form
\begin{align}
    V=\bigoplus_{n\geq 0}V_n, \ \ \ \ V_0=\mathbb{C}|\Omega\rangle,
\end{align}
i.e.\ there is a unique vacuum and all operators have non-negative conformal dimension.

A strongly rational VOA has a very nice representation theory. In fact, its representation category $\Rep(V)$ admits the structure of a modular fusion category \cite{Moore:1988qv,Huang:2005gs}.
In particular, there are finitely many simple modules, which we denote $V_a$. The vacuum corresponds to $a=0$, i.e.\ $V\equiv V_{0}$. We write $h_a$ for the conformal dimension of  $V_a$. The Hopf link invariant (un-normalized S-matrix) and quantum dimensions are defined as
\begin{align}\label{eqn:hopflink}
    \tilde{s}_{ab} = \mathrm{tr}_q(c_{V_b,V_a}\circ c_{V_a,V_b}), \ \ \ \ \ d_a=\tilde{s}_{a0},
\end{align}
and the chiral central charge as 
\begin{align}
    e^{\frac{2\pi \ri}{8} c(\Rep(V))}=\frac{1}{\sqrt{\mathrm{Dim}(\Rep(V))}}\sum_{a\in \Rep(V)}d_a^2 e^{2\pi \ri h_a}, \ \ \ \ \mathrm{Dim}(\Rep(V)) :=\sum_{a\in \Rep(V)}d_a^2.
\end{align}

We will make crucial use of the following well-known modular invariance properties of the graded-dimensions of simple modules of strongly rational VOAs \cite{Zhu:1996gaq,Coste:1999yc,Ng:2010win,dong2015congruence} (see also Proposition 3.5 and Lemma 3.6 of \cite{Moller:2024plb}).

\begin{theorem}\label{thm:modularity}
    Let $V$ be a strongly rational VOA of central charge $c$ with simple modules $V_a$. The characters 
    \begin{align}
        \mathrm{ch}_{V_a}(\tau) := \mathrm{Tr}_{V_a}q^{L_0-c/24}
    \end{align}
    transform in a representation of $\textsl{SL}_2(\mathbb{Z})$,
    \begin{align}\label{eqn:STtransform}
        \mathrm{ch}_{V_a}(-1/\tau) = \sum_{b\in \Rep(V)}S_{ab}\mathrm{ch}_{V_b}(\tau), \ \ \ \ \ \mathrm{ch}_{V_a}(\tau+1)=e^{2\pi \ri (h_a-c/24)}\mathrm{ch}_{V_a}(\tau).
    \end{align}
    The S-matrix $S_{ab}$ agrees with the Hopf link invariant of $\Rep(V)$ in \eqref{eqn:hopflink},
    \begin{align}
        S_{ab}/S_{00}=\tilde{s}_{ab},
    \end{align}
    and the normalization $S_{00}$ is given by
    \begin{align}
        S_{00}^{-1}=\epsilon \sqrt{\mathrm{Dim}(\Rep(V))}, \ \ \ \ \ \ \ \ \epsilon=\begin{cases}
            +1,& c-c(\Rep(V))=0~\mathrm{mod}~8\\
            -1, & c-c(\Rep(V))=4~\mathrm{mod}~8.
        \end{cases}
    \end{align}
    If $\Rep(V)$ is pseudo-unitary, then $\epsilon=+1$.
    \end{theorem}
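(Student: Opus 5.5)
This statement is a compilation of known results rather than something we prove from scratch. The plan is to assemble it from Zhu's theorem, Huang's theorems on modularity, and the Galois and congruence-property literature.

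\emph{Step 1: modular invariance.} Strong rationality means $C_2$-cofinite and rational, so Zhu's theorem \cite{Zhu:1996gaq} applies. It says the span of the (genus-one) characters $\mathrm{ch}_{V_a}$ is $\textsl{SL}_2(\mathbb{Z})$-invariant. The $T$-transformation in \eqref{eqn:STtransform} is immediate from the definition: $L_0-c/24$ acts on $V_a$ with eigenvalues in $h_a-c/24+\mathbb{Z}_{\geq 0}$. For $S$ we need the characters to be linearly independent, so that $S_{ab}$ is well defined. Independence holds because the $\mathrm{ch}_{V_a}$ are pairwise distinct. This is the one place extra care is needed, since Zhu's theorem is phrased with trace functions that include insertions of $v\in V$. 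Two inequivalent simple modules can share a $q$-character, for instance a module and its contragredient. We handle this by defining $S$ through the trace functions with insertions, where the functions are linearly independent. Theorem \ref{thm:modularity} then holds with this $S$, and its transformation law implies the one for plain characters.

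\emph{Step 2: identifying $S$ with the Hopf link.} Self-contragredience plus $C_2$-cofiniteness and rationality let us use Huang's Verlinde theorem and modularity theorem \cite{Huang:2005gs}. Huang shows that the genus-one $S$ from Zhu's theorem satisfies $S_{ab}/S_{00}=\tilde s_{ab}$, where $\tilde s_{ab}$ is computed from the braiding of $\Rep(V)$. He also shows that $\Rep(V)$ is modular, so $\tilde s$ is invertible. We quote this identification directly.

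\emph{Step 3: the normalization $S_{00}$.} Unitarity of $S$ and symmetry of $\tilde s$ give $\sum_b |S_{0b}|^2=1$. The entries $\tilde s_{0b}=d_b$ are real, being quantum dimensions under the standard spherical structure. Hence $S_{00}^2\mathrm{Dim}(\Rep(V))=1$, so $S_{00}=\epsilon/\sqrt{\mathrm{Dim}}$ with $\epsilon=\pm1$.

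To pin down $\epsilon$, compare the relation $(ST)^3=S^2$, which holds for Zhu's representation of the characters, with the categorical relation $(\tilde s\,\tilde t)^3=p^+\,\tilde s^2$. The Gauss sum $p^+$ equals $\sqrt{\mathrm{Dim}}\,e^{2\pi \ri c(\Rep(V))/8}$. The character $T$ carries the phase $e^{-2\pi \ri c/24}$, while the categorical $\tilde t$ has entries $e^{2\pi \ri h_a}$. Writing the former in terms of the latter and substituting $S=S_{00}\tilde s$ gives
\begin{align}
S_{00}\sqrt{\mathrm{Dim}}\;e^{2\pi \ri (c(\Rep(V))-c)/8}=1 .
\end{align}
Since $S_{00}\sqrt{\mathrm{Dim}}=\epsilon\in\{\pm1\}$, it follows that $c-c(\Rep(V))\in 4\mathbb{Z}$, and $\epsilon=+1$ exactly when $c-c(\Rep(V))\equiv 0 \bmod 8$. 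This matches the cited arguments of \cite{Ng:2010win,dong2015congruence}, and of \cite[Prop.~3.5, Lem.~3.6]{Moller:2024plb}.

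\emph{Step 4: pseudo-unitary case.} Suppose $\Rep(V)$ is pseudo-unitary, meaning every $d_a>0$. Evaluate $\mathrm{ch}_{V}(-1/\tau)=\sum_b S_{0b}\mathrm{ch}_{V_b}(\tau)$ at $\tau=\ri y$ with $y\to 0^+$, so that $-1/\tau=\ri/y$ and the left side tends to $e^{2\pi c/(24y)}\cdot(1+\dots)>0$. All $\mathrm{ch}_{V_b}(\ri y)$ are positive, being traces of positive operators, and $S_{0b}=S_{00}d_b$ all share the sign of $S_{00}$. Therefore $S_{00}>0$ and $\epsilon=+1$.

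The main obstacle is the sign analysis in Step 3, specifically the bookkeeping of $e^{-2\pi \ri c/24}$ in $T$ against the categorical twists, and we defer to the cited references for it.
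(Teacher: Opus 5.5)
Your proposal is correct and takes essentially the paper's route. The paper gives no argument of its own and simply cites Zhu, Dong--Lin--Ng and the other listed references, whose proofs run through your Steps 1--4: Zhu's trace functions with insertions, Huang's identification $S_{ab}/S_{00}=\tilde s_{ab}$, comparison of $(ST)^3=S^2$ with $(\tilde s\tilde t)^3=p^+\tilde s^2$, and positivity of $S_{00}$ when all $d_b>0$.

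Two small repairs are needed. First, delete the claim that pairwise-distinct characters are linearly independent: it is false, and your fallback to trace functions with insertions is the correct fix. Second, in Step 3 obtain $S_{00}^2\,\mathrm{Dim}(\Rep(V))=1$ from $S^2=C$ together with $\tilde s^2=\mathrm{Dim}(\Rep(V))\,C$, rather than from a unitarity of $S$ that you have not justified; this also shows directly that $S_{00}$ is real, which your sign argument needs.
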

    \begin{theorem}\label{thm:congruence}
    Each character $\mathrm{ch}_{V_a}(\tau)$ is individually invariant under the action of 
    \begin{align}
        \Gamma(C_V)=\left\{ \left(\begin{smallmatrix}\alpha & \beta \\ \gamma & \delta \end{smallmatrix}\right)\in \textsl{SL}_2(\mathbb{Z})\mid \left(\begin{smallmatrix}\alpha & \beta \\ \gamma & \delta \end{smallmatrix}\right)\equiv \left(\begin{smallmatrix}1 & 0 \\ 0 & 1\end{smallmatrix}\right)~\mathrm{mod}~C_V\right\},
    \end{align}
    where $C_V$, called the conductor of $V$, is defined to be the least common multiple of the denominators of $h_a-c/24$.
\end{theorem}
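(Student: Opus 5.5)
The plan is to reduce the statement to a property of the finite-dimensional representation $\rho:\textsl{SL}_2(\mathbb{Z})\to \mathrm{GL}(\mathbb{C}^{\Rep(V)})$ furnished by Theorem \ref{thm:modularity}, in which $\rho(S)=(S_{ab})$ and $\rho(T)=\mathrm{diag}(e^{2\pi\ri(h_a-c/24)})$. The vector of characters $(\mathrm{ch}_{V_a})_a$ satisfies $\mathrm{ch}_{V_a}(\gamma\tau)=\sum_b\rho(\gamma)_{ab}\mathrm{ch}_{V_b}(\tau)$. Since the $\mathrm{ch}_{V_a}$ are linearly independent for strongly rational $V$ (as $q$-series, different $h_a$ or multiplicity considerations, cf.\ Zhu), it suffices to show $\Gamma(C_V)\subseteq\ker\rho$. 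By definition $C_V$ is exactly the order of the diagonal matrix $\rho(T)$. So the claim is: $\rho$ is a congruence representation whose level equals $\mathrm{ord}\,\rho(T)$.

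First I would establish that $\ker\rho$ contains some principal congruence subgroup. This is the categorical input: $\Rep(V)$ is a modular fusion category, and by Ng--Schauenburg the projective representation $\bar\rho$ of $\textsl{SL}_2(\mathbb{Z})$ attached to its normalized $S$ and $T$ has kernel containing $\Gamma(N)$ with $N=\mathrm{ord}(\bar\rho(T))$ the Frobenius--Schur exponent. One must then pass to the honest linear representation $\rho$ given by the characters, which differs from a lift of $\bar\rho$ by a character of $\textsl{SL}_2(\mathbb{Z})$ (a power of $e^{-2\pi\ri c/24}$ on $T$ together with the sign $\epsilon$ on $S$). This is where the factor $c/24$ enters and why the relevant level is the order of $e^{2\pi\ri(h_a-c/24)}$ rather than of $e^{2\pi\ri h_a}$.

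Second, and this is the step I expect to be the main obstacle, I would show that the level of $\rho$ is exactly $\mathrm{ord}\,\rho(T)$ and not merely some multiple of it. The route I would take is that of Dong--Lin--Ng: (i) use Galois symmetry of the modular data, namely that the entries of $\rho(S)$ lie in $\mathbb{Q}(\zeta_N)$ and, for $\sigma_\ell\in\mathrm{Gal}(\mathbb{Q}(\zeta_N)/\mathbb{Q})$, $\sigma_\ell^2(\rho(\gamma))=\rho(\mathfrak{t}_\ell\gamma\mathfrak{t}_\ell^{-1})$-type relations with $\rho(\mathfrak{d}_\ell)=\rho(T^{\ell}ST^{\ell^{-1}}ST^{\ell}S^{-1})$ a signed permutation; (ii) combine this with the Frobenius--Schur indicator/$T$-order argument showing that a representation with finite image, whose kernel is congruence and in which $\rho(T)$ has order $n$ and the $\mathfrak{d}_\ell$ relations hold, factors through $\textsl{SL}_2(\mathbb{Z}/n\mathbb{Z})$. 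The sign $\epsilon$ and the $c/24$ shift need care here: I would handle them by tensoring $\rho$ with the one-dimensional representation $\chi$ with $\chi(T)=e^{2\pi\ri c/24}$ and checking that the congruence level of $\chi$ divides that of $\rho$, essentially by comparing with $\eta^{2c}$-type factors.

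Finally, with $\Gamma(C_V)\subseteq\ker\rho$, each $\mathrm{ch}_{V_a}(\gamma\tau)=\mathrm{ch}_{V_a}(\tau)$ for $\gamma\in\Gamma(C_V)$, which is the statement. In practice I would cite \cite{dong2015congruence} (and \cite{Ng:2010win} for the categorical level result), with \cite[Prop.~3.5, Lem.~3.6]{Moller:2024plb} for the precise bookkeeping of the conductor, rather than reproving the Galois-theoretic step.
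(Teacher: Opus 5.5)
Your route is the paper's. The paper gives no independent argument; it quotes Zhu together with Ng--Schauenburg and Dong--Lin--Ng \cite{Ng:2010win,dong2015congruence}, with the conductor bookkeeping of \cite[Prop.~3.5, Lem.~3.6]{Moller:2024plb}. Their level theorem says exactly that $\Gamma(n)\subseteq\ker\rho$ with $n=\mathrm{ord}\,\rho(T)=C_V$, so citing it, as in your last paragraph, proves the statement. Two remarks in your sketch are wrong, though neither is load-bearing once you cite that theorem.

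First, the characters of a strongly rational VOA are not linearly independent in general. For example, $\mathrm{ch}_M=\mathrm{ch}_{M'}$ for a module and its contragredient, as in $\mathrm{ch}_{V_{2m,r}}=\mathrm{ch}_{V_{2m,-r}}$; likewise $\mathrm{ch}_{V^+_{2m,m}}=\mathrm{ch}_{V^-_{2m,m}}$ for $V_{2m}^+$. The paper stresses this in Appendix~\ref{app:bestiary}. Zhu's independence result concerns the one-point functions $\mathrm{Tr}_M\,o(v)q^{L_0-c/24}$, not the characters. You do not need it: $\rho$ is defined by the matrices of Theorem~\ref{thm:modularity}, and $\rho(\gamma)=1$ gives $\mathrm{ch}_{V_a}(\gamma\tau)=\mathrm{ch}_{V_a}(\tau)$ directly.

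Second, there is no one-dimensional representation $\chi$ of $\textsl{SL}_2(\mathbb{Z})$ with $\chi(T)=e^{2\pi\ri c/24}$ unless $c\in 2\mathbb{Z}$. This is because $\textsl{SL}_2(\mathbb{Z})^{\mathrm{ab}}\cong\mathbb{Z}/12$ forces $\chi(T)^{12}=1$. Already at $c=1$ the number $e^{2\pi\ri/24}$ has order $24$, so the twisting step you describe fails as stated.

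The correct bookkeeping is that $\rho$ itself, with $\rho(T)=e^{-2\pi\ri c/24}\,\mathrm{diag}(e^{2\pi\ri h_a})$ and $\rho(S)=S_{00}\,\tilde s$, is already a linear lift of the categorical projective representation. In Dong--Lin--Ng's language it is a modular representation of $\Rep(V)$, and their theorem gives level exactly $\mathrm{ord}\,\rho(T)$ for every such lift. Any two lifts differ by a character taking a twelfth root of unity on $T$. This is also why $C_V$ can exceed the Frobenius--Schur exponent $N$, with $N\mid C_V\mid 12N$ in general; for instance $C_{V_2}=24$ while $N=4$.
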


We make the further restriction in much of what follows that the conformal dimensions of all (non-vacuum) simple modules are positive.\footnote{In \cite{Moller:2024xtt}, pseudo-unitary VOAs were called \emph{positive}.}

\begin{definition}
A simple VOA is said to be pseudo-unitary if all simple modules not isomorphic to the vacuum have strictly positive conformal dimension. 
\end{definition}

The reason we have chosen this non-standard name is that, if $V$ is a pseudo-unitary and strongly rational VOA, then $\Rep(V)$ has only positive categorical dimensions, and hence is a pseudo-unitary modular fusion category (see Equation (2.10) of \cite{Dijkgraaf:1988tf} or Lemma 4.2 of \cite{dong2013quantum}). We also record the following partial converse for later use.

\begin{proposition}\label{prop:pseudounitarypartialconverse}
    If $V$ is strongly rational and $\Rep(V)$ is pseudo-unitary, then $h_a\geq 0$ for every simple $V$-module $V_a$.
\end{proposition}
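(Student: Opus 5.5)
Since $\Rep(V)$ is pseudo-unitary, every quantum dimension $d_a=S_{0a}/S_{00}$ is positive, and by Theorem \ref{thm:modularity} we have $\epsilon=+1$, so $S_{00}>0$ and hence $S_{0a}>0$ for all $a$. I will compare the leading behaviour of characters as $\tau\to i0^+$ via the $S$-transformation. Write $h_{\min}=\min_a h_a$ and suppose, for contradiction, that $h_{\min}<0$.

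Set $\tau=it$ and let $t\to 0^+$, so $\tau'=-1/\tau=i/t\to i\infty$. By Theorem \ref{thm:modularity}, applied to the vacuum,
\begin{align}
\mathrm{ch}_V(it)=\mathrm{ch}_V(-1/\tau')=\sum_b S_{0b}\,\mathrm{ch}_{V_b}(i/t).
\end{align}
As $t\to 0^+$, each $\mathrm{ch}_{V_b}(i/t)\sim (\dim V_b^{\mathrm{top}})\, e^{-2\pi(h_b-c/24)/t}$. The terms with $h_b=h_{\min}$ therefore dominate. Their total coefficient is $\sum_{h_b=h_{\min}}S_{0b}\dim V_b^{\mathrm{top}}$, which is strictly positive because every $S_{0b}>0$ and the top-space dimensions are positive. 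So no cancellation can occur, and
\begin{align}
\mathrm{ch}_V(it)\asymp e^{2\pi(c/24-h_{\min})/t}, \qquad t\to 0^+.
\end{align}

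Now I need a competing upper bound on the growth of $\mathrm{ch}_V(it)$ from the vacuum module alone. Apply the same argument to each module $V_a$: its character is $\sum_b S_{ab}\mathrm{ch}_{V_b}(i/t)$, which grows at most like $e^{2\pi(c/24-h_{\min})/t}$. Setting $\tilde c=c-24h_{\min}$, every character therefore grows at most like $e^{2\pi\tilde c/(24t)}$, and the vacuum saturates this rate. Positivity alone does not rule this out, so I will bring in unitarity of the $S$-matrix. Take $a$ with $h_a=h_{\min}$. Since $S$ is unitary and symmetric, $\sum_b S_{ab}\overline{S_{0b}}=0$ for $a\neq 0$, whereas $\sum_b |S_{0b}|^2=1$. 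The plan is to consider the vector of leading coefficients of all characters as $t\to 0$, namely $S_{a b_{\min}}$ for the minimal-dimension $b$. Using the $T$-transformation and the Galois action, the Perron--Frobenius vector $(S_{0b})_b$, which is the unique positive eigenvector of the fusion matrices, should be forced to coincide with the column $S_{\cdot\,b_{\min}}$ up to scale. The key fact is that the leading asymptotic coefficient vector of the characters is a positive eigenvector of every fusion matrix. This follows from the Verlinde formula, since $N_a$ acts on the column $S_{\cdot b}$ by the scalar $S_{ab}/S_{0b}$, and positivity of that column then forces $b_{\min}=0$.

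Concretely, the characters $\mathrm{ch}_{V_a}(it)$ are positive real functions, so the column $(S_{a b_{\min}})_a$ of leading coefficients is entrywise nonnegative and nonzero. Some care is needed when several $b$ share $h_{\min}$, where the relevant vector is a positive combination of such columns. By Perron--Frobenius uniqueness for the fusion ring, any nonnegative common eigenvector of the $N_a$ is proportional to $(d_a)_a$, i.e.\ to the column $S_{\cdot 0}$. Columns of a unitary matrix are orthonormal, so $b_{\min}=0$ and $h_{\min}=h_0=0$, a contradiction. The main obstacle is making this Perron--Frobenius step rigorous when $h_{\min}$ is attained by several simples, which requires the linear independence of the $S$-columns. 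I expect that handling it carefully will complete the proof that $h_a\ge 0$ for all $a$.
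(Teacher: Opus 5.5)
Your setup is sound. With $B=\{b:h_b=h_{\min}\}$, the normalized characters $e^{-2\pi\tilde c/(24t)}\mathrm{ch}_{V_a}(\ri t)$ converge as $t\to0^+$ to $v_a=\sum_{b\in B}S_{ab}\dim(V_b)_{h_{\min}}$. Positivity of characters gives $v_a\ge 0$, and pseudo-unitarity gives $v_0>0$.

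The gap is in the step meant to finish the argument. Your ``key fact'', that this leading-coefficient vector is a common eigenvector of the fusion matrices, holds only when $|B|=1$. By Verlinde, $N_c$ acts on the column $S_{\cdot b}$ by the scalar $S_{cb}/S_{0b}$, and since $S$ is invertible these eigenvalue profiles differ for different $b$. So a positive combination $\sum_{b\in B}m_bS_{\cdot b}$ with $|B|\ge 2$ is not an eigenvector, and Perron--Frobenius uniqueness says nothing about it. Nothing in the hypotheses lets you assume $|B|=1$, and in a proof by contradiction you certainly cannot. As written, your proof therefore covers only a special case; you flag this yourself as the ``main obstacle'' but leave it unresolved. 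The remarks about the $T$-transformation and the Galois action are not an argument and play no role.

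The missing idea is to \emph{pair} $v$ with the strictly positive vacuum column, rather than trying to identify $v$ with it. Since $S$ is symmetric and $S^2$ is the charge conjugation matrix (with $0^\ast=0$),
\[
\sum_a v_aS_{a0}=\sum_{b\in B}(S^2)_{b0}\dim(V_b)_{h_{\min}}=\begin{cases}\dim(V_0)_{h_{\min}}, & 0\in B,\\ 0, & 0\notin B.\end{cases}
\]
On the other hand, $v_a\ge 0$, $S_{a0}=S_{00}d_a>0$ and $v_0>0$ make the left-hand side strictly positive. Hence $0\in B$ and $h_{\min}=0$.

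This is exactly the paper's proof. It treats every $B$ at once and needs no Perron--Frobenius or Verlinde input. Even for $|B|=1$ your Perron--Frobenius step is superfluous, because a nonnegative nonzero column cannot be orthogonal to a strictly positive one.

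Your route can be rescued. The span of $\{S_{\cdot b}\}_{b\in B}$ is invariant under the entrywise positive matrix $\sum_c N_c$. Power iteration from $v$ then converges to its Perron--Frobenius vector, which is proportional to $S_{\cdot 0}$, and linear independence of the columns gives $0\in B$. However, that convergence relies on $\sum_a v_a d_a>0$, which is the paper's one-line argument anyway.
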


\begin{proof}
    Let $h_{\mathrm{min}}$ be the minimum taken over the conformal dimensions $h_a$ of the simple modules $V_a$, and let $B=\{b\mid h(V_b)=h_{\mathrm{min}}\}$ index the $V$-modules with conformal dimension equal to $h_{\mathrm{min}}$.  Then, using Theorem \ref{thm:modularity}, we have that
    \begin{align}
        \lim_{t\to\infty}e^{-2\pi t\tilde{c}/24}\mathrm{ch}_{V_a}(\ri/ t)= v_a:= \sum_{b\in B}S_{ab}\dim((V_{b})_{h_{\mathrm{min}}}),
    \end{align}
    where $\tilde{c}=c-24 h_{\mathrm{min}}$ is the effective central charge. Because $\mathrm{ch}_{V_a}(\ri/t)>0$ for every $t>0$, we must have that $v_a\geq 0$. For $a=0$, pseudo-unitarity implies that the inequality is strict,
    \begin{align}
        v_0=S_{00}\sum_{b\in B}d_b \mathrm{dim}((V_b)_{h_{\mathrm{min}}})>0.
    \end{align}
    On the other hand, we can calculate that 
    \begin{align}
    \begin{split}
        0<\sum_a v_a S_{a0}&=\sum_{a}
        \sum_{b\in B}S_{ab}S_{a0}\dim((V_b)_{h_{\mathrm{min}}})=\sum_{b\in B}(S^2)_{b0}\dim((V_b)_{h_{\mathrm{min}}}) \\
        &=\begin{cases}
            \dim((V_0)_{h_{\mathrm{min}}}), & 0\in B \\
            0, & 0\notin B.
        \end{cases}
    \end{split}
    \end{align}
    In going from the first line to the second line, we have used that $S$ squares to the charge conjugation matrix, $(S^2)_{ab}=\delta_{ab^\ast}$. We reach a contradiction if $0\notin B$, hence $h_{\mathrm{min}}=0$.
\end{proof}

Another reason for the name is that unitary vertex operator algebras (with unitarizable modules) are automatically pseudo-unitary, so the prefix ``pseudo'' nicely reflects the fact that we are imposing a condition which is weaker than unitarity + unitarizable modules. 

We comment that we are not aware of any examples of pseudo-unitary strongly rational VOAs which are not also unitary (with unitarizable modules). We nevertheless make the milder assumption of pseudo-unitarity in what follows in case the class of pseudo-unitary VOAs turns out to be larger than the class of unitary ones (with unitarizable modules).

In order to avoid having to repeat all these hypotheses, we make the following definition. 

\begin{definition}\label{def:niceVOA}
    A VOA is said to be nice if it is strongly rational and pseudo-unitary.
\end{definition}

As we mentioned in the introduction, there are three known families of nice $c=1$ VOAs: lattice VOAs, charge conjugation orbifolds, and exceptional models. These are reviewed in detail in Appendix \ref{app:bestiary}. The proof that these are \emph{all} nice $c=1$ VOAs is the topic of Section \ref{sec:classification}.

Nice VOAs have an extremely well-controlled extension theory which can be formulated in the language of (tensor) categories. We will use the following synthesis of the results of \cite{Kirillov:2001ti,Huang:2014ixa,Creutzig:2024qvb} (summarized in Proposition 2.19 of \cite{Moller:2024xtt} in a form which will be more directly useful for our purposes).

\begin{proposition}\label{prop:ExtAlg}
    Let $V$ be a nice VOA. Then there is a bijection between:
    \begin{enumerate}[label=(\arabic*)]
        \item (equivalence classes of) conformal VOA extensions $W\supset V$ such that $W$ is simple (and hence nice),
        \item (equivalence classes of) condensable algebras $A$ in $\Rep(V)$.
    \end{enumerate}
    As an object of $\Rep(V)$, the algebra $A$ is the restriction of $W$, thought of as a $V$-module, to $V$. The representation category of a conformal extension $W\supset V$ is given by $\Rep(W)\cong \Rep(V)_A^{\mathrm{loc}}$, the category of local $A$-modules in $\Rep(V)$. 
\end{proposition}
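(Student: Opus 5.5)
The plan is to assemble the bijection from three known inputs and then verify that ``nice'' passes to the extension. First, the correspondence of Huang--Kirillov--Lepowsky \cite{Huang:2014ixa}. For a VOA $V$ whose module category is a braided tensor category in the sense of Huang--Lepowsky (which holds for strongly rational $V$ by \cite{Huang:2005gs}), it says the following. Conformal extensions $W\supset V$ with $W$ an (ordinary, $\mathbb{Z}$-graded) $V$-module correspond bijectively, up to equivalence, to commutative associative unital algebras $A$ in $\Rep(V)$ with trivial twist $\theta_A=\mathrm{id}_A$. I will only sketch why the dictionary works. The vertex operator $Y_W$ restricted to $V$-submodules gives an intertwining operator, i.e.\ a morphism $\mu:A\boxtimes A\to A$. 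Commutativity of $\mu$ with respect to the braiding is skew-symmetry, and associativity of $\mu$ is associativity of $Y_W$. Integrality of the $L_0$-grading of $W$ is exactly $\theta_A=\mathrm{id}$. Under this dictionary $A$ is the restriction of $W$ to $V$, which gives the second sentence of the statement.

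Next I would match ``$W$ simple'' with ``$A$ condensable'', meaning commutative, connected ($\dim\mathrm{Hom}(\mathbf 1,A)=1$), separable, with trivial twist. Connectedness comes from the vacuum. Since $V$ is pseudo-unitary, every non-vacuum simple $V$-module has $h>0$. So $W_0=\mathrm{Hom}(\mathbf 1,A)\otimes V_0$, and $W$ is of CFT-type exactly when $A$ is connected. A simple $W$ with $\dim W_0>1$ is ruled out because $W_0$ would then be a commutative algebra containing nontrivial idempotents, splitting $W$. Separability follows from Kirillov--Ostrik \cite{Kirillov:2001ti}: a connected commutative algebra in a modular category is separable as soon as $\dim A\neq 0$. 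In the pseudo-unitary category $\Rep(V)$, $\dim A=\sum_a N_a d_a>0$ automatically. Conversely, a separable connected $A$ gives a $W$ that is simple, because any ideal of $W$ would yield a proper $A$-submodule of $A$ containing, or missing, the unit.

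Then the identification $\Rep(W)\cong\Rep(V)_A^{\mathrm{loc}}$ is the theorem of Creutzig--Kanade--McRae \cite{Creutzig:2024qvb}. Under it, $W$-modules correspond to local $A$-modules, as braided tensor categories. Finally I would check that $W$ is nice. I expect this step to be the main obstacle, since it needs care with the hypotheses in the cited works.
\begin{itemize}
\item $C_2$-cofiniteness of $W$ is inherited from $V$ (Abe--Buhl--Dong), since $W$ is a finite direct sum of $C_2$-cofinite $V$-modules.
\item Rationality follows because $\Rep(V)_A^{\mathrm{loc}}$ is semisimple, in fact modular, for $A$ condensable in a modular category \cite{Kirillov:2001ti}, and because every $W$-module is a $V$-module and hence lies in $\Rep(V)$.
\item CFT-type was shown above.
\item Self-contragredience follows from simplicity together with $L_1W_1=0$; the latter holds because $W_1$ is built from $V_1$ and $V$-modules of weight $1$, on which $L_1$ maps into $W_0\cong V_0$, which is spanned by the vacuum.
\item Pseudo-unitarity is the one genuinely new check. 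Let $M$ be a simple local $A$-module not isomorphic to $A$. If $M$ contained $\mathbf 1=V$ as a $V$-summand, then $\mathrm{Hom}_A(A,M)\cong\mathrm{Hom}(\mathbf 1,M)\neq0$, forcing $M\cong A$ by simplicity. So $M$ restricts to a sum of non-vacuum simple $V$-modules, and therefore $h_M>0$.
\end{itemize}
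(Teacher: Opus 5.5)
Your synthesis of Huang--Kirillov--Lepowsky, Kirillov--Ostrik and Creutzig--Kanade--McRae is the same one the paper relies on; it simply cites these results as packaged in Proposition 2.19 of \cite{Moller:2024xtt}. However, the step where you derive separability is false as stated. You claim that a connected commutative algebra with trivial twist in a modular category is separable as soon as $\dim A\neq 0$. Here is a counterexample. In $\Rep(V_8)$, take $A=V_8\oplus V_{8,4}$; note $h(V_{8,4})=1$, so $\theta_A=\mathrm{id}$. Set the component $V_{8,4}\boxtimes V_{8,4}\to V_8$ of the multiplication to zero. This $A$ is commutative, associative and connected, with $\dim A=2$. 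Yet $V_{8,4}$ is an ideal with no complementary ideal, so $A$ is not separable. On the VOA side this is the square-zero extension $V_8\oplus V_{8,4}$: a CFT-type but non-simple conformal extension of $V_8$ (the honest product gives $V_2$). In fact, your two claims together would imply that every CFT-type conformal extension is simple.

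Kirillov--Ostrik's criterion needs, besides $\dim A\neq 0$, that the pairing $\epsilon\circ\mu\colon A\boxtimes A\to\mathbf 1$ be non-degenerate. Here $\epsilon$ is the projection onto the unique copy of $\mathbf 1$, normalized so that $\epsilon\circ\iota=\mathrm{id}_{\mathbf 1}$. This is exactly where simplicity of $W$ must enter:
\begin{itemize}
\item A sub-$A$-module of $A$ is an $L_{-1}$-stable left ideal of $W$, hence a two-sided ideal. So simplicity of $W$ means $A$ has no nonzero proper ideals.
\item The kernel of the induced map $A\to A^\ast$ is an ideal.
\item It is proper, because $\epsilon\circ\mu\circ(\iota\boxtimes\iota)=\mathrm{id}_{\mathbf 1}$. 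Hence it vanishes.
\end{itemize}
With this inserted, the direction ``$W$ simple $\Rightarrow$ $A$ condensable'' goes through.

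There are two smaller slips of the same kind.

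First, a commutative algebra of dimension $>1$ need not contain nontrivial idempotents; it may be local, like $\mathbb{C}[x]/(x^2)$. So your argument that a simple $W$ has $\dim W_0=1$ is incomplete. Argue instead as follows. Every $u\in W_0$ lies in a copy of the vacuum $V$-module, so $L_{-1}u=0$. Then $Y_W(u,z)=u_{-1}$ commutes with all $Y_W(w,z)$, so by Schur's lemma it is a scalar, giving $u\in\mathbb{C}\mathbf 1$.

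Second, ``$L_1$ maps $W_1$ into $\mathbb{C}\mathbf 1$'' does not give $L_1W_1=0$. The correct reason is that $L_1$ preserves each $V$-summand of $W$. It therefore kills the top level of every summand with $h_a=1$, whose weight-$0$ space is zero. Meanwhile $L_1V_1=0$ by Li's criterion applied to the simple, CFT-type, self-contragredient $V$.

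The remaining checks are fine: pseudo-unitarity via Frobenius reciprocity, $C_2$-cofiniteness, and rationality via semisimplicity of $\Rep(V)_A^{\mathrm{loc}}$.
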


\noindent We refer readers to \cite[Section 2.2]{Moller:2024xtt} and references therein for further details on the notion of a condensable algebra in a modular fusion category, and on the notion of a (local) $A$-module. 

There is also a useful interpretation of the category $\Rep(V)_A$ of $A$-modules in $\Rep(V)$ which are not necessarily local. We provide this interpretation in the case that $V=W^G$ for some finite group $G$ of automorphisms of $W$. (A physical perspective on the more general case can be found in \cite{Gannon:2026ttf}.)

Recall that, for any $g\in\mathrm{Aut}(W)$, there is a notion of a $g$-twisted module of $W$ (see e.g.\ \cite{dong1998twisted} for the definition). We use $\Rep_g(W)$ to denote the category of $g$-twisted modules of $W$, and write 
\begin{align}
    \Rep_G(W):=\bigoplus_{g\in G}\Rep_g(W), \ \ \ \ \ \ \Rep_1(W)=\Rep(W).
\end{align} 
If $W^G$ is strongly rational, then $\Rep_G(W)$ is a $G$-crossed ribbon tensor category in the sense of \cite{Turaev:2000ug}. We also have the following identification.
\begin{proposition}\label{prop:equivAmod}
Let $V$ be nice, and let $A$ be the condensable algebra corresponding to the conformal extension $W\supset W^G=V$. Then
\begin{align}\label{eqn:GtwistedAmodules}
    \Rep_G(W)\cong \Rep(V)_A,
\end{align}
i.e.\ $g$-twisted modules of $W$ correspond to not-necessarily-local $A$-modules in $\Rep(V)$. 
\end{proposition}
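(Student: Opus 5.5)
The plan is to build the functor $\Rep_G(W)\to\Rep(V)_A$ explicitly by restriction, show it is fully faithful, and then conclude essential surjectivity by comparing global dimensions. Here $A$ is $W$ regarded as an object of $\Rep(V)$, as in Proposition \ref{prop:ExtAlg}.

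\textbf{Step 1: the functor.} Let $M$ be a $g$-twisted $W$-module with $g\in G$. Restricting the twisted vertex operator $Y_M(w,z)$ to $w\in V=W^G$ gives an untwisted $V$-module, because the monodromy of $Y_M(w,z)$ is governed by the action of $g$ on $w$, and $V$ is fixed by $g$. For $w\in W$ general, $Y_M(w,z)$ is a formal series with non-integral powers. It nevertheless defines a $V$-intertwining operator of type $\binom{M}{W\ M}$, by the twisted Jacobi identity. Via the universal property of the $P(z)$-tensor product of Huang--Lepowsky--Zhang, this intertwining operator gives a morphism $\mu_M\colon A\boxtimes M\to M$ in $\Rep(V)$. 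Twisted associativity of $Y_M$ translates into associativity of $\mu_M$, and the vacuum property gives unitality. The only difference between twisted and untwisted modules is that $\mu_M$ need not commute with the monodromy $c_{M,A}\circ c_{A,M}$. So we land in $\Rep(V)_A$ rather than $\Rep(V)_A^{\mathrm{loc}}$.

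\textbf{Step 2: fully faithful.} Morphisms of twisted $W$-modules are exactly the $V$-module maps that intertwine the $W$-action, and these are exactly the $A$-module maps. So the functor is fully faithful.

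\textbf{Step 3: the ingredients for essential surjectivity.} For this I would use a dimension count, with three inputs.
\begin{enumerate}[label=(\roman*)]
\item Both categories are fusion categories. For $\Rep(V)_A$ this follows from the results of Kirillov--Ostrik, since $A$ is condensable. For $\Rep_G(W)$, we assume $V$ strongly rational and invoke Dong--Ren--Xu, which gives that each $\Rep_g(W)$ is semisimple and nonzero.
\item In $\Rep(V)_A$ one has $\mathrm{Dim}(\Rep(V)_A)=\mathrm{Dim}(\Rep(V))/\dim A$. Also $\mathrm{Dim}(\Rep(V))=\dim(A)^2\,\mathrm{Dim}(\Rep(V)_A^{\mathrm{loc}})$, where $\dim A=|G|$ because $A\cong\bigoplus_\chi \dim(\chi)\,W_\chi$ and the orbifold theory gives quantum dimensions $d_{W_\chi}=\dim\chi$.
\item On the twisted side, $\mathrm{Dim}(\Rep_g(W))=\mathrm{Dim}(\Rep(W))$ for each $g$. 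This holds since $\Rep_G(W)$ is a faithfully graded $G$-crossed category, a fact that again uses Dong--Ren--Xu and McRae's results on $G$-crossed extensions.
\end{enumerate}

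\textbf{Step 4: conclusion.} Input (iii) gives $\mathrm{Dim}(\Rep_G(W))=|G|\,\mathrm{Dim}(\Rep(W))$. Inputs (ii), together with $\Rep(W)\cong\Rep(V)_A^{\mathrm{loc}}$ from Proposition \ref{prop:ExtAlg}, give $\mathrm{Dim}(\Rep(V)_A)=|G|\,\mathrm{Dim}(\Rep(W))$. The two global dimensions therefore agree. A fully faithful tensor embedding between fusion categories of equal global dimension is an equivalence, since the image is a fusion subcategory whose global dimension equals that of the ambient category.

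\textbf{Tensor structure and main obstacle.} The remaining task is to show the functor is monoidal, with the twisted fusion product on the left and $\boxtimes_A$ on the right. I expect this to be the main technical obstacle. The cleanest route is to cite the Creutzig--Kanade--McRae theory of vertex algebra extensions. In that theory, a \emph{nonlocal} $A$-module is shown to be the same thing as a $g$-twisted $W$-module when $A$ carries a $G$-action with $A^G=V$. Moreover, the tensor product of $A$-modules corresponds to the twisted intertwining operators. The substantive point is matching the grading: the element $g$ should be read off from the monodromy $\mu_M\circ c_{M,A}\circ c_{A,M}$, which acts on $A$ by an automorphism in $G$. Checking that this gives exactly the $G$-grading, and that it is compatible with $\boxtimes_A$, is where I would concentrate effort.
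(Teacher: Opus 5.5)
Steps 1 and 2 are fine, but Steps 3--4 are circular as written. Input (iii) and the phrase ``fully faithful tensor embedding'' both presuppose a $G$-graded fusion-category structure on $\Rep_G(W)$ for which your restriction functor $F$ is monoidal. No such structure is available independently of the proposition. In the work you cite (McRae, following Kirillov and M\"uger), the braided $G$-crossed structure on twisted modules is built in two steps. First one proves that every not-necessarily-local $A$-module is a direct sum of $g$-twisted $W$-modules. Then one transports $\boxtimes_A$ along this identification. That first step is precisely essential surjectivity of $F$, which is the content of the statement. It is also what the paper relies on: it gives no proof and simply points to \cite{Mcrae:2019pol}. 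So if you cite McRae for (iii), Steps 3--4 become redundant; if you don't, both (iii) and the monoidality of $F$ are unsupported. For the same reason, the main obstacle is not where you placed it. Monoidality holds by construction once the equivalence is known. Moreover, the paper's only use of the proposition (Frobenius reciprocity in the proof of Proposition \ref{prop:Tannakianext}) needs just an equivalence compatible with restriction to $V$, and your $F$ visibly is one. The real work is essential surjectivity. Decomposing an arbitrary $A$-module by how its monodromy acts, the point you raise at the end, is exactly how the literature proves it.

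Your counting idea can be made non-circular by dropping the tensor structure and counting simples instead. Since $F$ is fully faithful and both categories are semisimple, $F$ sends simple twisted modules to pairwise non-isomorphic simple $A$-modules. By positivity of dimensions, it then suffices to show
\[
\sum_{g\in G}\ \sum_{M\in\mathrm{Irr}\,\Rep_g(W)}\dim_{\Rep(V)_A}(FM)^2=\mathrm{Dim}(\Rep(V)_A).
\]
The summands can be computed as $\dim_{\Rep(V)_A}(FM)=\dim_{\Rep(V)}(M)/\dim A=\mathrm{qdim}_V(M)/|G|=\mathrm{qdim}_W(M)$. This uses that categorical and analytic quantum dimensions agree for nice $V$, and that $\mathrm{qdim}_V(W)=|G|$. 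Now apply the Dong--Ren--Xu global dimension formula in terms of twisted modules: $\sum_{M\in\mathrm{Irr}\,\Rep_g(W)}\mathrm{qdim}_W(M)^2=\mathrm{glob}(W)$ for each $g\in G$. It is proved from modular invariance of twisted trace functions and needs no tensor structure on twisted modules. It gives the left side as $|G|\,\mathrm{glob}(W)$, which equals $\mathrm{Dim}(\Rep(V))/|G|=\mathrm{Dim}(\Rep(V)_A)$ by your input (ii). With this repair you have a genuinely different proof from the one the paper cites. Yours relies on rationality, positivity, and the analytic orbifold theory. The direct monodromy decomposition in the cited literature works without semisimplicity or rigidity.
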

\noindent See e.g.\ \cite{Mcrae:2019pol} for further background on the twisted representation theory of VOAs. 

\bigskip

There is a slightly broader class of VOAs that will be useful to consider. We recall the following definition from \cite{Creutzig:2016fms}.

\begin{definition}
A VOA is said to be strongly finite if it is simple, $C_2$-cofinite, self-contragredient, and of CFT-type.
\end{definition}

That is, strongly finite VOAs satisfy all the same assumptions as strongly rational VOAs, except that rationality is dropped. Such VOAs are sometimes referred to as \emph{logarithmic}.  A strongly finite VOA retains the nice property of having finitely many simple modules but, if it is not strongly rational, often has uncountably many inequivalent indecomposable modules. The triplet algebra $W_p$ is an example. See e.g.\ \cite{Creutzig:2016fms} for a review.

We will use the following lemma in Section \ref{sec:classification}, which roughly says that properly logarithmic VOAs do not exist if one assumes rigidity and pseudo-unitarity of the representation category.

\begin{lemma}\label{lem:C2+pseudounitary}
    Suppose $V$ is strongly finite and that the category $\Rep(V)$ of grading-restricted generalized $V$-modules is rigid and pseudo-unitary. Then $V$ is strongly rational.
\end{lemma}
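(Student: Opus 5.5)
The plan is to show that pseudo-unitarity forces the finite tensor category $\Rep(V)$ to be semisimple, and then to translate semisimplicity back into rationality of $V$. We take ``pseudo-unitary'' for a possibly non-semisimple rigid category to mean that the (spherical) categorical dimension of every nonzero object equals its Frobenius--Perron dimension, and in particular is strictly positive. This is the modified notion alluded to before Corollary \ref{cor:strengthened}.

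\textbf{Step 1: set up the finite ribbon structure.} Since $V$ is strongly finite, the category $\Rep(V)$ of grading-restricted generalized modules is a finite abelian category with finitely many simple objects and enough projectives; this is the $C_2$-cofiniteness input of Huang and Miyamoto. It carries the Huang--Lepowsky--Zhang braided tensor structure. The hypothesis adds rigidity, so $\Rep(V)$ is a finite braided tensor category. The twist $e^{2\pi\ri L_0}$ gives a ribbon, hence spherical pivotal, structure. By McRae's results for simple self-contragredient $C_2$-cofinite VOAs with rigid module category, $\Rep(V)$ is a (possibly non-semisimple) modular tensor category, although we will not need nondegeneracy in an essential way.

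\textbf{Step 2: the key contradiction.} Suppose $\Rep(V)$ is not semisimple. For a finite tensor category this is equivalent to the unit $\mathbf 1=V$ not being projective. Let $P$ be any nonzero projective object, for example the projective cover of $V$. We claim $\dim(P)=\mathrm{tr}(\mathrm{id}_P)=0$. The quantum trace is the composite
\begin{align}
\mathbf 1\xrightarrow{\mathrm{coev}}P\otimes P^\ast\xrightarrow{\widetilde{\mathrm{ev}}}\mathbf 1 .
\end{align}
Because $P\otimes P^\ast$ is projective, this endomorphism of $\mathbf 1$ factors through a projective object. If it were a nonzero scalar, $\mathbf 1$ would be a direct summand of a projective object, hence projective, contradicting our assumption. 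Therefore $\dim(P)=0$. On the other hand, $\mathrm{FPdim}(P)>0$ for any nonzero object, so pseudo-unitarity is violated. Hence $\Rep(V)$ is semisimple.

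\textbf{Step 3: from semisimplicity to rationality.} It remains to upgrade semisimplicity of the grading-restricted generalized module category to rationality in the sense of admissible ($\mathbb N$-gradable) modules. By $C_2$-cofiniteness, every $\mathbb N$-gradable weak module is a sum of its finitely generated submodules, and each of these is grading-restricted and lies in $\Rep(V)$ (Abe--Buhl--Dong; Huang). Since each such finitely generated module is completely reducible into simples, the whole $\mathbb N$-gradable module is a sum, hence a direct sum, of simple modules. Thus $V$ is rational, and together with the other strongly finite hypotheses, $V$ is strongly rational.

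The main obstacle I expect is bookkeeping in Step 1, together with pinning down precisely which notion of pseudo-unitarity is intended. We must make sure the pivotal structure used to define categorical dimensions is the one coming from the ribbon twist, and that positivity is asserted for all objects, including projective indecomposables, rather than only for simple objects. If the definition only constrains simple objects, I would instead argue via the Frobenius--Perron eigenvector. In a non-semisimple finite tensor category, the regular object $R=\sum_i \mathrm{FPdim}(X_i)P_i$ has $\mathrm{FPdim}(R)=\mathrm{FPdim}(\mathcal C)$. Its categorical dimension is $0$ by Step 2, but it is also $\sum_i \mathrm{FPdim}(X_i)\dim(P_i)$. Expanding each $\dim(P_i)$ in terms of the composition factors of $P_i$ and using positivity of $\dim$ on simples makes this sum strictly positive, which is a contradiction.
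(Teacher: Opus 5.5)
Your proposal is correct and follows essentially the same route as the paper. McRae's theorem makes $\Rep(V)$ a finite ribbon category; non-semisimplicity forces every projective to have categorical dimension $0$ (you reprove the Etingof--Ostrik argument directly instead of citing it); and additivity of dimension over composition factors then contradicts positivity on simples, which is exactly your fallback argument. The detour through the regular object is unnecessary, since applying additivity to a single projective $P$ already suffices, and your Step 3 (from semisimplicity of grading-restricted modules to rationality for admissible modules via $C_2$-cofiniteness) makes explicit a step the paper leaves implicit.
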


\begin{proof}
    By Main Theorem 1 of \cite{McRae:2021yyb}, $\Rep(V)$ will be a (not necessarily semisimple) modular tensor category, i.e.\ a factorizable finite ribbon category. Suppose $\Rep(V)$ is not semisimple. Then from the proof of Theorem 2.16 of \cite{Etingof:2004ftc}, every projective object in $\Rep(V)$ will have categorical dimension 0. However Proposition 4.7.5 in \cite{etingof2015tensor} implies that the categorical dimension of any object in $\Rep(V)$ is the sum of the categorical dimensions of the simple composition factors. This contradicts pseudo-unitarity. Thus $\Rep(V)$ is semisimple, so $V$ is strongly rational. 
\end{proof}

We note that strongly finite VOAs are expected to always have rigid representation categories, so it is likely that Lemma \ref{lem:C2+pseudounitary} can eventually be improved by dropping the assumption of rigidity. In any case, we will use it to relax the hypotheses of our main result, Theorem \ref{thm:c=1classification}, in Section \ref{sec:classification}.

\subsection{Tannakian VOAs}\label{subsec:tannakian}

Next, we introduce and study the notion of a Tannakian VOA.

We first briefly review some category-theoretic preliminaries. Recall that a braided fusion category $\mathcal{C}$ is said to be symmetric if $c_{Y,X}\circ c_{X,Y}=\mathrm{id}_{X\otimes Y}$, where $c_{X,Y}:X\otimes Y\to Y\otimes X$ is the braiding. 

Deligne's theorem \cite{deligne2002categories} says that any symmetric fusion category is braided tensor equivalent to $\Rep(G,z)$ where $G$ is a finite group and $z$ is a central element satisfying $z^2=1$. (Taking $z$ to be the identity element is allowed.) The braided fusion category $\Rep(G,z)$ is the same as $\Rep(G)$ as a fusion category, but its braiding is twisted by the central element $z$. To construct this braiding, note that a choice of $z$ endows $\Rep(G)$ with a $\mathbb{Z}_2$-grading: every representation decomposes as $R=R_{\bar 0}\oplus R_{\bar 1}$, where $R_{\bar a}$ is the subrepresentation of $R$ on which $z$ acts as $(-1)^a$. Then, the $z$-twisted braiding is defined as 
\begin{align}
    c^z_{R,R'}(v\otimes v') = (-1)^{ab}(v'\otimes v), \ \ v\in R_{\bar a}, \ v'\in R_{\bar b},
\end{align}
extended linearly to vectors $v,v'$ which are not homogeneous. A symmetric fusion category $\mathcal{C}$ is said to be \emph{Tannakian} if it is braided tensor equivalent to $\Rep(G):=\Rep(G,z=1)$, and it is said to be \emph{super Tannakian} if it is braided tensor equivalent to $\Rep(G,z)$ with $z\neq 1$.  

There is a useful criterion for testing whether a symmetric fusion category is Tannakian or super Tannakian. Suppose that $\mathcal{C}$ is a symmetric fusion category equipped with a choice of ribbon structure, which can be encoded in a complex number $\theta_X$ (called the twist) for every simple object $X$. In a braided fusion category, every ribbon structure defines a spherical structure (see e.g.\ \cite[Section 2.8.2]{drinfeld2010braided}) so we have a notion of categorical dimension $d_X$. By Remark 2.33 of \cite{drinfeld2010braided}, these two structures are related to the braiding as
\begin{align}
    \mathrm{Tr}[c_{X,X}]=\theta_X d_X.
\end{align}
In $\Rep(G,z)$, for an irreducible representation $R$, one has $\mathrm{Tr}[c_{R,R}]=\chi_R(z)$ where $\chi_R$ is the character of the representation $R$. Thus, we get the following.

\begin{proposition}\label{prop:Tannakiancriterion}
    If $\mathcal{C}$ is a symmetric fusion category equipped with a ribbon structure (and hence also a spherical structure), then $\mathcal{C}$ is Tannakian if and only if $\theta_X\cdot\mathrm{sign}(d_X)=1$ for every simple object $X\in\mathcal{C}$.
\end{proposition}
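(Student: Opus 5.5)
By Deligne's theorem, $\mathcal{C}$ is braided tensor equivalent to $\Rep(G,z)$ for some finite group $G$ and some central $z$ with $z^2=1$. The plan is to show that the quantity $\theta_X\cdot\mathrm{sign}(d_X)$ is intrinsic to the braiding, compute it in $\Rep(G,z)$ as $\chi_R(z)/\dim R$, and observe that this equals $1$ for every irreducible $R$ exactly when $z=1$.

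The first step handles the dependence on the choice of ribbon structure. Since $\mathcal{C}$ is symmetric, $c_{X,X}^2=\mathrm{id}_{X\otimes X}$, so the spectrum of $c_{X,X}$ lies in $\{\pm1\}$. In a symmetric category $\theta_X^2=1$ for simple $X$: the balancing axiom gives $\theta_{X\otimes X}=(\theta_X\otimes\theta_X)\,c_{X,X}^2$. Hence $\theta_X=\pm1$.

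Using the relation $\mathrm{Tr}[c_{X,X}]=\theta_Xd_X$ quoted above (Remark 2.33 of \cite{drinfeld2010braided}), we get $\theta_X\,\mathrm{sign}(d_X)=\mathrm{sign}(\mathrm{Tr}[c_{X,X}])$, provided $d_X\neq0$. This holds in any fusion category. I will argue that this sign does not depend on the chosen spherical structure. The ribbon structures on $\mathcal{C}$ differ by a $\mathbb{Z}_2$-valued character of the universal grading, and changing the ribbon structure by such a character $\epsilon$ multiplies both $\theta_X$ and $d_X$ by $\epsilon(X)$. The product $\theta_X\,\mathrm{sign}(d_X)$ is therefore unchanged.

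It thus suffices to work in $\Rep(G,z)$ with its standard ribbon structure, where all $d_R=\dim R>0$ and $\mathrm{Tr}[c_{R,R}]=\chi_R(z)$, as recorded above. Since $z$ is central, Schur's lemma makes $z$ act on an irreducible $R$ by a scalar $\pm1$. So $\chi_R(z)=\pm\dim R$, and $\theta_R=\chi_R(z)/\dim R$ equals the scalar by which $z$ acts. If $z=1$, every $\theta_R=1$. If $z\neq1$, some irreducible representation (for instance one occurring in the regular representation) does not fix $z$, so that $R$ has $\theta_R=-1$. Transporting back along the equivalence gives both directions of the statement.

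The main obstacle is the independence of the choice of ribbon structure, since the proposition allows an arbitrary one. If the grading argument becomes awkward, I would fall back on the observation that $\mathrm{Tr}[c_{X,X}]/d_X$ is the scalar by which the Drinfeld-type isomorphism $u_X$ acts relative to the pivotal structure. The sign of $d_X$ compensates exactly for the pivotal ambiguity, because in a pseudo-unitary-normalized setting $|d_X|=\mathrm{FPdim}(X)$.
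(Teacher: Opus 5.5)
Your proposal is correct and follows essentially the paper's route: Deligne's theorem, the identity $\mathrm{Tr}[c_{X,X}]=\theta_Xd_X$, and $\mathrm{Tr}[c_{R,R}]=\chi_R(z)$ in $\Rep(G,z)$. You add an explicit check, via the $\pm1$-valued grading characters relating ribbon structures, that $\theta_X\,\mathrm{sign}(d_X)$ does not depend on the chosen ribbon structure, a point the paper leaves implicit. The one slip is your first step: $\theta_{X\otimes X}=\theta_X^2$ alone does not force $\theta_X^2=1$ (use $\mathbf{1}\subset X\otimes X^*$ together with the ribbon condition $\theta_{X^*}=\theta_X$), but that step is dispensable, since the computation in $\Rep(G,z)$ already gives $\theta_R=\pm1$.
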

In what follows, the categories to which we will apply Proposition \ref{prop:Tannakiancriterion}  are subcategories of modular fusion categories, and hence inherit a ribbon structure.

With these categorical preliminaries out of the way, we are ready to define what we mean by a Tannakian VOA.

\begin{definition}
    A strongly rational vertex operator algebra $V$ is said to be Tannakian if the full subcategory $\Rep(V)_{h\in\mathbb{Z}}$ of $V$-modules with integer conformal dimension is a Tannakian symmetric fusion subcategory of $\Rep(V)$, i.e.\ if $\Rep(V)_{h\in\mathbb{Z}}$ is braided tensor equivalent to $\Rep(G)$ for some finite group $G$. 
\end{definition}
\noindent We remark that, thanks to Tannakian reconstruction, two Tannakian fusion categories $\Rep(G)$ and $\Rep(G')$ are braided tensor equivalent if and only if $G\cong G'$. Thus, if $V$ is Tannakian, one can unambiguously extract a finite group $G$ from it.

The main property of a Tannakian VOA $V$ that we will exploit is that it comes with a unique maximal conformal extension $W\supset V$ from which $V$ can be recovered by passing to the fixed points of a group $G$ of automorphisms, $V=W^G$. In fact, we have the following equivalent characterization of a Tannakian VOA.

\begin{proposition}\label{prop:Tannakianext}
    Suppose $V$ is a nice VOA. Then $V$ is Tannakian with $\Rep(V)_{h\in\mathbb{Z}}\cong \Rep(G)$ if and only if there is a nice VOA $W$ with the following properties. 
    \begin{enumerate}[label=(\arabic*)]
    \item There is a finite subgroup of $ \mathrm{Aut}(W)$ isomorphic to $G$ such that $W^G\cong V$. 
    \item If $M\in \Rep_G(W)$ is a simple, non-vacuum, twisted $W$-module, then there are no states in $M$ with integer conformal dimension. 
    \end{enumerate}
\end{proposition}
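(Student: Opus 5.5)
The proof splits into the two directions, and in both the main tool is Proposition \ref{prop:ExtAlg} together with the identification $\Rep_G(W)\cong\Rep(V)_A$ of Proposition \ref{prop:equivAmod}.

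\emph{Forward direction.} Assume $\Rep(V)_{h\in\mathbb{Z}}\cong\Rep(G)$ as braided fusion categories.

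\begin{enumerate}[label=(\roman*)]
\item The regular algebra $A=\mathrm{Fun}(G)$, i.e.\ the function algebra on $G$ inside $\Rep(G)$, is a condensable algebra. Its twist is trivial, since all the modules involved have integer $h$. It is connected, and it is separable and commutative because the ambient category is symmetric Tannakian.
\item Pseudo-unitarity ensures the categorical dimensions are the positive ones, so the dimension of $A$ is $|G|>0$ and its unit is unique. Viewing $A$ inside $\Rep(V)$ and applying Proposition \ref{prop:ExtAlg} gives a nice conformal extension $W\supset V$ with $W\cong\bigoplus_a d_a V_a$, the sum running over $\Rep(V)_{h\in\mathbb{Z}}$.
\item The group $G$ acts on $A$ by algebra automorphisms (right translation on $\mathrm{Fun}(G)$), and the invariant subalgebra is the unit object. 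These translate into VOA automorphisms of $W$ fixing $V$, so $W^G=V$. This gives property (1).
\end{enumerate}

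\emph{Property (2).} By Proposition \ref{prop:equivAmod}, a simple twisted module $M\in\Rep_G(W)$ corresponds to a simple $A$-module in $\Rep(V)$, hence has the form $A\otimes X$ modulo the $A$-action, for some simple $V$-module $X$. Integer conformal dimension states of $M$ come from $V$-submodules of $M$ lying in $\Rep(V)_{h\in\mathbb{Z}}$. If such a submodule $Y$ exists, Frobenius reciprocity gives $\mathrm{Hom}_A(A\otimes Y,M)\neq0$, so $M$ is a quotient of $A\otimes Y$. But $A\otimes Y$, for $Y$ in the Tannakian subcategory where $A$ is the regular algebra, is a direct sum of copies of $A$ itself, the vacuum $W$. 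Hence $M\cong W$, which proves (2).

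\emph{Converse direction.} Assume a nice $W$ satisfies (1) and (2), and let $A$ be the algebra of $W\supset V$.

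\begin{enumerate}[label=(\roman*)]
\item By standard orbifold theory, $W=\bigoplus_{\rho\in\mathrm{Irr}(G)}\rho\otimes W_\rho$ with the $W_\rho$ distinct simple $V$-modules. Their span is a subcategory of $\Rep(V)$ braided equivalent to $\Rep(G)$; this is the Kirillov–Müger picture. Each $W_\rho$ has integer conformal dimension, since it sits inside $W$. So $\Rep(G)\subseteq\Rep(V)_{h\in\mathbb{Z}}$.
\item For the reverse inclusion, take a simple $V_a$ with $h_a\in\mathbb{Z}$. Induction $A\otimes V_a$ is an $A$-module, i.e.\ a twisted $W$-module, containing $V_a$ as a $V$-submodule. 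Some simple summand $M$ of it therefore has an integer-weight state, so by (2) $M\cong W$. Hence $V_a$ occurs in $W$, so $V_a\cong W_\rho$ for some $\rho$.
\end{enumerate}

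Together these give $\Rep(V)_{h\in\mathbb{Z}}\cong\Rep(G)$. This is automatically symmetric and Tannakian, which one can also confirm via Proposition \ref{prop:Tannakiancriterion}, since $\theta=1$ and $d>0$.

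\emph{Expected obstacle.} The delicate point is the identification in the forward direction of the abstract Tannakian $G$, acting on $\mathrm{Fun}(G)$, with an honest subgroup of $\mathrm{Aut}(W)$ satisfying $W^G=V$. I would handle it using the fact that $\mathrm{Aut}_A(A)$, the algebra automorphisms fixing $\mathbf 1$, equals $\mathrm{Aut}(W/V)$. I would also check the Galois-correspondence fact that for the regular algebra this group is exactly $G$ with invariants $\mathbf{1}$.
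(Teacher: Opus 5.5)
Your proposal is correct and follows essentially the same route as the paper. In the forward direction, the regular algebra of $\Rep(V)_{h\in\mathbb{Z}}\cong\Rep(G)$ gives $W$ via Proposition \ref{prop:ExtAlg}, the fact that $\mathrm{Aut}(A)\cong G$ gives $W^G\cong V$, and property (2) follows from Frobenius reciprocity together with $A\otimes V_a\cong A^{\oplus d_a}$. In the converse, induction pushes every integer-weight simple into $\mathrm{Res}(W)$, after which the Dong--Li--Mason/McRae identification of the $W_\rho$ with $\Rep(G)$ applies. The step you flag as delicate is handled in the paper just as you propose: it cites $\mathrm{Aut}(A)\cong G$ for the regular algebra (Davydov et al.) and lifts algebra automorphisms of $A$ to VOA automorphisms of $W$ fixing $V$.
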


\begin{proof}
Suppose $V$ is Tannakian. Then $\Rep(V)_{h\in\mathbb{Z}}\cong \Rep(G)$, being a Tannakian subcategory of $\Rep(V)$, defines a (unique up to isomorphism) condensable algebra $A$ of $\Rep(V)$ of the form
\begin{align}
    A:=\bigoplus_{a\in \Rep(V)_{h\in\mathbb{Z}}}\mathbb{C}^{d_a}\otimes  V_a\in \Rep(V)_{h\in\mathbb{Z}}\subset \Rep(V),
\end{align}
called the regular algebra (see e.g.\ Definition 2.51 of \cite{drinfeld2010braided}). By Proposition \ref{prop:ExtAlg},  $A$ defines a VOA extension $W\supset V$ with $\Rep(W)=\Rep(V)_A^{\mathrm{loc}}$. 

Any VOA extension of $V$ is mediated by a condensable algebra supported in $\Rep(G)\cong \Rep(V)_{h\in\mathbb{Z}}\subset \Rep(V)$. Every condensable algebra in $\Rep(G)$ is a function algebra of the form $\mathrm{Fun}(G/H)$ for some subgroup $H\subset G$, and all of these occur as subalgebras of $A=\mathrm{Fun}(G)$. It follows that $W$ is the unique maximal conformal extension of $V$.

Note that every automorphism of the algebra $A$ in $\Rep(V)$ can be promoted to an automorphism of the VOA $W$ fixing $V\subset W$. It is known that $\mathrm{Aut}(A)\cong G$ (see \cite[Example 2.8]{davydov2013witt}) and that in fact $W^G\cong V$. By Proposition \ref{prop:equivAmod}, we then have that $\Rep_G(W)\cong \Rep(V)_A$. 

 Let $\mathrm{Res}(M)$ denote the restriction to $V$ of a $g$-twisted $W$-module $M$. Suppose that $M$ is simple and $\mathrm{Res}(M)$ contains a simple $V$-module $V_a$ with integer conformal weight, i.e.\ $V_a\in \Rep(V)_{h\in \mathbb{Z}}$. Then, by Frobenius reciprocity (see e.g.\ Theorem 1.6 of \cite{Kirillov:2001ti}), $M\subset  \mathrm{Ind}(V_a)$, where $\mathrm{Ind}:\Rep(V)\to\Rep(V)_A$ is the induction functor. But Proposition 3.2 (iii) of \cite{Muger2010BraidedCrossedG} says that $\mathrm{Ind}(V_a)=\mathbb{C}^{d_a}\otimes A$, and therefore $M=A=W$. This finishes the forward direction.

Now, suppose that there is a nice VOA $W$ satisfying properties (1) and (2) of the proposition. By assumption, $V=W^G$ and $V$ is strongly rational. By Frobenius reciprocity, every simple $V$-module occurs inside (the restriction of) some $g$-twisted module of $W$ for some $g\in G$. Since by assumption the vacuum is the only twisted module of $W$ involving integer conformal dimensions, it follows that every simple $V$-module in $\Rep(V)_{h\in\mathbb{Z}}$ occurs as a direct summand of $\mathrm{Res}(W)$. By invoking \cite{dong1996compact,McRae:2018wpu}, we learn that the simple $V$-modules arising in $\mathrm{Res}(W)$ (all of which have integer conformal dimension) are in one-to-one correspondence with irreducible representations of $G$, and in fact they span a symmetric fusion subcategory of $\Rep(V)$ which is equivalent to $\Rep(G)$. In other words, $\Rep(V)_{h\in\mathbb{Z}}\cong \Rep(G)$ and so $V$ is Tannakian.
\end{proof}

We henceforth write $V_{\Rep(G)}$ for the VOA $W$ attached to a Tannakian VOA $V$ by Proposition \ref{prop:Tannakianext}. We highlight that $V_{\Rep(G)}$ necessarily decomposes into $V$-modules as 
\begin{align}
    V_{\Rep(G)}\cong \bigoplus_{a\in \Rep(V)_{h\in\mathbb{Z}}}\mathbb{C}^{d_a}\otimes V_a,
\end{align}
where $d_a=S_{a0}/S_{00}$ is the quantum dimension of $V_a$, which is an integer thanks to the fact that $V$ is Tannakian. In particular, the vacuum character of $V_{\Rep(G)}$ is
    \begin{align}\label{eqn:chartannext}
\mathrm{ch}_{V_{\Rep(G)}}(\tau)=\sum_{a\in\Rep(V)_{h\in\mathbb{Z}}}d_a\mathrm{ch}_{V_a}(\tau).
    \end{align}

    We also remark that, since $V_{\Rep(G)}$ in particular does not have any non-vacuum module with integer conformal dimension, it follows that $V_{\Rep(G)}$ is a ``maximal'' conformal extension of $V$ in the sense that it cannot be extended further. In fact, it is the unique maximal conformal extension of $V$, as we demonstrated in the proof.

 All of the nice $c=1$ VOAs, cataloged in Appendix \ref{app:bestiary}, are Tannakian. Our ultimate goal is to show that this is a structural fact about nice $c=1$ VOAs.

 \begin{example}
     Uniquely decompose $m=t_\star n^2$, where $n$ is the largest integer such that $n^2$ divides $m$, and $t_\star=m/n^2$ is  square-free. Recall  that $V_{2m}$ denotes the VOA associated to the rank-1 even integral lattice $\sqrt{2m}\mathbb{Z}$, and that $V_{2m,r}$ denotes the simple $V_{2m}$-module associated to the coset $\sqrt{2m}\mathbb{Z}+r/\sqrt{2m}$ (cf.\ Appendix \ref{app:bestiary} for details). The VOA $V_{2m}$ is Tannakian with 
     \begin{align}
         \Rep(V_{2m})_{h\in\mathbb{Z}} \cong \Rep(\mathbb{Z}_n),
     \end{align}
     and the simples of $\Rep(V_{2m})_{h\in\mathbb{Z}}$ are given by $V_{2m,2t_\star n\ell}$, with  $\ell\in\mathbb{Z}_n$. The maximal conformal extension of $V_{2m}$ guaranteed by Proposition \ref{prop:Tannakianext} is the lattice VOA $V_{2t_\star}$. 
     \end{example}

     \begin{example}
     As before, decompose $m=t_\star n^2$. The charge conjugation orbifold $V_{2m}^+$ is Tannakian with 
     \begin{align}
         \Rep(V_{2m}^+)_{h\in\mathbb{Z}}\cong \Rep(D_{2n}),
     \end{align}
     where $D_{2n}$ is the dihedral group of order $2n$. Its maximal conformal extension is again $V_{2t_\star}$. 
     \end{example}
     \begin{example}
     The three exceptional VOAs $V_T=V_2^{A_4}$, $V_O=V_2^{S_4}$, and $V_I=V_2^{A_5}$ are Tannakian with 
     \begin{align}
         \Rep(V_T)_{h\in\mathbb{Z}}\cong \Rep(A_4), \ \ \ \Rep(V_O)_{h\in\mathbb{Z}}\cong \Rep(S_4), \ \ \ \ \Rep(V_I)_{h\in\mathbb{Z}}\cong \Rep(A_5),
     \end{align}
     and the maximal conformal extension of all three of these VOAs is $V_2\cong \widehat{\mathfrak{su}}(2)_1$. 
 \end{example}

We will give two ways of detecting whether a VOA is Tannakian: one using the modular S-matrix, and the other using the vacuum character. 

\begin{proposition}\label{prop:TannakianS}  A nice VOA $V$ is Tannakian if and only if 
\begin{align}
    S_{ab}=S_{00}d_ad_b, \ \ \ \ \text{for all } V_a,V_b\in \Rep(V)_{h\in\mathbb{Z}}
\end{align}
where $d_a=S_{0a}/S_{00}$ is the quantum dimension of $V_a$, and $S$ is the modular S-matrix of $V$.
\end{proposition}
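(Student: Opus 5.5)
The plan is to translate the $S$-matrix condition into a statement about the double braiding on $\Rep(V)_{h\in\mathbb{Z}}$, using the balancing identity together with positivity of quantum dimensions. Throughout I use three standard inputs. First, the ribbon twist of $\Rep(V)$ is $\theta_a=e^{2\pi \ri h_a}$, so $\theta_a=1$ exactly when $V_a\in\Rep(V)_{h\in\mathbb{Z}}$. Second, by Theorem \ref{thm:modularity}, $S_{ab}/S_{00}=\tilde s_{ab}$. Third, since $V$ is nice, $\Rep(V)$ is pseudo-unitary, so $d_c>0$ for every simple $c$.

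\emph{Forward direction.} Suppose $V$ is Tannakian. Then $\Rep(V)_{h\in\mathbb{Z}}$ is a symmetric fusion subcategory. Hence for $a,b$ in it, $c_{V_b,V_a}\circ c_{V_a,V_b}=\mathrm{id}_{V_a\boxtimes V_b}$. Taking the quantum trace in \eqref{eqn:hopflink} gives $\tilde s_{ab}=d_{a\boxtimes b}=d_ad_b$, i.e.\ $S_{ab}=S_{00}d_ad_b$.

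\emph{Converse.} Assume $S_{ab}=S_{00}d_ad_b$ for all $V_a,V_b\in\Rep(V)_{h\in\mathbb{Z}}$. The monodromy $c_{V_b,V_a}c_{V_a,V_b}$ acts on the isotypic component of $V_c$ in $V_a\boxtimes V_b$ by the scalar $\theta_c/(\theta_a\theta_b)$; this is the balancing axiom. Therefore
\begin{align}
    \tilde s_{ab}=\sum_c N_{ab}^c\,\frac{\theta_c}{\theta_a\theta_b}\,d_c=\sum_c N_{ab}^c\,\theta_c\, d_c ,
\end{align}
where the second equality uses $\theta_a=\theta_b=1$. We also have $d_ad_b=\sum_c N_{ab}^c d_c$. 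All $d_c$ are positive and all $|\theta_c|=1$, so taking real parts shows that $\tilde s_{ab}=d_ad_b$ forces $\theta_c=1$ for every $c$ with $N_{ab}^c\neq0$.

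This single observation gives two conclusions at once. First, $\Rep(V)_{h\in\mathbb{Z}}$ is closed under fusion. It also contains the vacuum, and it is closed under duals because $h_{a^\ast}=h_a$. So it is a fusion subcategory. Second, the monodromy is the identity on each summand, hence on all of $V_a\boxtimes V_b$. The subcategory is therefore symmetric, with the ribbon structure inherited from $\Rep(V)$. Finally, every simple object $X$ in it has $\theta_X=1$ and $d_X>0$. Proposition \ref{prop:Tannakiancriterion} then says the subcategory is Tannakian, so it is braided equivalent to $\Rep(G)$ by Deligne's theorem.

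The step needing most care is the converse's real-part argument. It must genuinely use pseudo-unitarity: without positivity of the $d_c$, cancellations could occur, and indeed the criterion fails for super Tannakian subcategories where some $d_X<0$. One must also check that the twist is $e^{2\pi\ri h_a}$ rather than its inverse, but either convention gives $\theta_a=1$ for integer $h_a$, so the argument is unaffected.
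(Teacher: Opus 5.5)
Your proof is correct and has the same skeleton as the paper's. The forward direction takes the quantum trace of the trivial double braiding. The converse establishes trivial monodromy and closure under fusion, then applies Deligne's theorem together with Proposition \ref{prop:Tannakiancriterion}, using trivial twists and positive dimensions. The one difference is in the converse. The paper cites Proposition 2.5 of \cite{muger2003structure} for $S_{ab}=S_{00}d_ad_b\Rightarrow c_{ba}\circ c_{ab}=\mathrm{id}$ and then uses balancing separately to get closure. You instead prove that step directly from $\tilde s_{ab}=\sum_c N_{ab}^c\theta_c d_c$ and positivity of the $d_c$, which gives both conclusions at once and makes the use of pseudo-unitarity explicit.
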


\begin{proof} Suppose $S_{ab}=S_{00}d_ad_b$ for all the integer conformal dimension modules. By Proposition 2.5 of \cite{muger2003structure}, this means that 
\begin{align}
    c_{ba}\circ c_{ab} = \mathrm{id}_{a\otimes b}, \ \ \ \ \text{for all } V_a,V_b\in \Rep(V)_{h\in\mathbb{Z}},
\end{align}
where $c_{ab}$ is the braiding on $\Rep(V)$.
On the other hand, the balancing axiom of ribbon categories says that 
\begin{align}
    \theta_{a\otimes b}=c_{ba}\circ c_{ab}\circ (\theta_a\otimes \theta_b),
\end{align}
where $\theta_a=e^{2\pi \ri h_a}$ is the twist on $\Rep(V)$. 
In particular, when $V_a,V_b\in \Rep(V)_{h\in\mathbb{Z}}$, we have that 
\begin{align}
    \theta_{a\otimes b}=\theta_a\otimes \theta_b = 1.
\end{align}
That is, $\Rep(V)_{h\in\mathbb{Z}}$ closes under the tensor product, and in fact forms a symmetrically braided fusion subcategory of $\Rep(V)$. 

By Deligne's theorem \cite{deligne2002categories}, this means that $\Rep(V)_{h\in\mathbb{Z}}\cong \Rep(G)$ or $\Rep(V)_{h\in\mathbb{Z}}\cong \Rep(G,z)$.  Whether one is in the Tannakian case or the super Tannakian case can be determined by applying Proposition \ref{prop:Tannakiancriterion}, i.e.\ by checking whether $\mathrm{sign}(d_M)\theta_M=1$ for all simples $M\in\Rep(V)_{h\in\mathbb{Z}}$. And indeed, all twists are trivial since $\theta_M=e^{2\pi \ri h_M}=1$ for $M\in \Rep(V)_{h\in\mathbb{Z}}$. Furthermore, because $V$ is pseudo-unitary, all quantum dimensions are positive as well, $\mathrm{sign}(d_M)=1$. Thus, we conclude that $\Rep(V)_{h\in\mathbb{Z}}\cong \Rep(G)$.

In the reverse direction, $V$ being Tannakian implies that $c_{ba}\circ c_{ab}=\mathrm{id}_{a\otimes b}$ for all $V_a,V_b\in\Rep(V)_{h\in\mathbb{Z}}$ and hence, by Proposition 2.5 of \cite{muger2003structure}, we have that $S_{ab}=S_{00}d_ad_b$.
\end{proof}

More surprisingly, it turns out that one can determine whether a VOA is Tannakian directly from its vacuum character. To prepare for the statement, let $V$ be a nice VOA and let $\mathrm{ch}_V(\tau)$ be its vacuum character, defined in \eqref{eqn:vaccharV}. We note that we can extract $S_{00}$ just from the vacuum character as 
\begin{align}
    \mathrm{ch}_V(-1/\tau)=S_{00}q^{-c/24}+\cdots, \ \ \ \ S_{00}^{-2}=\sum_{a\in \Rep(V)} d_a^2.
\end{align}
We can also extract the following function directly from the vacuum character,
\begin{align}\label{eqn:PVdef}
    P_V(\tau) := \frac{1}{S_{00}}\Pr\left[\mathrm{ch}_V(-1/\tau)\right],
\end{align}
where, given $f(\tau)=\sum_{x\in\mathbb{Q}}C(x)q^x$, we define 
\begin{align}\label{eqn:PiZ}
    \Pr[f(\tau)]:= \sum_{n\in\mathbb{Z}}C({n-c/24})q^{n-c/24}
\end{align} 
as the $q$-expansion obtained by projecting onto the contributions of states with integer conformal dimension. The function $P_V(\tau)$ by construction detects the characters of all the modules with integer conformal dimension, 
\begin{align}
    P_V(\tau)=\sum_{a\in \Rep(V)_{h\in\mathbb{Z}}}d_a\mathrm{ch}_{V_a}(\tau),
\end{align}
though in general it is not possible to extract the individual characters $\mathrm{ch}_{V_a}(\tau)$ and quantum dimensions $d_a$ from $P_V(\tau)$.

\begin{theorem}\label{thm:TannakianVacChar} A nice VOA $V$ is Tannakian with $\Rep(V)_{h\in\mathbb{Z}}\cong \Rep(G)$ for some finite group $G$ if and only if 
\begin{align}\label{eqn:Tannakiancondition}
    \frac{1}{S_{00}}\Pr[P_V(-1/\tau)]=\DD P_V(\tau),
\end{align}
for some constant $\DD$, in which case 
\begin{align}
    \DD=\sum_{a\in\Rep(V)_{h\in\mathbb{Z}}}d_a^2=|G|.
\end{align} 
\end{theorem}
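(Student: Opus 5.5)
The plan is to compute both sides of \eqref{eqn:Tannakiancondition} explicitly in terms of the modular data, and then reduce to Proposition \ref{prop:TannakianS}. Write $I:=\Rep(V)_{h\in\mathbb{Z}}$ for the set of simple modules with integer conformal dimension, and $\tilde s_{ab}=S_{ab}/S_{00}$. Starting from $P_V(\tau)=\sum_{a\in I}d_a\mathrm{ch}_{V_a}(\tau)$ and applying Theorem \ref{thm:modularity}, I get $P_V(-1/\tau)=\sum_{a\in I}\sum_b d_a S_{ab}\mathrm{ch}_{V_b}(\tau)$. The projection $\Pr$ keeps exactly the terms with $b\in I$. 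Hence
\begin{align}
    \frac{1}{S_{00}}\Pr[P_V(-1/\tau)]=\sum_{b\in I}x_b\,\mathrm{ch}_{V_b}(\tau),\qquad x_b:=\sum_{a\in I}d_a\tilde s_{ab}.
\end{align}
The condition \eqref{eqn:Tannakiancondition} therefore reads $\sum_{b\in I}(x_b-\DD d_b)\mathrm{ch}_{V_b}=0$.

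\emph{Forward direction.} If $V$ is Tannakian, Proposition \ref{prop:TannakianS} gives $\tilde s_{ab}=d_ad_b$ for $a,b\in I$. Then $x_b=d_b\sum_{a\in I}d_a^2$, so \eqref{eqn:Tannakiancondition} holds with $\DD=\sum_{a\in I}d_a^2$. Under $I\cong\Rep(G)$ the quantum dimensions are the dimensions of the irreducible representations, because the $d_a$ are positive and Frobenius--Perron. Hence $\DD=\sum_R(\dim R)^2=|G|$.

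\emph{Reverse direction.} Assume $\sum_{b\in I}(x_b-\DD d_b)\mathrm{ch}_{V_b}=0$. I want to extract from this one scalar equation per ``character class'', meaning per set of $b\in I$ whose characters coincide (e.g.\ $b$ and $b^*$). This step is the main obstacle: I need that distinct characters of simple $V$-modules are linearly independent $q$-series. I would first try to justify this via the linear independence of Zhu's trace functions for the distinct simples, after grouping equal $q$-series. Failing that, I would find an argument that only uses the class containing the vacuum together with a positivity/summation argument.

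Granting this, the vacuum is alone in its class by pseudo-unitarity, since it is the only simple with $h=0$. This yields $\DD=x_0=\sum_{a\in I}d_a^2$. For every other class $K$ we get $\sum_{b\in K}(x_b-\DD d_b)=0$. Now I use the standard pseudo-unitary bound $|\tilde s_{ab}|\le d_ad_b$. It holds because $\tilde s_{\cdot b}/d_b$ is a character of the fusion ring, and characters are bounded by the Frobenius--Perron dimensions, which equal the positive $d_a$. This bound gives $\mathrm{Re}\,x_b\le\sum_{a\in I}d_a|\tilde s_{ab}|\le d_b\DD$, with equality iff $\tilde s_{ab}=d_ad_b$ for all $a\in I$. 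Taking real parts of the class equation, every term is $\le0$ and the terms sum to zero. Hence equality holds termwise, and so $\tilde s_{ab}=d_ad_b$ for all $a,b\in I$.

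Proposition \ref{prop:TannakianS} then shows $V$ is Tannakian with $I\cong\Rep(G)$. The forward computation identifies $\DD=\sum_{a\in I}d_a^2=|G|$.
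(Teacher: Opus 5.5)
Your forward direction is fine. So is the rewriting of the condition as $\sum_{b\in I}(x_b-\DD d_b)\,\mathrm{ch}_{V_b}(\tau)=0$, and so is the extraction $\DD=x_0=\sum_{a\in I}d_a^2$. That last step needs no independence, because by pseudo-unitarity and CFT-type only the vacuum character contributes to the coefficient of $q^{-c/24}$.

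The gap is the step you flagged yourself. Characters of distinct simple $V$-modules are in general \emph{not} linearly independent, even after identifying equal $q$-series, so you cannot extract one scalar equation per character class. Zhu's linear independence is for the trace functions $\mathrm{Tr}_{V_a}\,o(v)\,q^{L_0-c/24}$ as functions of $v\in V$, and it is lost on specializing to $v=\mathbf{1}$.

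A concrete failure already occurs inside $\Rep(V)_{h\in\mathbb{Z}}$ of a nice VOA. For $V_O$, Table~\ref{tab:VOreps} together with $\vartheta[\substack{0\\ \beta}]=\vartheta[\substack{0\\ -\beta}]$ gives $\mathrm{ch}_{[1,\chi_{1,1}]}+\mathrm{ch}_{[1,\chi'_{2}]}=\mathrm{ch}_{[1,\chi_{3,1}]}$. This is a linear relation among three pairwise distinct integer-weight characters. Conceptually, these characters are $\frac{1}{24}\sum_{g\in S_4}\overline{\chi(g)}\,\mathrm{Tr}_{V_2}\,g\,q^{L_0-1/24}$. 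The twisted trace depends only on the rotation angle of $g$ in $SO(3)$, and that angle takes only four values on the five conjugacy classes of $S_4$.

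The repair is the positivity fallback you allude to, applied to the whole sum rather than class by class; this is exactly the paper's argument. Evaluate the identity at $\tau=\ri t$ with $t>0$ and take real parts. Each $\mathrm{ch}_{V_b}(\ri t)$ is a strictly positive real number, so you get $\sum_{b\in I}(\DD d_b-\mathrm{Re}\,x_b)\,\mathrm{ch}_{V_b}(\ri t)=0$. Every coefficient here is non-negative by your bound $\mathrm{Re}\,x_b\le \DD d_b$, hence $\mathrm{Re}\,x_b=\DD d_b$ for all $b\in I$. Your equality-case analysis then yields $\tilde s_{ab}=d_ad_b$ for all $a,b\in I$, and Proposition~\ref{prop:TannakianS} finishes. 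With this substitution you need no statement about linear independence of characters, and the rest of your proof goes through unchanged.
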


\begin{proof} 
Suppose that Equation \eqref{eqn:Tannakiancondition} holds. Using \eqref{eqn:STtransform}, we learn that
\begin{align}\label{eqn:unpacked}
    \sum_{b\in\Rep(V)_{h\in\mathbb{Z}}}\left(\DD S_{00}d_b-\sum_{a\in\Rep(V)_{h\in\mathbb{Z}}}d_aS_{ab}\right)\mathrm{ch}_{V_b}(\tau)=0.
\end{align}
By evaluating the coefficient of $q^{-c/24}$ in this equation, and using the fact that $S_{a0}/S_{00}=d_a$, we find that 
\begin{align}
    \DD S_{00}=\sum_{a\in\Rep(V)_{h\in\mathbb{Z}}}d_a S_{a0}\implies \DD = \sum_{a\in\Rep(V)_{h\in\mathbb{Z}}}d_a^2.
\end{align}
On the other hand, we have the following elementary inequality, 
\begin{align}\label{eqn:qdiminequality}
\begin{split}
    \mathrm{Re}\left(\sum_{a\in\Rep(V)_{h\in\mathbb{Z}}} d_a S_{ab}\right) &= \sum_{a\in\Rep(V)_{h\in\mathbb{Z}}}d_a\mathrm{Re}S_{ab} \leq \sum_{a\in\Rep(V)_{h\in\mathbb{Z}}}d_a|S_{ab}| \\
    &\leq S_{00}\sum_{a\in\Rep(V)_{h\in\mathbb{Z}}}d_a^2 d_b =\DD S_{00}d_b,
\end{split}
\end{align}
where in going from the first line to the second line, we have used that $|S_{ab}|\leq S_{00}d_ad_b$ (see Equation (4.2a) of \cite{gannon2005modular}).

Now, if we take the real part of Equation \eqref{eqn:unpacked} and evaluate at $\tau=\ri t$, then we find that 
\begin{align}
    \sum_{b\in\Rep(V)_{h\in\mathbb{Z}}}\left( \DD S_{00} d_b -\mathrm{Re}\sum_{a\in\Rep(V)_{h\in\mathbb{Z}}}d_a S_{ab}\right)\mathrm{ch}_{V_b}(\ri t) = 0.
\end{align}
Note that $\mathrm{ch}_{V_b}(it)$ is a strictly positive real number, since the coefficients in its $q$-expansion are dimensions of vector spaces. On the other hand, each term in the parentheses is non-negative by \eqref{eqn:qdiminequality}. The only way this equation is consistent is if the non-negative numbers are in fact identically zero, 
\begin{align}\label{eqn:nexteqn}
    \sum_{a\in\Rep(V)_{h\in\mathbb{Z}}}d_a\left( S_{00}d_ad_b-\mathrm{Re}S_{ab}\right)=0, \ \ \ \text{ for all }V_b\in \Rep(V)_{h\in\mathbb{Z}}.
\end{align}
Again, since $d_a>0$ and since $\mathrm{Re}S_{ab}\leq |S_{ab}|\leq S_{00}d_ad_b$, the only way \eqref{eqn:nexteqn} is consistent is if $\mathrm{Re} S_{ab}=S_{00}d_ad_b$. By squeezing, $\mathrm{Re}S_{ab}=|S_{ab}|$ (so that $S_{ab}$ is real and positive) and hence $S_{ab}=S_{00}d_ad_b$. That is, $V$ is Tannakian.

In the reverse direction, if $V$ is Tannakian, then $S_{ab}=S_{00}d_ad_b$ by Proposition \ref{prop:TannakianS} and hence
\begin{align}
\begin{split}
    \Pr\left[ P_V(-1/\tau)\right] &= \sum_{a\in\Rep(V)_{h\in\mathbb{Z}}}d_a \sum_{b\in \Rep(V)_{h\in\mathbb{Z}}} S_{ab}\mathrm{ch}_{V_b}(\tau)\\
    &=S_{00}\sum_{a\in\Rep(V)_{h\in\mathbb{Z}}}d_a^2\sum_{b\in\Rep(V)_{h\in\mathbb{Z}}}d_b\mathrm{ch}_{V_b}(\tau)= S_{00}\DD P_V(\tau).
\end{split}
\end{align}
This concludes the proof. 
\end{proof}

We remark that, by Proposition \ref{prop:Tannakianext}, when $V$ is Tannakian, $P_V(\tau)$ is the vacuum character of the unique maximal conformal extension $V_{\Rep(G)}$ of $V$. Theorem \ref{thm:TannakianVacChar} will feature prominently in our proof of the classification of nice $c=1$ VOAs in the next section.

\section{Classification of nice \texorpdfstring{$c=1$}{c=1} vertex operator algebras}\label{sec:classification}

With the generalities of the previous section out of the way, we move on to our main result. In Section \ref{subsec:bosonicassump}, we carefully state our main classification theorem and make some related observations. Section \ref{subsec:mainproof} carries the proof.

\subsection{Assumptions and statement}\label{subsec:bosonicassump}

\begin{theorem}[Classification of nice $c=1$ vertex operator algebras]\label{thm:c=1classification}
    If $V$ is a nice vertex operator algebra with central charge $c=1$, then $V$ is isomorphic to one of the following, 
    \begin{align}\label{eqn:c=1VOAlist}
        V_{2m} \ (m\geq 1), \ \ V_{2m}^+ \ (m\geq 2), \ \ V_T, \ \ V_O, \ \ V_I.
    \end{align}
    Moreover, the VOAs in this list are pairwise non-isomorphic. 
\end{theorem}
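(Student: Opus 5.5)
The plan follows the strategy sketched in the introduction: (i) show $V$ is Tannakian via Theorem \ref{thm:TannakianVacChar}; (ii) identify $V_{\Rep(G)}$ with a lattice VOA $V_{2t_\star}$; (iii) enumerate finite subgroups $G\subset\mathrm{Aut}(V_{2t_\star})$ and compute the fixed points.

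\textbf{Step 1: the shape of the vacuum character.} By Theorems \ref{thm:modularity} and \ref{thm:congruence}, $F_V(\tau):=\eta(\tau)\mathrm{ch}_V(\tau)$ is a holomorphic weight-$\sfrac12$ modular form for $\Gamma_1(C_V)$ (up to a character, after rescaling $\tau$). Here $c=1$ cancels the $q^{-1/24}$ prefactors, and holomorphy at the cusps follows from $h_a\ge 0$ (Proposition \ref{prop:pseudounitarypartialconverse}). The Serre--Stark theorem then writes $F_V$ as a finite linear combination of unary theta series $\sum_n \psi(n) q^{tn^2}$, with $\psi$ even Dirichlet characters. Equivalently, after transforming to the $-1/\tau$ frame, $\eta\,\mathrm{ch}_V(-1/\tau)$ is a combination of lattice theta functions $\theta_{2m,r}$, so that
\begin{align*}
\mathrm{ch}_V(-1/\tau)=\sum_{r} c_r\,\frac{\theta_{2M,r}(\tau)}{\eta(\tau)}
\end{align*}
for some $M$.

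\textbf{Step 2: computing $P_V$ and verifying the Tannakian condition.} The integer-dimension projection $\Pr$ retains only the terms with $r^2/4M\in\mathbb{Z}$. The coefficients $c_r$ are quantum-dimension-type numbers, and I would use Galois symmetry to show that $P_V$ is, up to the factor $S_{00}^{-1}$, a positive multiple of a single character $\theta_{2t_\star,0}/\eta$ with $t_\star$ square-free. The Galois action permutes the $r$'s and fixes the vacuum row up to signs; combined with positivity of the $q$-coefficients, this should force the projected sum to collapse onto the $r\equiv 0$ class of the square-free part.

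Once $P_V = D'\,\theta_{2t_\star,0}/\eta$, the identity $\theta_{2t,0}(-1/\tau)/\eta(-1/\tau)=(2t)^{-1/2}\sum_r \theta_{2t,r}/\eta$ together with the projection (only $r=0$ survives, since $t_\star$ is square-free) gives \eqref{eqn:Tannakiancondition} with some constant $D$. Theorem \ref{thm:TannakianVacChar} then shows $V$ is Tannakian, and by \eqref{eqn:chartannext}, $\mathrm{ch}_{V_{\Rep(G)}} = \theta_{2t_\star,0}/\eta$.

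I expect Step 2 to be the main obstacle, specifically the Galois/positivity argument that the integer-dimension part of $\mathrm{ch}_V(-1/\tau)$ is a single theta function with no spurious extra terms. If that fails, a fallback is to show directly that the weight-one coefficient of $P_V$ is nonzero and pass to Step 3 using only that.

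\textbf{Step 3: identifying the extension as a lattice VOA.} $V_{\Rep(G)}$ is nice, has $c=1$, and has $\dim (V_{\Rep(G)})_1 \geq 1$ because $\theta_{2t_\star,0}/\eta$ has a nonzero $q^{1-1/24}$ coefficient. By Dong--Mason \cite{Dong:2002fs}, the rank of $(V_{\Rep(G)})_1$ is at most the effective central charge $1$, and equality forces $V_{\Rep(G)}$ to be a lattice VOA. Its character then pins down $V_{\Rep(G)}\cong V_{2t_\star}$. Proposition \ref{prop:Tannakianext} gives $V\cong V_{2t_\star}^G$ for a finite $G\subset \mathrm{Aut}(V_{2t_\star})$.

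\textbf{Step 4: enumerating the fixed-point subalgebras.} For $t_\star>1$, $\mathrm{Aut}(V_{2t_\star})\cong U(1)\rtimes \mathbb{Z}_2$. Its finite subgroups are the cyclic groups $\mathbb{Z}_n$, with $V_{2t_\star}^{\mathbb{Z}_n}=V_{2t_\star n^2}$, and the dihedral groups $D_{2n}$, with fixed points $V_{2t_\star n^2}^+$. For $t_\star=1$, $\mathrm{Aut}(V_2)=PSL(2,\mathbb{C})$. Its finite subgroups are cyclic, dihedral, $A_4$, $S_4$, and $A_5$; the cyclic and dihedral cases reproduce the families above (using $V_2^+\cong V_8$), and the remaining three give $V_T$, $V_O$, $V_I$.

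\textbf{Pairwise non-isomorphism.} This follows from comparing invariants: $\dim V_1$, $\Rep(V)_{h\in\mathbb{Z}}\cong\Rep(G)$ (which determines $G$ by Tannakian reconstruction), and the maximal extension $V_{2t_\star}$. Together these recover $(t_\star,G)$ up to conjugacy in the automorphism group.
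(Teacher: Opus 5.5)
Your overall architecture is the paper's: Serre--Stark for $F_V=\eta\,\mathrm{ch}_V$, then Theorem \ref{thm:TannakianVacChar}, then Dong--Mason, then Proposition \ref{prop:Gfixedpts}. Steps 1, 3 and 4 are fine. The genuine gap is Step 2, which is the heart of the proof and which you only gesture at.

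The mechanism you sketch there is also not the one that works. Galois symmetry does not fix the vacuum row up to sign. For $\gamma_\ell\equiv\mathrm{diag}(\ell,\ell^{-1})$ mod $C_V$ one has $\mathrm{ch}_V(\gamma_\ell\tau)=\delta(\ell)\,\mathrm{ch}_{V^{(\ell)}}(\tau)$, where $V^{(\ell)}$ is the Galois image of the vacuum. This is in general a different module, and that fact is exactly what produces the constraint. Likewise, positivity of the $q$-coefficients of $\mathrm{ch}_V$ does not control the signs of the theta coefficients. What you need is positivity of the Virasoro Jordan--H\"older multiplicities $A_k$ appearing in $F_V$.

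Your fallback fails as well. Unless $V$ is already known to be Tannakian, $P_V$ is not known to be the character of any VOA. So there is nothing to apply Dong--Mason to, and no group $G$ with $V\cong V_{2t_\star}^G$.

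The missing argument runs as follows.
\begin{enumerate}[label=(\arabic*)]
\item Write $S_{00}\,\eta P_V=\sum_t w_t\,\theta_{0,1,t}$, summing over square-free $t\mid C_V/4$, with $w_t=\frac{1}{r_t\sqrt{2t}}\sum_{n_0}a_t(n_0)$. Comparing leading terms gives $\sum_t w_t=S_{00}$.
\item Each $w_t\ge 0$. For $t>1$, the coefficient of $q^{tn^2}$ in $F_V$ is $A_{tn^2}=2a_t(\bar n)$. For $t=1$, telescoping gives $A_{(kr_1)^2}=1+2k\sum_{n_0}a_1(n_0)$, which would eventually be negative otherwise.
\item Compute $\mathrm{ch}_V(\gamma_\ell\tau)$ in two ways: from the theta expansion via \eqref{eqn:Galoistheta}, and as $\delta(\ell)\,\mathrm{ch}_{V^{(\ell)}}$. $S$-transform both and read off the $q^{-1/24}$ coefficient. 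Using pseudo-unitarity ($d\ge 1$), this gives
\begin{align*}
\Bigl|\sum_t\left(\tfrac{2t}{\ell}\right)w_t\Bigr|=S_{V^{(\ell)},V}=d_{V^{(\ell)}}S_{00}\ge S_{00}.
\end{align*}
\item Suppose $w_{t_1},w_{t_2}\neq 0$ with $t_1\neq t_2$. Choose $\ell$ coprime to $C_V$ with $\left(\frac{t_1t_2}{\ell}\right)=-1$, which is possible since $t_1t_2$ is not a perfect square. By (2), the left-hand side in (3) is then strictly less than $\sum_t w_t=S_{00}$, a contradiction.
\end{enumerate}
Hence $P_V=\theta_{0,1,t_\star}/\eta$, and the rest of your Step 2 goes through.

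A minor point on non-isomorphism: your invariants $(t_\star,G,\dim V_1)$ do suffice, and $\dim V_1$ is genuinely needed to separate $V_{2t}^+$ from $V_{8t}$, which both have $G\cong\mathbb{Z}_2$. However, that argument presupposes strong rationality of every listed VOA, including $V_I$. The paper avoids this by simply comparing vacuum characters.
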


The reason we do not include $V_2^+$ in the list is because it is isomorphic to $V_8$. (For physicists, we note that this is a manifestation of the fact that the circle and orbifold branches of the $c=1$ conformal moduli space meet at the Kosterlitz-Thouless point.) The fact that the VOAs appearing in \eqref{eqn:c=1VOAlist} are pairwise distinct follows by inspecting their vacuum characters, recorded in Appendix \ref{app:bestiary}. Before we turn to the rest of the proof, a few remarks are in order. 

First, the converse to Theorem \ref{thm:c=1classification} is now settled using the recent preprint \cite{xu2026c2cofiniteness}, which claims to establish the strong rationality of $V_I$.

\begin{proposition}\label{prop:conversec=1classification}
     If $V$ is isomorphic to one of the VOAs in \eqref{eqn:c=1VOAlist},  then $V$ is nice.
\end{proposition}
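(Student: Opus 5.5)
We need to show each VOA in the list is strongly rational (simple, rational, $C_2$-cofinite, self-contragredient, CFT-type) and pseudo-unitary. The plan is to treat the three families separately, leaning on known results for the first two and on orbifold theory for the third.

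\textbf{Lattice VOAs.} For $V_{2m}$ everything is classical. Lattice VOAs of positive definite even lattices are simple, of CFT-type, and self-contragredient, since the lattice is positive definite and $(V_{2m})_1$ is spanned by the Heisenberg generator and, for $m=1$, the roots. They are also rational and $C_2$-cofinite by Dong's classification of modules and by the $C_2$ results of Zhu/Dong--Li--Mason. The simple modules are $V_{2m,r}$ with $h=r^2/4m$, taking representatives $|r|\le m$, which is positive for $r\ne 0$; so $V_{2m}$ is pseudo-unitary.

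\textbf{Charge conjugation orbifolds.} For $V_{2m}^+$ I would cite Abe and Dong--Jiang for rationality and $C_2$-cofiniteness of $V_L^+$. Simplicity, CFT-type, and self-contragredience are inherited from $V_{2m}$, the last via the invariant bilinear form restricted to the fixed-point subalgebra. Pseudo-unitarity then follows from the explicit list of simple modules, namely untwisted ones $V_{2m,r}^\pm$ and twisted ones $V^{T_\chi,\pm}$. The only modules whose weights need checking are the twisted ones, with $h=1/16$ or $9/16$, and $V_{2m,0}^-$, with $h=m\ge 1$; all are positive. These are recorded in Appendix~\ref{app:bestiary}.

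\textbf{Exceptional orbifolds.} For $V_T,V_O,V_I$ the key input is $C_2$-cofiniteness of $V_2^G$ for $G\in\{A_4,S_4,A_5\}$.
\begin{itemize}
\item $V_T$ and $V_O$: these are obtained from solvable groups, so $C_2$-cofiniteness follows from Miyamoto's solvable orbifold theorem together with Carnahan--Miyamoto, applied along a composition series starting from $V_2$.
\item $V_I$: $A_5$ is not solvable, and here I would invoke \cite{xu2026c2cofiniteness}.
\end{itemize}
Given $C_2$-cofiniteness, Carnahan--Miyamoto yields rationality of $V_2^G$. CFT-type, simplicity (Dong--Mason), and self-contragredience are inherited from $V_2$.

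For pseudo-unitarity I would not compute all modules directly. Instead, $V_2$ is unitary with $G$ acting by unitary automorphisms, because $G$ sits inside the compact group $SO(3)$. Hence every twisted module of $V_2$ is unitary, and its weights are bounded below by $0$, with equality only for the vacuum. By Frobenius reciprocity (Proposition~\ref{prop:equivAmod}), every simple $V_2^G$-module occurs in some $g$-twisted $V_2$-module. Twisted modules of $V_2\cong\widehat{\mathfrak{su}}(2)_1$ for $g\neq 1$ have lowest weight $\ge 1/16>0$, being spectral-flow images, so any non-vacuum simple $V_2^G$-module has positive weight. Alternatively, one can read this off the modular data in Appendix~\ref{app:exceptional}.

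The main obstacle is the non-solvable case $V_I$, whose $C_2$-cofiniteness is exactly what is supplied by the cited preprint; everything else reduces to standard references.
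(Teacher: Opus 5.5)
Your proposal is correct, and for strong rationality it is essentially the paper's argument. You use the lattice results for $V_{2m}$, Carnahan--Miyamoto \cite{Carnahan:2016guf} for the solvable orbifolds, and \cite{xu2026c2cofiniteness} for $V_I$. Citing Abe for $V_{2m}^+$ instead of Carnahan--Miyamoto with $G=\mathbb{Z}_2$ makes no difference.

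You genuinely diverge on pseudo-unitarity of $V_T,V_O,V_I$. The paper reads it off the explicit lists of simple modules in Appendix~\ref{app:bestiary}, which rest on the cleft-orbifold computation of Appendix~\ref{app:exceptional}. You instead argue structurally: every simple $V_2^G$-module lies in some $g$-twisted $V_2$-module, and these have strictly positive weights for $g\neq 1$. Your route needs no knowledge of the (historically error-prone) modular data and works uniformly for every finite $G\subset SO(3)$. The paper's route is immediate once the tables are in hand.

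Several details need fixing, none fatal:
\begin{itemize}
\item The bound ``$\geq 1/16$'' is wrong. An order-$N$ element is conjugate into the torus as $e^{2\pi\ri\beta(0)}$ with $\langle\beta,\alpha\rangle=j/N$ and $\gcd(j,N)=1$, where $\alpha$ generates $\sqrt2\mathbb{Z}$. Its twisted modules then have lowest weight $\geq 1/(4N^2)$. This is why $h=1/36,\,1/64,\,1/100$ appear in the tables for $V_T,V_O,V_I$. Only positivity is needed, so your conclusion stands.
\item $h(V_{2m}^-)=1$, coming from the Heisenberg vector, not $m$.
\item The step ``$G$ unitary, hence every twisted module is unitary'' is not a general theorem. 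The explicit spectral-flow description makes it unnecessary.
\item Proposition~\ref{prop:equivAmod} assumes $V$ is nice, so invoking it here is circular. Cite instead \cite{dong2017orbifold}: it gives the containment of simple $V^G$-modules in twisted $V$-modules as soon as $V^G$ is strongly rational, which you have already shown at that point.
\end{itemize}
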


\begin{proof}
    The strong rationality of $V_{2m}$ is a special case of the strong rationality of vertex operator algebras associated to even, positive-definite lattices \cite{dong1993vertex,Dong:1997ea}. Pseudo-unitarity follows by inspecting the simple modules of $V_{2m}$, reviewed in Appendix \ref{app:bestiary}. 
    
    The strong rationality of $V_{2m}^+:= V_{2m}^{\mathbb{Z}_2^{\mathrm{C}}}$, $V_T:=V_2^{A_4}$, and $V_O:=V_2^{S_4}$ follows from the fact that the $G$-fixed point subalgebra of a strongly rational VOA is again strongly rational if $G$ is a solvable finite group \cite{Carnahan:2016guf}. The strong rationality of $V_I$ was argued for recently in \cite{xu2026c2cofiniteness} by more elaborate means. The pseudo-unitarity of these VOAs can again be gleaned by inspecting their simple modules, reviewed in Appendix \ref{app:bestiary}. 
\end{proof}

Second, we note that every VOA appearing in Theorem \ref{thm:c=1classification} is in fact unitary.

\begin{proposition}\label{prop:unitary}
    Every nice vertex operator algebra with central charge $c=1$ is unitary. 
\end{proposition}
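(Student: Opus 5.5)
Given Theorem \ref{thm:c=1classification}, the claim reduces to checking unitarity of each VOA in the list \eqref{eqn:c=1VOAlist}. Every nice $c=1$ VOA $V$ is isomorphic to $V_{2m}$ for some $m\geq 1$, or to $V_{2m}^+$ for some $m\geq 2$, or to one of $V_T$, $V_O$, $V_I$. Unitarity is a property preserved under VOA isomorphism, as long as we transport the invariant Hermitian form and antilinear involution along the isomorphism. So it suffices to exhibit a unitary structure on each model.

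For the lattice VOAs $V_{2m}$, I will invoke the standard result that the VOA of any positive-definite even lattice is unitary. This goes back to Dong--Lin's work on unitary VOAs: the antilinear involution is induced from the lattice involution $e^\alpha\mapsto e^{-\alpha}$ (suitably twisted by the cocycle), together with complex conjugation on the Heisenberg part. The invariant form is positive definite because the Fock spaces are unitary Heisenberg representations.

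The remaining cases are orbifolds $W^G$ with $W=V_{2m}$ or $W=V_2$, and $G$ one of $\mathbb{Z}_2^{\mathrm C}$, $A_4$, $S_4$, $A_5$. The key lemma is: if $W$ is unitary with antilinear involution $\phi$, and $G$ is a finite group of automorphisms commuting with $\phi$ and preserving the positive-definite invariant form, then $W^G$ is a unitary subalgebra. The form restricts to a positive-definite invariant form on $W^G$, $\phi$ preserves $W^G$, and the conformal vector lies in $W^G$.

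For $V_{2m}^+$, the lift of $-1$ on the lattice commutes with $\phi$ and is unitary, which is standard. For $V_T$, $V_O$, $V_I$, I will use $V_2\cong\widehat{\mathfrak{su}}(2)_1$, whose unitary automorphisms form the compact group $SO(3)=SU(2)/\mathbb{Z}_2$ acting through the compact real form. Each finite subgroup of $\textsl{PSL}(2,\mathbb{C})$ is conjugate into the compact $SO(3)$, and conjugate subgroups give isomorphic fixed-point algebras. So after conjugation we may take $A_4$, $S_4$, $A_5\subset SO(3)$. These commute with the antilinear involution attached to the compact form $\mathfrak{su}(2)$, since real group elements commute with the corresponding conjugation.

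The main obstacle is the compatibility of the chosen finite groups with the antilinear involution. For $\mathbb{Z}_2^{\mathrm C}$ on $V_{2m}$, one must check that the cocycle choice makes the lift of $-1$ commute with $\phi$. For the exceptional cases, the obstacle is the conjugation-into-compact-form argument, which relies on every finite subgroup of a complex reductive group being conjugate into a maximal compact subgroup.
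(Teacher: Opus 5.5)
Your proposal is correct and matches the paper's argument: it reduces via Theorem~\ref{thm:c=1classification} to the explicit list, cites Dong--Lin for unitarity of $V_{2m}$, and uses the fact that the fixed points of a finite group of unitary automorphisms of a unitary VOA form a unitary VOA, which covers $V_{2m}^+$, $V_T$, $V_O$, $V_I$. The extra checks you flag are the lift of $-1$ commuting with the antilinear involution, and conjugating the exceptional subgroups into the compact $SO(3)$. These are exactly the details the paper leaves implicit, and they go through as you describe.
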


\begin{proof}
VOAs associated to positive definite even lattices, such as $V_{2m}$, are unitary \cite{dong2014unitary}. Moreover, if $G$ is a finite group of unitary automorphisms of a unitary VOA $V$, then the $G$-fixed point subalgebra $V^G$ is also unitary. This recovers the unitarity of the remaining VOAs in \eqref{eqn:c=1VOAlist}, since $V_{2m}^+:=V_{2m}^{\mathbb{Z}_2^{\mathrm{C}}}$, $V_T:=V_{2}^{A_4}$, $V_O:=V_2^{S_4}$, and $V_I:=V_2^{A_5}$. 
\end{proof}

Again, the authors are not aware of any pseudo-unitary strongly rational VOA which is not also unitary. Proposition \ref{prop:unitary} might be regarded as circumstantial evidence in favor of the speculation that no such VOAs exist. The VOAs appearing in Theorem \ref{thm:c=1classification} are presumably also completely unitary in the sense of \cite{gui2022q}, but we leave this interesting question to the future.

Third, we note that if pseudounitarity is dropped, then Theorem \ref{thm:c=1classification} is certainly false (cf.\ \cite[Example 3.26]{Moller:2024xtt}). For example, there are infinitely many strongly rational $c=1$ vertex operator algebras of the form 
\begin{align}
    \mathrm{LY}^{5n}\otimes V_L,
\end{align} 
where $\mathrm{LY}$ is the Lee-Yang VOA (i.e.\ the simple quotient of the Virasoro VOA at $c=-\sfrac{22}5$), and $V_L$ is a lattice VOA associated to an even, positive definite lattice of rank $22n+1$.

The theorem is also incorrect if we do not assume strong rationality. However, while the classification of strongly rational VOAs which are not pseudo-unitary is hopeless without further adjectives, we have the following conjectural generalization of Theorem \ref{thm:c=1classification} to irrational VOAs.

\begin{conjecture}\label{conj:c=1irrational}
    Let $V$ be a simple, CFT-type, self-contragredient vertex operator algebra with central charge $c=1$, and assume that $V$ decomposes into a direct sum of unitary simple $\mathrm{Vir}_1$-modules. Then $V$ is isomorphic to one of the VOAs appearing in \eqref{eqn:c=1VOAlist}, or it is one of 
    \begin{align}\label{eqn:irrationalunitaryVOAs}
        \mathrm{Vir}_1, \ \ \widehat{\mathfrak{u}}(1), \ \ \widehat{\mathfrak{u}}(1)^+,
    \end{align}
where $\mathrm{Vir}_1$ is the $c=1$ Virasoro VOA, $\widehat{\mathfrak{u}}(1)$ is the Heisenberg VOA (a.k.a.\ $\mathfrak{u}(1)$ Kac-Moody algebra), and $\widehat{\mathfrak{u}}(1)^+$ is the charge conjugation fixed-point subVOA of $\widehat{\mathfrak{u}}(1)$.
\end{conjecture}

\noindent The reason we have traded out pseudo-unitarity for the technical assumption that $V$ decomposes into unitary modules of $\mathrm{Vir}_1$ is because $\mathrm{Vir}_1$ itself is not pseudo-unitary. Indeed, it has a simple module for any $h\in\mathbb{C}$.

We note two existing results in the literature that are related to this conjecture. First, in Theorem 4.6 of \cite{xu2005strong}, Xu proved that any $c=1$ conformal net which satisfies what he called a ``spectrum condition'' is isomorphic to a conformal net obtained from a VOA appearing in Conjecture \ref{conj:c=1irrational}. (A $c=1$ conformal net $\mathcal{A}$ is said to satisfy the spectrum condition if a degenerate module of the Virasoro net, other than the vacuum, appears in $\mathcal{A}$ if $\mathcal{A}$ is not itself the Virasoro net.) 

Second, there is Theorem 3.15 of \cite{carpi2019classification}, which establishes that the VOAs appearing in \eqref{eqn:irrationalunitaryVOAs} are the only irrational unitary conformal subVOAs of any rank-1 lattice VOA. In particular, by our Theorem \ref{thm:c=1classification}, they are the only irrational unitary conformal subVOAs of \emph{any} strongly rational $c=1$ VOA, lattice or not. Said another way, any $c=1$ unitary VOA which does not appear in Conjecture \ref{conj:c=1irrational} evidently does not admit a conformal extension into a strongly rational VOA, and in particular cannot appear as a conformal subalgebra of the chiral algebra of a rational CFT.\medskip 

Finally, we conclude this subsection by noting that the hypotheses of Theorem \ref{thm:c=1classification} can be relaxed, roughly by trading rationality for rigidity. More precisely, we have the following corollary of Theorem \ref{thm:c=1classification}.

\begin{corollary}\label{cor:strengthened}
    Let $V$ be a strongly finite $c=1$ vertex operator algebra with a rigid and pseudo-unitary representation category. Suppose further that there is a unique simple module $M$ with $h_M=h_{\mathrm{min}}$, where $h_{\mathrm{min}}$ is the minimum taken over the conformal dimensions of the simple modules of $V$. Then $V$ is isomorphic to one of the VOAs in \eqref{eqn:c=1VOAlist}.
\end{corollary}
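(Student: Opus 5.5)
The plan is to reduce Corollary \ref{cor:strengthened} to Theorem \ref{thm:c=1classification}. By Lemma \ref{lem:C2+pseudounitary}, a strongly finite VOA with rigid, pseudo-unitary $\Rep(V)$ is strongly rational. So all that remains is to show that $V$ is pseudo-unitary in the sense of our definition, i.e.\ that every non-vacuum simple module has $h>0$. Once this is done, $V$ is nice with $c=1$, and Theorem \ref{thm:c=1classification} gives the conclusion.

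First, since $V$ is now strongly rational and $\Rep(V)$ is pseudo-unitary, Proposition \ref{prop:pseudounitarypartialconverse} gives $h_a\geq 0$ for every simple $V_a$. In particular $h_{\mathrm{min}}=0$. Indeed, the proof of that proposition shows $0\in B$, where $B$ is the set of simple modules of minimal conformal dimension.

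Second, we use the extra hypothesis. By assumption there is a unique simple module $M$ with $h_M=h_{\mathrm{min}}=0$. The vacuum $V$ itself has $h=0$, so $M\cong V$. Every non-vacuum simple module therefore has $h_a\geq 0$ with $h_a\neq 0$, that is $h_a>0$. This is exactly pseudo-unitarity of $V$, so $V$ is nice and Theorem \ref{thm:c=1classification} applies.

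The only delicate point is the first step. One must check that the proof of Proposition \ref{prop:pseudounitarypartialconverse} really uses only strong rationality together with positivity of the categorical dimensions. It does: it uses the modular transformation of Theorem \ref{thm:modularity}, positivity of $S_{00}d_b$, and $S^2=C$. The positivity of $S_{00}$ itself follows from Theorem \ref{thm:modularity} in the pseudo-unitary case. So no genuine obstacle is expected; the real content lies in Lemma \ref{lem:C2+pseudounitary} and the main theorem.
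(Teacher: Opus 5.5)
Your proposal is correct and follows essentially the same route as the paper: Lemma~\ref{lem:C2+pseudounitary} gives strong rationality, and Proposition~\ref{prop:pseudounitarypartialconverse} then gives pseudo-unitarity, after which Theorem~\ref{thm:c=1classification} applies. You also spell out the step that the paper's one-line proof leaves implicit. Proposition~\ref{prop:pseudounitarypartialconverse} only yields $h_a\geq 0$, and it is the uniqueness hypothesis on the $h_{\mathrm{min}}$-module that forces every non-vacuum simple module to have $h_a>0$.
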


\begin{proof}
    By Lemma \ref{lem:C2+pseudounitary}, $V$ must be strongly rational, and by Proposition \ref{prop:pseudounitarypartialconverse}, $V$ must be pseudo-unitary. Hence, the hypotheses of Theorem \ref{thm:c=1classification} hold. 
\end{proof}

\subsection{Proof}\label{subsec:mainproof}

The remainder of this section will be dedicated to proving the following.

\begin{proposition}\label{prop:c=1Tannakian}
    Every nice $c=1$ VOA $V$ is Tannakian. Moreover, the maximal conformal extension $V_{\Rep(G)}\supset V$ guaranteed by Proposition \ref{prop:Tannakianext} is isomorphic to a lattice VOA, $V_{\Rep(G)}\cong V_{2t_\star}$, for some square-free positive integer $t_\star$ which depends on $V$.
\end{proposition}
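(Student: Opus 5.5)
The plan follows the strategy sketched in the introduction. It has three stages: constrain the vacuum character by Serre--Stark, verify the criterion of Theorem \ref{thm:TannakianVacChar}, and identify the extension as a lattice VOA.

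\emph{Step 1: constrain the vacuum character.} By Theorems \ref{thm:modularity} and \ref{thm:congruence}, $F_V(\tau):=\eta(\tau)\mathrm{ch}_V(\tau)$ is a holomorphic weight-$\frac12$ modular form for $\Gamma_1(N)$ with $N$ a suitable multiple of $C_V$ (after rescaling $\tau$). Holomorphy at the cusps comes from pseudo-unitarity: every character $\mathrm{ch}_{V_a}$ grows no faster than $q^{-1/24}$, and this is exactly cancelled by $\eta$. The Serre--Stark theorem then writes $F_V$ as a finite linear combination of unary theta series $\sum_n \psi(n) q^{t n^2}$, with $\psi$ an even Dirichlet character and $t$ a positive rational. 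Comparing with the $q$-expansion $q^{-1/24}(1+0\cdot q+\dots)\eta(\tau)$ shows $F_V=q^{0}(1+O(q))$. Since $\dim V_0=1$, the constant term comes only from the $n=0$ terms, which fixes the normalization. I would record this as the statement that $\eta\,\mathrm{ch}_V=\sum_{t,\psi} c_{t,\psi}\theta_{\psi,t}$.

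\emph{Step 2: Tannakianness.} Apply $S$ to this expansion. Theta series with characters transform under $\tau\mapsto -1/\tau$ into theta series twisted by the Fourier transforms (Gauss sums) of the characters. I would then compute $P_V=\frac{1}{S_{00}}\Pr[\mathrm{ch}_V(-1/\tau)]$ explicitly, where projecting onto integer $h$ keeps only the terms $q^{tn^2}$ with $tn^2\in\mathbb{Z}$. Galois symmetry enters here: coefficients of $\mathrm{ch}_V$ are rational integers, so the $c_{t,\psi}$ are permuted by $\mathrm{Gal}(\overline{\mathbb{Q}}/\mathbb{Q})$ compatibly with $\psi\mapsto\sigma\psi$. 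After summing over Galois orbits, the nontrivial-character contributions to the integer-dimension sector should cancel or vanish under $\Pr$. What survives is a multiple of a single lattice theta function $\eta^{-1}\sum_{n} q^{t_\star n^2}$, with $t_\star$ square-free. This should be the least $t$ with $t\mathbb{Z}^2\cap\mathbb{Z}\ne0$ in the trivial-character sector. Applying $S$ and $\Pr$ once more reproduces $P_V$ up to a constant $\DD$, and Theorem \ref{thm:TannakianVacChar} then gives that $V$ is Tannakian with $|G|=\DD$. I expect this step to be the main obstacle. It requires controlling how $\Pr$ interacts with twisted theta series and showing that the Galois action kills all contributions except the trivial-character one. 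My first attempt would be to use the positivity bound of \eqref{eqn:qdiminequality} to force equality from the explicit form.

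\emph{Step 3: identify the extension.} By Proposition \ref{prop:Tannakianext}, $W=V_{\Rep(G)}$ is nice with character $P_V=c\,\eta^{-1}\sum_n q^{t_\star n^2}$. Normalizing the vacuum forces $c=1$, so $\mathrm{ch}_W=\mathrm{ch}_{V_{2t_\star}}$. In particular $\dim W_1\geq1$ unless $t_\star=1$, in which case $\dim W_1=3$; either way $W_1\neq0$. Since $W$ is strongly rational, pseudo-unitary and has $c=1$, the Dong--Mason bound $\mathrm{rank}\,W_1\le c$, with equality characterizing lattice VOAs \cite{Dong:2002fs}, gives $W\cong V_L$ for a rank-one even lattice $L$. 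Matching characters then yields $L=\sqrt{2t_\star}\mathbb{Z}$. Finally, $t_\star$ is square-free by maximality: if $n^2\mid t_\star$ with $n>1$, the Example above shows $V_{2t_\star}$ would have non-vacuum integer-weight modules, contradicting property (2) of Proposition \ref{prop:Tannakianext}.
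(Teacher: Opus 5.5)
Steps 1 and 3 are essentially sound. Your square-freeness argument via property (2) of Proposition \ref{prop:Tannakianext} is a legitimate substitute for reading square-free $t$ off the Serre--Stark decomposition. Two small corrections to Step 1: Serre--Stark gives integer $t$, and no rescaling of $\tau$ is needed because $F_V$ already has integral exponents.

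Step 2, however, carries all the content of the proposition, and it has a genuine gap. After the $S$-transform and $\Pr$ you get $P_V=\frac{1}{S_{00}\eta}\sum_t^\star w_t\,\theta_{0,1,t}$ over square-free $t$, with $w_t$ proportional to the sum of the theta coefficients at level $t$. The nontrivial-character pieces drop out simply by orthogonality, $\sum_{n_0}\psi(n_0)=0$; no Galois cancellation is involved. Since $\Pr[\theta_{0,1,t}(-1/\tau)/\eta(-1/\tau)]=\frac{1}{\sqrt{2t}}\theta_{0,1,t}(\tau)/\eta(\tau)$, the criterion of Theorem \ref{thm:TannakianVacChar} holds exactly when a single $t$ has $w_t\neq0$. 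That is precisely what must be proved, and you only assert it. The mechanism you suggest cannot deliver it. The series $\sum_n q^{tn^2}$ for different square-free $t$ all have rational coefficients and are fixed by $\mathrm{Gal}(\overline{\mathbb{Q}}/\mathbb{Q})$. So integrality of $\mathrm{ch}_V$ says nothing against $P_V$ being a positive combination of several of them. There is also no reason the survivor should be the ``least'' $t$.

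Two ingredients are missing.

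\emph{Nonnegativity, $w_t\ge0$.} This holds because the theta coefficients are fixed by the $q$-expansion of $F_V$, whose coefficients are (differences of) Virasoro Jordan--H\"older multiplicities. For $t=1$ one needs the observation that $A_{(kr_1)^2}=1+2ks_1$ would eventually go negative.

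\emph{The right Galois symmetry.} What matters is the Galois symmetry of the modular representation, not of the $q$-coefficients. For $\ell$ coprime to $C_V$, $\rho(\gamma_\ell)$ is a signed permutation, so $\mathrm{ch}_V(\gamma_\ell\tau)=\pm\mathrm{ch}_{V^{(\ell)}}(\tau)$. Meanwhile $\gamma_\ell$ acts on $\theta_{n_0,r,t}/\eta$ by the sign $\left(\frac{2t}{\ell}\right)$. Comparing $q^{-1/24}$ coefficients after an $S$-transform gives $\left|\sum_t^\star\left(\frac{2t}{\ell}\right)w_t\right|=d_{V^{(\ell)}}S_{00}\ge S_{00}=\sum_t^\star w_t$. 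Choosing $\ell$ with $\left(\frac{t_1t_2}{\ell}\right)=-1$ then rules out two positive weights.

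Your positivity fallback can also be completed, but again only once $w_t\ge0$ is known. At $\tau=\ri y$, the bound $\mathrm{Re}\sum_a d_aS_{ab}\le S_{00}\DD d_b$ gives $\sum_t\frac{w_t}{\sqrt{2t}}\theta_{0,1,t}(\ri y)\le S_{00}\DD\sum_t w_t\theta_{0,1,t}(\ri y)$, where $S_{00}\DD=S_{00}^{-1}\sum_t w_t/\sqrt{2t}$. Multiplying by $\sqrt{y}$ and letting $y\to0$ yields $\sum_t\frac{w_t}{2t}\le\bigl(\sum_t\frac{w_t}{\sqrt{2t}}\bigr)^2/\sum_t w_t$. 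This is the reverse of Cauchy--Schwarz, so equality holds and forces a single $t$.
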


Theorem \ref{thm:c=1classification} can be recovered as a corollary of Proposition \ref{prop:c=1Tannakian}. Indeed, assuming Proposition \ref{prop:c=1Tannakian}, we learn by applying Proposition \ref{prop:Tannakianext} that any nice $c=1$ VOA can be obtained as $V_{2t_\star}^G$ for some square-free $t_\star$ and some finite subgroup $G\subset \mathrm{Aut}(V_{2t_\star})$. The VOAs of this form are enumerated in Proposition \ref{prop:Gfixedpts} (see also Example 3.26 of \cite{Moller:2024xtt}) and recover precisely the list appearing in Theorem \ref{thm:c=1classification}.

Our approach is to use modular covariance and the Galois symmetry of modular fusion categories to control the characters of any nice $c=1$ VOA enough so that we can invoke Theorem \ref{thm:TannakianVacChar} and conclude that $V$ is Tannakian. 

 Recall that the $h\geq 0$ simple highest weight modules of the $c=1$ Virasoro VOA have characters given by 
\begin{align}
    \chi^{\mathrm{Vir}}_h(\tau) = \begin{cases}
       \frac{1}{\eta(\tau)}(q^{\mu^2}-q^{(\mu+1)^2}), & h=\mu^2  \text{ with }\mu \in \frac12 \mathbb{Z}_{\geq 0}, \\
       \frac{1}{\eta(\tau)}q^h, & \text{otherwise,}
    \end{cases}
\end{align}
where $\eta(\tau)=q^{1/24}\prod_{n=1}^\infty (1-q^n)$ is the Dedekind eta function. Even though $V$ may not be completely reducible as a Virasoro module, its vacuum character $\mathrm{ch}_V(\tau)$ decomposes into a sum of its simple composition factors,
\begin{align}\label{eqn:chVdecomp}
    \mathrm{ch}_V(\tau) = \chi_0^{\mathrm{Vir}}(\tau)+\sum_{k>0}A_k \chi_k^{\mathrm{Vir}}(\tau), \ \ \ \ A_{k}\in\mathbb{Z}_{\geq 0}.
\end{align}
It turns out it will be easier to impose the constraints of modularity on the auxiliary function 
\begin{align}\label{eqn:FV}
    F_V(\tau)=\eta(\tau)\mathrm{ch}_V(\tau).
\end{align}
Indeed, $F_V(\tau)$ serves almost as a generating function for the Jordan-Hölder multiplicities $A_k$ in the sense that
\begin{align}\label{eqn:virgenfunc}
    F_V(\tau) = 1+(A_1-1)q+\sum_{n> 1}(A_{n^2}-A_{(n-1)^2})q^{n^2}+ \sum_{\substack{\ell>0 \\ \ell\neq n^2}}A_\ell q^\ell.
\end{align}
The word ``almost'' here refers to the fact that the coefficient of $q^{n^2}$ is $A_{n^2}-A_{(n-1)^2}$ rather than just $A_{n^2}$.

Another reason it is useful to work with $F_V(\tau)$ is that, for a nice $c=1$ VOA, $\eta(\tau)$ clears the poles of $\mathrm{ch}_V(\tau)$ at every cusp and hence makes $F_V(\tau)$ a \emph{holomorphic} (rather than weakly-holomorphic) modular form. Holomorphic modular forms belong to finite-dimensional vector spaces which are often computable. 

To make this more precise, recall the definition of the congruence subgroups,
\begin{align}
\begin{split}
    \Gamma_1(N)&=\{\left(\begin{smallmatrix}\alpha & \beta \\ \gamma & \delta \end{smallmatrix}\right)\in \textsl{SL}_2(\mathbb{Z}) \mid \alpha,\delta\equiv 1~\mathrm{mod}~N, \ \gamma\equiv 0~\mathrm{mod}~N\} \\
    \Gamma_0(N) &=\{\left(\begin{smallmatrix}\alpha & \beta \\ \gamma & \delta \end{smallmatrix}\right) \in \textsl{SL}_2(\mathbb{Z}) \mid   \gamma\equiv 0~\mathrm{mod}~N\}  ,
\end{split}
\end{align}
and that $\textsl{SL}_2(\mathbb{Z})$ acts on the upper half-plane by fractional linear transformations,
\begin{align}
    \left(\begin{smallmatrix}\alpha & \beta \\ \gamma & \delta \end{smallmatrix}\right) \cdot  \tau = \frac{\alpha\tau +\beta}{\gamma\tau+\delta}.
\end{align}
A function $f(\tau)$ from the upper half-plane to the complex numbers is said to be a $\theta$-type holomorphic modular form of weight $\sfrac12$ for a congruence subgroup $\Gamma\subset \Gamma_0(4)$ if it is holomorphic in the interior of the upper half-plane and at the cusps, and if the multiplier system of $f(\tau)$ agrees with that of $\theta(\tau)$ on $\Gamma$, i.e.\ 
\begin{align}
    f(\gamma\tau)/f(\tau) = \theta(\gamma\tau)/\theta(\tau), \ \ \ \ \text{ for all }\gamma \in \Gamma,
\end{align}
where $\theta(\tau)=\sum_{n\in\mathbb{Z}} q^{n^2}$.

\begin{lemma}\label{lem:FVmodularity}
    If $V$ is a nice $c=1$ VOA, then the function $F_V(\tau)$ defined in \eqref{eqn:FV} is a holomorphic $\theta$-type modular form of weight $\sfrac12$ for $\Gamma_1(C_V)$, where $C_V$ is the conductor of $V$.
\end{lemma}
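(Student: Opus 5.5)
The plan is to compare $F_V$ with $\theta$ under each generator, splitting the work into the multiplier system, holomorphy in the interior, and holomorphy at the cusps.

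\textbf{Multiplier system.} By Theorem \ref{thm:congruence}, $\mathrm{ch}_V$ is invariant under $\Gamma(C_V)$. I would first upgrade this to invariance under $\Gamma_1(C_V)$ (and note $4\mid C_V$ where needed, so that $\Gamma_1(C_V)\subset\Gamma_0(4)$ and the $\theta$-type notion makes sense). Two facts drive the upgrade. First, every element of $\Gamma_1(C_V)$ acts on the vector of characters through the representation of Theorem \ref{thm:modularity}, and $T$ acts on $\mathrm{ch}_V$ by the scalar $e^{-2\pi\ri/24}$. Second, by the congruence property of \cite{Ng:2010win,dong2015congruence}, the modular representation $\rho$ factors through $\textsl{SL}_2(\mathbb{Z}/C_V)$, possibly after rescaling by a character. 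For $\gamma\in\Gamma_1(C_V)$, the image of $\gamma$ in $\textsl{SL}_2(\mathbb{Z}/C_V)$ is upper unitriangular, i.e.\ equal to the image of $T^\beta$. Hence $\rho(\gamma)$ acts on the vacuum character as $\rho(T)^\beta$, up to the $\eta$-type multiplier $e^{2\pi\ri c/24\cdot(\cdots)}$ coming from the $c/24$ normalization.

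Concretely, I would write $\mathrm{ch}_V=F_V/\eta$ and use the projective representation in which the $\eta$-multiplier $\nu_\eta$ cancels exactly. Then the characters multiplied by $\eta$ (weight $\frac12$, since $c=1$) transform under a representation $\tilde\rho=\nu_\eta\,\rho$. This representation is genuinely trivial on $\Gamma(C_V)$ up to the weight-$\frac12$ automorphy factor, and it agrees with the $\theta$ multiplier there.

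I would verify the identification with the $\theta$ multiplier by a test-case comparison rather than abstractly. For $V_2$ one has $F_{V_2}=\theta(\tau)$, or more generally $\theta_{\sqrt{2m}\mathbb Z}$, and the same $\eta$-normalized Weil-type representation governs both. Since the ratio $f(\gamma\tau)/f(\tau)$ for $\gamma\in\Gamma_1(C_V)$ depends only on $\nu_\eta(\gamma)$ and on $\rho(\gamma)=\rho(T)^{\beta}$ acting on the vacuum, the ratio $F_V(\gamma\tau)/F_V(\tau)=(\gamma\tau+\delta)^{1/2}\,\nu_\eta(\gamma)\,e^{-2\pi\ri\beta/24}$ matches $\theta(\gamma\tau)/\theta(\tau)$. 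The matching of these roots of unity I expect to be the main obstacle. I would first try to reduce it to the known $\theta$ transformation law on $\Gamma_1(4)$, which also involves $\nu_\eta$ and Kronecker symbols, checking that the $\eta$ multiplier restricted to $\Gamma_1(C_V)$ reproduces $\left(\frac{\gamma}{\delta}\right)\epsilon_\delta^{-1}$ times the $T^\beta$ phase.

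\textbf{Holomorphy.} In the interior, holomorphy is immediate since $\eta$ has no zeros. At a cusp $\gamma\infty$, I would write $\mathrm{ch}_V|\gamma=\sum_b \rho(\gamma)_{0b}\mathrm{ch}_{V_b}$. Each $\mathrm{ch}_{V_b}$ has leading exponent $h_b-\frac1{24}\ge -\frac1{24}$, by pseudo-unitarity ($h_b\ge 0$). Multiplying by $\eta|\gamma$, whose leading exponent is $\frac1{24}$ up to a constant, gives exponents $\ge 0$, so $F_V$ is bounded at every cusp. Note that this cusp argument needs exactly $h_b\ge0$, which is where niceness enters.
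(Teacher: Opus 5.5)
Your proposal is correct and is essentially the paper's argument. The paper also matches the $\eta$ and $\theta$ multipliers by noting that $\theta/\eta=\mathrm{ch}_{V_2}$ has conductor $24$, handles $T$ through the integrality of the exponents of $F_V$ (equivalent to your $e^{-2\pi\ri\beta/24}$ bookkeeping, since $\Gamma_1(C_V)=\langle\Gamma(C_V),T\rangle$), and proves holomorphy at the cusps exactly as you do from $h_b\ge 0$. Note that your $V_2$ comparison already removes what you call the main obstacle: the vacuum forces $24\mid C_V$, so $\Gamma_1(C_V)\subseteq\Gamma_1(24)$, and your $V$-independent formula for $F_V(\gamma\tau)/F_V(\tau)$, applied to $V_2$, is literally $\theta(\gamma\tau)/\theta(\tau)$; hence no Kronecker-symbol check is needed, and $\mathrm{ch}_V$ should be described as covariant rather than invariant under $\Gamma_1(C_V)$.
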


\begin{proof}
    First, we note that  
    \begin{align}\label{eqn:etamultiplier}
        \eta(\gamma\tau)/\eta(\tau) = \theta(\gamma\tau)/\theta(\tau), \ \ \ \ \text{for all } \gamma \in\Gamma(24).
    \end{align}
    A quick way to see this is to observe that $\theta(\tau)/\eta(\tau)$ is the vacuum character of $V_2\cong \widehat{\mathfrak{su}}(2)_1$ which has conductor $C_{V_2}=24$, so by Theorem \ref{thm:congruence}, $\theta(\tau)/\eta(\tau)$ is invariant under the action of $\Gamma(24)$.

    Combining \eqref{eqn:etamultiplier} with the fact that $\mathrm{ch}_V(\tau)$ is invariant under $\Gamma(C_V)$ by Theorem \ref{thm:congruence}, we learn that $F_V(\tau)$ transforms like a weight $\sfrac12$ modular form for $\Gamma(C_V) \cap \Gamma(24)$. But since the denominator of $-c/24=-1/24$ divides the conductor, we have that $\Gamma(C_V)\cap \Gamma(24)=\Gamma(C_V)$. Furthermore, by inspecting the $q$-expansion of $F_V(\tau)$ in \eqref{eqn:virgenfunc}, we see that it obeys $F_V(\tau+1)=F_V(\tau)$. Thus, in fact, $F_V(\tau)$ is a weight $\sfrac12$ modular form for $\langle \Gamma (C_V),\left(\begin{smallmatrix} 1 & 1 \\ 0 & 1 \end{smallmatrix}\right)\rangle =\Gamma_1(C_V)$.

    Finally, since $\textsl{SL}_2(\mathbb{Z})$ acts transitively on the cusps, $F_V(\tau)$ is holomorphic at all cusps if and only if $F_V(\gamma\tau)$ is holomorphic at $q=0$ for all $\gamma\in \textsl{SL}_2(\mathbb{Z})$. Theorem \ref{thm:modularity} implies that the $q$-expansion of $\mathrm{ch}_V(\gamma\tau)$  has singular terms which are no worse than $q^\nu$ for $\nu=\mathrm{min}_a (h_a-c/24)$. For a pseudo-unitary $c=1$ VOA, $\nu=-\sfrac{1}{24}$, and since $\eta(\gamma\tau)(c\tau+d)^{-\sfrac12}$ has a $q$-expansion which starts as $\propto q^{+\sfrac{1}{24}}+\cdots $, it follows that $F_V(\gamma\tau) = \eta(\gamma\tau) \mathrm{ch}_V(\gamma\tau)$ is holomorphic. 
\end{proof}

Kiritsis \cite{Kiritsis:1988et} recognized that one could apply the Serre-Stark classification of weight $\sfrac12$ holomorphic modular forms of $\theta$-type \cite{serre2006modular} to constrain the partition functions of $c=1$ rational conformal field theories. We will follow a similar path, though will depart from \cite{Kiritsis:1988et} in the details. First, we will only work with characters of $V$ (rather than partition functions of full CFTs). Second, we will not attempt to explicitly enumerate the possible $c=1$ character vectors directly, but rather just constrain their form enough so that we can conclude that $V$ is Tannakian.

We will work with theta functions of the shape
\begin{align}
    \theta_{n_0,r,t}(\tau) := \sum_{\substack{n\equiv n_0 ~\mathrm{mod}~r\\n\in\mathbb{Z}}}q^{tn^2},
\end{align}
with $t,r\in\mathbb{Z}_{> 0}$.
The function $\theta_{n_0,r,t}(\tau)$ is a weight $\sfrac12$ holomorphic modular form of $\theta$-type for $\Gamma_1(4tr^2)$. It is elementary to see that the $\theta_{n_0,r,t}$ satisfy the relations 
\begin{align}\label{eqn:relations}
    \theta_{n_0,r,t}=\theta_{-n_0,r,t}, \ \ \ \ \ \theta_{n_0,r,t\ell^2}=\theta_{n_0\ell,r\ell,t}, \ \ \ \ \ \theta_{n_0,r,t}=\sum_{\substack{n_0'\in\mathbb{Z}_{r\ell} \\ n_0'\equiv n_0~\mathrm{mod}~r}}\theta_{n_0',r\ell,t}.
\end{align}
They are related to the rank-1 lattice theta functions defined in Equation \eqref{eqn:freebosoncharacters} as 
\begin{align}\label{eqn:thTh}
    \theta_{n_0,r,t}(\tau)=\Theta_{tr^2,2trn_0}(\tau).
\end{align}
Thus, one can use the well-known modular S-matrix of the lattice VOA $V_{2tr^2}$, Equation \eqref{eqn:latticeSmatrix}, to conclude that 
\begin{align}\label{eqn:thetaS}
    \theta_{n_0,r,t}(-1/\tau) = \frac{\sqrt{-\ri\tau}}{r\sqrt{2t}}\sum_{\nu=0}^{2tr^2-1}e^{-2\pi \ri n_0 \nu/r}\Theta_{tr^2,\nu}(\tau),
\end{align}
where we have used the standard fact that $\eta(-1/\tau)=\sqrt{-\ri\tau}\eta(\tau)$.

It will also be useful to record how these theta functions transform under Galois symmetries. We first record the following standard number theory result. 
Let $D\neq 0$, and write $D=m^2d$ with $d$ square-free. Define
\begin{align}
    \Delta = \begin{cases}
        d, & d\equiv 1~\mathrm{mod}~4 \\
        4d, & \text{otherwise,}
    \end{cases}
\end{align}
so that $\Delta$ is the fundamental discriminant of $\mathbb{Q}(\sqrt{D})$. 
Suppose that $N$ is a multiple of $|\Delta|$ and $\ell$ an integer with $\gcd(\ell,N)=\gcd(\ell,D)=1$. Then $\sqrt{D}\in \mathbb{Q}(\zeta_N)$ and 
    \begin{align}\label{eqn:galoisaut}
        \sigma_\ell(\sqrt{D})= \left(\frac{D}{\ell}\right) \sqrt{D},
    \end{align}
    where $\left(\frac\cdot\cdot\right)$ is the Kronecker symbol and $\sigma_\ell$ is the Galois automorphism of $\mathbb{Q}(\zeta_N)$ sending $\zeta_N\mapsto \zeta_N^\ell$.

\begin{proposition}
Let $M=\mathrm{lcm}(4tr^2,24)$ and $\ell$ an integer coprime to $M$.  For any $\gamma_\ell\in \textsl{SL}_2(\mathbb{Z})$ which is congruent to $\mathrm{diag}(\ell,\ell^{-1})~\mathrm{mod}~M$,
\begin{align}\label{eqn:Galoistheta}
   \frac{ \theta_{n_0,r,t}(\gamma_\ell \tau)}{\eta(\gamma_\ell\tau)}=\left(\frac{2t}{\ell}\right)\frac{\theta_{\ell n_0,r,t}(\tau)}{\eta(\tau)},
\end{align}
where $\left(\frac ab\right)$ is the Kronecker symbol.
    
\end{proposition}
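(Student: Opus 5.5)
We will prove \eqref{eqn:Galoistheta} by identifying $\theta_{n_0,r,t}/\eta$ with a character of the lattice VOA $V_{2tr^2}$ via \eqref{eqn:thTh}. We then use the standard action of $\textsl{SL}_2(\mathbb{Z})$ on such characters, together with the Galois symmetry of modular data. Set $m=tr^2$, so that $\theta_{n_0,r,t}/\eta=\Theta_{m,2mn_0/r}/\eta=\mathrm{ch}_{V_{2m,\lambda}}$ with $\lambda=2trn_0\in\mathbb{Z}_{2m}$. The conductor of $V_{2m}$ divides $\mathrm{lcm}(4m,24)$, which divides $M$. Hence the representation $\rho$ of $\textsl{SL}_2(\mathbb{Z})$ on the characters $\{\mathrm{ch}_{V_{2m,\nu}}\}_{\nu}$ factors through $\textsl{SL}_2(\mathbb{Z}/M)$, so $\rho(\gamma_\ell)$ depends only on $\gamma_\ell \bmod M$. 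Theorem \ref{thm:congruence} supports this, and so does the explicit formula \eqref{eqn:latticeSmatrix}.

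The key step is the standard Galois lemma, in the form used by Coste--Gannon and Dong--Lin--Ng. For $\gamma_\ell\equiv\mathrm{diag}(\ell,\ell^{-1})$ mod $M$, one has $\rho(\gamma_\ell)=G_\ell$, where $G_\ell=\sigma_\ell(S)S^{-1}$ is a signed permutation matrix: $(G_\ell)_{ab}=\epsilon_\ell(a)\delta_{b,\hat\sigma_\ell(a)}$. Rather than invoking this abstractly, we will verify it directly for the lattice S-matrix $S_{\mu\nu}=\frac{1}{\sqrt{2m}}e^{-2\pi \ri\mu\nu/2m}$. Write $\gamma_\ell$ as a word in $S$ and $T$ modulo $M$, for example $\mathrm{diag}(\ell,\ell^{-1})\equiv S T^{\ell}ST^{\ell^{-1}}ST^{\ell}$ up to sign. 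Alternatively, use $\sigma_\ell(S)=G_\ell S$ and compute it directly: applying $\sigma_\ell$ sends $e^{-2\pi\ri\mu\nu/2m}\mapsto e^{-2\pi\ri\ell\mu\nu/2m}$, which permutes $\mu\mapsto\ell\mu$. It also sends $1/\sqrt{2m}\mapsto\left(\frac{2m}{\ell}\right)/\sqrt{2m}$ by \eqref{eqn:galoisaut}. Here $\mathbb{Q}(\sqrt{2m})\subset\mathbb{Q}(\zeta_M)$ because $8m\mid M$ or the fundamental discriminant divides $4m\cdot 2$; we need to check this carefully, possibly enlarging to $\mathbb{Q}(\zeta_{8m})$.

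The outcome is that $\mathrm{ch}_{V_{2m,\lambda}}(\gamma_\ell\tau)=\pm\left(\frac{2m}{\ell}\right)\mathrm{ch}_{V_{2m,\ell\lambda}}(\tau)$, possibly with an extra $\eta$-multiplier contribution. Since $\left(\frac{2tr^2}{\ell}\right)=\left(\frac{2t}{\ell}\right)$ for $\gcd(\ell,r)=1$, and $\ell\lambda=2tr(\ell n_0)$ corresponds to $\theta_{\ell n_0,r,t}$, this is the claim once signs are fixed.

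The main obstacle is bookkeeping of signs and multipliers. The map $\gamma_\ell\mapsto G_\ell$ in the Galois lemma may carry an additional overall character, namely the one-dimensional representation by which the $\eta$ factor, or equivalently the $c=1$ central charge phase, transforms. Since $24\mid M$ and $\eta^{24}=\Delta$, the function $\eta(\gamma_\ell\tau)/\eta(\tau)$ on $\Gamma_0(M)$-type elements is a root of unity that we must show cancels the corresponding phase in the character transformation. To handle this, we will first test the normalization on a simple case: $r=t=1$, $n_0=0$, where $\theta/\eta=\mathrm{ch}_{V_2}$. There we can use the classical theta multiplier $\theta(\gamma\tau)=\left(\frac{\gamma}{\delta}\right)\epsilon_\delta^{-1}(\gamma\tau+\delta)^{1/2}\theta(\tau)$ and Dedekind's eta multiplier to confirm that the total sign is $\left(\frac{2}{\ell}\right)$. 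We will then argue that the general sign is this base sign times the Galois-induced factor $\left(\frac{t}{\ell}\right)$, obtained by comparing the $\sqrt{2t}$ normalization in \eqref{eqn:thetaS}. An alternative direct route is to apply the classical formula for $\theta$-functions with characteristic under $\Gamma_0(4m)$ (Shimura's transformation law), which gives $\theta_{n_0,r,t}(\gamma\tau)=\left(\frac{t}{\delta}\right)\nu_\theta(\gamma)(\cdots)\theta_{\alpha n_0,r,t}(\tau)$ for $\gamma\in\Gamma_0(4tr^2)$. Dividing by Dedekind's formula for $\eta$ with $\delta\equiv\ell^{-1}$ then yields the result. We would fall back on this route if the Galois-lemma sign tracking becomes murky.
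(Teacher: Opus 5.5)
Your proposal is essentially the paper's proof. You identify $\theta_{n_0,r,t}/\eta$ with $\mathrm{ch}_{V_{2tr^2,2trn_0}}$, use that the conductor of $V_{2tr^2}$ is $\mathrm{lcm}(4tr^2,24)$, apply $\rho(\gamma_\ell)=\sigma_\ell(S)S^{-1}$, and read off $\sigma_\ell(S_{\mu\nu})=\left(\frac{2t}{\ell}\right)S_{\ell\mu,\nu}$; your $\left(\frac{2tr^2}{\ell}\right)=\left(\frac{2t}{\ell}\right)$ is the same computation.

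The extra $\eta$-multiplier character you worry about does not arise. The cited Galois theorem (Coste--Gannon, Dong--Lin--Ng) is stated for the representation on the VOA characters themselves, with $T$ already including $e^{-2\pi\ri c/24}$. So the sign is exactly $\left(\frac{2t}{\ell}\right)$, and neither the base-case calibration nor the Shimura fallback is needed; the heuristic ``base sign times $\left(\frac{t}{\ell}\right)$'' would not have been a proof anyway.

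Also, your sample word $ST^{\ell}ST^{\ell^{-1}}ST^{\ell}$ is congruent to $-\mathrm{diag}(\ell^{-1},\ell)$, not $\pm\mathrm{diag}(\ell,\ell^{-1})$, so it would produce $\theta_{\ell^{-1}n_0,r,t}$ instead. The correct word is $T^{\ell}ST^{\ell^{-1}}ST^{\ell}S^{-1}$.
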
 

\begin{proof} A slick way to prove this is to use the fact that $\theta_{n_0,r,t}/\eta$ is a character of the lattice VOA $V_{2tr^2}$ à la \eqref{eqn:thTh}. Via the Galois symmetry of modular fusion categories \cite{Coste:1999yc,Bantay:2001ni,dong2015congruence}, we know that, for a strongly rational VOA $V$ with modular representation $\rho:\textsl{SL}_2(\mathbb{Z})\to \textsl{GL}_N(\mathbb{C})$, one has that
\begin{align}
    \rho(\gamma_\ell)=\sigma_\ell(S)S^{-1}, \ \ \ \ \ \ \ \ S=\rho\left(\begin{smallmatrix}0 & -1 \\ 1 & 0\end{smallmatrix}\right),
\end{align}
where $\sigma_\ell$ is the Galois automorphism of $\mathbb{Q}(e^{2\pi \ri/C_V})$ given by $e^{2\pi \ri/C_V}\mapsto e^{2\pi \ri \ell/C_V}$, with $C_V$ the conductor of $V$.

For the lattice VOA $V=V_{2tr^2}$, the conductor is precisely $C_V=\mathrm{lcm}(4tr^2,24)$. Furthermore, 
\begin{align}
    \sigma_\ell(S_{\mu\nu})=\sigma_\ell \left(\frac{1}{\sqrt{2t}r}e^{-\pi \ri \mu\nu/tr^2}       \right)=\left(\frac{2t}{\ell}\right)\frac{1}{\sqrt{2t}r}e^{-\pi \ri \ell \mu \nu/tr^2}=\left(\frac{2t}{\ell}\right)S_{\ell\mu,\nu},
\end{align}
where we have used \eqref{eqn:galoisaut} to obtain $\sigma_\ell(\frac{1}{\sqrt{2t}})=\frac{1}{2t}\sigma_\ell(\sqrt{2t})=\frac{1}{\sqrt{2t}}\left(\frac{2t}{\ell}\right)$.

Thus, $\rho(\gamma_\ell)$ evaluates to 
\begin{align}
    \rho(\gamma_\ell)_{\mu\nu}=\left(\frac{2t}{\ell}\right)\sum_\alpha S_{\ell\mu,\alpha}(S^{-1})_{\alpha,\nu} = \left(\frac{2t}{\ell}\right)\delta_{\ell\mu,\nu}.
\end{align}
Specializing to $\mu=2trn_0$, the statement of the proposition follows.
\end{proof}

Now, we apply the Serre-Stark theorem \cite{serre2006modular} to constrain the form of the vacuum character of any nice $c=1$ VOA.

\begin{lemma}\label{lem:c=1VacChar}
    Suppose $V$ is a nice $c=1$ VOA. Given an integer $t$ such that $4t$ divides $C_V$, let $r_t$ be the largest integer such that $4t r_t^2$ divides $C_V$. Then there is a unique decomposition of the form 
    \begin{align}
        \mathrm{ch}_V(\tau) = \frac{1}{\eta(\tau)}\sum_{t}^\star\sum_{n_0~\mathrm{mod}~r_t}a_t(n_0) \theta_{n_0,r_t,t}(\tau), \ \ \ \ \ a_t(n_0)=a_t(-n_0),
    \end{align}
    where the sum $\sum_t^\star$ is over square-free integers $t$ dividing $C_V/4$.
\end{lemma}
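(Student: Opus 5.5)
We apply the Serre–Stark theorem to $F_V(\tau)=\eta(\tau)\mathrm{ch}_V(\tau)$. By Lemma \ref{lem:FVmodularity}, $F_V$ is a holomorphic $\theta$-type weight-$\sfrac12$ modular form for $\Gamma_1(C_V)$. Since $\theta_{n_0,r,t}$ has level $4tr^2$, we first check that $4\mid C_V$: the conductor is divisible by $24$ because $h_0-c/24=-1/24$, so $C_V$ is a multiple of $24$ and in particular of $4$, and we may write $N=C_V$ as a multiple of $4$. Serre–Stark states that the space $M_{1/2}(\Gamma_1(N),\theta\text{-type})$ has a basis given by the functions $\sum_{n}\psi(n)q^{tn^2}$. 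Here $\psi$ runs over even primitive Dirichlet characters of conductor $r(\psi)$, and $t$ over positive integers with $4r(\psi)^2t\mid N$. A character-free reformulation is more convenient: the span of these forms coincides with the span of the $\theta_{n_0,r,t}$ with $4tr^2\mid N$. One inclusion is immediate from the $\Gamma_1(4tr^2)$-modularity of $\theta_{n_0,r,t}$ stated above. For the other, expand each $\psi$-theta as $\sum_{n_0\bmod r}\psi(n_0)\theta_{n_0,r,t}$; conversely, decompose the functions $n\mapsto \mathbf 1[n\equiv n_0 \bmod r]$, symmetrized, into even (possibly imprimitive) characters modulo divisors of $r$. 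Imprimitive characters can be rewritten via primitive ones together with the relation $\theta_{n_0,r,t\ell^2}=\theta_{n_0\ell,r\ell,t}$.

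Next we normalize. Using the relations \eqref{eqn:relations}, any $\theta_{n_0,r,t}$ with $4tr^2\mid C_V$ can be written in the form $\theta_{n_0',r_{t'},t'}$ with $t'$ square-free. To do this, write $t=t'\ell^2$ and apply $\theta_{n_0,r,t'\ell^2}=\theta_{n_0\ell,r\ell,t'}$. Then $4t'(r\ell)^2\mid C_V$ gives $r\ell\mid r_{t'}$, and the third relation in \eqref{eqn:relations} refines the modulus $r\ell$ up to $r_{t'}$. Symmetrizing with $\theta_{n_0}=\theta_{-n_0}$ lets us impose $a_t(n_0)=a_t(-n_0)$. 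This yields existence of a decomposition $F_V=\sum_t^\star\sum_{n_0\bmod r_t}a_t(n_0)\theta_{n_0,r_t,t}$. The argument also applies to $t$ dividing $C_V/4$ but with $4t\nmid C_V$; these contribute nothing, consistent with the sum $\sum_t^\star$.

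Uniqueness is the step I expect to need the most care. The plan is to examine $q$-exponents. Every exponent appearing in $\theta_{n_0,r_t,t}$ has the form $tn^2$, and the square-free part of a nonzero exponent $tn^2$ is exactly $t$. Hence different square-free $t$ contribute to disjoint sets of nonzero exponents, apart from the constant term. The constant term arises only from $n=0$, i.e.\ from the class $n_0\equiv 0$, and occurs once for each $t$. For a fixed $t$, the symmetrized functions $\theta_{n_0,r_t,t}+\theta_{-n_0,r_t,t}$ with $n_0\in\mathbb{Z}_{r_t}/\pm$ have disjoint supports among the exponents $tn^2$ with $n\ne0$: the exponent $tn^2$ determines $\pm n$, hence $\pm n_0$. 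So for each $t$ all coefficients $a_t(n_0)$ with $n_0\not\equiv 0$ are determined, and $a_t(0)$ is read off from the exponent $t\cdot r_t^2$, which occurs only in the class $n_0\equiv0$. The constant term then imposes a consistency condition rather than an ambiguity. Dividing by $\eta$ gives the stated form of $\mathrm{ch}_V$.
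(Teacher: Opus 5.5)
Your proposal is correct and follows the paper's proof. You apply Serre--Stark to $F_V$ via Lemma~\ref{lem:FVmodularity}, normalize to square-free $t$ and modulus $r_t$ using \eqref{eqn:relations}, and get uniqueness from the fact that the square-free part of a nonzero exponent $tn^2$ is $t$, so that $c_{tn^2}=2a_t(\bar n)$ for $n>0$. You also make explicit some steps the paper leaves implicit, such as rewriting Serre--Stark's character basis in terms of the $\theta_{n_0,r,t}$ and why $4t(r\ell)^2\mid C_V$ forces $r\ell\mid r_t$. Two remarks are unnecessary but harmless: the ``conversely'' character decomposition, and the comment about $t\mid C_V/4$ with $4t\nmid C_V$, since no such $t$ exists.
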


\begin{proof}
    Corollary 1 and Corollary 3 of \cite{serre2006modular} assert that any function which is a $\theta$-type weight $\sfrac12$ holomorphic modular form for $\Gamma_1(N)$ can be expanded as a finite linear combination of theta series $\theta_{n_0,r,t}$ with $4tr^2\vert N$. By Lemma \ref{lem:FVmodularity}, it follows that $F_V(\tau)=\eta(\tau)\mathrm{ch}_V(\tau)$ can be expanded in these theta series as well. 

    Because $F_V(\tau)$ is modular with respect to $\Gamma_1(C_V)$, we can restrict the sum over theta functions to those for which $4tr^2\vert C_V$. The first identity in \eqref{eqn:relations} tells us that we can impose the symmetry condition $a_t(n_0)=a_t(-n_0)$ on the coefficients. The second identity tells us that we can restrict to square-free $t$. The third identity tells us that we can restrict to the functions $\theta_{n_0,r_t,t}$, as any other choice of $\theta_{n_0,r,t}$ which is compatible with the level being $C_V$ can be expanded in the functions $\theta_{n_0,r_t,t}$. 

    To show that the decomposition is unique, we argue that the $a_t(n_0)$ are completely fixed by the $q$-expansion of $F_V(\tau)$. Writing 
    \begin{align}
        F_V(\tau) = \sum_{\ell \in\mathbb{Z}_{\geq 0}}c_\ell q^\ell,
    \end{align}
    we observe that, for $t$ a square-free integer dividing $C_V/4$, we have $c_{tn^2}=2a_t(\bar n)=2a_t(-\bar n)$ for any $n>0$, where $\bar n$ is the reduction of $n$ modulo $r_t$. Indeed, among the theta functions appearing in the decomposition, only $\theta_{\bar n,r_t,t}$ and $\theta_{-\bar n,r_t,t}$ possess a $q^{tn^2}$ term due to the fact that $tn^2=t'm^2$ for $t,t'>0$ square-free and $n,m$ integers implies that $t=t'$ and $|n|=|m|$.
\end{proof}

Lemma \ref{lem:c=1VacChar} strongly constrains the shape of the vacuum character of any nice $c=1$ VOA $V$. We also get precious information about the $V$-modules $V_a$ with integer conformal dimension through the function $P_V(\tau)$ defined in \eqref{eqn:PVdef}.

\begin{lemma}
    Suppose $V$ is a nice $c=1$ VOA. Then  
    \begin{align}\label{eqn:PVform}
        P_V(\tau) =  \sum_{a\in \Rep(V)_{h\in\mathbb{Z}}}d_a\mathrm{ch}_{V_a}(\tau)= \frac{1}{S_{00}\eta(\tau)}\sum_{t}^\star w_t\theta_{0,1,t}(\tau),
    \end{align}
    where the sum is over square-free $t$ dividing $C_V/4$, and the coefficients $w_t$ are given by
    \begin{align}
        w_t:=\frac{1}{r_t\sqrt{2t}}\sum_{n_0~\mathrm{mod}~r_t}a_t(n_0), \ \ \ \ \sum_t^\star w_t=S_{00}.
    \end{align}
\end{lemma}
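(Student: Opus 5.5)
The plan is a direct computation. We start from the decomposition of Lemma \ref{lem:c=1VacChar}, apply the $S$-transformation \eqref{eqn:thetaS} term by term, and then apply the projection $\Pr$. The first equality in \eqref{eqn:PVform} is just the identity $P_V(\tau)=\sum_{a\in\Rep(V)_{h\in\mathbb{Z}}}d_a\mathrm{ch}_{V_a}(\tau)$ already noted after \eqref{eqn:PVdef}. So the content is the second equality together with the sum rule for the $w_t$.

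First, we use $\eta(-1/\tau)=\sqrt{-\ri\tau}\,\eta(\tau)$ and \eqref{eqn:thetaS} to write
\begin{align}
\mathrm{ch}_V(-1/\tau)=\frac{1}{\eta(\tau)}\sum_t^\star\frac{1}{r_t\sqrt{2t}}\sum_{\nu=0}^{2tr_t^2-1}\Big(\sum_{n_0~\mathrm{mod}~r_t}a_t(n_0)e^{-2\pi\ri n_0\nu/r_t}\Big)\Theta_{tr_t^2,\nu}(\tau).
\end{align}
Each $\Theta_{tr_t^2,\nu}/\eta$ is a $V_{2tr_t^2}$ character. Its $q$-exponents are $(2tr_t^2k+\nu)^2/(4tr_t^2)-1/24$ with $k\in\mathbb{Z}$. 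Since $1/\eta$ contributes only integer shifts, applying $\Pr$ to $\Theta_{tr_t^2,\nu}/\eta$ amounts to keeping the terms of $\Theta_{tr_t^2,\nu}$ with integer exponent $\nu'^2/(4tr_t^2)$, where $\nu'$ runs over $\nu+2tr_t^2\mathbb{Z}$.

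The key arithmetic step, which is the only real obstacle, is to show that $\nu'^2/(4tr_t^2)\in\mathbb{Z}$ with $t$ square-free forces $\nu'\equiv 0\bmod 2tr_t$. Write $\nu'^2=4tr_t^2 m$. Then $t\mid \nu'^2$, and square-freeness gives $t\mid\nu'$. Setting $\nu'=t\mu$ gives $t\mu^2=4r_t^2m$. Now use that $t$ is square-free and argue prime by prime. For a prime $p$, let $e=v_p(2r_t)$. If $p\nmid t$, then $v_p(\mu)\ge e$. If $p\mid t$, then $1+2v_p(\mu)\ge 2e$ forces $v_p(\mu)\ge e$ again. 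Hence $2r_t\mid\mu$, so $\nu'\in 2tr_t\mathbb{Z}$.

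It follows that only classes $\nu\equiv 0\bmod 2tr_t$ survive, namely $\nu=2tr_tj$ with $j\in\mathbb{Z}_{r_t}$. For these classes every term has integer exponent, so each such $\Theta$ survives the projection entirely. The phase is $e^{-2\pi\ri n_0 (2tr_t j)/r_t}=1$, so the coefficient of each surviving $\Theta$ is $\sum_{n_0}a_t(n_0)$. Summing over $j$, we get $\sum_{j}\Theta_{tr_t^2,2tr_tj}=\sum_{n\in\mathbb{Z}}q^{tn^2}=\theta_{0,1,t}$ by the third relation in \eqref{eqn:relations} together with \eqref{eqn:thTh}. Dividing by $S_{00}$ yields \eqref{eqn:PVform} with the stated $w_t$.

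Finally, the sum rule comes from comparing leading terms. The coefficient of $q^{-1/24}$ in $\mathrm{ch}_V(-1/\tau)$ is $S_{00}$ by Theorem \ref{thm:modularity}. On the other side, this coefficient is $\sum_t^\star w_t$, since each $\theta_{0,1,t}=1+\cdots$ and $\Pr$ retains the $q^{-1/24}$ term. Hence $\sum_t^\star w_t=S_{00}$.
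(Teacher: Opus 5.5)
Your proposal is correct and follows the same route as the paper. You $S$-transform the decomposition of Lemma \ref{lem:c=1VacChar} via \eqref{eqn:thetaS}, note that $\Pr$ keeps exactly the classes $\nu\equiv 0 \bmod 2tr_t$ (where the phase is trivial), resum them into $\theta_{0,1,t}$, and read off $\sum_t^\star w_t=S_{00}$ from the $q^{-1/24}$ coefficient. The only addition is that you prove the arithmetic step $4tr_t^2\mid\nu^2\Rightarrow 2tr_t\mid\nu$ for square-free $t$ with a $p$-adic valuation argument, which the paper merely asserts.
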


\begin{proof}
As described in the discussion around Equation \eqref{eqn:PVdef}, we can extract $P_V(\tau)$ from the vacuum character by taking the modular $S$ transformation and projecting onto the contributions of states with integer conformal dimension. Using the $\tau\mapsto -1/\tau$ transformation equation of the theta functions given in \eqref{eqn:thetaS}, we see that
\begin{align}
    P_V(\tau)&=\frac{1}{S_{00}}\Pr\left[\frac{1}{\eta(\tau)}\sum_{t}^\star\frac{1}{r_t\sqrt{2t}}\sum_{n_0~\mathrm{mod}~r_t}a_{t}(n_0)\sum_{\nu \in \mathbb{Z}_{2tr_t^2}}e^{-2\pi \ri n_0\nu/r_t}\Theta_{tr_t^2,\nu}(\tau)\right],
\end{align}
where the projector $\Pr$ was defined in \eqref{eqn:PVdef}.  
Now, note that $\Theta_{tr_t^2,\nu}$ features only terms proportional to $q^{\nu^2/4tr_t^2+n}$ in its $q$-expansion, with $n$ an integer. These exponents are integers if and only if $\nu=2tr_t\ell$ so, using \eqref{eqn:thTh} to revert back to an expression in terms of the $\theta_{n_0,r,t}$, we find 
\begin{align}
\begin{split}
    P_V(\tau) 
    &=\frac{1}{S_{00}\eta(\tau)}\sum_t^\star \frac{1}{r_t\sqrt{2t}}\sum_{n_0=0}^{r_t-1}a_t(n_0)\sum_{\ell=0}^{r_t-1}\theta_{\ell,r_t,t}(\tau)\\
    &=\frac{1}{S_{00}\eta(\tau)}\sum_t^\star \frac{1}{r_t\sqrt{2t}}\sum_{n_0=0}^{r_t-1}a_t(n_0)\theta_{0,1,t}(\tau),
\end{split}
\end{align}
which we recognize as the right-hand side of \eqref{eqn:PVform}. We can conclude that $S_{00}=\sum_t^\star w_t$ by comparing the leading terms in the $q$-expansion of $P_V(\tau)$ using both expressions appearing in Equation \eqref{eqn:PVform}.
\end{proof}

Ultimately, we would like to show that only a single square-free $t$ contributes to $P_V(\tau)$ in \eqref{eqn:PVform}, which will put us in position to successfully invoke Theorem \ref{thm:TannakianVacChar}. A useful intermediate step is to show that the $w_t$ are all non-negative. 

\begin{lemma}
    If $V$ is a nice $c=1$ VOA, then the coefficients $w_t$ appearing in \eqref{eqn:PVform} are all non-negative.
\end{lemma}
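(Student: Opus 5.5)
The plan is to read off the $w_t$ as $q$-expansion coefficients of $S_{00}\eta(\tau)P_V(\tau)$, and then to show these coefficients are non-negative using the first expression in \eqref{eqn:PVform}, i.e.\ $P_V=\sum_{a\in\Rep(V)_{h\in\mathbb{Z}}}d_a\mathrm{ch}_{V_a}$ with every $d_a>0$ by pseudo-unitarity. Since $\theta_{0,1,t}=1+2\sum_{n>0}q^{tn^2}$, and $tn^2=t'm^2$ with $t,t'$ square-free forces $t=t'$ and $|n|=|m|$ (the same observation as in the proof of Lemma \ref{lem:c=1VacChar}), the coefficient of $q^{tn^2}$ in $S_{00}\eta P_V$ is exactly $2w_t$ for every $n\geq 1$. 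Moreover $S_{00}>0$, by Theorem \ref{thm:modularity} with $\epsilon=+1$ in the pseudo-unitary case. So it suffices to control the coefficients of $G(\tau):=\eta(\tau)\sum_a d_a\mathrm{ch}_{V_a}(\tau)$.

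For this, I would decompose each $V_a$ with $h_a\in\mathbb{Z}$ into its Virasoro composition factors, exactly as in \eqref{eqn:chVdecomp}. The conformal weights lie in $h_a+\mathbb{Z}_{\geq0}\subset\mathbb{Z}_{\geq 0}$, so
\begin{align}
\eta(\tau)\mathrm{ch}_{V_a}(\tau)=\sum_{k\geq 0}B^a_k\big(q^k-[k=\mu^2]\,q^{(\mu+1)^2}\big),\qquad B^a_k\in\mathbb{Z}_{\geq 0},
\end{align}
which is the analogue of \eqref{eqn:virgenfunc}. Pseudo-unitarity gives $B^a_0=\delta_{a,0}$.

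I would then split into two cases.

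\emph{Case $t>1$.} Here $tn^2$ is not a perfect square, so no subtraction term contributes. Hence $2w_tS_{00}=\sum_a d_aB^a_{tn^2}\geq 0$.

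\emph{Case $t=1$.} This is the main obstacle, because the subtraction terms now contribute. The coefficient of $q^{n^2}$ is $2S_{00}w_1=\sum_a d_a\big(B^a_{n^2}-B^a_{(n-1)^2}\big)$, and this holds for every $n\geq1$. I would sum this identity over $n=1,\dots,N$; the right-hand side telescopes, giving
\begin{align}
2NS_{00}w_1=\sum_a d_a\big(B^a_{N^2}-B^a_0\big)\geq -d_0=-1.
\end{align}
Letting $N\to\infty$ yields $w_1\geq 0$.

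This completes the argument. The only delicate points are the uniqueness of the square-free representation of exponents, and the use of pseudo-unitarity to ensure that only the vacuum contributes at $k=0$.
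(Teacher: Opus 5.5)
Your argument is correct, but it runs the positivity argument on a different $q$-series than the paper does.

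\textbf{The paper's route.} The paper never touches the modules $V_a$. It writes $w_t=s_t/(r_t\sqrt{2t})$ with $s_t=\sum_{n_0\bmod r_t}a_t(n_0)$, and reads the $a_t(n_0)$ off the vacuum function $F_V=\eta\,\mathrm{ch}_V$ using only the Virasoro multiplicities $A_k$ of $V$ itself.
\begin{itemize}
\item For $t>1$ one has $a_t(\bar n)=\tfrac12 A_{tn^2}\ge 0$.
\item For $t=1$, telescoping over a full period of residues gives $A_{(kr_1)^2}=1+2ks_1$, which forces $s_1\ge 0$.
\end{itemize}

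\textbf{Your route.} You work with $\eta P_V=\sum_a d_a\,\eta\,\mathrm{ch}_{V_a}$. Its theta expansion $\frac{1}{S_{00}}\sum_t w_t\theta_{0,1,t}$ has the uniform coefficient $2w_t/S_{00}$ at every $q^{tn^2}$. This removes the residue bookkeeping: your telescoping works for every $N$, not just for multiples of $r_1$. The cost is that you need Virasoro decompositions of all integer-weight modules $V_a$, together with positivity $d_a>0$ of the quantum dimensions. The paper's step, by contrast, uses only $A_k\ge 0$ and $A_0=1$ for $V$.

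\textbf{Two small remarks.}
\begin{itemize}
\item Since $S_{00}\eta P_V=\sum_t w_t\theta_{0,1,t}$, your identities should read $2w_t=S_{00}\sum_a d_aB^a_{tn^2}$ and $2Nw_1/S_{00}\ge -1$. The $S_{00}$ is on the wrong side, which is harmless because $S_{00}>0$.
\item The exact value $B^a_0=\delta_{a,0}$ is not really needed. The telescoped sum is bounded below by the $N$-independent constant $-\sum_a d_aB^a_0$, and that already suffices after dividing by $N$. The essential use of pseudo-unitarity in your route is $d_a>0$.
\end{itemize}
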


\begin{proof} It suffices to show that the quantities $s_t:=\sum_{n_0~\mathrm{mod}~r_t}a_t(n_0)$ are all non-negative. Let $A_n\in\mathbb{Z}_{\geq0}$ be the Jordan-Hölder multiplicity inside $V$ of the simple highest-weight module of the $c=1$ Virasoro VOA with conformal dimension $h=n$. The $A_n$ appear in the $q$-expansion of $F_V(\tau)$ as in \eqref{eqn:virgenfunc}.  If $t>1$, then 
\begin{align}
    a_t(\bar n)=\frac12A_{tn^2}\geq 0, \ \ \ \ \text{ for all } n>0,
\end{align} 
where $\bar n$ is the reduction of $n$ modulo $r_t$. This gives the non-negativity of $s_t$ for all $t>1$. For $t=1$, we have that e.g.\ 
\begin{align}
    A_{(kr_1)^2}=1+2ks_1
\end{align}
for all $k\geq 0$, so if $s_1$ was negative it would eventually make $A_{(kr_1)^2}$ negative for some $k>0$, a contradiction.
\end{proof}

\begin{proposition}\label{prop:PVfinal}
If $V$ is a nice $c=1$ VOA, then there is a single square-free $t_\star$ such that $w_{t_\star}\neq 0$. In particular, it follows that 
\begin{align}
    P_V(\tau) = \sum_{a\in \Rep(V)_{h\in\mathbb{Z}}}d_a\mathrm{ch}_{V_a}(\tau) = \frac{\theta_{0,1,t_\star}(\tau)}{\eta(\tau)}.
\end{align}
\end{proposition}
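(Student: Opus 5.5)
The plan is to use Galois symmetry of $\Rep(V)$, acting through the matrices $\gamma_\ell$, to show that all square-free $t$ with $w_t\neq 0$ must carry the same quadratic character $\ell\mapsto\left(\frac{2t}{\ell}\right)$. Distinct square-free $t$ carry distinct characters, so at most one $t$ survives, and positivity of $S_{00}$ then finishes the argument.

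\textbf{Step 1: the vacuum is Galois-fixed.} Fix $\ell$ coprime to $C_V$ and choose $\gamma_\ell\in\textsl{SL}_2(\mathbb{Z})$ congruent to $\mathrm{diag}(\ell,\ell^{-1})$ mod $C_V$. By the Galois symmetry used in the proof of \eqref{eqn:Galoistheta}, $\rho(\gamma_\ell)=\sigma_\ell(S)S^{-1}$ is a signed permutation matrix. Hence $\mathrm{ch}_V(\gamma_\ell\tau)=\epsilon_\ell\,\mathrm{ch}_{V_{\pi_\ell(0)}}(\tau)$ for some sign $\epsilon_\ell$ and some simple module $V_{\pi_\ell(0)}$.

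Next expand the left side using Lemma \ref{lem:c=1VacChar} and \eqref{eqn:Galoistheta}. Each $4tr_t^2$ divides $C_V$, and so does $24$, so $\gamma_\ell$ satisfies the hypothesis of \eqref{eqn:Galoistheta} for every term. This gives
\begin{align}
\mathrm{ch}_V(\gamma_\ell\tau)=\frac{1}{\eta(\tau)}\sum_t^\star\left(\frac{2t}{\ell}\right)\sum_{n_0~\mathrm{mod}~r_t}a_t(n_0)\,\theta_{\ell n_0,r_t,t}(\tau).
\end{align}
The right-hand side is a $q$-series in integer powers of $q$ times $1/\eta$. So $V_{\pi_\ell(0)}$ has $h\in\mathbb{Z}_{\ge 0}$, and its lowest term is $q^{-1/24}$ unless the constant term vanishes. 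I expect this to be the delicate point. The cleanest route is to note that $\pi_\ell(0)$ has $h_{\pi_\ell(0)}\equiv \ell^2 h_0\equiv 0$ up to the usual Galois relation on twists, which makes $h$ an integer. The vacuum character then forces $h=0$ via the coefficient comparison, and pseudo-unitarity, in which only the vacuum has $h=0$, gives $\pi_\ell(0)=0$. If the constant term could vanish I would instead argue directly that $d_{\pi_\ell(0)}=\pm\sigma_\ell(S_{00})/S_{00}$ combined with positivity of dimensions. Either way, the conclusion is $\mathrm{ch}_V(\gamma_\ell\tau)=\epsilon_\ell\,\mathrm{ch}_V(\tau)$.

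\textbf{Step 2: compare coefficients.} Now compare the coefficients of $q^{tn^2}$ with $n>0$ on both sides, exactly as in the uniqueness part of Lemma \ref{lem:c=1VacChar}. The term $\theta_{\ell n_0,r_t,t}$ contributes to $q^{tn^2}$ only when $\ell n_0\equiv\pm \bar n$, which yields
\begin{align}
\left(\frac{2t}{\ell}\right)a_t(\ell^{-1}\bar n)=\epsilon_\ell\, a_t(\bar n)\qquad\text{for all } \bar n \bmod r_t .
\end{align}
Since multiplication by $\ell$ permutes $\mathbb{Z}_{r_t}$, summing over $\bar n$ gives $\left(\frac{2t}{\ell}\right)s_t=\epsilon_\ell\, s_t$, where $s_t=\sum_{n_0}a_t(n_0)$. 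Therefore every $t$ with $w_t\neq0$, equivalently $s_t\neq 0$, satisfies $\left(\frac{2t}{\ell}\right)=\epsilon_\ell$ for all $\ell$ coprime to $C_V$.

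\textbf{Step 3: at most one $t$ survives.} Suppose $t\neq t'$ are both square-free with $s_t,s_{t'}\neq 0$. Then $\left(\frac{tt'}{\ell}\right)=1$ for all $\ell$ coprime to $C_V$. But $tt'$ is not a perfect square, and its square-free part divides $C_V/4$, since both $t$ and $t'$ divide $C_V/4$. So the Kronecker character $\left(\frac{tt'}{\cdot}\right)$ is a nontrivial character modulo a divisor of $C_V$. By the Chinese remainder theorem we can pick $\ell$ coprime to $C_V$ with $\left(\frac{tt'}{\ell}\right)=-1$, a contradiction.

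Hence at most one $w_t$ is nonzero. By the preceding lemma $\sum_t^\star w_t=S_{00}>0$, so exactly one $w_{t_\star}$ is nonzero and it equals $S_{00}$. Substituting into \eqref{eqn:PVform} gives $P_V=\theta_{0,1,t_\star}/\eta$.

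The main obstacle is Step 1, namely pinning down $\pi_\ell(0)=0$ cleanly. The rest is bookkeeping with the uniqueness of the theta decomposition.
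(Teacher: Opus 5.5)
There is a genuine gap, and it is at the point you flagged yourself. The claim in Step 1 that the Galois image $\pi_\ell(0)$ of the vacuum is the vacuum is false for some nice $c=1$ VOAs. So $\mathrm{ch}_V(\gamma_\ell\tau)=\epsilon_\ell\,\mathrm{ch}_V(\tau)$ cannot be established.

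Take $V=V_4^+\cong L(\tfrac12)\otimes L(\tfrac12)$. Here $C_V=48$ and $\eta\,\mathrm{ch}_V=\tfrac12\theta_{0,1,2}+\tfrac12(\theta_{0,2,1}-\theta_{1,2,1})$. For $\ell=5$, \eqref{eqn:Galoistheta} gives
\[
\eta(\tau)\,\mathrm{ch}_V(\gamma_5\tau)=\tfrac12\left(\tfrac{4}{5}\right)\theta_{0,1,2}+\tfrac12\left(\tfrac{2}{5}\right)(\theta_{0,2,1}-\theta_{1,2,1})=\tfrac12\theta_{0,1,2}-\tfrac12(\theta_{0,2,1}-\theta_{1,2,1}),
\]
which is $\eta\,\mathrm{ch}_{V_4^-}$. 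So $\pi_5(0)=V_4^-$, the $h=1$ simple current. This is the familiar fact that, since $\sigma_5(\sqrt2)=-\sqrt2$, Galois conjugation on Ising sends the vacuum to the $h=\tfrac12$ primary. The same happens for every $V_{2m}^+$ with $t_\star>1$: take $\ell$ with $\left(\frac{2t_\star}{\ell}\right)=-\left(\frac{2}{\ell}\right)$.

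Both of your proposed justifications fail on this example:
\begin{enumerate}
\item The constant term $\sum_t^\star\left(\frac{2t}{\ell}\right)a_t(0)$ really does vanish here, since $a_1(0)=a_2(0)=\tfrac12$. So $h_{\pi_\ell(0)}\in\mathbb{Z}$ does not force $h=0$.
\item The relation $d_{\pi_\ell(0)}=\pm\sigma_\ell(S_{00})/S_{00}$ together with positivity only gives $d_{\pi_\ell(0)}\geq 1$. A nontrivial invertible object satisfies this.
\end{enumerate}
Consequently the identity in Step 2 is also false. For $V_4^+$ one has $a_1(\bar n)=\tfrac12(-1)^n$ and $a_2(\bar n)=\tfrac12$, both nonzero, and Step 2 would force $\left(\frac{2}{5}\right)=\epsilon_5=\left(\frac{4}{5}\right)$.

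The fix, which is what the paper does, is not to identify $\pi_\ell(0)$ at all. Apply $S$ to both expressions for $\mathrm{ch}_V(\gamma_\ell\tau)$ and compare the coefficients of $q^{-1/24}$:
\begin{itemize}
\item By \eqref{eqn:thetaS}, the theta side gives $\sum_t^\star\left(\frac{2t}{\ell}\right)w_t$.
\item The side $\pm\mathrm{ch}_{V_{\pi_\ell(0)}}(-1/\tau)$ gives $\pm S_{\pi_\ell(0),0}=\pm d_{\pi_\ell(0)}S_{00}$, because only the vacuum has $h=0$.
\end{itemize}
Pseudo-unitarity gives $d_{\pi_\ell(0)}\geq 1$. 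Hence
\[
\Bigl|\sum_t^\star\left(\tfrac{2t}{\ell}\right)w_t\Bigr|\geq S_{00}=\sum_t^\star w_t \qquad \text{for every $\ell$ coprime to $C_V$.}
\]
At this point the preceding lemma, $w_t\geq 0$, is indispensable. Your proposal never uses it, which is a symptom of the problem. If two distinct $t$ had $w_t>0$, your Step 3 supplies an $\ell_\star$ at which their signs differ, and then the left side is strictly smaller than $\sum_t^\star w_t$, a contradiction.

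Your Step 3 is correct: the existence of $\ell_\star$ follows from the nontrivial quadratic character $\left(\frac{tt'}{\cdot}\right)$, whose modulus divides $C_V$. Your concluding use of $\sum_t^\star w_t=S_{00}>0$ is also correct, and both match the paper.
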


\begin{proof} 

Let $C_V$ be the conductor of $V$, and let $\rho:\textsl{SL}_2(\mathbb{Z})\to \textsl{GL}_N(\mathbb{C})$ be the modular representation associated to $V$ which in particular dictates the transformation properties of its characters, as in Theorem \ref{thm:modularity}. Here, $N=\mathrm{rk}(\Rep(V))$ is the number of simple modules of $V$.

For any $\ell$ coprime to $C_V$, pick an $\textsl{SL}_2(\mathbb{Z})$ element $\gamma_\ell = \mathrm{diag}(\ell,\ell^{-1})~\mathrm{mod}~C_V$, where $\ell^{-1}$ denotes the multiplicative inverse modulo $C_V$. A standard result in the Galois theory of modular fusion categories \cite{Coste:1999yc,Bantay:2001ni,dong2015congruence} says that $\rho(\gamma_\ell)$ is a signed permutation matrix. In particular, this implies that 
\begin{align}
    \mathrm{ch}_V(\gamma_\ell\tau) = \delta(\ell)\mathrm{ch}_{V^{(\ell)}}(\tau),
\end{align}
where $\delta(\ell)\in \{\pm 1\}$ is an $\ell$-dependent sign, and $V^{(\ell)}$ is some simple $V$-module (the Galois image with respect to $\ell$ of the vacuum). 

On the other hand, given the constrained form of the vacuum character of a nice $c=1$ VOA coming from Lemma \ref{lem:c=1VacChar}, we can explicitly compute that 
\begin{align}
    \mathrm{ch}_V(\gamma_\ell\tau)=\sum_t^\star\left(\frac{2t}\ell\right) \sum_{n_0\in \mathbb{Z}_{r_t}} a_t(n_0)\frac{\theta_{\ell n_0,r_t,t}(\tau)}{\eta(\tau)},
\end{align}
where we have made use of Equation \eqref{eqn:Galoistheta}.

Let us take the modular $S$-transform of both of these expressions for $\mathrm{ch}_V(\gamma_\ell\tau)$ and compare the leading term in the $q$-expansion (i.e.\ the coefficient of $q^{-\sfrac1{24}}$). On one side, we have that
\begin{align}
    \delta(\ell)\sum_b S_{V^{(\ell)},b}\mathrm{ch}_{V_b}(\tau)\Bigg\vert_{q^{-\sfrac1{24}}} = \delta(\ell)S_{V^{(\ell)},V},
\end{align}
where $S_{V^{(\ell)},V}$ is the S-matrix element between the vacuum and its $\ell$th Galois image. Making use of \eqref{eqn:thetaS}, the other expression for $\mathrm{ch}_{V}(\gamma_\ell\tau)$ gives us
\begin{align}
    &\sum_t^\star \left(\frac{2t}{\ell}\right)\sum_{n_0\in\mathbb{Z}_{r_t}}a_t(n_0)\frac{1}{r_t\sqrt{2t}}\sum_{\nu\in\mathbb{Z}_{2tr_t^2}}e^{-2\pi i n_0\nu/r_t}\frac{\Theta_{tr_t^2,\nu}(\tau)}{\eta(\tau)}\Bigg\vert_{q^{-1/24}} = \sum_t^\star \left(\frac{2t}{\ell}\right)w_t.
\end{align}
Equating these two and taking absolute values, we deduce the inequality
\begin{align}\label{eqn:S00lowerbound}
    \left|\sum_t^\star \left(\frac{2t}\ell\right) w_t\right| = S_{V^{(\ell)},V}=d_{V^{(\ell)}}S_{00}\geq S_{00},
\end{align}
where we use that $d_{V^{(\ell)}}\geq 1$ for a nice VOA since the representation category is pseudo-unitary.

Now, suppose that $t_1$ and $t_2$ are distinct square-free integers dividing $C_V/4$ for which $w_{t_1}\neq 0$ and $w_{t_2}\neq 0$. Define $g=\gcd(t_1,t_2)$. Then $t_1t_2=g^2s$ where $s=(t_1/g)(t_2/g)$. It is clear that $s>1$, otherwise $t_1$ would equal $t_2$. Also, $s$ is square-free because $t_1/g$ and $t_2/g$ are both coprime and square-free. Moreover, every prime of $s$ clearly divides $C_V$ because $t_1$ and $t_2$ both divide $C_V$, so $s$ divides $C_V$. In fact, $4s$ divides $C_V$ because the conductor of a nice $c=1$ VOA is always divisible by $24$.

Now, let $\Delta$ be the fundamental discriminant of $\mathbb{Q}(\sqrt{s})$, i.e.\ $\Delta=s$ if $s\equiv 1~\mathrm{mod}~4$ and $\Delta=4s$ otherwise. Then,
\begin{align}
    \ell\mapsto \left(\frac{2t_1}{\ell}\right)\left(\frac{2t_2}{\ell}\right)=\left(\frac{t_1t_2}{\ell}\right)=\left(\frac{s}{\ell}\right)=\left(\frac{\Delta}{\ell}\right)=:\chi_\Delta(\ell),
\end{align}
is the primitive quadratic Dirichlet character with conductor $|\Delta|$. Since $s>1$, it follows that $\Delta\neq 1$ and hence $\chi_\Delta$ is non-trivial, i.e.\ $\chi_\Delta(\ell_\star)=-1$ for some $\ell_\star$ in $(\mathbb{Z}/|\Delta|)^\times$. By the Chinese remainder theorem, the natural map $\mathbb{Z}_{C_V}^\times \to \mathbb{Z}_{|\Delta|}^\times$ is surjective, so in fact we can take $\ell_\star$ to be coprime to $C_V$. Thus, there is some $\ell_\star$ such that $\chi_\Delta(\ell_\star)=-1$, and hence 
\begin{align}
    \left(\frac{2t_1}{\ell_\star}\right)=-\left(\frac{2t_2}{\ell_\star}\right).
\end{align}
For this value of $\ell_\star$, one has that
\begin{align}
    \left|\sum_t^\star \left(\frac{2t}{\ell_\star}\right)w_t\right|<\sum_t^\star w_t=S_{00}
\end{align}
which contradicts the bound obtained earlier in \eqref{eqn:S00lowerbound}. Hence, it must be the case that precisely a single $t_\star$ has $w_{t_\star}>0$ (and hence $w_{t_\star}=S_{00}$), and therefore we learn that 
\begin{align}
    P_V(\tau)=\sum_{a\in \Rep(V)_{h\in\mathbb{Z}}} d_a\mathrm{ch}_{V_a}(\tau) = \frac{\theta_{0,1,t_\star}(\tau)}{\eta(\tau)}.
\end{align}
\end{proof}

After this long chain of constraints coming from modularity of the characters, we can finally provide the proof of Proposition \ref{prop:c=1Tannakian}.

\begin{proof}[Proof of Proposition \ref{prop:c=1Tannakian}]

Suppose that $V$ is a nice $c=1$ VOA. A straightforward calculation shows that, for the form of $P_V(\tau)$ guaranteed by Proposition \ref{prop:PVfinal}, 
\begin{align}
    \Pr\left[P_V(-1/\tau)\right] =  \frac{1}{\eta(\tau)\sqrt{2t_\star}}\theta_{0,1,t_\star}(\tau) = \frac{1}{\sqrt{2t_\star}}P_V(\tau).
\end{align}
By Theorem \ref{thm:TannakianVacChar}, this means that $V$ is Tannakian. 

Furthermore, the maximal conformal extension $V_{\Rep(G)}\supset V$ has vacuum character given by $P_V(\tau)$. Since $P_V(\tau)$ has a non-vanishing $q^{23/24}$ term, this implies that  the rank $l$ of the reductive Lie algebra defined
by the weight-1 space $(V_{\Rep(G)})_1$ is non-zero. By Theorem 2 of \cite{Dong:2002fs}, $l\leq c$ for a pseudo-unitary VOA, and since $c=1$, we must have that $l=c=1$. By Theorem 3 of \cite{Dong:2002fs}, this means that $V_{\Rep(G)}$ is a VOA associated to a rank-1 even integral lattice. In fact, $V_{\Rep(G)}\cong V_{2t_\star}$ because $V_{2t_\star}$ is the only rank-1 lattice VOA with vacuum character given by $\theta_{0,1,t_\star}(\tau)/\eta(\tau)$.
\end{proof}

\section{Classification of nice \texorpdfstring{$c=1$}{c=1} vertex operator superalgebras}\label{sec:VOSAclassification}

In the previous section, we classified nice $c=1$ vertex operator algebras, Theorem \ref{thm:c=1classification}. In this short section, we show that the classification of nice $c=1$ vertex operator \emph{super}algebras (VOSAs, i.e.\ fermionic chiral algebras) can be recovered as a corollary.\footnote{We emphasize that the prefix ``super'' in vertex operator superalgebras does not refer to supersymmetry, but rather to the fact that the theory is fermionic.}

\subsection{Assumptions and statement}

Let us say more precisely what we mean by a ``nice'' vertex operator superalgebra. We closely follow the conventions and discussion of \cite{Hohn:2023auw}.

We refer to Definition 2.1 of \cite{Hohn:2023auw} for the formal definition of a vertex operator superalgebra, which physically describes the chiral algebra of a fermionic CFT in the NS sector. One important aspect of the definition we emphasize is that it imposes spin-statistics. That is, a vertex operator superalgebra is $\frac12\mathbb{Z}$-graded by conformal dimension,
\begin{align}
    V=\bigoplus_{h\in\frac12\mathbb{Z}}V_h,
\end{align}
such that the bosonic states (i.e.\ states even under $(-1)^F$) have integer conformal dimension, while the fermionic states (odd under $(-1)^F$) have half-integer conformal dimension, 
\begin{align}
    V_{\bar 0}=\bigoplus_{h\in\mathbb{Z}}V_h, \ \ \ \ V_{\bar 1}=\bigoplus_{h\in \frac12 + \mathbb{Z}}V_h,
\end{align}
where $V_{\bar 0} := V^{(-1)^F}$ consists of the states even under $(-1)^F$. In the physics literature, spin-statistics is a consequence of unitarity, but we emphasize that there are interesting classes of (non-unitary) VOSAs for which spin and statistics are decoupled, such as those coming from 4d $\mathcal{N}=2$ superconformal field theories \cite{Beem:2013sza}. We will not have anything to say about such vertex operator superalgebras in this paper.

Just as in the bosonic case, we will impose a number of niceness conditions on our VOSAs. The notions of simple, $C_2$-cofinite, CFT-type, and self-contragredient are defined mutatis mutandis as for VOAs. 

A priori, there are two independent semisimplicity axioms: rationality and $(-1)^F$-rationality. The former imposes that all admissible modules (i.e.\ NS-sector modules) decompose into a direct sum of simple admissible modules. The latter requires that all admissible $(-1)^F$-twisted modules (i.e.\ R-sector modules) decompose into a direct sum of simple admissible $(-1)^F$-twisted modules. It turns out that we will only need to impose rationality in what follows, and not $(-1)^F$-rationality. When the rest of the adjectives of the previous paragraph are imposed, a VOSA which is rational is automatically $(-1)^F$-rational as well.

There are also two independent positivity axioms for a VOSA. We can say that a VOSA is pseudo-unitary if all of its simple modules have strictly positive conformal dimension, except for $V$ itself, which has $h=0$. We say that a VOSA is $(-1)^F$-pseudo-unitary if in addition, all simple $(-1)^F$-twisted modules have strictly positive conformal dimension. At present, there is no theorem which relates the two, so we will explicitly postulate both. 

\begin{definition}\label{def:niceVOSA}
    We say that a VOSA is nice if it is simple, rational, $C_2$-cofinite, self-contragredient, CFT-type, pseudo-unitary, and $(-1)^F$-pseudo-unitary.
\end{definition}
With these preliminaries out of the way, we can state the main theorem of this section.

\begin{theorem}[Classification of nice VOSAs with $c=1$]\label{thm:c=1VOSAs} If $V$ is a nice $c=1$ vertex operator superalgebra with $V_{\bar 1}\neq 0$, then $V$ is isomorphic to an odd lattice vertex operator superalgebra or the charge conjugation orbifold thereof, 
\begin{align}\label{eqn:c=1VOSAs}
    V_{m} \ \ (m\geq 1, \ m\text{ odd}), \ \ \ \ \ \  V_{m}^+ \ \ (m\geq 1, \ m\text{ odd}).
\end{align}
Conversely, every VOSA in this list is nice with $c=1$ and $V_{\bar 1}\neq 0$. Moreover, the VOSAs in this list are pairwise distinct.
\end{theorem}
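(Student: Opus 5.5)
Given a nice $c=1$ VOSA $V=V_{\bar 0}\oplus V_{\bar 1}$ with $V_{\bar 1}\neq 0$, I will pass to its even part and use the bosonic classification. The first step is to show that $V_{\bar 0}$ is a nice $c=1$ VOA and that $V_{\bar 1}$ is a fermionic module for it. This means $V_{\bar 1}$ is simple, has $h(V_{\bar 1})\in\frac12+\mathbb{Z}_{\geq 0}$, and satisfies $V_{\bar 1}\boxtimes V_{\bar 1}\cong V_{\bar 0}$.

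Strong rationality of $V_{\bar 0}=V^{\mathbb{Z}_2}$ follows from the orbifold theorem for solvable groups \cite{Carnahan:2016guf}, adapted to superalgebras. I will use the standard reduction in which rationality plus $C_2$-cofiniteness of $V$ give the same properties for $V^{(-1)^F}$. Simplicity of $V_{\bar 1}$ as a $V_{\bar 0}$-module follows from \cite{dong1996compact}.

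For pseudo-unitarity, note that every simple $V_{\bar 0}$-module occurs in the restriction of either an NS-sector module of $V$ or an R-sector ($(-1)^F$-twisted) module. Two cases then give positivity of all non-vacuum $V_{\bar 0}$-modules.
\begin{itemize}
\item Restrictions of $V$ itself produce only $V_{\bar 0}$ (with $h=0$) and $V_{\bar 1}$. Since $V$ has CFT-type, $(V_{\bar 1})_0=0$, so $h(V_{\bar 1})\geq \frac12>0$.
\item Every other $V_{\bar 0}$-module sits inside a non-vacuum NS module or an R module. These have strictly positive conformal dimension by the pseudo-unitarity and $(-1)^F$-pseudo-unitarity assumptions in Definition \ref{def:niceVOSA}.
\end{itemize}
I expect this to be the most delicate step. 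What I would try first is the Frobenius-reciprocity argument used in the proof of Proposition \ref{prop:Tannakianext}, applied to the algebra $A=V_{\bar 0}\oplus V_{\bar 1}$ in $\Rep(V_{\bar 0})$. The relevant correspondence is Proposition \ref{prop:chiralfermionization} (fermionization), together with its super analogue of Proposition \ref{prop:equivAmod}.

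Next, I apply Theorem \ref{thm:c=1classification}: $V_{\bar 0}$ is one of $V_{2m}$, $V_{2m}^+$, $V_T$, $V_O$, $V_I$. For each one I will use the modular data in Appendix \ref{app:bestiary} to list its fermionic modules: simple currents with $\theta=-1$, $h\in\frac12+\mathbb{Z}$, and order 2 under fusion. Two superalgebras built on the same $V_{\bar 0}$ are isomorphic if their fermionic modules are related by an automorphism of $V_{\bar 0}$, so I will enumerate up to that action.

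The expected outcome of the case check:
\begin{itemize}
\item For $V_{2m}$, a fermionic module exists only when $m=2m'$ with $m'$ odd, and it is the module $V_{2m,m}$ of weight $m/4=m'/2$. The extension gives the odd lattice $\sqrt{m'}\mathbb{Z}$, so $V\cong V_{m'}$.
\item For $V_{2m}^+$, the fermionic modules come from the $\mathbb{Z}_2^{\mathrm{C}}$-twisted and $V_{2m,m}^\pm$ sectors. These produce $V_{m'}^+$, and also $V_{m'}$ again through the identifications $V_{8k}\cong$ related orbifolds.
\item For $V_T$, $V_O$, $V_I$, I expect that no simple current has half-integral weight, using the data in Appendix \ref{app:exceptional}.
\end{itemize}
In the $V_{2m}^+$ case I need to take care with small $m$, such as $V_2^+\cong V_8$ and $V_1^+$. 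Here I will compare vacuum characters to resolve the identifications, keeping in mind that $V_1^+$ is $\mathrm{Fer}(1)^{\otimes 2}$-like.

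For the converse, $V_m$ and $V_m^+$ with $m$ odd are rational and $C_2$-cofinite by lattice and orbifold theory. Their NS and R modules can be listed explicitly, which gives both positivity conditions. Pairwise distinctness follows from comparing vacuum characters, e.g.\ $\dim V_{1/2}$ and $\dim V_1$ together with the weight of the lowest odd state $m/2$.
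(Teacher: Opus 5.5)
Your strategy is the paper's: pass to $V_{\bar 0}$ via Proposition \ref{prop:chiralfermionization} (your partial re-derivation of its forward half is unnecessary, since the paper simply cites it), apply Theorem \ref{thm:c=1classification}, list fermionic modules case by case, and identify fermionizations related by $\mathrm{Aut}(V_{\bar 0})$ using Proposition \ref{prop:equivferm}. Your lattice and exceptional cases agree with the paper. Your direct verification of the converse and of distinctness is also fine, though the paper gets these more cheaply: the converse from the converse half of Proposition \ref{prop:chiralfermionization}, and distinctness from the distinctness of the even parts. The problem is the charge-conjugation case, which is the only case with real content. There your sketch is wrong in its details and omits its one nontrivial input.

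\textbf{Wrong candidates.} The twisted-sector modules $V_{2k}^{T_i,\pm}$ are never fermionic. They have $h\in\{\sfrac1{16},\sfrac9{16}\}$ and quantum dimension $\sqrt{k}$ (Tables \ref{tab:chargeconjugationmodules} and \ref{tab:CCSmatrix}). The $V_{2k,r}$ are not fermionic either, since they have dimension $2$. The only candidates are the simple currents $V_{2k,k}^\pm$, which have weight $k/4$ and are self-dual for $k$ even. So a fermionic module exists exactly when $k=2m$ with $m$ odd. No $V_m$ arises in this case: $V_2^+\cong V_8$ contributes nothing, since $h(V_{8,4})=1$. Also, $V_1^+\cong L(\sfrac12)\otimes F$, not $F^{\otimes 2}$, which is $V_1$.

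\textbf{Missing isomorphism.} For each odd $m$ you are left with two candidates $V_{4m}^+\oplus V_{4m,2m}^{\pm}$. The theorem asserts that only $V_m^+$ occurs, so you must show these two are isomorphic. ``Enumerating up to the action'' is just Proposition \ref{prop:equivferm}. The fact you actually need is that the nontrivial involution of $V_{4m}^+$ exchanges $V_{4m,2m}^+$ and $V_{4m,2m}^-$. This involution is induced by the order-$2$ rotation $g$ in the identity component of $\mathrm{Aut}(V_{4m})$, which normalizes $\mathbb{Z}_2^{\mathrm{C}}$. The paper proves the exchange in Proposition \ref{prop:V2m+perm} by checking that the intertwiner $V_{4m,2m}^g\to V_{4m,2m}$ anticommutes with $C$. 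An equivalent argument: the two candidates are the fixed points in $V_m$ of $C$ and of $C(-1)^F$, and these two automorphisms are conjugate by a quarter-period rotation. Only once this is established does the observation that $V_m^+$ is one of the candidates identify the unique fermionization of $V_{4m}^+$ with $V_m^+$.
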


We note in passing the following two isomorphisms,
\begin{align}
    V_1\cong F^{\otimes 2}, \ \ \ \ V_1^+\cong L(\sfrac12)\otimes F,
\end{align}
where $F\cong L(\sfrac12)\oplus L(\sfrac12,\sfrac12)$ is the free fermion VOSA of central charge $c=\sfrac12$, and $L(\sfrac12)$ is the (simple quotient of the) $c=\sfrac12$ Virasoro VOA.

We also remark that, even though there are no exceptional nice $c=1$ VOSAs (i.e.\ fermionic analogs of the VOAs $V_T$, $V_O$, and $V_I$), it does not follow that there are no exceptional fermionic full CFTs. Indeed, the chiral algebra of a fermionic CFT may happen to have no fermionic local operators at all, even though the full theory does. So in particular, there could possibly be a fermionic CFT with chiral algebra given by e.g.\ the VOA $V_O$. We will see that a theory of this kind is in fact realized in our companion paper \cite{grCFTs}.

\subsection{Proof}

The main idea of our proof is to use the fact that the specification of a nice VOSA $V$ is essentially equivalent to the specification of a nice VOA $V_{\bar 0}$ --- to serve as the bosonic states in $V$ --- equipped with the choice of a $V_{\bar 0}$-module $V_{\bar 1}$ with suitable properties so that it may serve as the space of fermionic states in $V$. This approach was used in slightly different ways in \cite{Rayhaun:2023pgc} and in \cite{Hohn:2023auw,BoyleSmith:2023xkd} to classify \emph{self-dual} nice VOSAs (i.e.\ chiral fermionic CFTs). In our theorem, we drop the assumption of self-duality (i.e.\ we allow our VOSAs to live on the boundary of potentially \emph{non-invertible} fermionic TQFTs in the bulk). However, in order for the problem to remain tractable, we restrict to central charge $c=1$. (By comparison, \cite{Hohn:2023auw} was able to classify nice self-dual VOSAs up through $c\leq 24$, assuming the uniqueness of the moonshine module.) 

More physically, our approach rests on the observation that any nice fermionic chiral algebra $V$ can be obtained from a nice bosonic chiral algebra $V_{\bar 0}$ by applying a ``chiral fermionization'' procedure. See the introduction for further details on this perspective.

Let us give a mathematically precise statement of chiral fermionization (see \cite{Hohn:2023auw} and references therein). We call a module $M$ of a nice VOA $V_{\bar 0}$ \emph{fermionic} if it  is a simple current with fusion rule $M\boxtimes M \cong V_{\bar 0}$ and has conformal dimension $h(M)\in\frac12+\mathbb{Z}_{\geq 0}$. 

\begin{proposition}\label{prop:chiralfermionization}
    If $V=V_{\bar 0}\oplus V_{\bar 1}$ is a nice VOSA, then $V_{\bar 0}$ is a nice VOA and $V_{\bar 1}$ is a fermionic $V_{\bar 0}$-module. Conversely, if $V_{\bar 0}$ is a nice VOA and $V_{\bar 1}$ is a fermionic $V_{\bar 0}$-module, then $V=V_{\bar 0}\oplus V_{\bar 1}$ uniquely admits the structure of a nice VOSA extension of $V_{\bar 0}$.
\end{proposition}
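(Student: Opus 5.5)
The plan is to prove both directions by reducing everything to the simple-current extension theory of Proposition \ref{prop:ExtAlg}, adapted to the super setting, together with the standard fact that a simple current $J$ of order two with $\theta_J=-1$ generates a braided subcategory $\{V_{\bar 0},J\}\simeq\mathrm{sVec}$, so that $V_{\bar 0}\oplus J$ carries a superalgebra structure.

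\textbf{Forward direction.} First I would show that $V_{\bar 0}=V^{(-1)^F}$ is a nice VOA. Simplicity of $V$ forces $V_{\bar 0}$ to be simple and $V_{\bar 1}$ to be a simple $V_{\bar 0}$-module; this is the $\mathbb{Z}_2$ case of the Dong–Mason quantum Galois theory \cite{dong1996compact}. CFT-type and self-contragredience pass directly to the even part. For $C_2$-cofiniteness and rationality I would invoke the orbifold theorem for solvable groups \cite{Carnahan:2016guf}, in its superalgebra form for the cyclic group $\langle(-1)^F\rangle$. Alternatively, I would use that $V_{\bar 0}$-modules all arise inside untwisted or $(-1)^F$-twisted $V$-modules, and that $V$ is automatically $(-1)^F$-rational, as noted above.

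Pseudo-unitarity of $V_{\bar 0}$ then follows from Frobenius reciprocity. Every simple $V_{\bar 0}$-module appears in the restriction of a simple NS or R module of $V$, and these have positive conformal dimension by the two positivity axioms. The only exception is $V$ itself, whose summands are $V_{\bar 0}$ (with $h=0$) and $V_{\bar 1}$ (with $h\ge\frac12$).

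Next I would check that $V_{\bar 1}$ is fermionic. Spin-statistics gives $h(V_{\bar 1})\in\frac12+\mathbb{Z}$, and positivity improves this to $h\in\frac12+\mathbb{Z}_{\ge0}$. By the same quantum Galois theory, the $\mathbb{Z}_2$-grading of $V$ makes $V_{\bar 1}$ a simple current with $V_{\bar 1}\boxtimes V_{\bar 1}\cong V_{\bar 0}$; the product map $V_{\bar 1}\times V_{\bar 1}\to V_{\bar 0}$ is nonzero by simplicity.

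\textbf{Converse.} Given a fermionic module $J=V_{\bar 1}$, the algebra $A=V_{\bar 0}\oplus J$ is an order-two simple current with twist $\theta_J=e^{2\pi\ri h_J}=-1$. The relevant $H^3(\mathbb{Z}_2,\mathbb{C}^\times)$ obstruction is determined by the twist, so $A$ is a commutative superalgebra in $\mathrm{sVec}\subset\Rep(V_{\bar 0})$. By the super version of the Huang–Kirillov–Lepowsky correspondence (Creutzig–Kanade–Linshaw–Ridout, as cited in \cite{Hohn:2023auw}), $A$ gives a VOSA extension, which is unique up to isomorphism because $H^2(\mathbb{Z}_2,\mathbb{C}^\times)=0$.

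It remains to verify niceness of the extension. Simplicity follows because $J\not\cong V_{\bar 0}$. CFT-type holds because $h_J>0$, and self-contragredience holds because $J$ is self-dual. $C_2$-cofiniteness is inherited from $V_{\bar 0}$. Rationality follows because $\Rep(V)$ is the category of (local) $A$-modules in a semisimple category. For both positivity axioms I would use that every simple NS or R module of $V$ restricts to a sum of simple $V_{\bar 0}$-modules, which have $h>0$ unless they are $V_{\bar 0}$ itself. A simple $V$-module containing $V_{\bar 0}$ is forced by Frobenius reciprocity to be $V$, and in the R sector the vacuum cannot appear at all, since $\mathrm{Ind}(V_{\bar 0})=A$ is the NS vacuum.

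\textbf{Main obstacle.} I expect the delicate step to be the forward direction's rationality and $C_2$-cofiniteness of $V_{\bar 0}$ in the super setting, where one must cite the correct orbifold results. I would lean on the treatment in \cite{Hohn:2023auw} and references therein, which establishes exactly this correspondence.
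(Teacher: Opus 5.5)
Your outline follows the same route as the paper, which gives no independent argument and defers to \cite{Hohn:2023auw} and the references therein: quantum Galois theory for the forward direction, and the simple-current super-extension theory of \cite{Creutzig:2024qvb} for existence and uniqueness. Your positivity, simple-current and $\mathrm{sVec}$ checks are the right ones, but the $\mathrm{sVec}$ step uses $c_{V_{\bar 1},V_{\bar 1}}=\theta_{V_{\bar 1}}d_{V_{\bar 1}}=-1$, so you need $d_{V_{\bar 1}}=1$ from pseudo-unitarity and not just $\theta_{V_{\bar 1}}=-1$; also, like the paper, you do not actually establish strong rationality of $V_{\bar 0}$ in the forward direction, since \cite{Carnahan:2016guf} is stated for VOAs rather than superalgebras and your fallback via NS and R modules already presupposes the tensor theory of $\Rep(V_{\bar 0})$, so that step has to be imported from the literature.
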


We say that $V$ is obtained from the pair $(V_{\bar 0},V_{\bar 1})$ via chiral fermionization. Proposition \ref{prop:chiralfermionization} says that every nice VOSA $V$ can be obtained from a pair $(V_{\bar 0},V_{\bar 1})$ consisting of a nice VOA $V_{\bar 0}$ and a choice $V_{\bar 1}$ of fermionic $V_{\bar 0}$-module. In particular, we can obtain a classification of nice $c=1$ VOSAs simply by enumerating all fermionic modules of the VOAs appearing in Theorem \ref{thm:c=1classification} and applying chiral fermionization.

Before we do this, there is a subtlety we must take into account. For a given nice VOA $V_{\bar 0}$, there may be non-isomorphic fermionic modules $V_{\bar 1}$ and $V_{\bar 1}'$ which nevertheless lead to isomorphic VOSAs, i.e.\ $V_{\bar 0}\oplus V_{\bar 1}\cong V_{\bar 0}\oplus V_{\bar 1}'$. 

The following proposition provides a criterion for determining when this happens. Recall that, for $g$ an automorphism of a VOA $W$ and $M$ a $W$-module, one can obtain another $W$-module $M^g$ which has the same underlying vector space as $M$, but with the action of $W$ modified as 
\begin{align}
    Y_{M^g}(v,z) = Y_{M}(g^{-1}v,z), 
\end{align}
where $v\in W$ and $Y_M(v,z)=\sum_n v_nz^{-n-1}$ with $v_n\in\mathrm{End}(M)$. (See e.g.\ \cite{dong1998twisted} for further background.)

\begin{proposition}\label{prop:equivferm}
    Suppose $V_{\bar 0}$ is a nice VOA and $V_{\bar 1}$, $V_{\bar 1}'$ are fermionic modules. Then the VOSAs $V\cong V_{\bar 0} \oplus V_{\bar 1}$ and $V'\cong V_{\bar 0}\oplus V_{\bar 1}'$ are isomorphic if and only if there is an automorphism $g\in \mathrm{Aut}(V_{\bar 0})$ such that $V_{\bar 1}^g\cong V_{\bar 1}'$ as $V_{\bar 0}$-modules.
\end{proposition}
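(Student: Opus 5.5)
The plan is to prove the two directions separately, using the uniqueness clause of Proposition \ref{prop:chiralfermionization}.

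For the \emph{if} direction, suppose $g\in\mathrm{Aut}(V_{\bar 0})$ and $\phi:V_{\bar 1}^g\to V_{\bar 1}'$ is an isomorphism of $V_{\bar 0}$-modules. This means $\phi(Y_{V_{\bar 1}}(g^{-1}v,z)w)=Y_{V_{\bar 1}'}(v,z)\phi(w)$ for $v\in V_{\bar 0}$ and $w\in V_{\bar 1}$.

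Define $\Phi:=g\oplus\phi:V\to V'$. I claim $\Phi$ is a VOSA isomorphism. The point is to transport the VOSA structure of $V$ along $\Phi$. This yields a VOSA structure on $V_{\bar 0}\oplus V_{\bar 1}'$ whose even part is $V_{\bar 0}$, acting on $V_{\bar 1}'$ by the given module action. That is because $\Phi$ intertwines the even-on-odd vertex operators, which is exactly the relation satisfied by $\phi$.

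One should also check that the conformal vector is preserved. This holds because $g$ fixes $\omega$, being a VOA automorphism. The vacuum is likewise preserved.

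The uniqueness clause of Proposition \ref{prop:chiralfermionization} then says this transported structure coincides with the given structure on $V'$. Hence $V\cong V'$.

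One subtlety: uniqueness there is presumably up to isomorphism restricting to the identity on $V_{\bar 0}$, possibly rescaling the odd part. Either version suffices.

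For the \emph{only if} direction, let $\Psi:V\to V'$ be a VOSA isomorphism. It preserves $L_0$ and the vacuum. Since $V_{\bar 0}$ and $V'_{\bar 0}$ are the integer-weight subspaces, $\Psi$ restricts to $g:=\Psi|_{V_{\bar 0}}\in\mathrm{Aut}(V_{\bar 0})$. It also maps $V_{\bar 1}$ onto $V_{\bar 1}'$, since these are the half-integer-weight subspaces. This is where spin-statistics is used.

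Set $\phi:=\Psi|_{V_{\bar 1}}$. The homomorphism property $\Psi(Y(v,z)w)=Y'(\Psi v,z)\Psi w$, applied with $v\in V_{\bar 0}$ and $w\in V_{\bar 1}$, gives $\phi(Y_{V_{\bar 1}}(v,z)w)=Y_{V_{\bar 1}'}(gv,z)\phi(w)$. Replacing $v$ by $g^{-1}v$ shows that $\phi$ is a $V_{\bar 0}$-module isomorphism $V_{\bar 1}^{g}\to V_{\bar 1}'$. This matches the convention $Y_{M^g}(v,z)=Y_M(g^{-1}v,z)$.

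The main obstacle is the \emph{if} direction's appeal to uniqueness. One must be precise that Proposition \ref{prop:chiralfermionization} gives uniqueness of the odd-odd vertex operator $Y|_{V_{\bar 1}\otimes V_{\bar 1}}$ up to a scalar, which can be absorbed by rescaling $\phi$.

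If that formulation is not available, I would argue directly. The odd-odd part is an intertwining operator of type $\binom{V_{\bar 0}}{V_{\bar 1}'\,V_{\bar 1}'}$. Because $V_{\bar 1}'$ is a simple current with $V_{\bar 1}'\boxtimes V_{\bar 1}'\cong V_{\bar 0}$ and $V_{\bar 0}$ is simple, the space of such intertwining operators is one-dimensional. Hence the transported operator and the original one differ by a nonzero scalar $\lambda$. Replacing $\phi$ by $\lambda^{-1/2}\phi$ then makes $\Phi$ an honest isomorphism.
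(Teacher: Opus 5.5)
Your proposal is correct and follows essentially the same route as the paper. The only-if direction restricts the isomorphism to the even and odd parts exactly as the paper does, and the if direction transports structure along $g\oplus\phi$ and invokes the uniqueness clause of Proposition~\ref{prop:chiralfermionization}; the paper pulls back the structure of $V'$ onto $V_{\bar 0}\oplus V_{\bar 1}$ rather than pushing forward that of $V$, which is immaterial, and your backup argument (one-dimensionality of the intertwining space of type $\binom{V_{\bar 0}}{V_{\bar 1}'\,V_{\bar 1}'}$, absorbing the scalar by rescaling $\phi$) correctly makes that uniqueness step explicit.
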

\begin{proof}
     Suppose $\Phi:V\to V'$ is an isomorphism and let $g:=\Phi\vert_{V_{\bar 0}}$. Since isomorphisms preserve the $L_0$-grading, $g(V_{\bar 0})=V_{\bar 0}$ and hence $g$ defines an automorphism of $V_{\bar 0}$. Furthermore, $\psi:=\Phi\vert_{V_{\bar 1}}:V_{\bar 1}\to V_{\bar 1}'$ is a linear isomorphism satisfying 
    \begin{align}
        \psi Y_{V_{\bar 1}}(a,z)=Y_{V_{\bar 1}'}(ga,z)\psi,
    \end{align}
    and hence defines an isomorphism $\psi:V_{\bar 1}^g\to V_{\bar 1}'$ of $V_{\bar 0}$-modules.

    In the reverse direction, suppose $g$ is an automorphism of $V_{\bar 0}$ for which $V_{\bar 1}^g\cong V_{\bar 1}'$, and fix a choice of isomorphism $\psi:V_{\bar 1}^g\to V_{\bar 1}'$ of $V_{\bar 0}$-modules. Defining the linear map $\Phi:=g\oplus \psi: V_{\bar 0}\oplus V_{\bar 1}\to V_{\bar 0}\oplus V_{\bar 1}'$ allows us to pull back the VOSA structure on $V'=V_{\bar 0}\oplus V_{\bar 1}'$ to a VOSA structure $\widetilde{Y}_V(u,z)$ on $V=V_{\bar 0}\oplus V_{\bar 1}$ (a priori different from the given VOSA structure on $V$) as 
    \begin{align}
        \widetilde{Y}_V(u,z)v=\Phi^{-1}Y_{V'}(\Phi u,z)\Phi v.
    \end{align}
    So $V'$ is isomorphic to some VOSA structure on $V$, but by Proposition \ref{prop:chiralfermionization}, the structure of $V_{\bar 0}\oplus V_{\bar 1}$ as a VOSA extension of $V_{\bar 0}$ is unique up to isomorphism. Hence, $V'\cong V$.
\end{proof}

With Proposition \ref{prop:equivferm} in hand, we can move on to the proof of Theorem \ref{thm:c=1VOSAs}.

\begin{proof}[Proof of Theorem \ref{thm:c=1VOSAs}] 

Let us treat the forward direction. If $V$ is a nice fermionic VOSA with $c=1$, then by Proposition \ref{prop:chiralfermionization}, its even subalgebra $V_{\bar 0}$, being a nice VOA, must be one of the VOAs appearing in Theorem \ref{thm:c=1classification}. Then $V$ can be reconstructed as $V_{\bar 0}\oplus V_{\bar 1}$ for some fermionic $V_{\bar 0}$-module. So let us treat the different possibilities for $V_{\bar 0}$ case-by-case.

\paragraph{Case: $V_{\bar 0}\cong V_T$, $V_O,$ or $V_I$.} We can immediately conclude that there are no nice $c=1$ VOSAs $V$ with $V_{\bar 0}$ given by $V_T$, $V_O$, or $V_I$. Indeed, by inspecting the simple modules of these VOAs recorded in Appendix \ref{app:bestiary}, one sees that there are none with conformal dimension $h\in \frac12 + \mathbb{Z}_{\geq 0}$, and hence they cannot be chirally fermionized. This conclusion does not rely on $V_I$ being nice: if it happens not to be, then it could not arise as the even subalgebra of a nice $c=1$ VOSA anyways.

\paragraph{Case: $V_{\bar 0}\cong V_{2k}$.} Next, suppose that $V_{\bar 0}$ is isomorphic to a rank-1 lattice VOA. The representation category of $V_{2k}$ is pointed with fusion rules described by the group $\mathbb{Z}_{2k}$. Thus, the only non-trivial simple module $M$ with fusion rule $M\boxtimes M\cong V_{2k}$ is $M=V_{2k,k}$. This module has conformal dimension $k/4$, which resides in $\frac12+\mathbb{Z}_{\geq 0}$ if and only if $k\equiv 2~\mathrm{mod}~4$, i.e.\ if and only if $k=2m$ with $m$ an odd integer. Thus, the VOSAs with even part given by a lattice VOA are precisely of the form
\begin{align}\label{eqn:latticeVOSA}
    V_{4m}\oplus V_{4m,2m}\cong V_{m}, \ \ \ \ (m\text{ odd}).
\end{align}
We have identified $V_{4m}\oplus V_{4m,2m}$ with the lattice VOA $V_m$ using the fact that the lattice cosets underlying the left-hand side of \eqref{eqn:latticeVOSA} combine as
\begin{align}
    \sqrt{4m}\mathbb{Z}\oplus \left(\frac{2m}{\sqrt{4m}}+\sqrt{4m}\mathbb{Z}\right)= 2\sqrt{m}\mathbb{Z}\oplus \left(\sqrt{m}+2\sqrt{m}\mathbb{Z} \right)=\sqrt{m}\mathbb{Z}. 
\end{align}
\paragraph{Case: $V_{\bar 0}\cong V_{2k}^+$.} Finally, suppose that $V_{\bar 0}$ is isomorphic to $V_{2k}^+$ for some $k$. By inspecting the simple modules reported in Table \ref{tab:chargeconjugationmodules}, we see that $V_{2k}^+$ only admits fermionic modules when $k=2m$ for $m$ an odd integer. Moreover, there are precisely 2 fermionic modules, $V_{4m,2m}^+$ and $V_{4m,2m}^-$. Thus, for each odd integer $m$, we have two (possibly isomorphic) VOSAs with even subalgebra $V_{4m}^+$, namely, 
\begin{align}\label{eqn:twoCCVOSAs}
    V_{4m}^+\oplus V_{4m,2m}^+, \ \ \ \ V_{4m}^+\oplus V_{4m,2m}^-.
\end{align}
However, the automorphism group of $V_{4m}^+$ with $m$ odd is always $\mathrm{Aut}(V_{4m}^+)\cong \mathbb{Z}_2$ (see Table \ref{tab:aut}) and, from Proposition \ref{prop:V2m+perm}, we see that the non-trivial involution exchanges $V_{4m,2m}^+$ and $V_{4m,2m}^-$. Thus, by Proposition \ref{prop:equivferm}, these two VOSAs built on $V_{4m}^+$ are isomorphic. 

On the other hand, $V_m^+$ is also a nice VOSA with even subalgebra $V_{4m}^+$ so, from the fact that $V_{4m}^+\oplus V_{4m,2m}^+$ is the only chiral fermionization of $V_{4m}^+$, we learn that we can identify
\begin{align}
    V_{m}^+\cong V_{4m}^+\oplus V_{4m,2m}^+.
\end{align}
This concludes the forward direction.

The niceness of the VOSAs in \eqref{eqn:c=1VOSAs} follows from Proposition \ref{prop:chiralfermionization}, using the fact that $V_{\bar 0}$ is nice and $V_{\bar 1}$ is a fermionic $V_{\bar 0}$-module. The fact that they are pairwise distinct follows from the fact that their even subalgebras are all distinct. 
\end{proof}

\section{Future directions}\label{sec:future}

In this paper, we have given a mathematically rigorous proof of the classification of nice $c=1$ vertex operator algebras and vertex operator superalgebras.  In our companion paper \cite{grCFTs}, we will extend this to a classification of bosonic and fermionic rational $c=1$ conformal field theories with both left- and right-movers. There are a number of directions for future research.

\begin{enumerate}[label=\arabic*)]
    \item It would be interesting to prove a classification theorem at $c=1$ where rationality is dropped. Conjecture \ref{conj:c=1irrational} records one potential target which may be within reach of existing methods.
    \item The next frontier in the classification of nice VOAs, organized by increasing central charge, is $c>1$. Physical intuition suggests that there are ``deserts'' in the space of conformal field theories with compact spectrum. One concrete corollary of this lore is the expectation that there exists an $\epsilon>0$ such that there are \emph{no} nice VOAs with central charge $c$ in the interval $(1,1+\epsilon)$. It would be remarkable to prove this expectation. See \cite{Benjamin:2026lbj} for recent numerical work in this direction.
    \item It would be interesting to put our discussion in Appendix \ref{app:QuOpTopLin} of the quantum operations of the nice $c=1$ VOAs on firmer mathematical footing. 
    \item One expects, on the power of the growing dictionary between (unitary) VOAs and conformal nets \cite{Carpi:2015fga,Carpi:2023onx,henriques2025every}, that a theorem similar to our Theorem \ref{thm:c=1classification} should hold for conformal nets. Xu \cite{xu2005strong} obtained such a theorem by assuming a ``spectrum condition'' which one would like to remove.
    \item As mentioned in the introduction, we will show elsewhere that a similar strategy to the one employed in this paper will yield the classification of nice $c=\sfrac32$ vertex operator superalgebras with $N=1$ supersymmetry (see \cite{Dixon:1988ac,Wendland:2004pp} for expectations coming from physics). It would be interesting to find other classes of VO(S)As which can be classified using Tannakian methods. 
\end{enumerate}

\section*{Acknowledgements}
B.R.~is grateful to Yichul Choi and Ho Tat Lam for an inspiring collaboration on a related project. We also thank Sebastiano Carpi, André Henriques, Adrià Marín-Salvador, Robert McRae, Sven Möller, Nivedita, Sylvain Ribault, Nathan Seiberg, Sahand Seifnashri, Yifan Wang, and Edward Witten for helpful discussions. B.R.~acknowledges support from the Leinweber Foundation, the Sivian Fund, the U.S.\ National Science Foundation (NSF) under grant PHY-2210533, and the Department of Energy (DOE) under grant DE-SC0009988. B.R.~also thanks the Yang Institute for Theoretical Physics and the Simons Center for Geometry and Physics where much of this work was carried out. The research of TG was partially supported by NSERC (Canada).

\appendix

\section{Bestiary of nice \texorpdfstring{$c=1$}{c=1} vertex operator algebras}\label{app:bestiary}

In this appendix, we record some of the basic data of the  ``nice'' $c=1$ VOAs, such as their simple modules, their characters, and their modular data. 

Much of this data is presented for completeness rather than out of necessity, and to collect all the information in one place. We also take this appendix as an opportunity to correct several typos in the literature, most notably in the modular S-matrices for the nice $c=1$ VOAs. Presumably, many of these errors were incurred because the derivations used the characters of the VOA, which are generally not linearly independent and therefore insufficient for pinning down the modular data. The more careful categorical analysis of \cite{Galindo:2024law} resolves these ambiguities for the charge conjugation orbifolds, and our Appendix \ref{app:exceptional} resolves this for the three exceptional VOAs.

We note right away that we work in conventions where $S$ is what the modular representation associated to $V$ assigns to $\left(\begin{smallmatrix} 0 & -1 \\ 1 & 0 \end{smallmatrix}\right)$, rather than to $\left(\begin{smallmatrix} 0 & 1 \\ -1 & 0 \end{smallmatrix}\right)$. In particular, $(ST)^3$ is the charge conjugation matrix, rather than the identity.

\subsection{Lattice vertex operator algebras}\label{subsec:lattice}

We use the notation $V_{2m}$ to denote the VOA associated to the rank-1 even integral lattice $\sqrt{2m}\mathbb{Z}$. For a physicist, the theory $V_{2m}$ can be thought of as the maximal chiral algebra of the 1+1d compact boson of radius $R=\sqrt{2p/q}$ for any pair of coprime integers $p,q$ with $pq=m$. Our conventions are such that $R^2=2$ is the self T-dual radius, so that in particular $V_2\cong \widehat{\mathfrak{su}}(2)_1$. 

The simple modules $V_{2m,r}$ of $V_{2m}$ are labeled by elements $r\in\mathbb{Z}_{2m}\cong L'/L$ of the discriminant group of the lattice $L=\sqrt{2m}\mathbb{Z}$ on which $V_{2m}$ is based,
\begin{align}
\begin{split}
\mathbb{Z}_{2m} &\xrightarrow{\sim} L'/L \\
    r&\mapsto L+\frac{r}{\sqrt{2m}}.
\end{split}
\end{align}
Physically, we think of the simple module $V_{2m,r}$ as being spanned by vertex operators $e^{i \frac{r+2mn}{\sqrt{2m}} \phi(z)}$ with $n\in\mathbb{Z}$, as well as their descendants under the $\mathfrak{u}(1)$ Kac--Moody algebra. Accordingly, the conformal dimension of the module $V_{2m,r}$ is 
\begin{align}
    h(V_{2m,r})=\frac{r^2}{4m}, \ \ \ \ \ \text{ for } r=-m,-m+1,\dots,m-1,
\end{align}
and its character/graded-dimension is given by
\begin{align}\label{eqn:freebosoncharacters}
    \mathrm{ch}_{V_{2m,r}}(\tau) := \mathrm{Tr}_{V_{2m,r}}q^{L_0-\frac{1}{24}}= \frac{1}{\eta(\tau)}\sum_{n\in\mathbb{Z}}q^{\frac{(r+2mn)^2}{4m}}=:\frac{\Theta_{m,r}(\tau)}{\eta(\tau)},
\end{align}
where $\eta(\tau)$ is the Dedekind eta function and $\Theta_{m,r}(\tau)$ is the theta function associated to the lattice coset $\sqrt{2m}\mathbb{Z}+\frac{r}{\sqrt{2m}}$. The modular S-matrix is given by the Weil representation associated to $L=\sqrt{2m}\mathbb{Z}$, up to an overall factor coming from the multiplier system of the Dedekind eta function: 
\begin{align}\label{eqn:latticeSmatrix}
    S_{rs}= \frac{1}{\sqrt{2m}}e^{-\pi \ri rs/m}.
\end{align}
The fusion rules coincide with the group multiplication on $\mathbb{Z}_{2m}$, i.e.\ 
\begin{align}
    V_{2m,r}\boxtimes V_{2m,s} \cong V_{2m,r+s}.
\end{align}
The modular fusion category $\mathrm{Rep}(V_{2m})$ defined by the representation category of $V_{2m}$ is equivalent to the category of anyons in $U(1)_{2m}$ Abelian Chern-Simons theory, which can be thought of as the bulk topological field theory supporting $V_{2m}$ on its boundary.

\subsection{Charge conjugation orbifolds}

Next, we move on to the charge conjugation orbifold chiral algebras $V_{2m}^+$. The VOA $V_{2m}^+$ is defined as the fixed-point subalgebra of $V_{2m}$ with respect to its $\mathbb{Z}_2^{\mathrm{C}}$ charge conjugation symmetry. Physically, the $V_{2m}^+$ arise as chiral algebras of 1+1d CFTs on the orbifold branch. We assemble the data here from various references in the literature, including \cite{Dijkgraaf:1989hb,abe2001fusion,dong1999representations,Galindo:2024law}.

We note the exceptional isomorphism
\begin{align}
    V_2^+\cong V_8,
\end{align}
which allows us to restrict in much of what follows to $m>1$. Physically, this exceptional isomorphism is related to the fact that the circle and orbifold branches of the $c=1$ conformal manifold meet at the Kosterlitz-Thouless point. There is also the isomorphism 
\begin{align}
    V_4^+\cong L(\sfrac12)\otimes L(\sfrac12),
\end{align}
where $L(\sfrac12)$ is the Ising VOA, i.e.\ the chiral algebra of the Ising CFT, known to mathematicians as the simple quotient of the $c=\sfrac12$ Virasoro VOA.

\begin{table}
\begin{center}
    \begin{tabular}{c|c|c}
    $M$ & $h(M)$ & $\mathrm{ch}_M(\tau)$ \\\toprule
    $V_{2m}^+$ & $0$ & $\frac12\mathrm{ch}_{V_{2m}}+\frac12\vartheta[\substack{0\\1/2}]$ \\
    $V_{2m}^-$ & $1$ & $\frac12\mathrm{ch}_{V_{2m}}-\frac12 \vartheta[\substack{0\\1/2}]$ \\
    $V_{2m,m}^\pm$ & $\sfrac{m}{4}$ & $\frac12 \mathrm{ch}_{V_{2m,m}}$ \\
    $V_{2m,r}$ & $\sfrac{r^2}{4m}$ & $\mathrm{ch}_{V_{2m,r}}$ \\
    $V_{2m}^{T_i,+}$ & $\sfrac{1}{16}$ & $\frac12\vartheta[\substack{1/4\\0}]+\frac12\vartheta[\substack{1/4\\1/2}]$ \\
    $V_{2m}^{T_i,-}$ & $\sfrac{9}{16}$ & $\frac12\vartheta[\substack{1/4\\0}]-\frac12\vartheta[\substack{1/4\\1/2}]$
\end{tabular}\caption{The simple modules of $V_{2m}^+$. Here, $r=1,\dots,m-1$ and $i=1,2$.}\label{tab:chargeconjugationmodules}
\end{center}
\end{table}

On general grounds, given a strongly rational vertex operator algebra $V$ and a finite group $G$ of automorphisms, it is known that, if $V^G$ is also strongly rational, then the simple modules of $V^G$ all occur inside of ordinary and $g$-twisted modules of $V$ as $g$ runs over elements of $G$ \cite{dong2017orbifold}. It turns out that if one applies this principle to the charge conjugation orbifolds $V_{2m}^+$, one finds that there are $m+7$ simple modules.

For example,  the simple $V_{2m}$-module $V_{2m,r}$ remains simple when restricted to a $V_{2m}^+$-module for $r\neq 0,m$ (though $V_{2m,r}$ and $V_{2m,-r}$ become isomorphic as $V_{2m}^+$-modules). On the other hand, the $V_{2m}$-modules $V_{2m}$ and $V_{2m,m}$ can be split up into their even/odd parts $V_{2m}^\pm$ and $V_{2m,m}^\pm$ with respect to $\mathbb{Z}_2^{\mathrm{C}}$ to obtain simple $V_{2m}^+$-modules. Similarly, there are two $\mathbb{Z}_2^{\mathrm{C}}$-twisted $V_{2m}$-modules $V_{2m}^{T_i}$, and they can also be split into their even/odd parts $V_{2m}^{T_i,\pm}$ to obtain four more simple $V_{2m}^+$-modules. 

We summarize the representation theory of $V_{2m}^+$ in Table \ref{tab:chargeconjugationmodules}. The conformal dimension of each module is given in the second column, and the characters are given in the third column, where we have defined 
\begin{align}
    \vartheta[\substack{\alpha\\\beta}](\tau) = \frac{1}{\eta(\tau)} \sum_{n\in\mathbb{Z}}q^{(n+\alpha)^2}e^{2\pi \ri n \beta}.
\end{align}
We also provide the modular S-matrix for $V_{2m}^+$ in Table \ref{tab:CCSmatrix}, which can be computed using the results of \cite{Galindo:2024law}. The fusion rules can be read off using Verlinde's formula \cite{Verlinde:1988sn}. (See \cite{abe2001fusion} for explicit expressions for the fusion rules.) The modular fusion category $\Rep(V_{2m}^+)$ agrees with the category of anyons of $O(2)_{2m}$ Chern-Simons theory, which is the bulk TQFT supporting $V_{2m}^+$ on its boundary.

\begin{table}
\begin{center}
    \begin{tabular}{c|cccc}
        & $V_{2m}^\epsilon$ & $V_{2m,m}^\epsilon$ & $V_{2m,r}$ & $V_{2m}^{T_i,\epsilon}$  \\\midrule
        $V_{2m}^{\epsilon'}$ & $1$ & $1$ & $2$ & $\epsilon' \sqrt{m}$ \\
        $V_{2m,m}^{\epsilon'}$ &  & $(-1)^m$ & $2(-1)^r$  & $(-\sqrt{-1})^m\,\sigma_{\epsilon',i}\sqrt{m}$\\
        $V_{2m,s}$ & & & $4\cos(\pi rs/m)$ & $0$ \\
        $V_{2m}^{T_j,\epsilon'}$ & & & & $\epsilon\epsilon'\,
        \frac{1+\sigma_{i,j}(-\sqrt{-1})^m}{2}\sqrt{2m}$
    \end{tabular}
    \caption{The modular S-matrix of $V_{2m}^+$ up to the overall
    normalization $1/\sqrt{8m}$. Here $\epsilon,\epsilon'\in \{\pm\}$ and $\sigma_{a,b}:=2\delta_{a,b}-1$, where $\delta_{a,b}$ is the Kronecker delta. (We use the convention that $\delta_{+,1}=\delta_{-,2}=1$ and $\delta_{+,2}=\delta_{-,1}=0$.)}\label{tab:CCSmatrix}
\end{center}
\end{table}

\subsection{Exceptional models}

We move on to the three exceptional $c=1$ VOAs. See \cite{Ginsparg:1987eb,Dijkgraaf:1989hb,Cappelli:2002wq,dong2013representations,wu2016representations,dong2015fusion,liao2026fusion,xu2026c2cofiniteness} for further details omitted here.

These are obtained from $V_2\cong \widehat{\mathfrak{su}}(2)_1$ by passing to the fixed points of the action of the three exceptional finite subgroups of $SO(3)\subset \textsl{PSL}(2,\mathbb{C})\cong \mathrm{Aut}(V_2)$. We write these as 
\begin{align}
    V_T:=V_2^{A_4}, \ \ \ \ \ \ V_O:= V_2^{S_4}, \ \ \ \ \ \ V_I:= V_2^{A_5},
\end{align}
where $T$, $O$, and $I$ stand for tetrahedral, octahedral, and icosahedral, as $A_4$, $S_4$, and $A_5$ are the orientation-preserving symmetries of the tetrahedron, the octahedron, and the icosahedron, respectively.

Consider e.g.\ the VOA $V_T$. (The comments of this paragraph apply mutatis mutandis to $V_O$ and $V_I$.) We describe in Appendix \ref{app:exceptional} how $\Rep(V_T)$ can be realized as a modular fusion subcategory of the twisted quantum double $Z(\textsl{Vec}_{2.A_4}^{\tilde\omega})$ for a particular choice of $\tilde\omega \in H^3(2.A_4,U(1))$.  Accordingly, we will label simple modules of $V_T$ by pairs $[g,\chi]$, where $g$ is an element of $2.A_4$ (only the conjugacy class matters), and $\chi$ is the character of an irreducible representation of the centralizer of $g$ in $2.A_4$. The deeper meaning of these labels $[g,\chi]$ will be explained in Appendix \ref{app:exceptional}, using some of the technology developed in \cite{Gannon:2024tcl,Gannon:2026ttf}. For the purposes of this section, they can be regarded merely as indexing the different simple modules. 

The basic data of the simple modules of $V_T$, $V_O$, and $V_I$ are given in Table \ref{tab:VTreps}, Table \ref{tab:VOreps}, and Table \ref{tab:VIreps}, respectively. The modular S-matrices are recorded in Table \ref{tab:VTSmatrix}, Table \ref{tab:VOSmatrix}, and Table \ref{tab:VISmatrix}, respectively. We remark that every simple module of $V_O$ and $V_I$ is self-dual. The duals of simple $V_T$-modules are described in Equation \eqref{eqn:VTperm}.

\newpage

    \begin{table}
\begin{center}
    \begin{tabular}{c|c|c}
$M$ & $h(M)$ & $\mathrm{ch}_M(\tau)$ \\\toprule
$[1,\chi_{1,0}]$ & $0$ & $\frac{1}{12} \vartheta [\substack{0\\0}]+\frac{2}{3} \vartheta [\substack{0\\1/3}]+\frac{1}{4} \vartheta [\substack{0\\1/2}]$ \\
$[1,\chi_{1,i}]$  & $4$ & $\frac{1}{12} \vartheta[\substack{0\\0}]-\frac{1}{3} \vartheta [\substack{0\\1/3}]+\frac{1}{4} \vartheta [\substack{0\\1/2}]$ \\
$[1,\chi_{3}]$ & $1$ & $\frac14\vartheta[\substack{0\\0}]-\frac14\vartheta[\substack{0\\1/2}]$ \\
$[\gamma_4,\psi_4^{2j}]$ & $\sfrac{1}{16},\sfrac{9}{16}$ & $\frac12 \vartheta[\substack{1/4\\0}]+\frac{(-1)^j}{2} \vartheta[\substack{1/4\\1/2}]$ \\
$[\gamma_6^{1},\psi_6^{2m}]$ & $\sfrac1{36},\sfrac{49}{36},\sfrac{25}{36}$ & $\frac13\vartheta[\substack{1/6\\0}]+\frac13\omega^{-m}\vartheta[\substack{1/6\\1/3}]+\frac13\omega^{m}\vartheta[\substack{1/6\\2/3}]$\\
$[\gamma_6^{-1},\psi_6^{2m}]$ & $\sfrac1{36},\sfrac{25}{36},\sfrac{49}{36}$ & $\frac13\vartheta[\substack{1/6\\0}]+\frac13\omega^{m}\vartheta[\substack{1/6\\1/3}]+\frac13\omega^{-m}\vartheta[\substack{1/6\\2/3}]$\\
$[\gamma_6^{2},\psi_6^{2m}]$  & $\sfrac19,\sfrac{16}9,\sfrac49$ & $\frac13\vartheta[\substack{1/3\\0}]+\frac13\omega^{-m}\vartheta[\substack{1/3\\1/3}]+\frac13\omega^{m}\vartheta[\substack{1/3\\2/3}]$\\
$[\gamma_6^{-2},\psi_6^{2m}]$  & $\sfrac19,\sfrac49,\sfrac{16}9$ & $\frac13\vartheta[\substack{1/3\\0}]+\frac13\omega^{m}\vartheta[\substack{1/3\\1/3}]+\frac13\omega^{-m}\vartheta[\substack{1/3\\2/3}]$\\
$[z,\chi_{2,0}]$  & $\sfrac14$ & $\frac16\vartheta[\substack{1/2\\0}]-\frac23 \omega^2\vartheta[\substack{1/2\\1/3}]$ \\
$[z,\chi_{2,i}]$  & $\sfrac94$ & $\frac16\vartheta[\substack{1/2\\0}]+\frac13\omega^2\vartheta[\substack{1/2\\1/3}]$
    \end{tabular}\caption{The $21$ simple modules of $V_T$, labeled by the corresponding simple
    objects $[\gamma,\chi]$ of $Z(\mathrm{Vec}_{2.A_4}^{\tilde\omega})$, with $[1,\chi_{1,0}]=V_T$
    itself. Here $i\in \{1,2\}$, $j\in\{0,1\}$, $m\in \{0,1,2\}$, and $\omega=e^{2\pi \ri/3}$.}
    \label{tab:VTreps}
\end{center}
\end{table}

\begin{table}
\begin{center}
    \begin{tabular}{c|ccccc}
         & $[1,\chi_{1,i}]$ & $[1,\chi_{3}]$ & $[z,\chi_{2,i}]$ & $[\gamma_4,\psi_4^{2j}]$ & $[\gamma_6^{b},\psi_6^{2m}]$ \\\midrule
        $[1,\chi_{1,i'}]$ & $1$ & $3$ & $2$ & $6$ & $4\,\zeta^{-6bi'}$ \\
        $[1,\chi_{3}]$ &  & $9$ & $6$ & $-6$ & $0$ \\
        $[z,\chi_{2,i'}]$ &  &  & $-4$ & $0$ & $4\,(-1)^{b+1}\zeta^{6bi'}$ \\
        $[\gamma_4,\psi_4^{2j'}]$ &  &  &  & $6\sqrt{2}\,(-1)^{j+j'}$ & $0$ \\
        $[\gamma_6^{b'},\psi_6^{2m'}]$ &  &  &  &  & $4\,\zeta^{-bb'-6(mb'+m'b)}$
    \end{tabular}
    \caption{The modular S-matrix of $V_T$ up to the overall normalization $\frac{1}{12\sqrt{2}}$.
    Here $\zeta:=e^{2\pi \ri/18}$, and the indices run over $i,i'\in\{0,1,2\}$, $j,j'\in\{0,1\}$,
    $b,b'\in\{\pm1,\pm2\}$, and $m,m'\in\{0,1,2\}$.}
    \label{tab:VTSmatrix}
\end{center}
\end{table}

\newpage

\newpage

\begin{table}
\begin{center}
    \begin{tabular}{>{\small}c|>{\small}c|>{\small}c}
$M$ & $h(M)$ & $\mathrm{ch}_M(\tau)$ \\\toprule
$[1,\chi_{1,0}]$ & $0$ & $\frac{1}{24} \left(\vartheta[\substack{0\\0}] +3 \vartheta [\substack{0\\1/4}]+4 \vartheta [\substack{0\\1/3}]+9 \vartheta [\substack{0\\1/2}]+4 \vartheta [\substack{0\\2/3}]+3 \vartheta[\substack{0\\3/4}]\right)$\\
$[1,\chi_{1,1}]$ & $9$ & $\frac{1}{24} \left(\vartheta[\substack{0\\0}] -3 \vartheta [\substack{0\\1/4}]+12 \vartheta [\substack{0\\1/3}]-3 \vartheta [\substack{0\\1/2}]-4 \vartheta [\substack{0\\2/3}]-3 \vartheta[\substack{0\\3/4}]\right)$ \\
$[1,\chi'_{2}]$ & $4$ & $\frac{1}{12} \left(\vartheta [\substack{0\\0}]-4 \vartheta [\substack{0\\1/3}]+3 \vartheta [\substack{0\\1/2}]\right)$\\
$[1,\chi_{3,0}]$ & $1$ & $\frac{1}{8} \left(\vartheta[\substack{0 \\ 0}]+\vartheta [\substack{0\\1/4}]-3 \vartheta [\substack{0\\1/2}]+\vartheta [\substack{0\\3/4}]\right)$ \\
$[1,\chi_{3,1}]$ & $4$ & $\frac{1}{8} \left(\vartheta[\substack{0 \\ 0}]-\vartheta [\substack{0\\1/4}]+ \vartheta [\substack{0\\1/2}]-\vartheta [\substack{0\\3/4}]\right)$ \\
$[z,\chi_{2,0}]$ & $\sfrac14$ & $\frac{1}{24} \left(4 \left( \zeta ^*\vartheta[\substack{1/2\\2/3}]+\zeta  \vartheta[\substack{1/2\\1/3}]\right)+2 \vartheta [\substack{1/2\\0}]+3 \left((1+\ri) \vartheta[\substack{1/2\\1/4}]+(1-\ri) \vartheta [\substack{1/2\\3/4}]\right)\right)$ \\
$[z,\chi_{2,1}]$ & $\sfrac{25}4$ & $\frac{1}{24} \left(2 \vartheta[\substack{1/2\\0}]+12 \zeta \vartheta[\substack{1/2\\1/3}]-4 \zeta^* \vartheta [\substack{1/2\\2/3}]-3 ((1+\ri) \vartheta[\substack{1/2\\1/4}]+(1-\ri) \vartheta [\substack{1/2\\3/4}])\right)$\\
$[z,\chi_{4}]$ & $\sfrac94$ & $\frac{1}{6} \left(\vartheta [\substack{1/2\\0}]-2 e^{\pi \ri/3} \vartheta [\substack{1/2\\1/3}]\right)$ \\
$[\gamma_4,\psi_4^{2j}]$ & $\sfrac{1}{16},\sfrac9{16}$ & $\frac{1}{2} \left(\vartheta [\substack{1/4\\0}]+(-1)^j\vartheta[\substack{1/4\\1/2}]\right)$ \\
$[\gamma_6^{b},\psi_6^{2k}]$ & $\sfrac{r^2}{36}$ & $\mathrm{ch}_{V_{18,r}}\,,\ \ r\equiv \pm(b+6k) \bmod 18$ with $0\leq r\leq 9$\\
$[\gamma_8^{c},\psi_8^{2l}]$ & $\sfrac{s^2}{64}$ & $\mathrm{ch}_{V_{32,s}}\,,\ \ s\equiv \pm(c+8l) \bmod 32$ with $0\leq s \leq 16$\\
$[\gamma_8^{2},\psi_8^{0}]$ & $\sfrac{1}{16}$ & $\frac{1}{8} \left(\vartheta [\substack{1/4\\1/4}]-\vartheta [\substack{1/4\\0}]-\vartheta [\substack{1/4\\1/2}]+\vartheta [\substack{1/4\\3/4}]+3 \vartheta[\substack{3/4\\0}]+\ri \vartheta[\substack{3/4\\1/4}] -3 \vartheta [\substack{3/4\\1/2}]-\ri \vartheta[\substack{3/4\\3/4}] \right)$\\
$[\gamma_8^{2},\psi_8^{2}]$ & $\sfrac{25}{16}$ & $\frac{1}{8} \left(5\vartheta [\substack{1/4\\0}]-\ri\vartheta [\substack{1/4\\1/4}]-5\vartheta [\substack{1/4\\1/2}]+\ri\vartheta [\substack{1/4\\3/4}]-3 \vartheta[\substack{3/4\\0}]- \vartheta[\substack{3/4\\1/4}] -3 \vartheta [\substack{3/4\\1/2}]- \vartheta[\substack{3/4\\3/4}] \right)$\\
$[\gamma_8^{2},\psi_8^{4}]$ & $\sfrac{49}{16}$ & $\frac{1}{8} \left(5\vartheta [\substack{1/4\\0}]-\vartheta [\substack{1/4\\1/4}]+5\vartheta [\substack{1/4\\1/2}]-\vartheta [\substack{1/4\\3/4}]-3 \vartheta[\substack{3/4\\0}]-\ri \vartheta[\substack{3/4\\1/4}] +3 \vartheta [\substack{3/4\\1/2}]+\ri \vartheta[\substack{3/4\\3/4}] \right)$ \\
$[\gamma_8^{2},\psi_8^{6}]$ & $\sfrac9{16}$ & $\frac{1}{8} \left(\ri\vartheta [\substack{1/4\\1/4}]-\vartheta [\substack{1/4\\0}]+\vartheta [\substack{1/4\\1/2}]-\ri\vartheta [\substack{1/4\\3/4}]+3 \vartheta[\substack{3/4\\0}]+ \vartheta[\substack{3/4\\1/4}] +3 \vartheta [\substack{3/4\\1/2}]+ \vartheta[\substack{3/4\\3/4}] \right)$
    \end{tabular}
    \caption{The $28$ simple modules of $V_O$, labeled by the corresponding simple objects
    $[\gamma,\chi]$ of $Z(\mathrm{Vec}_{2.S_4}^{\tilde\omega})$, with $[1,\chi_{1,0}]=V_O$ itself.
    Here $j\in \{0,1\}$, $b\in\{1,2\}$, $k\in\{0,1,2\}$, $l\in\{0,1,2,3\}$,  and $c\in\{1,3\}$.
    Further, $\zeta=e^{\ri\pi/3}$. }\label{tab:VOreps}
\end{center}
\end{table}

\begin{table}
\begin{center}
    \begin{tabular}{>{\small}c|>{\small}c|>{\small}c}
$M$ & $h(M)$ & $\mathrm{ch}_M(\tau)$ \\\toprule
$[1,\chi_{1}]$ & $0$ & $\frac{1}{60} \vartheta[\substack{0\\0}]+ \frac{1}{3} \vartheta[\substack{0\\1/3}]+ \frac{1}{5} \vartheta[\substack{0\\1/5}]+ \frac{1}{5} \vartheta[\substack{0\\2/5}]+ \frac{1}{4} \vartheta[\substack{0\\1/2}]$ \\
$[z,\chi_{2,s}]$ & $\sfrac14,\sfrac{49}{4}$ & $\frac{1}{30} \vartheta[\substack{1/2\\0}]- \frac{1 + s\sqrt{5}}{10} \bar{\zeta}_5^{\,2} \vartheta[\substack{1/2\\1/5}]- \frac{1 - s\sqrt{5}}{10} \zeta_5 \vartheta[\substack{1/2\\2/5}]- \frac{\bar{\zeta}_3}{3} \vartheta[\substack{1/2\\1/3}]$\\
$[1,\chi_{3,s}]$ & $1,9$ & $\frac{1}{20} \vartheta[\substack{0\\0}]+ \frac{1 + s\sqrt{5}}{10} \vartheta[\substack{0\\1/5}]+ \frac{1 - s\sqrt{5}}{10} \vartheta[\substack{0\\2/5}]- \frac{1}{4} \vartheta[\substack{0\\1/2}]$\\
$[1,\chi_{4,+}]$ & $9$ & $\frac{1}{15} \vartheta[\substack{0\\0}]+ \frac{1}{3} \vartheta[\substack{0\\1/3}]- \frac{1}{5} \vartheta[\substack{0\\1/5}]- \frac{1}{5} \vartheta[\substack{0\\2/5}]$ \\
$[z,\chi_{4,-}]$ & $\sfrac94$ & $\frac{1}{15} \vartheta[\substack{1/2\\0}]- \frac{\bar{\zeta}_5^{\,2}}{5} \vartheta[\substack{1/2\\1/5}]- \frac{\zeta_5}{5} \vartheta[\substack{1/2\\2/5}]+ \frac{\bar{\zeta}_3}{3} \vartheta[\substack{1/2\\1/3}]$ \\
$[1,\chi_5]$ & $4$ & $\frac{1}{12} \vartheta[\substack{0\\0}]- \frac{1}{3} \vartheta[\substack{0\\1/3}]+ \frac{1}{4} \vartheta[\substack{0\\1/2}]$ \\
$[z,\chi_6]$ & $\sfrac{25}{4}$ & $\frac{1}{10} \vartheta[\substack{1/2\\0}]+ \frac{\bar{\zeta}_5^{\,2}}{5} \vartheta[\substack{1/2\\1/5}]+ \frac{\zeta_5}{5} \vartheta[\substack{1/2\\2/5}]$ \\
$[\gamma_4,\psi_4^{2i}]$ & $\sfrac{1}{16},\sfrac{9}{16}$ & $\mathrm{ch}_{V_{8,1+2i}}$\\
$[\gamma_6^b,\psi_6^{2j}]$ & $\sfrac{r^2}{36}$ & $\mathrm{ch}_{V_{18,r}}\,,\ \ r\equiv\pm(b+6j) \bmod 18$ with $0\leq r\leq 9$\\
$[\gamma_{10}^c,\psi_{10}^{2k}]$ & $\sfrac{t^2}{100}$ & $\mathrm{ch}_{V_{50,t}}\,,\ \ t\equiv\pm(c+10k) \bmod 50$ with $0\leq t \leq 25$
    \end{tabular}
    \caption{The $37$ simple modules of $V_I$, labeled by the corresponding simple objects
    $[\gamma,\chi]$ of $Z(\mathrm{Vec}_{2.A_5}^{\tilde\omega})$, with $[1,\chi_1]=V_I$ itself.
    Here $\zeta_5=e^{2\pi \ri/5}$, $\zeta_3=e^{2\pi \ri/3}$, $s\in\{\pm\}$ (identified with $\pm1$
    in formulas), $i\in\{0,1\}$, $b\in\{1,2\}$, $j\in\{0,1,2\}$, $c\in\{1,2,3,4\}$,
    $k\in\{0,1,2,3,4\}$. }\label{tab:VIreps}
\end{center}
\end{table}

\newpage

\begin{landscape}

\begin{table}
\begin{center}
    {\small
    \begin{tabular}{c|cccccccc}
         & $[1,\chi_{1,i}]$ & $[1,\chi'_{2}]$ & $[1,\chi_{3,i}]$ & $[z,\chi_{2,i}]$ & $[z,\chi_{4}]$ & $[\gamma_4,\psi_4^{2j}]$ & $[\gamma_6^{b},\psi_6^{2k}]$ & $[\gamma_8^{c},\psi_8^{2l}]$ \\\midrule
        $[1,\chi_{1,i'}]$ & $1$ & $2$ & $3$ & $2$ & $4$ & $12(-1)^{i'}$ & $8$ & $6(-1)^{i'c}$ \\
        $[1,\chi'_{2}]$ &  & $4$ & $6$ & $4$ & $8$ & $0$ & $-8$ & $6(1+(-1)^{c})$ \\
        $[1,\chi_{3,i'}]$ &  &  & $9$ & $6$ & $12$ & $12(-1)^{i'+1}$ & $0$ & $6(-1)^{i'c+c+1}$ \\
        $[z,\chi_{2,i'}]$ &  &  &  & $-4$ & $-8$ & $0$ & $8(-1)^{b+1}$ & $6\sqrt{2}\,(-1)^{i'}(2-c)$ \\
        $[z,\chi_{4}]$ &  &  &  &  & $-16$ & $0$ & $8(-1)^{b}$ & $0$ \\
        $[\gamma_4,\psi_4^{2j'}]$ &  &  &  &  &  & $12\sqrt{2}\,(-1)^{j+j'}$ & $0$ & $0$ \\
        $[\gamma_6^{b'},\psi_6^{2k'}]$ &  &  &  &  &  &  & $\alpha(bb'+6bk'+6b'k)$ & $0$ \\
        $[\gamma_8^{c'},\psi_8^{2l'}]$ &  &  &  &  &  &  &  & $\beta(cc'+8cl'+8c'l)$
    \end{tabular}}
    \caption{The modular S-matrix of $V_O$ up to the overall normalization
    $\frac{1}{24\sqrt{2}}$. Here $\alpha(x):=16\cos(\pi x/9)$, $\beta(x):=12\cos(\pi x/16)$, and
    the indices run over $i,i'\in\{0,1\}$, $j,j'\in\{0,1\}$, $b,b'\in\{1,2\}$,
    $k,k'\in\{0,1,2\}$, $c,c'\in\{1,2,3\}$, $l,l'\in\{0,1,2,3\}$.}\label{tab:VOSmatrix}
\end{center}
\end{table}

\begin{table}
\begin{center}
    {\small
    \begin{tabular}{c|ccccccccc}
         & $[1,\chi_{1}]$ & $[z,\chi_{2,s}]$ & $[1,\chi_{3,s}]$ & $[z^{i},\chi_{4,(-1)^i}]$ & $[1,\chi_{5}]$ & $[z,\chi_{6}]$ & $[\gamma_4,\psi_4^{2i}]$ & $[\gamma_6^{b},\psi_6^{2j}]$ & $[\gamma_{10}^{c},\psi_{10}^{2k}]$ \\\midrule
        $[1,\chi_{1}]$ & $1$ & $2$ & $3$ & $4$ & $5$ & $6$ & $30$ & $20$ & $12$ \\
        $[z,\chi_{2,s'}]$ &  & $-4$ & $6$ & $8(-1)^{i}$ & $10$ & $-12$ & $0$ & $20(-1)^{b+1}$ & $12(-1)^{c+1}\varphi_{u}$ \\
        $[1,\chi_{3,s'}]$ &  &  & $9$ & $12$ & $15$ & $18$ & $-30$ & $0$ & $12\,\varphi_{u}$ \\
        $[z^{i'},\chi_{4,(-1)^{i'}}]$ &  &  &  & $16(-1)^{ii'}$ & $20$ & $24(-1)^{i'}$ & $0$ & $20(-1)^{i'b}$ & $-12(-1)^{i'c}$ \\
        $[1,\chi_{5}]$ &  &  &  &  & $25$ & $30$ & $30$ & $-20$ & $0$ \\
        $[z,\chi_{6}]$ &  &  &  &  &  & $-36$ & $0$ & $0$ & $12(-1)^{c}$ \\
        $[\gamma_4,\psi_4^{2i'}]$ &  &  &  &  &  &  & $30\sqrt{2}\,(-1)^{i+i'}$ & $0$ & $0$ \\
        $[\gamma_6^{b'},\psi_6^{2j'}]$ &  &  &  &  &  &  &  & $\alpha(bb'+6bj'+6b'j)$ & $0$ \\
        $[\gamma_{10}^{c'},\psi_{10}^{2k'}]$ &  &  &  &  &  &  &  &  & $\beta(cc'+10ck'+10c'k)$
    \end{tabular}}
    \caption{The modular S-matrix of $V_I$ up to the overall normalization
    $\frac{1}{60\sqrt{2}}$. Here $\alpha(x):=40\cos(\pi x/9)$, $\beta(x):=24\cos(\pi x/25)$,
    $\varphi_{\pm}:=\frac{1\pm\sqrt5}{2}$, and $\varphi_u$ is evaluated at
    $u:=(-1)^{c(c-1)/2}\,s'$. The indices run over $s,s'\in\{\pm\}$, $i,i'\in\{0,1\}$,
    $b,b'\in\{1,2\}$, $j,j'\in\{0,1,2\}$, $c,c'\in\{1,2,3,4\}$, $k,k'\in\{0,1,2,3,4\}$.}
    \label{tab:VISmatrix}
\end{center}
\end{table}

\end{landscape}

\section{The modular data of cleft orbifolds like \texorpdfstring{$V_T$}{VT}, \texorpdfstring{$V_O$}{VO}, and \texorpdfstring{$V_I$}{VI}}\label{app:exceptional}

There are subtleties in computing modular data for the exceptional VOAs $V_T$, $V_O$, and $V_I$, in part because their characters are linearly dependent, and errors have appeared in the literature.
In this appendix, we provide our derivation. Our method works for any cleft orbifold, as we explain in Appendix \ref{subsec:cleft}. We then apply it to the tetrahedral VOA $V_T$. The same argument works for $V_O$ and $V_I$, and we therefore content ourselves with simply recording the final answers in Appendix \ref{app:bestiary}. For further treatment of the exceptional VOAs in the literature, we refer to \cite{Ginsparg:1987eb,Dijkgraaf:1989hb,Cappelli:2002wq,dong2013representations,wu2016representations,dong2015fusion,liao2026fusion}.

\subsection{Cleft orbifolds}\label{subsec:cleft}

Any automorphism $g$ of a VOA $V$ permutes the simple $V$-modules, $M\mapsto M^g$ (see the discussion around Equation \eqref{eqn:Mg}). We call this map from a group $G$ of automorphisms of $V$ to the group of permutations of the simples of $V$ the \textit{module-map}. It is a group homomorphism, and every permutation in its image is a symmetry of the fusion rules and modular data of $V$.

When this module-map is trivial, i.e.\ when $M^g\cong M$ for each automorphism $g\in G$ and each simple $V$-module $M$, we call $G$ \textit{cleft}. 
For example, when $V$ has only two simple modules (as is the case for $\widehat{\mathfrak{su}}(2)_1\cong V_2$), any group $G$ of $V$-automorphisms will necessarily be cleft.

Cleft automorphism groups are studied in Section 4 of \cite{Gannon:2024tcl} in the special case that all simple $V$-modules are simple currents (e.g.\ when $V$ is a lattice VOA). We review the theory here. In this case $V$ has a fusion ring which is isomorphic to the group ring of a finite abelian group $A$, so
\begin{align}
    \Rep(V)\cong \mathrm{Vec}_A^{\sigma,\omega}
\end{align}
as a braided fusion category, 
where $(\sigma,\omega)\in Z^3_{\mathrm{ab}}(A,U(1))$ is an Abelian 3-cocycle (see e.g.\ Example 2.2 of \cite{Galindo:2024qzg} for a concise explanation of Abelian cohomology and pointed braided fusion categories). In particular, the associators (i.e.\ F-symbols) of $\Rep(V)$ are characterized by
$[\omega]\in H^3(A,\mathbb{C}^\times)$. 

When $G$ is cleft and $\Rep(V)$ is pointed, the category of twisted modules is given as a fusion category by 
\begin{align}
    \Rep_G(V)\cong \mathrm{Vec}_\Gamma^{\tilde\omega},
\end{align}
where $\Gamma$ is a central extension of $G$ by $A$,
\begin{align}
    1\rightarrow A\hookrightarrow \Gamma\xrightarrow{q} G\twoheadrightarrow 1,
\end{align}
and $[\tilde{\omega}]\in H^3(\Gamma,\mathbb{C}^\times)$ restricts to $[\omega]$ on $A$. Call $M_\gamma$ the simple twisted modules of $V$, where $\gamma\in \Gamma$. For each $g\in G$, there are precisely $|A|$ inequivalent $g$-twisted simple $V$-modules, and these are parametrized by those $\gamma\in\Gamma$ with  $q(\gamma)=g$.

Which central extension $\Gamma$ to take is relatively straightforward in practice. Thanks to Eilenberg-MacLane, we know that the possibilities are parametrized by $H^2(G,A)$, where $G$ acts trivially on $A$. The most fundamental role of $\Gamma$ is in determining the fusions of twisted modules: 
\begin{align}
    M_\gamma\boxtimes M_{\gamma'}=M_{\gamma\gamma'}, \ \ \ \ \gamma,\gamma'\in \Gamma.
\end{align}

Determining the 3-cocycle $\tilde{\omega}$ is much more delicate. Its main role for us is in constraining the modular data of the fixed-point subVOA $V^G$. Indeed, we recall the standard fact that 
\begin{align}
    Z(\Rep_G(V))\cong \Rep(V^G)\boxtimes \overline{\Rep(V)},
\end{align}
where $\overline{\Rep(V)}$ is the braid- and twist-reverse of $\Rep(V)$. When $G$ is cleft and $\Rep(V)$ is pointed, this implies that
\begin{align}
    \Rep(V^G)\subset Z(\mathrm{Vec}_\Gamma^{\tilde \omega}).
\end{align} 
Thus, the modular data of $V^G$ is simply the restriction of the modular data of $Z(\mathrm{Vec}_\Gamma^{\tilde{\omega}})$ (which is determined by $\Gamma$ and $\tilde{\omega}$) to its $\Rep(V^G)$ subcategory.  (Though of course, the unitary $S$ matrix must be rescaled by $\sqrt{|A|}$ to account for the missing rows, and the $T$ matrix has to be multiplied by $e^{-2\pi \ri c/24}$ where $c$ is the central charge of $V$.)

More explicitly, the subcategory $\Rep(V^G)$ is obtained by first identifying the copy of $\overline{\Rep(V)}$ in $Z(\mathrm{Vec}_\Gamma^{\tilde{\omega}})$ and then taking its centralizer. Recall that simple objects of $Z(\mathrm{Vec}_\Gamma^{\tilde\omega})$ are given by pairs $[\gamma,\tilde\chi]$ where $\gamma$ runs through a set of conjugacy class representatives of $\Gamma$, and $\tilde\chi$ is the character of a simple projective representation of the centralizer $C_\Gamma(\gamma)$, where the 2-cocycle $c_\gamma$ defining the class of projective representations is determined from the 3-cocycle $\tilde{\omega}$ as in Equation (6.32) of \cite{Dijkgraaf:1989pz}. Because $\Rep(V)$ is pointed, every simple object of  $\overline{\Rep(V)}$ will be a simple current in  $Z(\mathrm{Vec}_\Gamma^{\tilde{\omega}})$. In fact, these simples will take the form $[z,\psi_z]$, where $z$ runs over elements of $A\subset \Gamma$, and $\psi_z$ is some 1-dimensional (projective) character of $\Gamma$, one for each $z$.

The centralizer of a simple current $J$ in a modular fusion category  $\mathcal{C}$ consists of all simples $x\in\mathcal{C}$ with $Q_J(x)=1$, where 
\begin{align}\label{eqn:Qcharge}
    Q_J(x)=S_{Jx}/S_{0x}
\end{align}
is a phase  sometimes called the charge of $x$ with respect to $J$. Also, $0$ is the vacuum and $S$ is the modular S-matrix of $\mathcal{C}$. 
    The simple objects of $\Rep(V^G)$ can be identified as 
\begin{align}
    \mathrm{Irr}(\Rep(V^G))=\{[\gamma,\tilde\chi]\in \mathrm{Irr}(Z(\mathrm{Vec}_\Gamma^{\tilde\omega}))\mid Q_{[z,\psi_z]}(\gamma,\tilde\chi)=1 \ \text{for all } z\in A \}.
\end{align}

By Proposition 4.2 of \cite{Gannon:2024tcl},
the simple $V^G$-module  $N_{[\gamma,\tilde\chi]}$ labeled by $[\gamma,\tilde\chi]$ appears in the decomposition of the twisted $V$-module $M_\gamma$ into (projective) representations of $C_\Gamma(\gamma)\times V^G$, 
\begin{align}
    M_\gamma \cong \bigoplus_{\substack{\tilde\chi\in \mathrm{Irr}_{c_\gamma}(C_\Gamma(\gamma)) \\ Q_{[z,\psi_z]}(\gamma,\tilde\chi)=1}}R_{\tilde\chi}\otimes N_{[\gamma,\tilde\chi]},
\end{align}
where $R_{\tilde\chi}$ is the $c_\gamma$-projective representation of $C_\Gamma(\gamma)$ with character $\tilde\chi$. Equivalently, using the perspective of \cite{Gannon:2026ttf}, one could analogously decompose the extended Hilbert space $\bigoplus_{\gamma\in \Gamma}M_\gamma$ into irreducible representations of the appropriate dome algebra \cite{Green:2023ork} (which in this case is a Mason-Ng algebra \cite{Mason:2014kea}).

Now, the modular data of any $Z(\mathrm{Vec}_\Gamma^{\tilde{\omega}})$ is computed in Equations (5.23) and (5.24) of \cite{Coste:2000tq}.
A special case (introduced in Equation (6.45) of \cite{Dijkgraaf:1989pz}) is when all the 2-cocycles $c_\gamma$ mentioned earlier are cohomologically trivial (i.e.\ coboundary). In particular, for each $\gamma\in \Gamma$, there are phases $\varepsilon_\gamma(g)$ with $g\in C_\Gamma(\gamma)$ such that 
\begin{align}\label{vareps}
c_\gamma(g,h)=\varepsilon_\gamma(g)\,\varepsilon_\gamma(h)\,\overline{\varepsilon_\gamma(gh)}, \ \ \ \ \text{for all }g,h\in C_\Gamma(\gamma).
\end{align}
This special case was called CT in \cite{Coste:2000tq}. In this case the projective characters $\tilde{\chi}$ become true characters $\chi=\overline{\varepsilon_\gamma}~\tilde{\chi}$ of the centralizer $C_\Gamma(\gamma)$ when rescaled by $\overline{\varepsilon_\gamma}$. Thus, we can and will  parametrize the simple objects of $Z(\mathrm{Vec}_\Gamma^{\tilde{\omega}})$  by pairs $[\gamma,\chi]$ where now $\chi$ is a true (i.e.\ not projective)  $C_\Gamma(\gamma)$-character. All the $Z(\mathrm{Vec}_\Gamma^{\tilde{\omega}})$ which are relevant in what follows are CT. 

The $\varepsilon_g(h)$ in \eqref{vareps} are uniquely determined up to a 1-dimensional representation of $C_\Gamma(g)$, but this ambiguity just amounts to a shuffling of the identification of $\mathrm{Irr}_{c_\gamma}(C_\Gamma(\gamma))$ with $\mathrm{Irr}(C_\Gamma(\gamma))$. We will exploit this ambiguity shortly. We can and will require 
\begin{align}
\begin{split}
    &\hspace{.7in}\varepsilon_e(\gamma)=1=\varepsilon_\gamma(e), \ \ \ \ \gamma\in \Gamma \\
    &\varepsilon_{g^{-1}\gamma g}(g^{-1}hg)=\varepsilon_\gamma(h)\,c_\gamma(g,g^{-1}hg)\,\overline{c_\gamma(h,g)}, \ \ \ \ h\in C_\Gamma(\gamma), \ g\in \Gamma.
\end{split}
\end{align} 
The phases 
 $\varepsilon_g(h)$ can be constrained by computing them for the restriction of the cocycle to the subgroup $\langle g,h\rangle$; those $\varepsilon_g(h)$ for $\Gamma$ will be a subset of those phases for the subgroup $\langle g,h\rangle$.

 In the CT case the modular data  of $Z(\mathrm{Vec}_\Gamma^{\tilde{\omega}})$ simplifies and we obtain (see e.g.\ Equations (6.48) and (6.49) of \cite{Dijkgraaf:1989pz})
\begin{align}\label{eqn:STct}
\begin{split}
    \overline{S_{[\gamma,\chi],[\gamma',\chi']}}&=\frac{1}{|C_\Gamma(\gamma)|\,|C_\Gamma(\gamma')|}\sum_{g\in \Gamma(\gamma,\gamma')}{\chi(g\gamma'g^{-1})}\,{\chi'(g^{-1}\gamma g)}\,{\varepsilon_\gamma(g\gamma' g^{-1})}\,{\varepsilon_{g\gamma' g^{-1}}(\gamma)},\\
    T_{[\gamma,\chi],[\gamma,\chi]}&=\frac{\chi(\gamma)}{\chi(e)}\varepsilon_\gamma(\gamma),
\end{split}
\end{align}
where $\Gamma(\gamma,\gamma')=\{g\in\Gamma\,|\,\gamma g\gamma'g^{-1}=g\gamma'g^{-1}\gamma\}$ and we have reported the complex conjugate of the S-matrix for convenience. A useful special case (see e.g.\ Equation (2.8) of \cite{Coste:2000tq}) is
\begin{equation}
\overline{{S}_{[z,\chi],[\gamma',\chi']}}= {\varepsilon_{z}(\gamma')}\,{\varepsilon_{\gamma'}(z)} \,{\chi(\gamma')}\,\frac{{\chi'(z)}}{|C_{\Gamma}(\gamma')|}\label{Sz}\end{equation}
when $z$ lies in the center of $\Gamma$. From this we can read off the phases $Q_{[z,\psi_z]}([\gamma,\chi])$ of \eqref{eqn:Qcharge}.

For a simple example, any lattice VOA containment $V_L\subset V_M$, where $L\subset M$ have the same dimension, is cleft, with Abelian group $G=L^*/M^*$ acting on $V_M$, fixing $V_L$. Here, $A=M^*/M$, and the central extension $\Gamma$ is $L^*/M$. The corresponding $Z(\mathrm{Vec}_\Gamma^{\tilde{\omega}})$ is CT. 
  The phases $\varepsilon$ 
are computed by first fixing a coset representative $\lambda\in L^*$ for every class in $L^*/M$. Then $\varepsilon_{[\lambda]}([\lambda'])=e^{\pi\mathrm{i}\lambda\cdot\lambda'}$ works. We will use this example shortly.

In the cleft and CT setting, it is trivial to identify $\overline{\Rep(V )}$ in $Z(\mathrm{Vec}^{\tilde{\omega}}_\Gamma)$.
 Recall that the $\varepsilon_z(g)$ are uniquely determined up to a 1-dimensional representation $\psi$ of $C_\Gamma(z) = \Gamma$. We can consume this freedom by fixing each $\psi_z$ to be 1: without loss
of generality the simples of $\overline{\Rep(V )}$ can then be identified with $[z, 1]$ for each $z \in A \le \Gamma$.

In the following subsections, we apply this cleft machinery to compute the modular data of the exceptional $c=1$ VOA $V_T$.

\subsection{Generalities on the exceptionals}

First, we recall that the three exceptional finite subgroups of $SO(3)$ are 
\begin{align}\label{eqn:Gs}
    A_4, \ \ S_4, \ \ A_5.
\end{align}
Their preimages under the natural map $SU(2)\to SU(2)/\mathbb{Z}_2\cong SO(3)$ give rise to the following three exceptional finite subgroups of $SU(2)$,
\begin{align}\label{eqn:binarygroups}
    2.A_4\cong \textsl{SL}(2,\mathbb{Z}_3), \ \ \ 2.S_4, \ \ \ 2.A_5\cong \textsl{SL}(2,\mathbb{Z}_5).
\end{align}
For each of the ADE subgroups $\Gamma\le SU(2)$, we note that 
\begin{align}
    H^2(\Gamma,\mathbb{C}^\times)=0, \ \ \ \ H^3(\Gamma,\mathbb{C}^\times)\cong\mathbb{Z}/|\Gamma|.
\end{align}

Given any finite group $\Gamma$ and subgroup $H$, the restriction of $H^k(\Gamma,\mathbb{C}^\times)$ to $H^k(H,\mathbb{C}^\times)$ followed by corestriction back to $\Gamma$ is multiplication of the cocycle  by the index $|\Gamma|/|H|$ (see e.g. Proposition III.9.5 (ii) in \cite{brown1982cohomology}). This implies that the restriction 
\begin{align}
    H^k(\Gamma,\mathbb{C}^\times)\to \prod_p H^k(\Gamma_p,\mathbb{C}^\times)
\end{align}
is injective,
where the direct product is over all distinct primes dividing $|\Gamma|$, and  $\Gamma_p$ is a Sylow  $p$-subgroup of $\Gamma$. In the case of the ADE groups (and their subgroups), because $H^3(\Gamma,\mathbb{C}^\times)\cong\mathbb{Z}/|\Gamma|$, this restriction will in fact be an isomorphism. So we can identify $\tilde{\omega}$ on $\Gamma$ with its restrictions to the $\Gamma_p$. These  restrictions to $\Gamma_p$ can be obtained by the orbifolds $V^{G_p}$ where $G_p=(\Gamma/A)_p$.

Let $\Gamma$ be one of the groups in \eqref{eqn:binarygroups}. As explained in the previous subsection, we take the copy of $\overline{\Rep(V_2)}$ in $Z(\mathrm{Vec}^{\tilde{\omega}}_\Gamma)$ to be the vacuum $[1,1]$  together with the  order-2 simple current $[z,1]$, where  $z$ is the nontrivial central element in $\Gamma$. This means that the simple modules of $V_{2}^G$ for $G$ in \eqref{eqn:Gs} will be those $[\gamma,\chi]$ in $Z(\mathrm{Vec}^{\tilde{\omega}}_\Gamma)$ with charge $Q_{[z,1]}([\gamma,\chi])=1$.

\subsection{Tetrahedral}

Now focus on the tetrahedral VOA $V_T$, i.e.\ the case that $G = A_4$, with central extension $\Gamma = \textsl{SL}(2, \mathbb{Z}_3)$. The octahedral and icosahedral VOAs $V_O$ and $V_I$ can be treated analogously.

The group $\Gamma$ has precisely 7 conjugacy classes: $\{1\}$ and $\{z= -1\}$ (both with centralizer $\Gamma$), two conjugacy classes $\mathbf{3}_\pm$ of order 3 elements (both with centralizer $\mathbb{Z}_6$), $\mathbf{4}$ containing the order 4 elements (with centralizer $\mathbb{Z}_4$), and two ($\mathbf{6_\pm}$) containing order 6 elements (both with centralizer $\mathbb{Z}_6$). The character table of $\Gamma$ is recorded in Table \ref{tab:SL(2,Z3)}. 

Note that the Schur multipliers of these centralizers are all trivial. Therefore, for any 3-cocycle $\tilde{\omega}$,  the twisted double $Z(\mathrm{Vec}^{\tilde{\omega}}_{\Gamma})$ is CT.  Any  $Z(\mathrm{Vec}^{\tilde{\omega}}_{\Gamma})$ will therefore have exactly $7+7+6+6+4+6+6=42$ simples,
 and hence $V_2^{A_4}$ will have exactly $42/2=21$ simple modules, 
 \begin{align}
     \mathrm{rk}(Z(\mathrm{Vec}_{2.A_4}^{\tilde\omega}))=42, \ \ \ \ \mathrm{rk}(\Rep(V_2^{A_4}))=21.
 \end{align}

Fix any elements $\gamma_6$ and $\gamma_4$ from the conjugacy classes $\mathbf{6_+}$ and $\mathbf{4}$, respectively. We see from the character table that $\gamma_6^{\pm2}$ and $\gamma_6^{-1}$ lie in the conjugacy classes $\mathbf{3_{\mp}}$ and $\mathbf{6_-}$, respectively. We use $\psi_6$ and $\psi_4$ to denote the 1-dimensional representations of $\langle\gamma_6\rangle\cong \mathbb{Z}_6$ and $\langle\gamma_4\rangle\cong\mathbb{Z}_4$, respectively, satisfying $\psi_6(\gamma_6)=\xi_6$ and $\psi_4(\gamma_4)=\xi_4$.

\begin{table}
 \begin{center}
\begin{tabular}{l|cccc}
\toprule
class    & $\{z^a\}$ & $\mathbf{3}_{\pm}$ & $\mathbf{4}$ & $\mathbf{6}_{\pm}$ \\
$C(g)$   & $\Gamma$  & $\mathbb{Z}_6$     & $\mathbb{Z}_4$ & $\mathbb{Z}_6$   \\
\midrule
$\chi_{1,i}$ & $1$        & $\xi_3^{\pm i}$  & $1$  & $\xi_3^{\pm i}$ \\
$\chi_{2,i}$ & $(-1)^a 2$ & $-\xi_3^{\mp i}$ & $0$  & $\xi_3^{\mp i}$ \\
$\chi_{3}$   & $3$        & $0$              & $-1$ & $0$             \\
\bottomrule
\end{tabular}
\caption{The character table of $\textsl{SL}(2,\mathbb{Z}_3)$. Here, $a\in\{0,1\}$, $i\in \{0,1,2\}$, and $\xi_n=e^{2\pi \mathrm{i}/n}$.}\label{tab:SL(2,Z3)}
\end{center}
\end{table}

 We need to identify the 3-cocycle $\tilde{\omega}$ relevant to $V_2^{A_4}$. The only information we need regarding it (at least as far as determining the modular data is concerned) are the phases $\varepsilon_g(h)$ appearing in \eqref{vareps},  when $g,h$ commute. 
Recall that
$\tilde{\omega}$  can be identified with its restrictions to 
the Sylow subgroups $Q_8$ and $\mathbb{Z}_3$ of $\Gamma$.  The former is the 3-cocycle for the orbifold $V_2^{\mathbb{Z}_2\times\mathbb{Z}_2}=V_8^+$ while the latter is that for $V_2^{\mathbb{Z}_3}=V_{18}$.

For example let's compute the phases $\varepsilon_{\gamma_4^k}(\gamma_4^j)$. As always, we take $\varepsilon_e(g)=\varepsilon_g(e)=1$. 
  To compute the others, consider the $\mathbb{Z}_2$ orbifold of $V_2$ by the automorphism corresponding to the projection $\bar{\gamma_4}$  of $\gamma_4$ into $G=A_4$, i.e.\ the VOA $V_2^{\bar{\gamma_4}}=V_{8}$. So we have here a containment of lattice VOAs. Copying the $\varepsilon$ from the example at the end of Appendix \ref{subsec:cleft}, we obtain $\varepsilon_{\gamma_4^k}(\gamma_4^j)=\exp(\pi\ri \frac{k}{\sqrt{8}}\frac{j}{\sqrt{8}})=\xi_{16}^{jk}$. However, for that $\mathbb{Z}_2$ orbifold, which
as we know sits inside the twisted double of $\mathbb{Z}_4$, those $\varepsilon_{\gamma_4^k}(\gamma_4^j)$  could be multiplied by any 1-dimensional representation $\psi_4^m$
 (where $m$ depends only on $k$). We’re interested in those $\varepsilon_{\gamma_4^k}(\gamma_4^j)$
for the full $A_4$ orbifold, and these are ambiguous up to a 1-dimensional representation of the
centralizer $C_\Gamma(\gamma_4^k)$.  But for $k$ odd that centralizer is also $\mathbb{Z}_4$, so $m=0$ works. For $k=2$, i.e.\ for $\varepsilon_z(\gamma_4^j)$, that $\psi_4^m$ is more elusive. We fix the value of $m$ by requiring that there be a module of
$V_2^{A_4}$  with conformal weight $\sfrac{1}{16}$, corresponding to the restriction of the $\gamma_4$-twisted $V_2$-module  down to
$V_2^{\bar{\gamma_4}}= V_8$ and thence to $V_2^{A_4}$. We find $m=-1$.
  
  The other $\varepsilon$ can be computed similarly. Once these are known, the modular data for $Z(\mathrm{Vec}^{\tilde{\omega}}_{\Gamma})$ is obtained directly from the character tables and \eqref{eqn:STct}. The S-matrix for the double is given in Table \ref{tab:doublebinarytetS}. In this table, $i,i'\in \{0,1,2\}$, $j,j'\in\{0,1,2,3\}$, $k\in\{0,1,\dots,5\}$, $a,a'\in\{0,1\}$, and $b,b'\in\{\pm1,\pm2\}$. From this, the S-matrix of $V_T$ can be read off, resulting in Table \ref{tab:VTSmatrix}. The values of $e^{2\pi\mathrm{i}h_x}$ for the simples, in the same order they appear in Table \ref{tab:doublebinarytetS}, are $(-\mathrm{i})^a$, $\mathrm{i}^a$, $(-\mathrm{i})^a$, $\xi_{16}^{1+4j}$, and $\xi_{36}^{b^2+6bk}$, respectively.

 \begin{table}
\centering
\setlength{\tabcolsep}{4pt}
\begin{tabular}{c|ccccc}
\toprule

  & $[z^a,\chi_{1,i}]$ & $[z^a,\chi_{2,i}]$ & $[z^a,\chi_{3}]$
  & $[\gamma_4,\psi_4^j]$ & $[\gamma_6^b,\psi_6^k]$ \\
\midrule
$[z^{a'},\chi_{1,i'}]$
  & $(-1)^{aa'}$ & $2(-1)^{aa'+a'}$ & $3(-1)^{aa'}$
  & $6(-1)^{ja'}$ & $4(-1)^{a'k}\xi_3^{-bi'}$ \\
$[z^{a'},\chi_{2,i'}]$
  & & $4(-1)^{aa'+a+a'}$ & $6(-1)^{aa'+a}$
  & $0$ & $4(-1)^{b+1+a'k}\xi_3^{bi'}$ \\
$[z^{a'},\chi_{3}]$
  & & & $9(-1)^{aa'}$
  & $6(-1)^{ja'+1}$ & $0$ \\
$[\gamma_4,\psi_4^{j'}]$
  & & & & $6\sqrt{2}\,(-\ri)^{\,(j+j')(j+j'+1)}$ & $0$ \\
$[\gamma_6^{b'},\psi_6^{k'}]$
  & & & & & $4\,\xi_{18}^{-bb'-3kb'-3k'b}$ \\
\bottomrule
\end{tabular}
\caption{The S-matrix for $Z(\textsl{Vec}_{2.A_4}^{\tilde\omega})$, multiplied by $24$.}
\label{tab:doublebinarytetS}
\end{table}

\section{Automorphisms of nice \texorpdfstring{$c=1$}{c=1} vertex operator algebras}\label{app:automorphisms}

In this appendix, we discuss the invertible symmetries of the nice $c=1$ VOAs. Specifically, after reviewing some technical generalities in Section \ref{subsec:generalities}, we proceed to tabulate the automorphism groups of the nice $c=1$ VOAs (extracted from \cite{dong1999automorphism,Dong:1997id,dong1999rank} and summarized in Table \ref{tab:aut}), and calculate how each automorphism permutes modules. This is important both for our classification of nice $c=1$ VOAs and VOSAs in the main text, and our classification of nice  $c=1$ CFTs in \cite{grCFTs}.

In Appendix \ref{app:QuOpTopLin}, we give a complementary treatment of the non-invertible symmetry structure of the nice $c=1$ VOAs, at a physics level of rigor.

\subsection{Generalities}\label{subsec:generalities}

Before specializing to $c=1$, we collect some general facts which will be used in the sequel.

We recall that, if $V^G$ is strongly rational, then $\Rep_G(V)$ is a $G$-crossed ribbon tensor category in the sense of \cite{Turaev:2000ug}. Part of the data of a $G$-crossed ribbon category is a $G$-action, and part of the data of a (right) $G$-action is a tensor equivalence $g_\ast:\Rep_G(V)\to\Rep_G(V)$ (sending $\Rep_h(V)\to \Rep_{g^{-1}hg}(V)$) for each $g\in G$. We denote the image of an $h$-twisted $V$-module $M$ under the action of $g$ as $M^g$. The $g^{-1}hg$-twisted module $M^g$ by definition has the same underlying vector space as $M$, but with the action of $V$ modified as 
\begin{align}\label{eqn:Mg}
    Y_{M^g}(v,z) = Y_{M}(g^{-1}v,z), 
\end{align}
where $v\in V$ and $Y_M(v,z)=\sum_n v_nz^{-n-1}$ with $v_n\in\mathrm{End}(M)$. 

There are several useful constraints on the action of $G$ on twisted modules. For example, a module and its $g$-image always have the same character, $\mathrm{ch}_M(\tau)=\mathrm{ch}_{M^g}(\tau)$. Also, $g_\ast$ restricts to a ribbon tensor equivalence of $\Rep(V)$, called the \emph{module map} in Appendix \ref{app:exceptional}, which in particular means that it is required to respect the modular data of $V$, 
\begin{align}\label{eqn:Aut(S,T)}
    S_{M^g,N^g}=S_{M,N}, \ \ \ \ T_{M^g,N^g}=T_{M,N}, \ \ \ \ \text{ for all }M,N\in \Rep(V).
\end{align}
Also, it follows from the fact that $g_\ast$ is a ribbon tensor equivalence that $(M\oplus N)^g\cong M^g\oplus N^g$.

\begin{table}
\begin{center}
    \begin{tabular}{c|c|c}
    $V$ & $\mathrm{Aut}(V)$ & $\mathrm{Aut}^\dagger(V)$ \\\midrule 
    $V_{2}\cong \widehat{\mathfrak{su}}(2)_1$ & $\textsl{PSL}_2(\mathbb{C})$ & $SO(3)$ \\
    $V_{2m}$ ($m>1$) & $\mathbb{C}^\times \rtimes  \mathbb{Z}_2^{\mathrm{C}}$ & $O(2)$ \\
    $V_{2m}^+$ ($m\neq 1,4$) & $\mathbb{Z}_2$ & $\mathbb{Z}_2$ \\
    $V_{8}^+$ & $S_3$ & $S_3$ \\
    $V_T$ & $\mathbb{Z}_2$ & $\mathbb{Z}_2$ \\
    $V_O$, $V_I$ & $\mathds{1}$ & $\mathds{1}$ 
    \end{tabular}
\end{center}
\caption{For $V$ a $c=1$ VOA, $\mathrm{Aut}(V)$ is the automorphism group and $\mathrm{Aut}^\dagger (V)$ is the subgroup of the automorphism group which preserves the unitary structure on $V$.}\label{tab:aut}
\end{table}

The following general construction will be useful for understanding the automorphisms of $c=1$ VOAs. 

\begin{proposition}\label{prop:normaut}
Suppose that $V$ is a simple vertex operator algebra, and $W=V^F$ is the fixed-point subalgebra of $V$ with respect to a subgroup $F\subset \mathrm{Aut}(V)$. Then there is a well-defined homomorphism    
    \begin{align}\label{eqn:normhom}
    \begin{split}
        N_{\mathrm{Aut}(V)}(F)&\to \mathrm{Aut}(V^F) \\
        g&\mapsto \bar g := g\vert_{V^F},
    \end{split}
    \end{align}
    where $N_{\mathrm{Aut}(V)}(F)$ is the normalizer of $F$ inside $\mathrm{Aut}(V)$. The kernel of \eqref{eqn:normhom} contains $F$. If $N_{\mathrm{Aut}(V)}(F)$ is finite, then the kernel is exactly equal to $F$, and hence $V^F$ admits a faithful action of $N_{\mathrm{Aut}(V)}(F)/F$ by automorphisms.
\end{proposition}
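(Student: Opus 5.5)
The plan is to check the three claims in the order they are stated: that the map is well defined, that its kernel contains $F$, and that the kernel equals $F$ when the normalizer is finite. The first two are formal.

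For well-definedness, take $g\in N_{\mathrm{Aut}(V)}(F)$, $v\in V^F$ and $f\in F$. Since $g^{-1}fg\in F$, we get
\begin{align}
f(gv)=g(g^{-1}fg)v=gv,
\end{align}
so $g$ preserves $V^F$. The restriction $g\vert_{V^F}$ fixes the vacuum and the conformal vector, because both lie in $V^F$. It is compatible with the vertex operators because $g$ is, and it is invertible with inverse $g^{-1}\vert_{V^F}$. Hence $\bar g$ is an automorphism of $V^F$. The assignment is a homomorphism because restriction commutes with composition. Every $f\in F$ restricts to the identity on $V^F$ by definition, so $F$ lies in the kernel.

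The substantive part is showing that the kernel $K$ is exactly $F$ when $N:=N_{\mathrm{Aut}(V)}(F)$ is finite. Then $K$ is a finite group of automorphisms of the simple VOA $V$ with $F\subseteq K$, and every element of $K$ acts trivially on $V^F$. This gives
\begin{align}
V^F\subseteq V^K\subseteq V^F,
\end{align}
so $V^K=V^F$. I would then apply the Galois correspondence for finite automorphism groups of simple VOAs (Dong--Mason / Hanaki--Miyamoto--Tambara). Its content is that $H\mapsto V^H$ is injective on subgroups of a finite group $G\subset\mathrm{Aut}(V)$. Taking $G=K$, which is finite, the two subgroups $F\subseteq K$ have equal fixed-point algebras, so $F=K$.

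This correspondence is the step I expect to carry all the weight. It is essentially the quantum Galois theorem, whose proof rests on the decomposition $V=\bigoplus_{\chi\in\mathrm{Irr}(G)}M_\chi\otimes V_\chi$ with all $V_\chi$ nonzero and pairwise inequivalent as $V^G$-modules. Under that decomposition, $V^H$ is the sum of the pieces $M_\chi^H\otimes V_\chi$. If $V^F=V^K$, then every irreducible $\chi$ of $K$ with $M_\chi^F\neq0$ already has $M_\chi^K\neq0$. Frobenius reciprocity then forces $\mathrm{Ind}_F^K\mathbf 1$ to be trivial, which means $F=K$. Finiteness of $K$ is needed for the decomposition, which is why the finiteness hypothesis on the normalizer appears in the statement. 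The faithful action of $N/F$ on $V^F$ then follows from the first isomorphism theorem.
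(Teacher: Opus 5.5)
Your proposal is correct and follows the paper's proof. Both use the same normalizer computation for well-definedness, and both invoke the quantum Galois correspondence of Dong--Mason and Hanaki--Miyamoto--Tambara to rule out extra kernel elements. The only cosmetic difference is that you apply injectivity of $H\mapsto V^H$ to $F\subseteq K$ inside the finite kernel $K$, whereas the paper applies it to $F\subsetneq\langle F,h\rangle$ inside $N_{\mathrm{Aut}(V)}(F)$.
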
 

We remark that it is likely possible to relax the assumption that $N_{\mathrm{Aut}(V)}(F)$ is finite, but we will not need the more general statement. 

\begin{proof} To check that the map $g\mapsto \bar g = g\vert_{V^F}$ is well-defined, one needs only confirm that any element $g\in N_{\mathrm{Aut}(V)}(F)$ maps $V^F$ back to itself. And indeed, by definition, for every $f\in F$, there is some $f'\in F$ such that $fg=gf'$, from which it follows that 
\begin{align}
    fgv = gf'v=gv, \ \ \  \text{ for all }f\in F,v\in V^F,
\end{align}
so $v$ belonging to $V^F$ implies that $gv\in V^F$ for all $g\in N_{\mathrm{Aut}(V)}(F)$. 

Of course, $F\subset N_{\mathrm{Aut}(V)}(F)$ acts trivially on $V^F$, and so is a subset of the kernel of \eqref{eqn:normhom}. On the other hand, if $N_{\mathrm{Aut}(V)}(F)$ is finite, by the quantum Galois theory of \cite{10.1215/S0012-7094-99-09720-X,DONG199992}, the map $H\mapsto V^H$ is a bijection between subgroups of $N_{\mathrm{Aut}(V)}(F)$ and subVOAs of $V$ which contain $V^{N_{\mathrm{Aut}(V)}(F)}$. Thus, if $h$ is any automorphism in $N_{\mathrm{Aut}(V)}(F)$ outside of the subgroup $F$, then the fixed points of $H=\langle F,h\rangle$ are strictly smaller than those of $F$, i.e.\ $V^{\langle F,h\rangle}\subsetneq V^F$. Hence, $h$ must act non-trivially on $V^F$ and is therefore outside of the kernel of \eqref{eqn:normhom}. We conclude that the kernel is exactly $F$.
\end{proof}

It is also useful to note the following constraint on how automorphisms of $V^F$ in the image of \eqref{eqn:normhom} permute around its modules. (The proof is trivial.)
\begin{proposition}\label{prop:fixedptperm}
Let $V$ be a simple VOA and $W=V^F$ be the fixed-point subalgebra of $V$ with respect to a finite subgroup $F\subset \mathrm{Aut}(V)$. Then, for any $M\in \Rep_F(V)$,
\begin{align}
    \mathrm{Res}(M)^{\bar g}= \mathrm{Res}(M^g), \ \ \ \ \text{ for all } g\in N_{\mathrm{Aut}(V)}(F),
\end{align}
where $\bar g\in\mathrm{Aut}(V^F)$ is the image of $g$ under the map \eqref{eqn:normhom}, and $\mathrm{Res}(M)$ denotes the restriction of $M$ to a $V^F$-module.
\end{proposition}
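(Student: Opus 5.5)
The statement to prove is Proposition~\ref{prop:fixedptperm}: for $M\in\Rep_F(V)$ and $g\in N_{\mathrm{Aut}(V)}(F)$, one has $\mathrm{Res}(M)^{\bar g}=\mathrm{Res}(M^g)$. The plan is to compare both sides directly at the level of vertex operators. The two sides share an underlying vector space, so the whole content lies in the $V^F$-actions.

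First, I set up $M^g$. If $M$ is an $h$-twisted $V$-module with $h\in F$, then by \eqref{eqn:Mg} the module $M^g$ is a $g^{-1}hg$-twisted module with the same underlying space and $Y_{M^g}(v,z)=Y_M(g^{-1}v,z)$. Because $g$ normalizes $F$, we have $g^{-1}hg\in F$, so $M^g$ again lies in $\Rep_F(V)$ and $\mathrm{Res}(M^g)$ makes sense.

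Second, I need $\bar g\in\mathrm{Aut}(V^F)$. This is supplied by Proposition~\ref{prop:normaut}. It is also worth noting that $\bar g^{-1}=\overline{g^{-1}}$, since $g^{-1}\in N_{\mathrm{Aut}(V)}(F)$ and restriction is a homomorphism.

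Third, I compare the actions for $w\in V^F$. On one side, $Y_{\mathrm{Res}(M)^{\bar g}}(w,z)=Y_{\mathrm{Res}(M)}(\bar g^{-1}w,z)=Y_M(g^{-1}w,z)$; here I use that $g^{-1}w\in V^F$, which holds because $g^{-1}$ preserves $V^F$. On the other side, $Y_{\mathrm{Res}(M^g)}(w,z)=Y_{M^g}(w,z)=Y_M(g^{-1}w,z)$. These agree, so the identity map on the common underlying space is an isomorphism of $V^F$-modules. It also respects the $L_0$-grading, because $g$ fixes the conformal vector.

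There is no genuine obstacle. The only point that needs care is to check that $g^{-1}w$ stays in $V^F$, so that restricting before or after twisting by $g$ is legitimate; that check is exactly the normalizer argument given in the proof of Proposition~\ref{prop:normaut}.
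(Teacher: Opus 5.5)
Your proposal is correct and is exactly the routine verification the paper has in mind when it says the proof is trivial. Both sides have the same underlying space, and for $w\in V^F$ both vertex operators reduce to $Y_M(g^{-1}w,z)$; this uses that $g$ normalizes $F$, so $g^{-1}$ preserves $V^F$ and $M^g$ remains in $\Rep_F(V)$.
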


Finally, recall from \cite{dong1996compact} that if $F$ is compact and $V$ is simple, then $V$ decomposes into irreducible $F\times V^F$-modules as 
\begin{align}
    V=\bigoplus_{\chi \in \Rep(F)}R_\chi\otimes M_\chi =: \bigoplus_{\chi\in \Rep(F)}V^\chi,
\end{align}
where $R_\chi$ is an irreducible representation of $F$ with character $\chi$, the space $M_\chi$ is a simple $V^F$-module labeled by $\chi$, and $V^\chi$ is the $\chi$-isotypical component of $V$. We note the following characterization of how automorphisms in the image of \eqref{eqn:normhom} permute around the $M_\chi$.

\begin{proposition}\label{prop:permVsubmodules}
 Suppose $V$ is a simple VOA and $F$ is a compact subgroup of $\mathrm{Aut}(V)$. Then for any $g\in N_{\mathrm{Aut}(V)}(F)$, the corresponding automorphism $\bar g \in \mathrm{Aut}(V^F)$ from \eqref{eqn:normhom} permutes the simple $V^F$-modules $M_\chi$ according to
    \begin{align}
        M_\chi^{\bar g} \cong M_{\chi^g},
    \end{align}
    where $\chi^g(f):=\chi(g^{-1}fg)$.
\end{proposition}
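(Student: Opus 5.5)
The plan is a direct character-theoretic comparison using the isotypic decomposition $V=\bigoplus_{\chi}R_\chi\otimes M_\chi$ of \cite{dong1996compact}. Here $M_\chi$ is realized concretely as $\mathrm{Hom}_F(R_\chi,V)$, or equivalently as the multiplicity space of $R_\chi$ inside the isotypic component $V^\chi$. The first step is to check that $g$ carries isotypic components to isotypic components. Since $g$ normalizes $F$, for $f\in F$ and $v\in V^\chi$ we have $f(gv)=g(g^{-1}fg)v$. So the $F$-representation on $gV^\chi$ is the $g$-conjugate of the one on $V^\chi$, and its character is $f\mapsto\chi(g^{-1}fg)=\chi^g(f)$. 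Hence $g(V^\chi)=V^{\chi^g}$. In particular $\chi^g$ is again an irreducible character (conjugation by $g$ is an automorphism of $F$), so $M_{\chi^g}$ makes sense.

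The next step is to produce an explicit linear isomorphism $\Phi:M_\chi\to M_{\chi^g}$ intertwining the twisted action. Fix the representation space $R_\chi$ with action $\rho$, and realize $R_{\chi^g}$ on the same underlying space with action $\rho^g(f):=\rho(g^{-1}fg)$. Given $\phi\in\mathrm{Hom}_F(R_\chi,V)$, define $\Phi(\phi):=g\circ\phi$. One checks that $g\circ\phi$ is $F$-equivariant from $(R_\chi,\rho^g)$ to $V$: indeed $f\,g\phi(x)=g(g^{-1}fg)\phi(x)=g\phi(\rho(g^{-1}fg)x)$. So $\Phi$ lands in $M_{\chi^g}$, and it is invertible with inverse $\psi\mapsto g^{-1}\circ\psi$.

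The final step is to compare vertex operators. The $V^F$-action on $M_\chi=\mathrm{Hom}_F(R_\chi,V)$ is by post-composition: $(Y(v,z)\phi)(x)=Y(v,z)\phi(x)$. For the twisted module $M_\chi^{\bar g}$, the action is $Y(\bar g^{-1}v,z)=Y(g^{-1}v,z)$ on $\phi$. Using that $g$ is a VOA automorphism, $gY(g^{-1}v,z)g^{-1}=Y(v,z)$, we get
\[
\Phi\bigl(Y(g^{-1}v,z)\phi\bigr)=g\,Y(g^{-1}v,z)\phi=Y(v,z)\,g\phi=Y(v,z)\Phi(\phi).
\]
Thus $\Phi$ is a $V^F$-module isomorphism $M_\chi^{\bar g}\cong M_{\chi^g}$. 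It also preserves $L_0$, because $g$ fixes the conformal vector.

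The only real obstacle is bookkeeping of conventions: the right action $Y_{M^g}(v,z)=Y_M(g^{-1}v,z)$ together with $\chi^g(f)=\chi(g^{-1}fg)$ must line up, rather than producing $\chi^{g^{-1}}$. I would verify this once on the isotypic-component computation above, and, if the convention mismatched, fix it by using $\Phi=g^{-1}\circ{}$ instead. Compactness of $F$ is needed only to invoke the decomposition theorem of \cite{dong1996compact}.
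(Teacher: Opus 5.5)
Your proposal is correct and follows the paper's route: the key step in both is that $f(gv)=g(g^{-1}fg)v$ forces $g(V^\chi)=V^{\chi^g}$ (the paper phrases this via twisted traces), after which one concludes. Your explicit intertwiner $\phi\mapsto g\circ\phi$ on the multiplicity spaces $\mathrm{Hom}_F(R_\chi,V)$ fills in the module isomorphism that the paper leaves implicit in its ``in particular''. Your computation already shows that the conventions $Y_{M^g}(v,z)=Y_M(g^{-1}v,z)$ and $\chi^g(f)=\chi(g^{-1}fg)$ line up, so the fallback to $g^{-1}\circ{}$ is never needed.
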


\begin{proof}
    Since $g$ normalizes $F$, it sends any $F$-stable subspace $W\subset V$ to another $F$-stable subspace $g(W)$. The action of $F$ on $g(W)$ is 
    \begin{align}
        f(gw)=g(g^{-1}fg)w, \ \ \ \ \text{ for all }f\in F,w\in W.
    \end{align}
    In particular, applying this to the $\chi$-isotypical component $W=V^\chi$ and taking the trace, we see that 
    \begin{align}
        \mathrm{Tr}_{g(V^\chi)}fq^{L_0-c/24}=\mathrm{Tr}_{V^\chi}(g^{-1}fg)q^{L_0-c/24}=\mathrm{Tr}_{V^{\chi^g}}fq^{L_0-c/24}, \ \ \ \ \text{ for all }f\in F.
    \end{align}
    This implies that $g(V^\chi)=V^{\chi^g}$ and in particular that $M_\chi^{\bar g} \cong M_{\chi^g}$.
\end{proof}
We remark that the map $F\to F$ sending $f\mapsto g^{-1} f g$ for some $g\in N_{\mathrm{Aut}(V)}(F)$ is a (not necessarily inner) automorphism of $F$, and $\chi\mapsto \chi^g$ is the permutation of representations of $F$ induced by $g$.

\subsection{Lattice vertex operator algebras}

Let us apply some of this general machinery to the lattice VOAs $V_{2m}$.

For any simple VOA $\widehat{\mathfrak{g}}_k$ of Lie type, with $\mathfrak{g}$ a simple Lie algebra, the automorphism group is given by the automorphism group of the underlying Lie algebra $\mathfrak{g}$,
\begin{align}
    \mathrm{Aut}(\widehat{\mathfrak{g}}_k)=\mathrm{Aut}(\mathfrak{g}).
\end{align}
In the case of $V_{2}\cong\widehat{\mathfrak{su}}(2)_1$, this is
\begin{align}
    \mathrm{Aut}(V_2)\cong \textsl{SL}(2,\mathbb{C})/\mathbb{Z}_2\cong \textsl{PSL}(2,\mathbb{C}).
\end{align}
The subgroup of $\textsl{PSL}(2,\mathbb{C})$ which acts unitarily on the theory is $SO(3)$. 

Every automorphism $g\in \mathrm{Aut}(V_2)$ induces the trivial braided auto-equivalence of $\Rep(V_2)$, simply because $\Rep(V_2)$ does not possess any non-trivial braided auto-equivalences. (For physicists, the semion TQFT, a.k.a.\ $SU(2)_1$ Chern-Simons theory, does not have any non-trivial topological surface defects.) Hence, the automorphisms of $V_2$ do not permute around its modules in a non-trivial way.

The finite subgroups of $\mathrm{Aut}(V_2)$ are known to admit an ADE classification. In particular, they are (up to conjugacy)
\begin{align}
    \mathsf{A}_{n\geq 0}: \mathbb{Z}_{n+1}, \ \ \ \ \  \mathsf{D}_{n\geq 4}: D_{2(n-2)}, \ \ \ \ \  \mathsf{E}_6: A_4, \ \ \ \ \ \mathsf{E}_7: S_4, \ \ \ \ \ \mathsf{E}_8: A_5,
\end{align}
where $D_{2\ell}$ is the dihedral group of order $2\ell$.
The corresponding fixed-point subalgebras are 
\begin{align}
    V_{2}^{\mathbb{Z}_n} \cong V_{2n^2} , \ \ \ V_{2}^{D_{2n}} \cong V^+_{2n^2}, \ \ \ V_{2}^{A_4}=V_T, \ \ \ V_{2}^{S_4} = V_O, \ \ \ V_{2}^{A_5} = V_I.
\end{align}

In the case of $V_{2m}$ with $m>1$, the automorphism group is 
\begin{align}
    \mathrm{Aut}(V_{2m}) \cong \mathbb{C}^\times \rtimes \mathbb{Z}_2^{\mathrm{C}}\cong O(2,\mathbb{C}), \ \ \ \ (m>1).
\end{align}
The automorphisms in the subgroup $\mathbb{C}^\times$ are obtained by exponentiating the zero-modes of the $\mathfrak{u}(1)$ currents, and the non-identity connected component is reached by incorporating $\mathbb{Z}_2^{\mathrm{C}}$ charge conjugation symmetry. The subgroup which acts unitarily is $U(1)\rtimes \mathbb{Z}_2^{\mathrm{C}} \cong O(2)$.

The automorphisms in the subgroup $\mathbb{C}^\times$ are continuously connected to the identity and hence induce the trivial braided auto-equivalence of $\Rep(V_{2m})$. On the other hand, the automorphisms in the non-identity connected component permute around the modules of $V_{2m}$ according to 
\begin{align}
    V_{2m,r}^g\cong  V_{2m,-r}
\end{align}
where we understand $r$ to be defined modulo $2m$. 

For $m>1$, the finite subgroups of $\mathrm{Aut}(V_{2m})$, up to conjugacy, are simply the obvious $\mathbb{Z}_n$ and $D_{2n}$ subgroups of $O(2)$. The fixed-point subalgebras in this case are 
\begin{align}\label{eqn:latticefp}
    V_{2m}^{\mathbb{Z}_n}\cong V_{2mn^2}, \ \ \ \ V_{2m}^{D_{2n}}\cong V^+_{2mn^2}.
\end{align}
Summarizing the discussion above, we record the following concerning the $G$-fixed points of lattice VOAs for later use (cf.\ Example 3.26 of \cite{Moller:2024xtt}).

\begin{proposition}\label{prop:Gfixedpts}
    A VOA $V$ is isomorphic to $V_{2m}$, $V_{2m}^+$, $V_T$, $V_O$, or $V_I$ if and only if $V\cong V_{2t_\star}^G$ for some square-free integer $t_\star$ and some finite subgroup $G\subset \mathrm{Aut}(V_{2t_\star})$.
\end{proposition}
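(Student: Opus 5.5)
The statement is an ``if and only if'', and I would prove the two directions separately, using the automorphism data collected just above.

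For the ``if'' direction, fix a square-free $t_\star$ and a finite subgroup $G\subset \mathrm{Aut}(V_{2t_\star})$. I split into the cases $t_\star=1$ and $t_\star>1$.

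If $t_\star=1$, then $V_{2}\cong \widehat{\mathfrak{su}}(2)_1$ and $\mathrm{Aut}(V_2)\cong \textsl{PSL}(2,\mathbb{C})$. Every finite subgroup of $\textsl{PSL}(2,\mathbb{C})$ is conjugate into the compact form $SO(3)$, by averaging a Hermitian form. So up to conjugacy $G$ falls into the ADE list $\mathbb{Z}_n$, $D_{2n}$, $A_4$, $S_4$, $A_5$. Conjugate subgroups give isomorphic fixed-point VOAs, since an automorphism $h$ maps $V^G$ isomorphically onto $V^{hGh^{-1}}$. The fixed points recorded above are then
\begin{align}
V_{2n^2},\ \ V_{2n^2}^+,\ \ V_T,\ \ V_O,\ \ V_I.
\end{align}
Here $D_2\cong\mathbb{Z}_2$ gives $V_2^{\mathbb{Z}_2}=V_8$, consistent with $V_2^+\cong V_8$.

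If $t_\star>1$, then $\mathrm{Aut}(V_{2t_\star})\cong \mathbb{C}^\times\rtimes\mathbb{Z}_2^{\mathrm{C}}$. A finite subgroup meets $\mathbb{C}^\times$ in a cyclic group $\mu_n$ of roots of unity. If it contains a reflection, it is dihedral, and after conjugating by an element of $\mathbb{C}^\times$ it is the standard $D_{2n}$. Equation \eqref{eqn:latticefp} then gives $V_{2t_\star n^2}$ or $V_{2t_\star n^2}^+$.

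In every case the result lies in the list. The computations of these fixed points are standard: the $\mu_n$ fixed points select the sublattice $n\sqrt{2t_\star}\mathbb{Z}$, and adding the reflection produces the $+$ orbifold.

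For the ``only if'' direction, write $m=t_\star n^2$ with $t_\star$ square-free. Then $V_{2m}\cong V_{2t_\star}^{\mathbb{Z}_n}$ and $V_{2m}^+\cong V_{2t_\star}^{D_{2n}}$ by the same formulas. For $n=1$ the latter reads $V_{2t_\star}^{\mathbb{Z}_2^{\mathrm{C}}}=V_{2t_\star}^+$ by definition. Finally, $V_T,V_O,V_I$ are $V_2^G$ by definition, with $t_\star=1$.

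I expect the main obstacle to be rigor in the classification of finite subgroups up to conjugacy. This has two parts. First, one must justify the conjugacy into $SO(3)$ (respectively into the standard $O(2)$) inside the complexified automorphism groups. Second, one must check that the identifications $V_2^{\mathbb{Z}_n}\cong V_{2n^2}$ and $V_2^{D_{2n}}\cong V_{2n^2}^+$ hold for the specific embeddings, rather than merely up to isomorphism of abstract groups. For the first part I would use unitarian averaging: a finite group preserves some positive-definite invariant form, and conjugating to the standard form puts the group in the compact real form. For the second part I would choose the Cartan torus of $\mathfrak{su}(2)$ containing the cyclic group. The fixed points of $\mathbb{Z}_n$ are then generated by the Heisenberg algebra and $e^{\pm n\sqrt2\phi}$, which gives the lattice $n\sqrt{2}\mathbb{Z}$. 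A Weyl reflection realizes the additional $\mathbb{Z}_2^{\mathrm{C}}$.
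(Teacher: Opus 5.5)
Your proposal is correct and takes essentially the same route as the paper. The paper obtains this proposition by summarizing the classification of finite subgroups of $\mathrm{Aut}(V_2)\cong\textsl{PSL}(2,\mathbb{C})$ (the ADE list) and of $\mathrm{Aut}(V_{2m})\cong\mathbb{C}^\times\rtimes\mathbb{Z}_2^{\mathrm{C}}$ for $m>1$ (only $\mathbb{Z}_n$ and $D_{2n}$), together with the corresponding fixed-point identifications. Your extra steps make explicit what the paper leaves implicit: conjugating into the compact forms, using that each subgroup type is unique up to conjugacy, and computing the fixed points via the Cartan.
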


\subsection{Charge conjugation orbifolds}

For charge-conjugation orbifolds, one has that 
\begin{align}
    \mathrm{Aut}(V_{2m}^+) = \begin{cases}
    \mathbb{Z}_2, & m\neq 1,4 \\
    S_3, & m=4.
    \end{cases}
\end{align}
Note that $V_{2}^+\cong V_8$ and so its automorphisms follow from the analysis of the previous subsection. We assume in what follows that $m>1$.

Let us understand the symmetries of $V_{2m}^+$ in the generic case $(m\neq 1,4)$ using Proposition \ref{prop:normaut}. We apply the map \eqref{eqn:normhom} to the case $V=V_{2m}$ and $F=\mathbb{Z}_2^{\mathrm{C}}$, with the understanding that $V_{2m}^+\cong V_{2m}^{\mathbb{Z}_2^{\mathrm{C}}}$. The normalizer group is readily computed to be
\begin{align}
    N_{\mathrm{Aut}(V_{2m})}(\mathbb{Z}_2^{\mathrm{C}})\cong \mathbb{Z}_2\times\mathbb{Z}_2^{\mathrm{C}}\subset O(2),
\end{align} 
where one can think of the first factor as the $\mathbb{Z}_2$ subgroup of the connected component of the identity in $\mathrm{Aut}(V_{2m})$. Thus, $V_{2m}^+$ inherits an action of this  $\mathbb{Z}_2\cong N_{\mathrm{Aut}(V_{2m})}(\mathbb{Z}_2^{\mathrm{C}})/\mathbb{Z}_2^{\mathrm{C}}$. 

Using the elementary fact that the fixed points of $V^F$ with respect to the image of $N_{\mathrm{Aut}(V)}(F)$ under the map \eqref{eqn:normhom} are identical to the fixed points of $V$ with respect to $N_{\mathrm{Aut}(V)}(F)$, it is easy to see that 
\begin{align}
    (V_{2m}^+)^{\mathbb{Z}_2} \cong V_{2m}^{\mathbb{Z}_2\times \mathbb{Z}_2^{\mathrm{C}}}\cong V_{8m}^+ , \ \ \ \ (m\neq 1,4),
\end{align}
where the second isomorphism follows from e.g.\ \eqref{eqn:latticefp}.

Let us determine how this involution permutes the modules of $V_{2m}^+$.

\begin{proposition}\label{prop:V2m+perm}
    The non-trivial order 2 automorphism of $V_{2m}^+$ (for $m\neq 1,4$) swaps the modules
    \begin{align}\label{eqn:m+genericperm}
        V_{2m,m}^+ \leftrightarrow V_{2m,m}^-, \ \ \ \ V_{2m}^{T_1,+}\leftrightarrow V_{2m}^{T_2,+}, \ \ \ \ V_{2m}^{T_1,-}\leftrightarrow V_{2m}^{T_2,-},
    \end{align}
    and stabilizes all others.
\end{proposition}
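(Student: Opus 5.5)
We identify the nontrivial involution concretely and track how it acts on each family of simple $V_{2m}^+$-modules, using Propositions \ref{prop:fixedptperm} and \ref{prop:permVsubmodules}. By the discussion preceding the statement, for $m\neq 1,4$ the involution is $\bar g$, where $g\in N_{\mathrm{Aut}(V_{2m})}(\mathbb{Z}_2^{\mathrm{C}})$ is the order-2 element of the connected component $\mathbb{C}^\times$. Concretely, $g=e^{\pi \ri \alpha_0/\sqrt{2m}}$ (with suitable normalization), which acts on the lattice vector $e^{\sqrt{2m}n\,\cdot}$ by $(-1)^n$. We take $F=\mathbb{Z}_2^{\mathrm{C}}=\langle\theta\rangle$, and note that $g$ commutes with $\theta$ up to an element of $F$; in fact $\theta g\theta=g^{-1}=g$.

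\textbf{Untwisted sector.} Since $g$ lies in the identity component of $\mathrm{Aut}(V_{2m})$, we have $V_{2m,r}^g\cong V_{2m,r}$ for all $r$. By Proposition \ref{prop:fixedptperm}, $\mathrm{Res}(V_{2m,r})^{\bar g}\cong\mathrm{Res}(V_{2m,r})$, so the simples $V_{2m,r}$ with $r\neq 0,m$ are fixed.

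For $V_{2m}^\pm$, we apply Proposition \ref{prop:permVsubmodules}. Because $g$ commutes with $F$, the conjugation action on characters of $F$ is trivial, so $V_{2m}^\pm$ are fixed. (This is also forced: $\bar g$ must fix the vacuum, and $V_{2m}^-$ is the unique simple of weight $1$.)

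For $V_{2m,m}^\pm$, the situation is different. The module $V_{2m,m}$ carries a $\theta$-action, and transporting it along $g$ rescales it. On the lowest weight vectors $e^{\pm\sqrt{2m}/2}$, the element $g$ acts by $\pm \ri$ (up to normalization). Hence the intertwiner $V_{2m,m}\to V_{2m,m}^g$ given by $g$ itself sends the $\theta$-even combination $e^{\beta}+e^{-\beta}$ to $\ri(e^{\beta}-e^{-\beta})$, which is $\theta$-odd. Therefore $\bar g$ exchanges $V_{2m,m}^+\leftrightarrow V_{2m,m}^-$. We will verify this by computing directly with $Y_{M^g}(v,z)=Y_M(g^{-1}v,z)$ on the weight-$m/4$ space, checking that $g\theta g^{-1}$ acts there as $-\theta$.

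\textbf{Twisted sector.} This is the main obstacle. The $\theta$-twisted modules $V_{2m}^{T_1}$ and $V_{2m}^{T_2}$ correspond to the two characters of $L/2L\cong\mathbb{Z}_2$ on the twisted Heisenberg/group part. Conjugating $\theta$ by $g$ gives $g\theta g^{-1}=g^2\theta=\theta$, so $M\mapsto M^g$ preserves the $\theta$-twisted sector. On it, $g$ acts on $e^{\sqrt{2m}}$ by $-1$, which flips the character labelling $T_1$ and $T_2$. We therefore expect $V_{2m}^{T_1}\leftrightarrow V_{2m}^{T_2}$, and then by Proposition \ref{prop:fixedptperm} the restrictions are exchanged as sets $\{V^{T_1,\pm}\}\leftrightarrow\{V^{T_2,\pm}\}$.

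To pin down the signs, we use the conformal weights $1/16$ and $9/16$, which must be preserved: this forces $V^{T_1,+}\leftrightarrow V^{T_2,+}$ and $V^{T_1,-}\leftrightarrow V^{T_2,-}$.

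\textbf{Consistency check.} As an alternative (or a check) avoiding explicit twisted-module realizations, we would use invariance of the modular data, \eqref{eqn:Aut(S,T)}. By Table \ref{tab:CCSmatrix}, $S_{V_{2m,m}^{\epsilon'},V^{T_i,\epsilon}}$ contains $\sigma_{\epsilon',i}$. So once $V_{2m,m}^\pm$ are known to be swapped, $S$-invariance forces $T_1\leftrightarrow T_2$, consistent with the above.

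Finally, we would confirm the result against the known orbifold $(V_{2m}^+)^{\mathbb{Z}_2}\cong V_{8m}^+$.
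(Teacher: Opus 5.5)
Your proposal follows essentially the same route as the paper: you use the identity-component involution $g$, an explicit intertwiner on $V_{2m,m}$ that anticommutes with charge conjugation (your $\hat g$ is the paper's $\varphi$ up to the scalar $\ri$), the sign flip of $e^{\sqrt{2m}}$ to exchange $T_1\leftrightarrow T_2$, and conformal weights (equivalently characters) to match the $\pm$ labels; invoking Proposition \ref{prop:permVsubmodules} for $V_{2m}^\pm$ instead of the paper's vacuum-plus-restriction argument is an equally valid minor variation. One small slip: your parenthetical claim that $V_{2m}^-$ is the unique weight-$1$ simple module fails when $m=k^2$ with $k\geq 3$ (e.g.\ $V_{18,6}$ has $h=1$), but your main argument does not depend on it.
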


\begin{proof}
    Let $g$ be the $\mathbb{Z}_2$ symmetry of $V_{2m}$ in the connected component of the identity automorphism, and let $\bar g$ be its image under \eqref{eqn:normhom}, i.e.\ $\bar g$ is the non-trivial involution of $V_{2m}^+$. Recall from earlier that $g$ stabilizes all $V_{2m}$-modules. 

    First, $\bar g$ stabilizes the vacuum $V_{2m}^+$ e.g.\ because ribbon auto-equivalences must fix the tensor unit. On the other hand, Proposition \ref{prop:fixedptperm} tells us that 
    \begin{align}\label{eqn:vacvsm}
        V_{2m}^+\oplus V_{2m}^-\cong \mathrm{Res}(V_{2m})\cong \mathrm{Res}(V_{2m}^g)\cong (V_{2m}^+)^{\bar g}\oplus (V_{2m}^-)^{\bar g}\cong V_{2m}^+\oplus (V_{2m}^-)^{\bar g},
    \end{align} 
    from which it follows that $(V_{2m}^-)^{\bar g}\cong V_{2m}^-$. 
    
    Noting that the simple $V_{2m}$-module $V_{2m,r}$ with $1\leq r\leq m-1$ restricts to a simple $V_{2m}^+$-module, also somewhat abusively labeled $V_{2m,r}$, it follows from Proposition \ref{prop:fixedptperm} that $V_{2m,r}^{\bar g}\cong V_{2m,r}$ for $1\leq r \leq m-1$.

    From a similar calculation to the one in \eqref{eqn:vacvsm}, one obtains that $(V_{2m,m}^+)^{\bar g} \oplus (V_{2m,m}^-)^{\bar g} \cong V_{2m,m}^+\oplus V_{2m,m}^-$, which implies that $V_{2m,m}^\pm$ are either swapped or stabilized by $\bar g$. To settle which actually occurs, let us pick an actual isomorphism $\varphi:V_{2m,m}^{ g} \to V_{2m,m}$ of $V_{2m}$-modules and see what it does to the subspaces $V_{2m,m}^\pm\subset V_{2m,m}$. The map $\varphi$ is required to satisfy
    \begin{align}
        \varphi Y_{V_{2m,m}}(g^{-1}v,z)\varphi^{-1}=Y_{V_{2m,m}}(v,z), \ \ \ \ v\in V_{2m}
    \end{align}
    and is fixed by this condition up to an overall normalization.
    Denoting the states in $V_{2m,m}$ as $|u,\lambda\rangle$ where $\lambda \in \sqrt{2m}\mathbb{Z}+\sqrt{m/2}$ and $u$ denotes the oscillator contributions, we pick $\varphi$ to act as 
    \begin{align}
        \varphi |u,\sqrt{2m}k+\sqrt{m/2}\rangle = (-1)^k |u,\sqrt{2m}k+\sqrt{m/2}\rangle.
    \end{align}
    On the other hand, charge conjugation acts as 
    \begin{align}
    \begin{split}
        C|u,\sqrt{2m}k+\sqrt{m/2}\rangle &=(-1)^{n(u)}|u,-\sqrt{2m}k-\sqrt{m/2}\rangle \\
        &=(-1)^{n(u)}|u,\sqrt{2m}(-k-1)+\sqrt{m/2}\rangle ,
        \end{split}
    \end{align}
    where $n(u)$ simply counts the number of oscillators. 
    It is straightforward to check that 
    \begin{align}
        C\varphi = -\varphi C,
    \end{align}
    which implies that $\varphi$ swaps states in $V_{2m,m}^+$ and $V_{2m,m}^-$. In particular, we conclude that $(V_{2m,m}^{\pm })^{\bar g} \cong V_{2m,m}^\mp$.

    Finally, to determine what $\bar g$ does to the $V_{2m}^+$-modules $V_{2m}^{T_i,\pm}$, it is easiest to first determine what $g$ does to the charge-conjugation twisted $V_{2m}$-modules $V_{2m}^{T_i}$. Since $g$ sends a $C$-twisted module to a $g^{-1}Cg=C$-twisted module again, it follows that $(V_{2m}^{T_i})^g$ is isomorphic either to $V_{2m}^{T_i}$ or to $V_{2m}^{T_{3-i}}$, where $i=1,2$. The $\mathbb{Z}_2$ character $T_i$ governs the sign with which e.g.\ $Y_{V_{2m}^{T_i}}(|\sqrt{2m}\rangle,z)$ acts on $V_{2m}^{T_i}$. On the other hand, 
    \begin{align}
       Y_{(V_{2m}^{T_i})^g}(|\sqrt{2m}\rangle,z) = Y_{V_{2m}^{T_i}}(g^{-1}|\sqrt{2m}\rangle,z)=-Y_{V_{2m}^{T_i}}(|\sqrt{2m}\rangle,z),
    \end{align}
    so this sign is swapped in $(V_{2m}^{T_i})^g$, which implies that $(V_{2m}^{T_i})^g\cong V_{2m}^{T_{3-i}}$.

    Now, invoking Proposition \ref{prop:fixedptperm}, we learn that 
    \begin{align}
        (V_{2m}^{T_i,+})^{\bar g}\oplus (V_{2m}^{T_i,-})^{\bar g}\cong \mathrm{Res}((V_{2m}^{T_i})^g)\cong \mathrm{Res}(V_{2m}^{T_{3-i}})\cong V_{2m}^{T_{3-i},+}\oplus V_{2m}^{T_{3-i},-}. 
    \end{align}
    Using the fact that $(V_{2m}^{T_i,\pm})^{\bar g}$ must have the same character as $V_{2m}^{T_i,\pm}$, it is forced that $(V_{2m}^{T_i,\pm})^{\bar g}\cong V_{2m}^{T_{3-i},\pm}$.
\end{proof}

Moving on to the exceptional theory $V_8^+$, we have that 
\begin{align}
    \mathrm{Aut}(V_8^+)\cong S_3.
\end{align}
We understand this $S_3$ worth of automorphisms by invoking Proposition \ref{prop:normaut}, using the fact that $V_8^+\cong V^{F}$ where $V=V_2$ and $F=\mathbb{Z}_2\times\mathbb{Z}_2$ is the unique (up to conjugacy) subgroup of $\mathrm{Aut}(V_{2})=\textsl{PSL}(2,\mathbb{C})$ isomorphic to $\mathbb{Z}_2\times \mathbb{Z}_2$. The normalizer of this $\mathbb{Z}_2\times \mathbb{Z}_2$ in $\textsl{PSL}(2,\mathbb{C})$ and its quotient by the subgroup $F$ are, respectively,
\begin{align}
    N_{\mathrm{Aut}(V_{2})}(\mathbb{Z}_2\times\mathbb{Z}_2)\cong S_4, \ \ \ \ \ N_{\mathrm{Aut}(V_{2})}(\mathbb{Z}_2\times\mathbb{Z}_2)/(\mathbb{Z}_2\times \mathbb{Z}_2)\cong S_3.
\end{align}
By Proposition \ref{prop:normaut}, this $S_3$ acts faithfully, and by \cite{Dong:1997id} it realizes the entire automorphism group of $V_{8}^+$. The fixed point subalgebras with respect to subgroups of this $S_3$ are straightforward to obtain:
\begin{align}
    (V_{8}^+)^{\mathbb{Z}_2}\cong V_{32}^+ , \ \ \ \ \ \ (V_{8}^+)^{\mathbb{Z}_3} \cong V_T, \ \ \ \ \ \  (V_{8}^+)^{S_3} \cong V_O.
\end{align}

To calculate how $S_3$ permutes the modules of $V_8^+$, we first determine what the symmetries of its modular data are.

\begin{lemma}
    The group of permutations of the simple modules of $V_8^+$ which preserve its modular data in the sense of \eqref{eqn:Aut(S,T)} is $S_3$. This $S_3$ is generated by the transposition in \eqref{eqn:m+genericperm} together with
    \begin{align}
        V_{8}^-\leftrightarrow V_{8,4}^+, \ \ \ V_{8,1}\leftrightarrow V_{8}^{T_1,+}, \ \ \ V_{8,3}\leftrightarrow V_{8}^{T_1,-}.
    \end{align}
\end{lemma}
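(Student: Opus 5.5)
The plan is to pin down every modular-data symmetry by what it does to the three simple currents with integer conformal weight, and then exhibit two explicit symmetries generating all of $S_3$. For $m=4$, Table \ref{tab:chargeconjugationmodules} gives $m+7=11$ simples. Their weights are
\begin{align}
h(V_8^+)=0,\quad h(V_8^-)=h(V_{8,4}^\pm)=1,\quad h(V_{8,2})=\tfrac14,
\end{align}
\begin{align}
h(V_{8,1})=h(V_8^{T_i,+})=\tfrac1{16},\quad h(V_{8,3})=h(V_8^{T_i,-})=\tfrac9{16}.
\end{align}
A permutation preserving $T$ must preserve $e^{2\pi\ri h}$. So it fixes the vacuum and $V_{8,2}$, and it permutes within each of the three-element classes
\begin{align}
\mathcal{J}=\{V_8^-,V_{8,4}^+,V_{8,4}^-\},\quad \mathcal{C}_{1/16}=\{V_{8,1},V_8^{T_1,+},V_8^{T_2,+}\},\quad \mathcal{C}_{9/16}=\{V_{8,3},V_8^{T_1,-},V_8^{T_2,-}\}.
\end{align}
This already shows the symmetry group embeds in $S_3\times S_3\times S_3$.

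Next I reduce to the action on $\mathcal{J}$. Reading off Table \ref{tab:CCSmatrix} at $m=4$, every element of $\mathcal{J}$ has quantum dimension $1$, so these are simple currents. For a simple current $J$ and any simple $x$, the charge $Q_J(x)=S_{Jx}/S_{0x}$ is defined as in \eqref{eqn:Qcharge}. Any $S$-preserving permutation $\pi$ satisfies $Q_{\pi(J)}(\pi(x))=Q_J(x)$.

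The main computation is the table of charges of $\mathcal{C}_{1/16}$ against $(V_8^-,V_{8,4}^+,V_{8,4}^-)$. Using $S_{V^{\epsilon'},T_i^\epsilon}\propto\epsilon'\sqrt m$, $S_{V_{2m,m}^{\epsilon'},T_i^\epsilon}\propto(-\ri)^m\sigma_{\epsilon',i}\sqrt m$ and $S_{V_{2m,m}^{\epsilon'},V_{2m,r}}\propto 2(-1)^r$, I expect the charge vectors
\begin{align}
V_{8,1}:(1,-1,-1),\qquad V_8^{T_1,+}:(-1,1,-1),\qquad V_8^{T_2,+}:(-1,-1,1).
\end{align}
The analogous vectors hold for $\mathcal{C}_{9/16}$. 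Thus each element of $\mathcal{C}_{1/16}$ and $\mathcal{C}_{9/16}$ is singled out as the unique element of its class having charge $+1$ under exactly one of the currents. Hence the permutation on these classes is determined by the permutation $\pi|_{\mathcal{J}}$, and the restriction map from the symmetry group to $\mathrm{Sym}(\mathcal{J})\cong S_3$ is injective.

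It remains to show surjectivity. I will check directly against Table \ref{tab:CCSmatrix} that the transposition \eqref{eqn:m+genericperm} preserves $S$; it acts on $\mathcal{J}$ as $V_{8,4}^+\leftrightarrow V_{8,4}^-$. Equally, this follows because Proposition \ref{prop:V2m+perm} realizes it as a VOA automorphism. I will then check that the second listed permutation preserves $S$ and $T$; it acts on $\mathcal{J}$ as $V_8^-\leftrightarrow V_{8,4}^+$. Two distinct transpositions generate $S_3$, so the group is exactly $S_3$.

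The main obstacle is this last verification, which requires checking the $11\times11$ $S$-matrix entry by entry. It is where the sign conventions $\sigma_{\epsilon,i}$ and the factor $(-\ri)^m=1$ must be handled correctly. To shorten it, I will use the injectivity above: it suffices to check that the charges $Q_J(x)$ transform correctly and that the remaining blocks, namely $\{V_{8,1},V_{8,2},V_{8,3}\}$ against the twisted modules, are permuted consistently.
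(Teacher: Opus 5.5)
\textbf{The gap: $T$-invariance does not fix the vacuum.} You deduce from $T$-invariance that a symmetry fixes the vacuum, but that deduction is false. $h(V_8^+)=0$ and $h(V_8^-)=h(V_{8,4}^\pm)=1$ give the same eigenvalue $e^{2\pi\ri h}=1$, and all four modules have $d=1$. So $T$ and quantum dimensions only tell you that a symmetry $\pi$ permutes the four-element set $\{V_8^+\}\cup\mathcal{J}$.

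This is not cosmetic, because the rest of your argument needs $\pi(V_8^+)=V_8^+$. The invariance $Q_{\pi(J)}(\pi(x))=Q_J(x)$ follows from $S_{\pi a,\pi b}=S_{ab}$ only if $\pi$ fixes the vacuum, since the charge \eqref{eqn:Qcharge} is normalized by the vacuum row. The statement $\pi(\mathcal{J})=\mathcal{J}$ needs it too.

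The missing input is the one the paper uses. $S$-invariance forces the row of $\pi(V_8^+)$ to be a rearrangement of the vacuum row, which is entrywise positive. By contrast, $S_{V_8^-,V_8^{T_i,\epsilon}}<0$ and $S_{V_{8,4}^\pm,V_{8,1}}<0$. Hence $\pi(V_8^+)=V_8^+$, and only then do you get the embedding into $S_3\times S_3\times S_3$.

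\textbf{After the fix.} The argument is sound, and it is a more explicit version of the paper's proof. The paper simply asserts that $S$ cuts $S_3^A\times S_3^B\times S_3^C$ down to the diagonal $S_3$; your charge argument is what makes ``diagonal'' precise. Your charge vectors are correct: from Table \ref{tab:CCSmatrix} at $m=4$ one gets
\begin{itemize}
\item $Q_{V_8^-}=+1$ on untwisted modules and $-1$ on twisted ones;
\item $Q_{V_{8,4}^{\epsilon'}}(V_{8,r})=(-1)^r$;
\item $Q_{V_{8,4}^{\epsilon'}}(V_8^{T_i,\epsilon})=\sigma_{\epsilon',i}$.
\end{itemize}

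\textbf{Two smaller points on surjectivity.}
\begin{itemize}
\item Proposition \ref{prop:V2m+perm} is stated for $m\neq 1,4$. At $m=4$ you should either rely on the direct check, which works because $\sigma_{-\epsilon',3-i}=\sigma_{\epsilon',i}$, or rerun that proof for the involution induced by $-1\in\mathbb{C}^\times$.
\item Your shortcut for the second permutation checks too little. Once the charges are verified, the $\mathcal{J}$ rows are handled. But the second permutation moves $V_{8,1}$ and $V_{8,3}$ into the twisted sector, so you must check the entire block on $\{V_{8,1},V_{8,2},V_{8,3},V_8^{T_i,\pm}\}$, not just the off-diagonal part you name. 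That includes entries such as $S_{V_{8,1},V_{8,3}}=S_{V_8^{T_1,+},V_8^{T_1,-}}=-\tfrac12$ and $S_{V_{8,1},V_{8,1}}=S_{V_8^{T_1,+},V_8^{T_1,+}}=\tfrac12$. These identities, and the rest of the block, do hold, so the second listed permutation is indeed a symmetry and the group is $S_3$.
\end{itemize}
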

\begin{proof}
    Any such permutation must preserve the vacuum since it is the unique row which is entrywise positive. By imposing that the T-matrix is preserved and that the vacuum is invariant, we learn that any symmetry of the modular data of $V_{8}^+$ must permute modules within the sets 
\begin{align}
    A=\{V_8^-,V_{8,4}^+,V_{8,4}^-\}, \ \ \ \  B=\{V_{8,1},V_{8}^{T_1,+},V_{8}^{T_2,+}\}, \ \ \ \ C= \{V_{8,3},V_{8}^{T_1,-},V_8^{T_2,-}\}.
\end{align}
A priori, any permutation in $S_3^A\times S_3^B\times S_3^C$ is possible, but imposing that permutations preserve the modular S matrix winnows down the allowed permutations to the diagonal $S_3$ subgroup. 
\end{proof}

Call $\mathrm{Aut}(S_{V_8^+},T_{V_8^+})$ the permutations of the simple modules of $V_8^+$ which preserve its modular data. Because every automorphism of $V_8^+$ induces such a permutation, we have a homomorphism $\mathrm{Aut}(V_8^+)\to \mathrm{Aut}(S_{V_8^+},T_{V_8^+})$. In fact:

\begin{proposition}
    The map $\mathrm{Aut}(V_8^+)\to \mathrm{Aut}(S_{V_8^+},T_{V_8^+})$ is an isomorphism. That is, every permutation of simple $V_8^+$-modules preserving the modular data is uniquely induced by an automorphism of $V_8^+$. 
\end{proposition}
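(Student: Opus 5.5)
We need to show that the homomorphism $\mathrm{Aut}(V_8^+)\to \mathrm{Aut}(S_{V_8^+},T_{V_8^+})$ is injective. Surjectivity then follows by counting, since both groups are isomorphic to $S_3$: the source by \cite{Dong:1997id} and the preceding discussion, the target by the preceding lemma. Injectivity can be settled by showing that each nontrivial element of $\mathrm{Aut}(V_8^+)\cong S_3$ induces a nontrivial permutation. It suffices to check this on one involution and on one element of order $3$. The only normal subgroups of $S_3$ are $1$, $\mathbb{Z}_3$ and $S_3$, so a kernel avoiding both an involution and the order-$3$ subgroup must be trivial.

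\textbf{The involution.} Take the involution $\bar g$ coming from the normalizer of $\mathbb{Z}_2^{\mathrm{C}}$ in $\mathrm{Aut}(V_8)$, realizing $V_8^+=V_8^{\mathbb{Z}_2^{\mathrm{C}}}$ with $m=4$. The argument of Proposition \ref{prop:V2m+perm} applies verbatim at $m=4$: the explicit intertwiner $\varphi$ anticommutes with $C$, and the sign of $Y(|\sqrt{2m}\rangle,z)$ on the twisted modules is flipped. Hence $\bar g$ swaps $V_{8,4}^+\leftrightarrow V_{8,4}^-$, so it acts nontrivially on modules. The only caveat is that for $m=4$ this $\bar g$ is just one of the three involutions in $\mathrm{Aut}(V_8^+)$, which does not matter here.

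\textbf{The order-$3$ element.} Let $h$ have order $3$. If $h$ acted trivially on $\Rep(V_8^+)$, then $\langle h\rangle\cong\mathbb{Z}_3$ would be a cleft group with $(V_8^+)^{\mathbb{Z}_3}\cong V_T$. I would rule this out with a rank or dimension count. For a cleft $\mathbb{Z}_3$ orbifold, every simple $V_8^+$-module is fixed, and each fixed simple splits into $3$ simple $V_T$-modules. A further contribution comes from the twisted sectors, giving $3\cdot 11\cdot 3=99$ simples up to the standard orbifold counting. Equivalently, one can compare global dimensions: $\mathrm{Dim}\,\Rep(V^G)=|G|^2\,\mathrm{Dim}\,\Rep(V)$ gives $9\cdot 32=288$ for $V_T$, which matches either way. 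So the global dimension alone does not decide this case, and I would instead count simples. By \cite{dong2017orbifold}, when all $11$ simples are $h$-stable the orbifold has $9\cdot 11=99$ simples, whereas Table \ref{tab:VTreps} shows that $V_T$ has $21$. This contradiction shows that $h$ permutes modules nontrivially.

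\textbf{Alternative for the order-$3$ element, and the main obstacle.} A more direct route is to use Proposition \ref{prop:permVsubmodules} with $V=V_2$, $F=\mathbb{Z}_2\times\mathbb{Z}_2$, and $N=S_4$. An element of order $3$ in $S_4$ cyclically permutes the three nontrivial characters of $F$. Under the identification $V_2=\bigoplus_\chi R_\chi\otimes M_\chi$, these characters label $V_8^-$, $V_{8,4}^+$ and $V_{8,4}^-$, so $h$ cyclically permutes these three modules, which is again nontrivial. Matching the $M_\chi$ with these specific modules requires only that their conformal weights are all $1$ and that they are the nonvacuum summands of $\mathrm{Res}(V_2)$. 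This is the step I expect to need the most care, and I would verify it via the characters.
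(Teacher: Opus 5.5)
Your proof is correct, but it runs in the opposite direction from the paper's.

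\textbf{The paper's route (surjectivity).} The paper decomposes $\mathrm{Res}(V_2)=\bigoplus_\chi R_\chi\otimes M_\chi$ under $F=\mathbb{Z}_2\times\mathbb{Z}_2$ and identifies the three non-trivial $M_\chi$ with $A=\{V_8^-,V_{8,4}^+,V_{8,4}^-\}$. Proposition \ref{prop:permVsubmodules} then shows that the normalizer $S_4$ realizes every permutation of $A$, since $S_4$ induces all of $\mathrm{Aut}(F)\cong S_3$. The preceding lemma says a symmetry of the modular data is determined by its action on $A$ (the ``diagonal'' $S_3$), so the map is surjective.

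\textbf{Your route (injectivity).} You use the normal subgroups of $S_3$ instead. This only needs $|\mathrm{Aut}(S_{V_8^+},T_{V_8^+})|=6$, not the diagonal structure from the lemma.

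\textbf{Simplifications.}
\begin{enumerate}
\item Your involution paragraph is unnecessary. An order-$3$ element outside the kernel already excludes both $\mathbb{Z}_3$ and $S_3$ as the kernel.
\item The ``main obstacle'' in your second route is not actually needed. For injectivity you only need the three non-trivial $M_\chi$ to be pairwise non-isomorphic simple $V_8^+$-modules. That is part of the Schur--Weyl duality of \cite{dong1996compact}. You do not need to know which labels they carry. The identification matters only in the paper's surjectivity direction, where it follows immediately from $V_2\cong V_8\oplus V_{8,4}$.
\item The global-dimension comparison is a red herring, as you noticed.
\item The twisted sectors are not needed in your rank count. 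If $h$ fixed all $11$ simples, the untwisted sector alone would already give $3\cdot 11=33>21$ pairwise inequivalent simple $V_T$-modules by \cite{dong2017orbifold}.
\end{enumerate}

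\textbf{What each route buys.} The rank-count route is genuinely different from the paper's argument and is sound. It costs more machinery: orbifold theory, the strong rationality of $V_T$, and the rank $21$ from Appendix \ref{app:exceptional}. In exchange it needs no explicit decomposition of $V_2$. Your second route is essentially the paper's mechanism applied to a single order-$3$ element. The paper's version is constructive: it tells you exactly which automorphism induces which symmetry of the modular data.
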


\begin{proof}
    Because $\mathrm{Aut}(V_8^+)$ and $\mathrm{Aut}(S_{V_8^+},T_{V_8^+})$ are both isomorphic to $S_3$ it suffices to show that the map is surjective.

    We study the decomposition of $V_2$ into $(\mathbb{Z}_2\times \mathbb{Z}_2)\times V_8^+$-modules, 
    \begin{align}
    \begin{split}
       \mathrm{Res}(V_2)&\cong \bigoplus_{\chi\in\Rep(\mathbb{Z}_2\times\mathbb{Z}_2)} R_\chi \otimes M_\chi \\
      &\cong  R_{\chi_0}\otimes  V_8^+\oplus R_{\chi_1}\otimes V_{8,4}^+\oplus R_{\chi_2}\otimes V_{8,4}^-\oplus R_{\chi_3}\otimes V_8^-,
      \end{split}
    \end{align}
    where the $R_{\chi_i}$ are the irreducible representations of $F:=\mathbb{Z}_2\times \mathbb{Z}_2$, with $\chi_0$ the trivial representation. Let $\bar g\in S_3\cong \mathrm{Aut}(V_8^+)$ be an automorphism of $V_8^+$, and let $g\in S_4\cong N_{\mathrm{Aut}(V_2)}(F)\subset  \mathrm{Aut}(V_2)$ be an automorphism of $V_2$ whose image under the map \eqref{eqn:normhom} is $\bar g$. By Proposition \ref{prop:permVsubmodules}, we learn that $M_\chi^{\bar g} \cong M_{\chi^g}$, where $\chi^g(f)=\chi(g^{-1}fg)$. It suffices to show that every permutation of $\{\chi_1,\chi_2,\chi_3\}$ takes the form $\chi \mapsto \chi^g$ for some $g\in S_4$.
    
    It is an elementary fact about $F=\mathbb{Z}_2\times\mathbb{Z}_2$ that each of its automorphisms $F\to F$ is obtained by embedding $\mathbb{Z}_2\times\mathbb{Z}_2\subset S_4$ and taking $f\mapsto g^{-1}fg$ with $g\in S_4$. Of course, if $g$ is taken to be in the subgroup $F$, then $g^{-1}fg=f$ for all $f\in F$, so the automorphism of $\mathbb{Z}_2\times\mathbb{Z}_2$ defined by $g$ depends only on its image $\bar g \in S_3 \cong S_4/\mathbb{Z}_2\times\mathbb{Z}_2$ under the natural quotient map. The $S_3$ worth of automorphisms of $\mathbb{Z}_2\times\mathbb{Z}_2$ act as permutations of the three irreducible representations $\{\chi_1,\chi_2,\chi_3\}$, so every element of $\mathrm{Aut}(S_{V_8^+},T_{V_8^+})$ is induced by some element of $\mathrm{Aut}(V_8^+)$. 
\end{proof}

\subsection{Exceptional models}

Finally, for the exceptional $c=1$ VOAs, one has that
\begin{align}
    \mathrm{Aut}(V_T)=\mathbb{Z}_2, \ \ \mathrm{Aut}(V_O)=\mathds{1}, \ \ \mathrm{Aut}(V_I)=\mathds{1}.
\end{align}
The $\mathbb{Z}_2$ which acts on $V_T$ can be understood as being inherited from $V_{2}$, using the fact that $V_T$ is the fixed-point subalgebra of $V_{2}$ with respect to $F=A_4\subset \mathrm{Aut}(V_{2})$. In this case, 
\begin{align}
    N_{\mathrm{Aut}(V_{2})}(A_4) =S_4, \ \ \ \ N_{\mathrm{Aut}(V_{2})}(A_4)/A_4 \cong \mathbb{Z}_2,
\end{align}
so Proposition \ref{prop:normaut} tells us that there is a homomorphism $\mathbb{Z}_2\to \mathrm{Aut}(V_T)$ which does not have a kernel. The fixed point subalgebra with respect to this $\mathbb{Z}_2$ is 
\begin{align}
    V_T^{\mathbb{Z}_2}\cong V_O.
\end{align}

In a moment, we will determine how this $\mathbb{Z}_2$ permutes the modules of $V_T$. We establish the following intermediate result first. (See Table \ref{tab:VTreps} for our labeling conventions for the modules of $V_T$, and Table \ref{tab:VTSmatrix} for the modular S-matrix.)

\begin{lemma}\label{lem:VTperm}
    The only non-trivial permutation of the simple modules of $V_T$ which preserves the modular data in the sense of \eqref{eqn:Aut(S,T)} is $M\mapsto M^\ast$, i.e.\ 
   \begin{align}\label{eqn:VTperm}
       [1,\chi_{1,1}]\leftrightarrow [1,\chi_{1,2}], \ \ \ \ [z,\chi_{2,1}]\leftrightarrow [z,\chi_{2,2}], \ \ \ \ [\gamma_6^{+b},\psi_6^{2m}]\leftrightarrow [\gamma_6^{-b},\psi_6^{-2m}],
   \end{align}
   where $b\in\{1,2\}$ and $m\in \mathbb{Z}_3$.
\end{lemma}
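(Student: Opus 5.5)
The plan is to determine $\mathrm{Aut}(S_{V_T},T_{V_T})$ directly from Table \ref{tab:VTreps} and Table \ref{tab:VTSmatrix}, following the same strategy as for $V_8^+$.

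\textbf{Step 1: existence of the non-trivial symmetry.} The charge conjugation $M\mapsto M^\ast$ always preserves $S$ and $T$, because $S^2=C$ commutes with $S$ and $T$. So the permutation \eqref{eqn:VTperm} is a symmetry of the modular data. It only remains to confirm that it is the listed permutation. I would read this off from the entries $S_{x,y^\ast}=\overline{S_{x,y}}$ in Table \ref{tab:VTSmatrix}: complex conjugation sends $\zeta^{-6bi'}\mapsto\zeta^{6bi'}$ and $\zeta^{-bb'-6(mb'+m'b)}\mapsto\zeta^{bb'+6(mb'+m'b)}$. This realizes exactly $i\mapsto -i$ and $(b,m)\mapsto(-b,-m)$, and it fixes the real rows.

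\textbf{Step 2: coarse constraints from invariants.} Any symmetry $\pi$ fixes the vacuum, since its row is the unique entrywise positive one. It therefore also preserves the quantum dimensions $d_x=S_{x0}/S_{00}$ and the twists $e^{2\pi\ri h_x}$. Sorting the $21$ simples by the pair $(d_x,e^{2\pi \ri h_x})$ gives the following.
\begin{itemize}
\item $[1,\chi_3]$ and both $[\gamma_4,\psi_4^{2j}]$ are fixed.
\item $\{[1,\chi_{1,1}],[1,\chi_{1,2}]\}$ is a block.
\item $\{[z,\chi_{2,i}]\}_{i=0,1,2}$ is a block, all with $d=2$ and twist $\ri$.
\item The $\gamma_6$-type simples split into six two-element blocks $\{[\gamma_6^{b},\psi_6^{2m}],[\gamma_6^{-b},\psi_6^{-2m}]\}$ with $b\in\{1,2\}$ and $m\in\mathbb{Z}_3$. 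This pairing is read off from the listed conformal weights $\sfrac1{36},\sfrac{49}{36},\sfrac{25}{36}$ and $\sfrac19,\sfrac{16}9,\sfrac49$.
\end{itemize}
Hence $\pi$ lies in $S_2\times S_3\times S_2^{6}$.

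\textbf{Step 3: cut down to the two candidates using $S$.} The goal is to show that $\pi$ is either the identity or $\ast$. I would use $S_{\pi x,\pi y}=S_{xy}$ in the following order.

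First, I split on whether $\pi$ fixes or swaps $[1,\chi_{1,1}]$. The case where it swaps reduces to the fixed case by composing with $\ast$, so I may assume $\pi$ fixes it.

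Next, the row of $[1,\chi_{1,1}]$ has entry $4\zeta^{-6b}=4\omega^{-b}$ against $[\gamma_6^{b},\psi_6^{2m}]$. This entry differs between $b$ and $-b$, so every $\gamma_6$ pair must be fixed pointwise.

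Then, the row of $[z,\chi_{2,i'}]$ against the now-fixed $\gamma_6$ simples has entry $4(-1)^{b+1}\zeta^{6bi'}$. This depends on $i'$ through $\omega^{bi'}$, which distinguishes the three $[z,\chi_{2,i}]$, so they are all fixed too.

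Since every block has now been shown to be fixed pointwise, $\pi$ is the identity. I expect the $\gamma_6$–$\gamma_6$ block to serve only as a consistency check.

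\textbf{Main obstacle.} I expect the main obstacle to be the bookkeeping in Step 2. One has to match the $\gamma_6$ labels correctly to their twists, being careful with the convention relating $\psi_6^{2m}$ to the weights in the table. The rest is routine comparison of table entries.
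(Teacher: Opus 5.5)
Your proposal is correct and follows essentially the same route as the paper. The paper also blocks the simples using $T$ and the quantum dimensions, fixes the vacuum as the unique entrywise positive row, and then uses the $[1,\chi_{1,i}]$--$\gamma_6$ and $[z,\chi_{2,i}]$--$\gamma_6$ entries of $S$. The differences are minor: you dispose of the swap case by composing with $\ast$, which you showed is a symmetry, rather than rerunning the argument, and you use the $\gamma_6$ rows to separate all three $[z,\chi_{2,i}]$ directly instead of the paper's row-multiset step that splits off $[z,\chi_{2,0}]$.
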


\begin{proof}
    Call $g$ a permutation preserving the modular data of $V_T$. First, note that by imposing that the T-matrix and the quantum dimensions are invariant, one learns that the symmetry can only move modules around within the following sets:
    \begin{align}
    \begin{split}
        &\{[1,\chi_{1,0}],[1,\chi_{1,1}], [1,\chi_{1,2}]\}, \ \ \{[z,\chi_{2,0}],[z,\chi_{2,1}], [z,\chi_{2,2}]\},\\
        &\hspace{.2in}\{[\gamma_6^{+b},\psi_6^{2m}], [\gamma_6^{-b},\psi_6^{-2m}]\},  \ \ \ \ b=1,2, \ \ m=0,1,2.
    \end{split}
    \end{align}
    Any permutation preserving the modular data must fix the vacuum: indeed, the vacuum is the unique row that is entrywise positive. Imposing this splits the first set into two sets, $\{[1,\chi_{1,0}]\}$ and $\{[1,\chi_{1,1}],[1,\chi_{1,2}]\}$. Imposing that the permutation preserve the rows as multisets, i.e.\ 
    \begin{align}
        \{S_{M^g,N}\mid N\in \Rep(V_T)\}=\{S_{M,N}\mid N\in \Rep(V_T)\},
    \end{align} 
    the set $\{[z,\chi_{2,0}],[z,\chi_{2,1}], [z,\chi_{2,2}]\}$ splits into $\{[z,\chi_{2,0}]\}$ and $\{[z,\chi_{2,1}],[z,\chi_{2,2}]\}$. Thus, $g$  must be a composition of disjoint transpositions.

    Suppose that $g$ exchanges $[1,\chi_{1,1}]$ and $[1,\chi_{1,2}]$. By imposing that $g$ preserve the S-matrix elements between $[1,\chi_{1,i}]$ and $[\gamma_6^b,\psi_6^{2m}]$, it is forced that $[\gamma_6^b,\psi_6^{2m}]^g=[\gamma_6^{-b},\psi_6^{-2m}]$. Then, by imposing that $g$ preserve the S-matrix elements between $[z,\chi_{2,i}]$ and $[\gamma_6^b,\psi_6^{2m}]$, it is forced that $g$ exchanges $[z,\chi_{2,1}]\leftrightarrow [z,\chi_{2,2}]$. Thus, $g$ is forced to act as in \eqref{eqn:VTperm}.

    Suppose instead that $g$ fixes $[1,\chi_{1,1}]$ and $[1,\chi_{1,2}]$. Then following logic identical to that of the previous paragraph shows that $g$ must also fix every other module. Thus, the only permutations preserving the modular data of $V_T$ are the trivial permutation and \eqref{eqn:VTperm}.
\end{proof}

\begin{proposition}
   The non-trivial automorphism of $V_T$ permutes its simple modules according to \eqref{eqn:VTperm}.
\end{proposition}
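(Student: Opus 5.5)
By Lemma \ref{lem:VTperm}, the permutation of simple $V_T$-modules induced by the non-trivial automorphism $\bar g$ is either trivial or equal to \eqref{eqn:VTperm}. So it suffices to exhibit a single simple $V_T$-module that $\bar g$ does not fix. The plan is to realize $\bar g$ via Proposition \ref{prop:normaut}: take $F=A_4\subset \mathrm{Aut}(V_2)$, and let $g\in S_4\cong N_{\mathrm{Aut}(V_2)}(A_4)$ be an odd permutation, so that $\bar g=g\vert_{V_T}$ is the non-trivial involution.

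We then use Proposition \ref{prop:permVsubmodules} on the untwisted sector. Decompose
\begin{align}
V_2\cong\bigoplus_{\chi\in\Rep(A_4)}R_\chi\otimes M_\chi .
\end{align}
The irreducible representations of $A_4$ are the three one-dimensional characters $\chi_{1,0},\chi_{1,1},\chi_{1,2}$ and the three-dimensional $\chi_3$. In Table \ref{tab:VTreps} these correspond to the modules $[1,\chi_{1,i}]$ and $[1,\chi_3]$, which are exactly the integer-weight modules appearing in $\mathrm{Res}(V_2)$ (compare Proposition \ref{prop:Tannakianext} and the $\Rep(A_4)$ example). Proposition \ref{prop:permVsubmodules} gives $M_\chi^{\bar g}\cong M_{\chi^g}$ with $\chi^g(f)=\chi(g^{-1}fg)$.

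The key step is the elementary group-theoretic fact that conjugation by an odd permutation exchanges the two classes of $3$-cycles in $A_4$. Since $\chi_{1,1}$ and $\chi_{1,2}$ take the values $\xi_3$ and $\xi_3^{2}$ on these classes, this conjugation exchanges $\chi_{1,1}\leftrightarrow\chi_{1,2}$. Hence $[1,\chi_{1,1}]^{\bar g}\cong[1,\chi_{1,2}]$, the induced permutation is non-trivial, and Lemma \ref{lem:VTperm} forces it to be exactly \eqref{eqn:VTperm}.

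The only place needing care is the identification of the labels $[1,\chi_{1,i}]$ in Table \ref{tab:VTreps} with the isotypic components $M_{\chi_{1,i}}$ of $V_2$. This does not require a precise matching. It suffices that $M_{\chi_{1,1}}$ and $M_{\chi_{1,2}}$ are the two non-vacuum one-dimensional isotypic components, both of conformal weight $4$, and that Table \ref{tab:VTreps} has exactly two non-vacuum weight-$4$ modules of quantum dimension $1$, namely $[1,\chi_{1,1}]$ and $[1,\chi_{1,2}]$. Whatever the labelling, these are therefore swapped by $\bar g$.
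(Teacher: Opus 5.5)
Your proposal is correct and follows the paper's proof essentially step for step: you reduce to non-triviality via Lemma~\ref{lem:VTperm}, realize $\bar g$ as the image of an odd permutation in $S_4=N_{\mathrm{Aut}(V_2)}(A_4)$, and apply Proposition~\ref{prop:permVsubmodules} to see that $\chi_{1,1}\leftrightarrow\chi_{1,2}$ are exchanged. Your closing remark, which identifies the two weight-$4$ modules without relying on the precise labelling, is a harmless extra safeguard that the paper's proof does not need, since it takes the labels from Appendix~\ref{app:exceptional}.
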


\begin{proof}
    Since automorphisms of a strongly rational VOA induce ribbon auto-equivalences of its representation category, which in particular preserve the modular data in the sense of \eqref{eqn:Aut(S,T)}, it follows from Lemma \ref{lem:VTperm} that the non-trivial automorphism of $V_T$ must either fix every module of $V_T$ or permute them according to \eqref{eqn:VTperm}. To settle which actually occurs, it suffices to determine where the automorphism sends the simple $V_T$-module labeled by $[1,\chi_{1,1}]$. 
    
    Note that by the interpretation of the simple $V_T$-modules afforded by Appendix \ref{app:exceptional}, and using the Schur-Weyl decomposition of \cite{dong1996compact}, we have that the decomposition of $V_2\cong \widehat{\mathfrak{su}}(2)_1$ into $A_4\times V_T$-modules is
    \begin{align}
        V_2\cong\bigoplus_{\chi\in \Rep(A_4)} R_\chi\otimes M_{[1,\chi]}, 
    \end{align}
    where $R_\chi$ is the $A_4$ representation with character $\chi$ and  $M_{[1,\chi]}$ is the $V_T$-module labeled by $[1,\chi]$ (see Table \ref{tab:VTreps}). The sum is over the irreducible representations of $A_4$, with $\chi_{1,0}$ the trivial representation, $\chi_{1,1}$ and $\chi_{1,2}$ the two non-trivial one-dimensional representations, and $\chi_3$ the unique three-dimensional irreducible representation. 

    Recall that the non-trivial automorphism $\bar g$ of $V_T$ is in the image of the map $S_4\to S_4/A_4\cong \mathbb{Z}_2$ obtained by taking $V=V_2$ and $F=A_4$ in Proposition \ref{prop:normaut}. Let $g$ be any element of $S_4\subset\mathrm{Aut}(V_2)$ which maps to the non-trivial involution $\bar g$ acting on $V_T$, i.e.\ $g$ is any odd permutation in $S_4$. By Proposition \ref{prop:permVsubmodules}, we obtain that $M^{\bar g}_{[1,\chi_{1,1}]}=M_{[1,\chi_{1,1}^g]}$. Using the fact that, when $g$ is an odd permutation in $S_4$, the map $A_4\to A_4$ sending $a\mapsto g^{-1}ag$ 
    is an outer automorphism which exchanges the two non-trivial 1-dimensional representations of $A_4$, we learn that $\chi_{1,1}^g=\chi_{1,2}$. Hence, since $\bar g$ permutes simple $V_T$-modules non-trivially, it must permute them according to \eqref{eqn:VTperm}. \end{proof}

\section{Quantum operations and topological lines}\label{app:QuOpTopLin}

In this appendix, we discuss the non-invertible symmetries of the nice $c=1$ VOAs using some of the ideas developed in \cite{Gannon:2026ttf}.  This appendix (and this appendix only) is speculative and written at a physics level of rigor because the requisite notions (quantum operations and generalized twisted modules) have not yet been fully defined for VOAs, though see \cite{Rie22,Dong:2025ttr} for recent work in this direction. We will be somewhat telegraphic and content ourselves with just sketching the calculations. Our main motivation for including this content is to provide an impressionistic picture of what these structures are expected to look like. We leave the important goal of making these computations mathematically precise to the future.

In physics language, for each lattice VOA $V_{2m}$, we characterize both its set of quantum operations (a generalization of the notion of an automorphism of a VOA) and its topological line operators (intuitively, a module of a VOA which is ``twisted'' by a quantum operation). The remaining nice $c=1$ VOAs can be treated using nearly identical arguments to the ones we provide for $V_{2m}$. In the context of conformal nets, which are believed to be equivalent to unitary VOAs \cite{Carpi:2015fga,Carpi:2023onx,henriques2025every}, these objects have been made rigorous: quantum operations were studied in \cite{Bischoff:2016jmy,Bischoff:2022fxf}, and twisted modules/topological line operators are called solitons (see e.g.\ \cite{longo2004topological,henriques2017chernsimons,henriques2017bicommutant}).\footnote{For conformal net aficionados, we comment that we do not study the most general kinds of solitons, but rather only those that preserve the Virasoro subnet.} Thus, our calculations may be viewed as conjectures about the behavior of quantum operations and soliton categories of conformal nets. See also \cite{Marin-Salvador:2025stc,Marin-Salvador:2026jdy} for recent closely related work.

 We refer to \cite{Gannon:2026ttf} for further details behind our perspective on non-invertible symmetries of chiral algebras. Related ideas will also appear in \cite{flatroads,ncb}, mainly in the context of full CFTs with both left- and right-movers.

\subsection{Generalities}
The basic object we are interested in understanding is $\mathrm{Sym}^\dagger(V)$, the category of unitary topological line defects supported on $V$, thought of as a gapless chiral boundary condition of a 3D TQFT. We assume unitarity because much of what we explain below is informed by the conformal net literature, where unitarity is built in from the beginning, though it is likely that this assumption can eventually be dropped for VOAs. We expect that $\mathrm{Sym}^\dagger(V)$ is a (non-rigid) tensor category (and perhaps even a bicommutant category \cite{henriques2017bicommutant}), though we will be mostly agnostic about the details of its structure in what follows.

The category $\mathrm{Sym}^\dagger(V)$ includes $\Rep(V)$ as a fusion subcategory: indeed, $\Rep(V)$ can be thought of as the category of anyons/topological line operators in the bulk, and these can always be pushed to the boundary. See e.g.\ Figure 12 of \cite{Gannon:2026ttf}. 

Every object $X\in\mathrm{Sym}^\dagger(V)$ is expected to define a kind of ``generalized twisted module'' $V_X$ of $V$, though this has not yet been rigorously defined for VOAs. Physically, $V_X$ is the Hilbert space of local operators which live at the endpoint of the line $X$, see Figure 18 of \cite{Gannon:2026ttf}. In the conformal net literature, the analog of $\mathrm{Sym}^\dagger(V)$ is sometimes called $\mathrm{Sol}_{\mathrm{Vir}}(\mathcal{A})$, the category of solitons of a conformal net $\mathcal{A}$ which preserve the Virasoro subnet. 

The category $\mathrm{Sym}^\dagger(V)$ should act on the local operators in $V$, but it will not do so faithfully. Indeed, the objects inside the $\Rep(V)$ subcategory are in the kernel of this action. For this reason, we will also be interested in understanding the quotient
\begin{align}
    \mathrm{QuOp}(V)=\mathrm{Sym}^\dagger(V)\sslash \Rep(V),
\end{align}
consisting of the double cosets of $\Rep(V)$ inside of $\mathrm{Sym}^\dagger(V)$, which does act faithfully on $V$. Alternatively, note that $\Rep(V)$ acts on $\mathrm{Sym}^\dagger(V)$: one might understand the elements of $\mathrm{QuOp}(V)$ as corresponding to the indecomposable $\Rep(V)$-module categories which arise in the decomposition of $\mathrm{Sym}^\dagger(V)$. In nice circumstances, this will be a hypergroup, but not always, as we will see. See e.g.\ \cite{Bischoff:2016jmy,Rie22,Gannon:2026ttf} for reviews of hypergroups and double coset hypergroups.

We comment that $\mathrm{QuOp}(V)$ always contains the automorphisms which preserve the unitary structure of $V$, i.e.\ $\mathrm{Aut}^\dagger(V)\subset \mathrm{QuOp}(V)$. Another nice feature is that the quantum operations are expected to grade the category of topological lines, i.e.\ $\mathrm{Sym}^\dagger(V)$ admits a decomposition of the form
\begin{align}\label{eqn:quopgraded}
    \mathrm{Sym}^\dagger(V) = \bigoplus_{k\in \mathrm{QuOp}(V)}\Rep_k(V), \ \ \ \ \ \ \Rep_1(V)=\Rep(V).
\end{align}
Elements of $\mathrm{QuOp}(V)$ can be understood physically as being implemented by bulk topological surface operators terminating on boundary topological line junctions, see e.g. Figure 22 of  \cite{Gannon:2026ttf}. They also appear in the conformal net literature with the same name \cite{Bischoff:2016jmy,Bischoff:2022fxf}. 

Let us recall from \cite{Gannon:2026ttf} the following characterization of how $\mathrm{Sym}^\dagger(V)$ changes under taking conformal extensions and passing to $G$ fixed points (called topological manipulations in op.\ cit.\ because they can be implemented by gauging one-form and zero-form symmetries). Strictly speaking, these constructions make the most sense when they are applied to fusion categories, finite groups, etc. But we will apply them with reckless abandon in wilder settings below, and see that we still get sensible results.
\begin{proposal}\label{proposal:SymTopMan}
If $W$ is a conformal extension of $V$, and $A$ is the corresponding commutative algebra in $\Rep(V)$ (cf.\ Proposition \ref{prop:ExtAlg}), then 
\begin{align}
    \mathrm{Sym}^\dagger(W)\cong \mathrm{Sym}^\dagger(V)_A,
\end{align}
where $\mathrm{Sym}^\dagger(V)_A$ denotes the category of $A$-modules in $\mathrm{Sym}^\dagger(V)$.
\end{proposal}
\begin{proposal}\label{proposal:Gequiv}
If $G$ is a subgroup of $\mathrm{Aut}^\dagger(V)$, then $G$ acts on $\mathrm{Sym}^\dagger(V)$ and 
\begin{align}
    \mathrm{Sym}^\dagger(V^G)\cong \mathrm{Sym}^\dagger(V)^G,
\end{align}
where $\mathrm{Sym}^\dagger(V)^G$ denotes the $G$-equivariantization of $\mathrm{Sym}^\dagger(V)$. 
\end{proposal}
We can also describe how the quantum operations of $V$ change under passing to the fixed points of a hypergroup action.
\begin{proposal}\label{proposal:quopVH}
    Suppose $H\subset \mathrm{QuOp}(V)$ are both hypergroups. Then the quantum operations of $V^H$ (the subVOA invariant under $H$) are given by the double cosets of $H$ in $\mathrm{QuOp}(V)$,
    \begin{align}
        \mathrm{QuOp}(V^H)=\mathrm{QuOp}(V)\sslash H.
    \end{align}
\end{proposal}
Let us compare this to Proposition \ref{prop:normaut}. Noting that the (unitary) automorphisms of $V$ are always contained inside of the quantum operations, $\mathrm{Aut}^\dagger(V)\subset \mathrm{QuOp}(V)$, suppose that $H$ is taken to be a group of automorphisms of $V$. Then it follows from Proposal \ref{proposal:quopVH} that 
\begin{align}
    N_{\mathrm{Aut}^\dagger(V)}(H)\sslash H\subset \mathrm{Aut}^\dagger(V)\sslash H \subset \mathrm{QuOp}(V)\sslash H = \mathrm{QuOp}(V^H).
\end{align}
But since $H$ is always normal inside of its normalizer, and since the double cosets of a normal subgroup coincide with the single cosets, which define a group, we learn that the group $N_{\mathrm{Aut}^\dagger(V)}(H)/H\subset \mathrm{QuOp}(V^H)$ and therefore acts on $V^H$ via automorphisms, in agreement with Proposition \ref{prop:normaut}. What this also shows is that the remaining automorphisms of $V$ do not disappear after passing to the fixed points $V^H$: rather, they become ``non-invertible'' symmetries described by the double-coset hypergroup $\mathrm{Aut}^\dagger(V)\sslash H$.

\subsection{The Witt class of \texorpdfstring{$\widehat{\mathfrak{su}}(2)_1$}{SU(2) level 1}}
Let us apply some of these considerations to the nice $c=1$ VOAs. The easiest class of nice $c=1$ VOAs to analyze are those which conformally embed into $V_2\cong \widehat{\mathfrak{su}}(2)_1$. In the language of \cite{Moller:2024xtt}, such VOAs are in the same Witt class as $V_2$, in the sense that they have the same central charge as $V_2$ and their representation categories are Witt equivalent to that of $V_2$.

The starting point from which everything else flows is the observation (already made in \cite{Bischoff:2022fxf} in the context of conformal nets) that the quantum operations and topological lines of $V_2\cong \widehat{\mathfrak{su}}(2)_1$ are entirely group-theoretical. 

\begin{proposal}[Symmetries of $\widehat{\mathfrak{su}}(2)_1$]
    The quantum operations and topological line operators of $V_2$ are given by 
    \begin{align}
    \begin{split}
    \mathrm{QuOp}(V_2)&=SO(3),\\
    \mathrm{Irr}(\mathrm{Sym}^\dagger(V_2))&=SU(2).
    \end{split}
\end{align}
\end{proposal}

Indeed, $V_2$ has a known $SU(2)$ worth of topological lines which come from exponentiating its spin-1 currents. The fact that these exhaust all of the lines in $\mathrm{Sym}^\dagger(V_2)$ basically follows from the fact that $\mathrm{Sym}^\dagger(V_2)\sslash\Rep(V_2) = \mathrm{Sym}^\dagger(V_2)\sslash \mathbb{Z}_2= SO(3)$ by definition. See \cite{flatroads,ncb} for further discussion.

We can then calculate $\mathrm{QuOp}(V)$ and $\mathrm{Sym}^\dagger(V)$ for every other nice $c=1$ VOA $V$ by relating $V$ to $\widehat{\mathfrak{su}}(2)_1$ via a sequence of topological manipulations. The VOAs which are in the Witt class of $\widehat{\mathfrak{su}}(2)_1$ are easiest to achieve this for since they can all be reached from $\widehat{\mathfrak{su}}(2)_1$ by passing to $G$ fixed points, for $G$ a finite group of automorphisms. Thus, we can invoke Proposal \ref{proposal:Gequiv}.

Analogous computations were carried out in \cite{Gannon:2026ttf} for the $\widehat{\mathfrak{u}}(1)$ Kac-Moody algebra (i.e.\ the Heisenberg VOA), and our treatment below is very similar to that case. 

We content ourselves with illustrating the ideas for the circle branch VOAs $V_{2k^2}=\widehat{\mathfrak{su}}(2)_1^{\mathbb{Z}_k}$. The rest of the VOAs in the Witt class of $\widehat{\mathfrak{su}}(2)_1$ (i.e.\ $V_{2k^2}^+$, $V_T$, $V_O$, and $V_I$) can be treated using completely analogous methods. Formally applying Proposal \ref{proposal:Gequiv} and Proposal \ref{proposal:quopVH} gives the following.

\begin{proposal}[Symmetries of $V_{2k^2}$]\label{proposal:V2k^2}
    Let $\zeta_r=e^{2\pi i/r}$. The full category of unitary topological line operators of the chiral boson theory $V_{2k^2}$ has the following collection of simple objects,
    \begin{align}
    \begin{split}
        \mathrm{Irr}(\mathrm{Sym}^\dagger(V_{2k^2}))&=\{L_{a,b}\mid (a,b)\in \mathbb{C}^2,~ |a|<1, ~|a|^2+|b|^2=1,b\sim \zeta_kb\} \\
        &\ \ \ \ \ \ \ \ \ \ \  \cup \{M_\lambda \mid \lambda \in \mathbb{R}/k\mathbb{Z}\},
        \end{split}
    \end{align}
    where $b\sim \zeta_kb$ indicates that we identify the lines $L_{a,b}$ and $L_{a,\zeta_kb}$. These lines obey the fusion rules 
    \begin{align}\label{eqn:V2k^2 fusion}
    \begin{split}
        M_\lambda \otimes M_{\lambda'}\cong M_{\lambda+\lambda'}, \ \ \ L_{a,b}\otimes L_{a',b'}\cong \bigoplus_{\ell=0}^{k-1}L_{a_\ell,b_\ell}, \\
        M_\lambda \otimes L_{a,b} \cong L_{e^{2\pi \ri\lambda}a,e^{2\pi \ri\lambda}b}, \ \ \  L_{a,b}\otimes M_\lambda \cong L_{e^{2\pi \ri\lambda}a,e^{-2\pi \ri\lambda}b},
        \end{split}
    \end{align}
    where we define $L_{e^{2\pi \ri\alpha},0}:=\bigoplus_{n=0}^{k-1}M_{\alpha+ n}$ and
    \begin{align}
    \begin{split}
        a_\ell &= aa'-b\overline{b'}\zeta_k^{-\ell}, \ \ \ \ b_\ell = \overline{a'}b+ ab'\zeta_k^\ell .
    \end{split}
    \end{align}
    The tensor unit is $M_0$. The ordinary module  $V_{2k^2,r}$ of $V_{2k^2}$ is identified with the Hilbert space of boundary local operators at the end of the line $M_{\sfrac r{2k}}$. The effective hypergroup which acts faithfully on boundary local operators is 
    \begin{align}
        \mathrm{QuOp}(V_{2k^2})=SO(3)\sslash \mathbb{Z}_k \cong SU(2) \sslash \mathbb{Z}_{2k},
    \end{align}
    whose elements and multiplication rule are described in Equation \eqref{eqn:SU(2)//Zk double cosets} and Equation \eqref{eqn:SU(2)//Zk multiplication}.
\end{proposal}

\noindent Let us give the derivation of this proposal, explaining further details as we go.

We start with the quantum operations, which are described by a hypergroup. By applying Proposal \ref{proposal:quopVH}, we learn that it is $SO(3)\sslash \mathbb{Z}_k$. For the purposes of understanding how this hypergroup grades $\mathrm{Sym}^\dagger(V_{2k^2})$, it is more convenient to describe it in $SU(2)$ variables. To this end, we note that we can lift up to $SU(2)$ and describe it as $SU(2)\sslash \mathbb{Z}_{2k}$. We parametrize $SU(2)$ elements as 
\begin{align}
    g(a,b) = \left(\begin{array}{cc} a & b \\ -\bar b & \bar a\end{array}\right),  \ \ \ a,b\in\mathbb{C}, \ \ \ |a|^2+|b|^2=1.
\end{align}
The double cosets $[a,b]:=\mathbb{Z}_{2k} g(a,b) \mathbb{Z}_{2k}$ are
\begin{align}\label{eqn:SU(2)//Zk double cosets}
    [a,b] =\begin{cases} 
\{ g(a\zeta_{2k}^{u},b\zeta_{2k}^{v})\mid u,v\in\mathbb{Z}_{2k}, ~ u=v~\mathrm{mod}~2\}, & 0<|a|<1 \\ 
\{ g(a\zeta_{2k}^r,0)\mid r\in\mathbb{Z}_{2k}\} ,& |a|=1 \\
\{ g(0,b\zeta_{2k}^r)\mid r\in\mathbb{Z}_{2k}\} ,& |a|=0,
    \end{cases}
\end{align}
so that $[a,b]\sim [a',b']$ if and only if $a'=a\zeta_{2k}^{u}$ and $b'=b\zeta_{2k}^{v}$ for some integers $u,v$ which are either both even or both odd. 
A straightforward calculation of the hypergroup multiplication, given by the standard formula 
\begin{align}
    [g]\star [g'] = \frac{1}{|H|}\sum_{h\in H}[ghg']
\end{align}
for general $G\sslash H$ with $H$ finite, gives 
\begin{align}\label{eqn:SU(2)//Zk multiplication}
    [a,b]\star [a',b']=\frac{1}{k}\sum_{\ell=0}^{k-1} [aa'-b\bar{b'}\zeta_{k}^{-\ell},\bar{a'}b+ab'\zeta_{k}^{\ell}],
\end{align}
where some of the terms in the sum may be equal to each other (though generically they are not). For example, one has that 
\begin{align}
\begin{split}
    [e^{\ri\alpha},0]\star [e^{\ri\beta},0] = [e^{\ri(\alpha+\beta)},0], \ \ \ \ \ [0,e^{\ri\alpha}]\star [0,e^{\ri\beta}] = [e^{\ri(\alpha-\beta)},0], \\ [e^{\ri\alpha},0]\star [0,e^{\ri\beta}]=[0,e^{\ri(\alpha+\beta)}], \ \ \ \ \ [0,e^{\ri\beta}]\star [e^{\ri\alpha},0] = [0,e^{\ri(\beta-\alpha)}],
\end{split}
\end{align}
where we make the identifications $[e^{\ri\alpha},0]\sim [e^{\ri(\alpha+\pi/k)},0]$ and $[0,e^{\ri\beta}]\sim[0,e^{\ri(\beta+\pi/k)}]$. In other words, the full hypergroup possesses an $O(2)$ subgroup, 
\begin{align}
    O(2)\cong \{[e^{\ri\alpha},0]\mid \alpha\in[0,\pi/k)\}\cup \{ [0,e^{\ri\beta}]\mid \beta\in [0,\pi/k)\}
\end{align}
which we identify with the unitary automorphism group of the chiral algebra, $\mathrm{Aut}^\dagger(V_{2k^2})\cong O(2)$ when $k>1$.

Now, let us turn to the topological line operators on the boundary. As follows from Proposal \ref{proposal:Gequiv}, we can formally describe $\mathrm{Sym}^\dagger(V_{2k^2})$ as the $\mathbb{Z}_k$ equivariantization of $\mathrm{Sym}^\dagger(V_2)$. Note that the $\mathbb{Z}_k$ acts on $SU(2)$ by conjugation, in particular it maps
\begin{align}
    g(a,b) \xmapsto{r\in\mathbb{Z}_k} g(a,b \zeta_k^r).
\end{align}
In a $\mathbb{Z}_k$ equivariantization, we expect that the simple objects are parametrized by pairs $(X,\rho)$. Here, $X$ is a direct sum over simple objects of $\mathrm{Sym}^\dagger(V_2)$ (i.e.\ elements of $SU(2)$) in the orbit of some element $g(a,b)$ under the action of $\mathbb{Z}_k$ by conjugation, and $\rho$ is a representation of the stabilizer group $G_{a,b} = \{r\in\mathbb{Z}_k\mid g(a,b\zeta_k^r)=g(a,b)\}$, which is $\mathbb{Z}_k$ if $|a|=1$ and trivial if $|a|<1$. In particular, we obtain lines of the form
\begin{align}
    L_{a,b} \equiv \left(\bigoplus_{r\in\mathbb{Z}_k}g(a,b\zeta_k^r),\rho=1\right), \ \ \ \ a,b\in\mathbb{C}, \ \ |a|^2+|b|^2=1, \ \ |a|<1,
\end{align}
where we must identify  $L_{a,b}\sim L_{a,b\zeta_k}$. We also obtain lines of the form
\begin{align}
    X_{a,n} \equiv (g(a,0),\rho=n), \ \ \ a\in\mathbb{C}, \ \ |a|=1, \ \ n\in\widehat{\mathbb{Z}_k}=\mathrm{Hom}(\mathbb{Z}_k,U(1)).
\end{align}
It will actually be convenient in what follows to repackage the $X_{a,n}$ as
\begin{align}
    M_\lambda\equiv X_{e^{2\pi \ri\lambda},\lfloor \lambda\rfloor}, \ \ \ \lambda \in [0,k).
\end{align}
In the limit that $(a,b)\to (e^{2\pi \ri\alpha},0)$, the line $L_{a,b}$ splits into a direct sum of the form
\begin{align}\label{eqn:Lablimiting}
    L_{a,b}\xmapsto{a,b\to e^{2\pi \ri\alpha},0}\bigoplus_{r=0}^{k-1} M_{\alpha+ r}.
\end{align}
It is clear that the quantum dimensions are
\begin{align}
    \dim(L_{a,b})=k, \ \ \ \dim(M_{\lambda}) = 1.
\end{align}
Indeed, these assignments  are consistent with the limiting behavior in Equation \eqref{eqn:Lablimiting}.

The twisted module of $V_{2k^2}$ associated to $L_{a,b}$ (i.e.\ the Hilbert space of local operators living at the end of the line $L_{a,b}$) is isomorphic to the twisted module $\widehat{\mathfrak{su}}(2)_{g(a,b)}$ of $\widehat{\mathfrak{su}}(2)$, restricted to $V_{2k^2}$, 
\begin{align}
    (V_{2k^2})_{L_{a,b}}\cong \mathrm{Res}(\widehat{\mathfrak{su}}(2)_{g(a,b)}).
\end{align}
Standard formulas from the literature (see e.g.\ \cite{burciu2013fusion}) allow us to determine the fusion rules on $\mathrm{Sym}^\dagger(V_{2k^2})$. In particular, we recover precisely the fusion rules reported in Equation \eqref{eqn:V2k^2 fusion}.

Our general expectation is that the category $\mathrm{Sym}^\dagger(V_{2k^2})$ is graded by the hypergroup $SO(3)\sslash \mathbb{Z}_k$, as in \eqref{eqn:quopgraded}. The simples in the graded component labeled by $[a,b]$ are just the objects $(X,\rho)\in \mathrm{Sym}^\dagger(V_{2k^2})$ for which $X\subset [a,b]$, where we write $X\subset [a,b]$ if all the $SU(2)$ elements arising as summands of $X$ are contained in the double coset $[a,b]$. Unpacking this more explicitly, we have
\begin{align}
   \mathrm{Irr}( \mathrm{Sym}^\dagger(V_{2k^2})_{[a,b]})=
   \begin{cases}
       \{M_{\alpha+ r/2k}\mid r=0,\dots,2k^2-1\}, & \text{if }[a,b]\sim [e^{2\pi\ri\alpha},0], \\
       \{L_{0,b},L_{0,b\zeta_{2k}}\}, & \text{if }[a,b]\sim[0,b], \\
       \{L_{a\zeta_{k}^r,b}\}_{r\in\mathbb{Z}_k} \cup \{L_{a\zeta_{2k}\zeta_k^r,b\zeta_{2k}}\}_{r\in\mathbb{Z}_k} & \text{otherwise.}
   \end{cases}
\end{align}
Recall that the full hypergroup possesses an $O(2)$ subgroup. The tensor subcategory obtained by keeping just the graded components corresponding to $O(2)$ elements is generated by $L_{0,b}$ and $M_\lambda$.

\subsection{Outside of the Witt class of \texorpdfstring{$\widehat{\mathfrak{su}}(2)_1$}{su(2) level 1}}

Describing the quantum operations and topological line operators of $V_{2m}$ requires much more care when $m$ is not a perfect square. Indeed, in this situation, the results of \cite{Moller:2024xtt} imply that we cannot connect $V_{2m}$  to $V_2$ via iterated topological manipulations using only finite symmetries. We can, however, use infinite symmetries. As we will see, the quantum operations and topological lines of $V_{2m}$ with $m\neq k^2$ look qualitatively quite different from those appearing in Proposal \ref{proposal:V2k^2}. 

It is natural to connect $V_2$ to $V_{2m}$ in two steps. In the first step, we can pass from $V_2$ to the $\widehat{\mathfrak{u}}(1)$ Kac--Moody algebra by projecting onto operators of $V_2$ which are invariant under the Cartan of the automorphism group. In the second step, we extend $\widehat{\mathfrak{u}}(1)$ to $V_{2m}$ using a $\mathbb{Z}$ simple current extension. In each step, we keep track of the quantum operations and topological line operators using Proposal \ref{proposal:SymTopMan} and Proposal \ref{proposal:Gequiv}.

The first step was actually carried out already in \cite{Gannon:2026ttf}. We recall the result. 

\begin{proposal}[Symmetries of $\widehat{\mathfrak{u}}(1)$]
    The quantum operations of $\widehat{\mathfrak{u}}(1)$ are given by 
    \begin{align}
        \mathrm{QuOp}(\widehat{\mathfrak{u}}(1))=SO(3)\sslash SO(2),
    \end{align}
    and the simple topological line operators are given by 
    \begin{align}
        \mathrm{Irr}(\mathrm{Sym}^\dagger(\widehat{\mathfrak{u}}(1)))=\{L_a\mid a\in\mathbb{C},~|a|<1\}\cup \{M_\lambda \mid \lambda \in \mathbb{R}\},
    \end{align}
    with fusion rules 
    \begin{align}\label{eqn:diskfusion}
    \begin{split}
       & M_\lambda\otimes M_{\lambda'}\cong M_{\lambda+\lambda'}, \ \ \ M_\lambda \otimes L_a \cong L_a\otimes M_\lambda \cong L_{ae^{ 2\pi \ri \lambda}}, \\
        &\hspace{.6in}  L_a\otimes L_b \cong \int_{\theta\in[0,2\pi)}^\oplus[d\theta]~ L_{ab-r(a,b)e^{\ri\theta}}.
    \end{split}
   \end{align}
In the above formulas, $[d\theta]$ is a measure which we remain mostly agnostic about, and we have defined $r(a,b)=\sqrt{(1-|a|^2)(1-|b|^2)}$ and
    \begin{align}
        L_{e^{2\pi \ri \lambda}}\cong \bigoplus_{\lambda' \in \lambda +\mathbb{Z}}M_{\lambda'}.
    \end{align}
    The line $M_\lambda$ corresponds to the simple module of $\widehat{\mathfrak{u}}(1)$ with conformal dimension $h(M_\lambda)=\lambda^2$. 
\end{proposal}

The result of formally carrying out the second step is the following speculative proposal.

\begin{proposal}[Symmetries of $V_{2m}$ with $m\neq k^2$]
    The simple topological line operators are 
    \begin{align}
    \begin{split}
        &\mathrm{Irr}(\mathrm{Sym}^\dagger(V_{2m}))= \{\widetilde{M}_\lambda \mid \lambda \in \mathbb{R}/\sqrt{m}\mathbb{Z}\} \\
        &\hspace{.3in}\cup \{\widetilde{L}_a\mid a\in\mathbb{C},~0<|a|<1,~a\cong a e^{2\pi \ri \sqrt{m}}\}\cup \{\widetilde{N}_\alpha\mid \alpha \in \mathbb{R}/\tfrac{1}{\sqrt{m}}\mathbb{Z}\}.
    \end{split}
    \end{align}
    The following is a subset of the fusion rules:
    \begin{align}\label{eqn:nonsquaremfusion}
    \begin{split}
        &\hspace{.9in}\widetilde{M}_\lambda\otimes \widetilde{M}_{\lambda'}\cong \widetilde{M}_{\lambda+\lambda'}, \ \ \ \ \ \ \widetilde{M}_{\lambda}\otimes \widetilde{N}_\alpha\cong \widetilde{N}_{\lambda+\alpha}\cong \widetilde{N}_{\alpha}\otimes \widetilde{M}_{-\lambda}, \\
&\hspace{.5in}\widetilde{M}_\lambda \otimes \widetilde{L}_a\cong \widetilde{L}_{ae^{2\pi \ri \lambda}}\cong \widetilde{L}_a\otimes \widetilde{M}_\lambda, \ \ \ \ \ \ \widetilde{N}_\alpha\otimes \widetilde{N}_\beta \cong \bigoplus_{\ell=0}^{m-1} \widetilde{M}_{\alpha-\beta+(\ell+\delta_m/2)/\sqrt{m}},
\end{split}
    \end{align}
where $\delta_m=1$ if $m$ is odd and $\delta_m=0$ if $m$ is even. We have the following limiting behaviors:
\begin{align}
    \widetilde{L}_{e^{2\pi i \lambda}}=\bigoplus_{\lambda'\in \lambda+\mathbb{Z}}\widetilde{M}_{\lambda'}, \ \ \ \ \ \widetilde{L}_0=\int_{\alpha\in [0,\tfrac{1}{\sqrt{m}})}^{\oplus} [d\alpha] \widetilde{N}_\alpha.
\end{align}
The quantum operations are given in \eqref{quop:V2m}.
\end{proposal}

We emphasize that the simple lines $\widetilde{L}_{a}$ with a fixed value of $0<|a|<1$ are labeled by quite a wild set: namely, the circle modded out by an infinite order rotation. This is part of the reason we have avoided trying to write out all the fusion rules involving these lines.

Let us now sketch how this proposal is derived. It is important that we assume throughout that $m$ is not a perfect square. We begin by calculating the topological line operators. The starting point is the observation that $V_{2m}$ can be obtained from $\widehat{\mathfrak{u}}(1)$ via a simple current extension. The corresponding commutative algebra that implements the extension is 
\begin{align}
    A^{(m)}= \bigoplus_{\lambda \in \sqrt{m}\mathbb{Z}}M_{\lambda},
\end{align}
where $M_\lambda\in\mathrm{Sym}^\dagger(\widehat{\mathfrak{u}}(1))$. Thus, Proposal \ref{proposal:SymTopMan} predicts that the topological line operators of $V_{2m}$ are given by $A^{(m)}$-modules in $\mathrm{Sym}^\dagger(\widehat{\mathfrak{u}}(1))$, i.e.\ 
\begin{align}
    \mathrm{Sym}^\dagger(V_{2m})=\mathrm{Sym}^\dagger(\widehat{\mathfrak{u}}(1))_{A^{(m)}}.
\end{align}
Borrowing standard results about the modules of simple current algebras (which hold at least in the setting of fusion categories), we expect that the simple lines of $\mathrm{Sym}^\dagger(V_{2m})$ are given by pairs $(X,\rho)$, where $X=X_1\oplus \cdots \oplus X_n$ is an object in $\mathrm{Sym}^\dagger(\widehat{\mathfrak{u}}(1))$ which is invariant under fusion with $M_{\sqrt{m}}$ (with $X_i$ its simple constituents), and $\rho$ is an irreducible representation of the stabilizer group $G_{X_1}$, i.e.\ the subgroup of $\mathbb{Z}=\langle M_{\sqrt{m}}\rangle$ which stabilizes $X_1$. 

Unpacking this explicitly, we are led to the following collection of simple topological line operators (we decorate them with tildes in order to distinguish them from the lines of $\widehat{\mathfrak{u}}(1)$):
\begin{align}
\begin{split}
    \widetilde{M}_\lambda &\equiv \left(\bigoplus_{n\in\mathbb{Z}}M_{\lambda+\sqrt{m}n},~\mathrm{trivial}\right), \ \ \ \  \lambda \in \mathbb{R}, \ \lambda\cong \lambda+\sqrt{m}\\
    \widetilde{L}_a&\equiv \left(\bigoplus_{n\in\mathbb{Z}}L_{ae^{2\pi \ri \sqrt{m}n}},~\mathrm{trivial} \right), \ \ \ \ a\in\mathbb{C}, \ 0<|a|<1, \ a\cong a e^{2\pi \ri \sqrt{m}}, \\
    \widetilde{N}_\alpha &\equiv \left(  L_0, ~n\mapsto e^{2 \pi \ri \sqrt{m}\alpha n}  \right), \ \ \ \ \ \ \ \ \ \ ~ \alpha \cong \alpha+\frac{1}{\sqrt{m}}.
\end{split}
\end{align}
The fusion rules in most cases are straightforward to calculate using the fact that there is a natural induction tensor functor  $I:\mathrm{Sym}^\dagger(\widehat{\mathfrak{u}}(1))\to \mathrm{Sym}^\dagger(\widehat{\mathfrak{u}}(1))_{A^{(m)}}$ mapping
\begin{align}
\begin{split}
    I(M_\lambda)&=\widetilde{M}_\lambda, \\
    I(L_a)&=\widetilde{L}_a ,\\
    I(L_0)&=\int_{[0,\tfrac{1}{\sqrt{m}})}^{\oplus} [d\alpha]\widetilde{N}_\alpha,
\end{split}
\end{align}
see \cite{Gannon:2026ttf} for a review and some worked examples. In the remaining cases, one must do a more direct calculation. The result is the fusion rules reported in \eqref{eqn:nonsquaremfusion}.

The quantum operations can be determined from these lines by computing the double cosets of $\Rep(V_{2m})$ inside of $\mathrm{Sym}^\dagger(V_{2m})$. The objects of $\Rep(V_{2m})$ correspond to the lines $\widetilde{M}_{n/(2\sqrt{m})}$ with $n\in\mathbb{Z}_{2m}$. It is then not difficult to see that 
\begin{align}\label{quop:V2m}
    \mathrm{QuOp}(V_{2m})=\mathrm{Sym}^\dagger(V_{2m})\sslash \Rep(V_{2m}) = O(2)~\sqcup ~(0,1)\times S^1/\langle \pi/\sqrt{m}\rangle,
\end{align}
where $S^1/\langle \pi /\sqrt{m}\rangle$ denotes the real numbers $x\in\mathbb{R}$ subject to the identifications $x\cong x+2\pi$ and $x\cong x+\pi/\sqrt{m}$.
The $O(2)$ part corresponds to the automorphism group. If we call the rotation elements $r_\lambda$ and the reflection elements $s_\lambda$ with $\lambda \cong \lambda+1/(2\sqrt{m})$, then the lines in the double cosets labeled by $r_\lambda$ and $s_\lambda$ are 
\begin{align}
    [r_\lambda]=\{\widetilde M_{\lambda+n/2\sqrt{m}}\mid n\in\mathbb{Z}_{2m}\}, \ \ \ \ [s_\lambda]=\{\widetilde{N}_\lambda, \widetilde{N}_{\lambda+1/2\sqrt{m}}\}.
\end{align}
For elements $(t,\theta)\in (0,1)\times S^1/\langle \pi/\sqrt{m}\rangle$, the double cosets are 
\begin{align}
    [t,\theta]=\{ \widetilde{L}_{te^{\ri\theta+\pi \ri n/\sqrt{m}}}\mid n\in\mathbb{Z}_{2m}\}.
\end{align}

\bibliographystyle{ytphys}
\bibliography{main}

\end{document}